\newcolumntype{x}[1]{>{\centering\arraybackslash\hspace{0pt}}p{#1}}
\newenvironment{syntax}%
{\footnotesize\[\begin{array}{@{}lr@{\:\:\:}c@{\:\:\:}l}\ignorespaces}
{\end{array}\]\ignorespacesafterend}
\newenvironment{proofcase}[1]
  {\totheleft{\textbf{Case } \textsc{#1}}}
  {}
\newcommand{\calcwd}[1]{\textbf{\textsf{#1}}}
\newcommand{\one}{\ensuremath{\mathbf{1}}}
\newcommand{\mvu}{\ensuremath{\lambda_{\mathsf{MVU}}}\xspace}
\newcommand{\app}{\:}
\newcommand{\oftype}{\: {:} \:}
\newcommand{\letin}[3]{\calcwd{let} \: #1 = #2 \: \calcwd{in} \: #3}
\newcommand{\letintwo}[2]{\calcwd{let} \: #1 = #2 \: \calcwd{in}}
\newcommand{\caseof}[2]{\calcwd{case} \: #1 \: \{ #2 \}}
\newcommand{\caseofone}[1]{\calcwd{case} \: #1 \: }
\newcommand{\midspace}{\: \mid \:}
\newcommand{\mkwd}[1]{\ensuremath{\mathsf{#1}}}
\newcommand{\antiquote}[1]{\mathbf{\{} #1 \mathbf{\}}}
\newcommand{\htmlty}[1]{\mkwd{Html}(#1)}
\newcommand{\attrty}[1]{\mkwd{Attr}(#1)}
\newcommand{\tagname}[1]{\textsf{#1}}
\newcommand{\smalllt}{\scalebox{0.85}{<}}
\newcommand{\smallgt}{\scalebox{0.85}{>}}
\newcommand{\htmltag}[3]{\smalllt \tagname{#1} \: #2 \smallgt #3 \smalllt / \tagname{#1} \smallgt}
\newcommand{\htmltagzero}[2]{\smalllt \tagname{#1} \smallgt #2 \smalllt / \tagname{#1} \smallgt}
\newcommand{\htmltagqueue}[3]{\smalllt \tagname{#1} \qsep #2 \smallgt #3 \smalllt / \tagname{#1} \smallgt}
\newcommand{\tagzero}[1]{\smalllt \tagname{#1}\smallgt}
\newcommand{\tagzeroend}[1]{\smalllt / \tagname{#1}\smallgt}
\newcommand{\opentag}[1]{\smalllt \tagname{#1}}
\newcommand{\closetag}{\smallgt}
\newcommand{\config}[1]{\mathcal{#1}}
\newcommand{\htmlterm}[1]{\calcwd{html} \app #1}
\newcommand{\attrterm}[1]{\calcwd{attribute} \app #1}
\newcommand{\seq}[1]{\overrightarrow{#1}}
\newcommand{\inl}[1]{\calcwd{inl} \: #1}
\newcommand{\inr}[1]{\calcwd{inr} \: #1}
\newcommand{\evtty}[1]{\ensuremath{\mkwd{ty}(#1)}}
\newcommand{\evt}[1]{\ensuremath{\mkwd{#1}}}
\newcommand{\ev}{\evt{ev}\xspace}
\newcommand{\evtpayload}[2]{\evt{#1}(#2)}
\newcommand{\clickevt}{\ensuremath{\attribute{click}}}
\newcommand{\inputevt}{\ensuremath{\attribute{input}}}
\newcommand{\keydownevt}{\ensuremath{\attribute{keyDown}}}
\newcommand{\keyupevt}{\ensuremath{\attribute{keyUp}}\xspace}
\newcommand{\vh}{D}
\newcommand{\pgctx}[1]{\config{D}[#1]}
\newcommand{\mh}{M\xspace}
\newcommand{\state}[2]{(#1, #2)}
\newcommand{\statecomb}[3]{(#1, #2, #3)}
\newcommand{\statesub}[3]{(#1, #2, #3)}
\newcommand{\handlerproc}[2]{\langle #1 \mid #2 \rangle}
\newcommand{\handlerprocexp}[3]{\langle #1 \mid \state{#2}{#3} \rangle}
\newcommand{\handlerprocsub}[4]{\langle #1 \mid \statesub{#2}{#3}{#4} \rangle}
\newcommand{\idle}[1]{\calcwd{idle} \: {#1}}
\newcommand{\sys}[2]{#1 \fatsemi\: #2}
\newcommand{\syssub}[4]{#1 \fatsemi\: #2 \fatsemi\: #3 \fatsemi\: #4}
\newcommand{\evalarrow}{\longrightarrow}
\newcommand{\teval}{\evalarrow_{\textsf{M}}}
\newcommand{\ceval}{\evalarrow}
\newcommand{\cevalminus}{\evalarrow_{\textsf{E}}}
\newcommand{\totheleft}[1]{\begin{flushleft}#1\end{flushleft}}
\newcommand{\evthandler}[1]{\ensuremath{\attribute{#1}}}
\newcommand{\handler}[1]{\mkwd{handler}(#1)}
\newcommand{\handle}[1]{\mkwd{handle}(#1)}
\newcommand{\defeq}{\triangleq}
\newcommand{\vdashs}{\vdash}
\newcommand{\ttrue}{\mkwd{true}}
\newcommand{\ffalse}{\mkwd{false}}
\newcommand{\intty}{\mkwd{Int}}
\newcommand{\boolty}{\mkwd{Bool}}
\newcommand{\stringty}{\mkwd{String}}
\newcommand{\deriv}[1]{\mathbf{#1}}
\newcommand{\secref}[1]{\S\ref{#1}}
\newcommand{\secrefp}[1]{(\secref{#1})}
\newcommand{\produces}[1]{\:{!}\:#1}
\newcommand{\thread}[1]{(\!( #1 )\!)}
\newcommand{\wcirc}{\circ}
\newcommand{\bcirc}{\bullet}
\newcommand{\subscriptionty}[1]{\mkwd{Sub}(#1)}
\newcommand{\subscription}[2]{\calcwd{sub} \: #1 \: #2}
\newcommand{\subempty}{\calcwd{subEmpty}}
\newcommand{\eh}{h}
\newcommand{\metadef}[1]{\mkwd{#1}}
\newcommand{\vs}{{V_{\mathsf{S}}}}
\newcommand{\translate}[1]{\llbracket #1 \rrbracket}
\newcommand{\gvsend}[2]{\calcwd{send} \app (#1, #2)}
\newcommand{\gvrecv}[1]{\calcwd{receive} \app #1}
\newcommand{\gvcancel}[1]{\calcwd{cancel} \app #1}
\newcommand{\gvclose}[1]{\calcwd{close} \app #1}
\newcommand{\gvnew}[1]{\calcwd{new} \: #1}
\newcommand{\tryasinotherwise}[4]{\calcwd{try} \: #1 \: \calcwd{as} \: #2 \: \calcwd{in} \: #3 \: \calcwd{otherwise} \: #4}
\newcommand{\raiseexn}{\calcwd{raise}\xspace}
\newcommand{\gvout}[2]{{!}#1.#2}
\newcommand{\gvin}[2]{{?}#1.#2}
\newcommand{\gvoutone}[1]{{!}#1}
\newcommand{\gvinone}[1]{{?}#1}
\newcommand{\gvend}{\mkwd{End}}
\newcommand{\zap}[1]{\lightning #1}
\newcommand{\gvdual}[1]{\overline{#1}}
\newcommand{\var}[1]{\textit{#1}}
\lstdefinelanguage{Links}{%
  morekeywords={typename, fun, linfun, op, var, if, this, true, false, else, case, switch, handle,
    handler, shallowhandler, open, do, sig, new, send, receive, spawnAt, spawn,
module, request, accept, try, as, otherwise, catch, offer, select, raise,
fork, spawnClient, cancel, catch, page, close, Any, Type, Unl, forall, vdom},%
  sensitive=t, %
  comment=[l]{\#\ },%
  escapeinside={(*}{*)},%
  morestring=[d]{"},%
  keywordstyle=\color{blue},
  showstringspaces=false
 }
\newcommand{\desugarterm}[1]{\llbracket #1 \rrbracket}
\newcommand{\desugarhtml}[1]{\llbracket #1 \rrbracket}
\newcommand{\desugarattr}[1]{\llbracket #1 \rrbracket}
\newcommand{\coretagone}[1]{\calcwd{htmlTag} \: \tagname{#1}}
\newcommand{\coretag}[3]{\calcwd{htmlTag} \: \tagname{#1} \: #2 \: #3}
\newcommand{\htmltext}[1]{\calcwd{htmlText} \: #1}
\newcommand{\htmlempty}{\calcwd{htmlEmpty}}
\newcommand{\append}[2]{#1 \star #2}
\newcommand{\attr}[2]{\calcwd{attr} \: #1 \: #2}
\newcommand{\attrempty}{\calcwd{attrEmpty}}
\newcommand{\ak}{\mathit{ak}}
\newcommand{\at}{\mathit{at}}
\newcommand{\run}[1]{\calcwd{run} \: #1}
\newcommand{\runexp}[3]{\calcwd{run} \: (#1, #2, #3)}
\newcommand{\runsub}[4]{\calcwd{run} \: (#1, #2, #3, #4)}
\newenvironment{fake}[1]{\par\vspace{3pt}\noindent\textbf{#1}\itshape}{\normalfont\ignorespacesafterend\vspace{3pt}\par}
\lstdefinelanguage{JavaScript}{
  morekeywords=[1]{break, continue, delete, else, for, function, if, in,
    new, return, this, typeof, var, void, while, with},
  morekeywords=[2]{false, null, true, boolean, number, undefined,
    Array, Boolean, Date, Math, Number, String, Object, const},
  morekeywords=[3]{eval, parseInt, parseFloat, escape, unescape},
  sensitive,
  morecomment=[s]{/*}{*/},
  morecomment=[l]//,
  morecomment=[s]{/**}{*/}, %
  morestring=[b]',
  morestring=[b]"
}[keywords, comments, strings]
\newcommand{\bl}{\begin{array}{l}}
\newcommand{\el}{\end{array}}
\newcommand{\intstr}[1]{\mkwd{intToString}(#1)}
\newcommand{\cmdty}[1]{\mkwd{Cmd}(#1)}
\newcommand{\cmdempty}{\calcwd{cmdEmpty}}
\newcommand{\cmdspawn}[1]{\calcwd{cmdSpawn} \: #1}
\newcommand{\procs}[1]{\mkwd{procs}(#1)}
\definecolor{shade}{RGB}{215,215,215}
\newcommand{\shade}[1]{\setlength{\fboxsep}{0pt}\colorbox{shade}{\vphantom{$\mid$}\ensuremath{#1}}}
\newcommand{\hlred}[1]{{\color{red}{#1}}}
\newcommand{\transition}[5]{\calcwd{transition} \: #1 \: #2 \: #3 \: #4 \: #5}
\newcommand{\transitionty}[2]{\mkwd{Transition}(#1, #2)}
\newcommand{\notransition}[2]{\calcwd{noTransition} \: #1 \: #2}
\newcommand{\handlerproctrans}[3]{\handlerproc{#1}{#2}_{#3}}
\newcommand{\handlerprocexptc}[5]{\handlerprocsub{#1}{#2}{#3}{#4}_{#5}}
\newcommand{\vdashtrans}[2]{\vdash^{#1}_{#2}}
\newcommand{\threadtrans}[2]{(\!( #1 )\!)_{#2}}
\newcommand{\updating}[1]{\calcwd{updating} \: #1}
\newcommand{\extracting}[2]{\calcwd{extracting}[{#1}] \: #2}
\newcommand{\extractingt}[3]{\calcwd{extractingT}[{#1, #2}] \: #3}
\newcommand{\extractingtexp}[5]{\calcwd{extractingT}[{(#1, #2, #3), #4}] \: #5}
\newcommand{\rendering}[2]{\calcwd{rendering}[{{#1}}] \: #2}
\newcommand{\renderingext}[3]{\calcwd{rendering}[{{#1}, {#2}}] \: #3}
\newcommand{\transitioning}[4]{\calcwd{transitioning}[{{#1, #2, #3}}] \: #4}
\newcommand{\transitioningextexp}[6]{\calcwd{transitioning}[{{#1, (#2, #3, #4), #5}}] \: #6}
\newcommand{\transitioningext}[4]{\calcwd{transitioning}[{{#1, #2, #3}}] \: #4}
\newcommand{\version}[1]{\mkwd{version}(#1)}
 \renewenvironment{abstract}{%
   \vskip0.7\bigskipamount
   \noindent
   \rlap{\color{lipicsLineGray}\vrule\@width\textwidth\@height1\p@}%
   \hspace*{7mm}\fboxsep1.5mm\colorbox[rgb]{1,1,1}{\raisebox{-0.4ex}{%
     \large\selectfont\sffamily\bfseries\abstractname}}%
   \vskip3\p@
   \fontsize{9}{12}\selectfont
   \noindent\ignorespaces}
   {
    \ifx\@hideLIPIcs\@undefined
     \vskip0.7\baselineskip\noindent
    \subjclassHeading
    \ifx\@ccsdescString\@empty
      \textcolor{red}{Author: Please fill in 1 or more \string\ccsdesc\space macro}%
    \else
      \@ccsdescString
    \fi
    \vskip0.7\baselineskip
    \noindent\keywordsHeading
    \ifx\@keywords\@empty
      \textcolor{red}{Author: Please fill in \string\keywords\space macro}%
    \else
      \@keywords
    \fi
      \ifx\@DOIPrefix\@empty\else
        \vskip0.7\baselineskip\noindent
        \doiHeading\href{https://doi.org/\@DOIPrefix.\@EventAcronym.\@EventYear.\@ArticleNo}{\@DOIPrefix.\@EventAcronym.\@EventYear.\@ArticleNo}%
      \fi
    \ifx\@category\@empty\else
      \vskip0.7\baselineskip\noindent
      \categoryHeading\@category
    \fi
    \ifx\@relatedversion\@empty\else
      \vskip0.7\baselineskip\noindent
      \relatedversionHeading\@relatedversion
    \fi
    \ifx\@supplement\@empty\else
      \vskip0.7\baselineskip\noindent
      \supplementHeading\@supplement
    \fi
    \ifx\@funding\@empty\else
      \vskip0.7\baselineskip\noindent
      \fundingHeading\ifx\authoranonymous\relax\textcolor{red}{Anonymous funding}\else\@funding\fi
    \fi
    \ifx\@acknowledgements\@empty\else
      \vskip0.7\baselineskip\noindent
      \acknowledgementsHeading\ifx\authoranonymous\relax\textcolor{red}{Anonymous acknowledgements} \else\@acknowledgements\fi
    \fi
    \fi
  }
\newcommand{\pagety}[1]{\mkwd{Page}(#1)}
\newcommand{\pgtag}[4]{\calcwd{domTag}({#4}) \: \tagname{#1} \: #2 \: #3}
\newcommand{\pgempty}{\calcwd{domEmpty}}
\newcommand{\pgtext}[1]{\calcwd{domText} \: #1}
\newcommand{\handlers}[2]{\mkwd{handlers}(#1, #2)}
\newcommand{\diff}[2]{\mkwd{diff}(#1, #2)}
\newcommand{\rec}[3]{\calcwd{rec}\app #1(#2) \app . \app #3}
\newcommand{\rectwo}[2]{\calcwd{rec}\app #1(#2) \app .}
\newcommand{\redrowskip}{0.25em}
\newcommand{\syntaxskip}{0.2em}
\newcommand{\qsep}{\,@\,}
\newcommand{\lin}{\mathsf{L}}
\newcommand{\unr}{\mathsf{U}}
\newcommand{\kind}{\kappa}
\newcommand{\kto}[1]{\to^{#1}}
\newcommand{\rtv}[1]{#1}
\newcommand{\recty}[2]{\mu #1 . #2}
\newcommand{\constty}[1]{\Sigma(#1)}
\newcommand{\vers}{\iota}
\newcommand{\uto}{\kto{\unr}}
\newcommand{\lto}{\kto{\lin}}
\newcommand{\halt}{\calcwd{halt}}
\newcommand{\ep}{E_{\textsf{P}}}
\newcommand{\tp}{\config{T}_{\textsf{P}}}
\newcommand{\ta}{\config{T}_{\textsf{A}}}
\newcommand{\domh}{D_{\mkwd{H}}}
\newcommand{\erase}[1]{\mkwd{erase}(#1)}
\newcommand{\attribute}[1]{\mkwd{#1}}
\newcommand{\textnode}[1]{\texttt{#1}}
\newcommand{\fn}[1]{\mkwd{fn}(#1)}
\newcommand{\evtloopty}[2]{\mkwd{EvtLoop}(#1, #2)}
\newcommand{\evtlooptytrans}[3]{\mkwd{EvtLoop}(#1, #2, #3)}
\newcommand{\statetytrans}[3]{\mkwd{State}(#1, #2, #3)}
\newcommand{\qqquad}{\quad \qquad}
\newcommand{\serverthread}[1]{\lfloor #1 \rfloor}
\newcommand{\procctx}{\config{P}}
\newcommand{\simplechan}[1]{\mkwd{Chan}(#1)}
\begin{document}

\title{Model-View-Update-Communicate: Session Types meet the Elm Architecture (Extended version)}
\titlerunning{Model-View-Update-Communicate}

\EventEditors{Robert Hirschfeld and Tobias Pape}
\EventNoEds{2}
\EventLongTitle{34th European Conference on Object-Oriented Programming (ECOOP 2020)}
\EventShortTitle{ECOOP 2020}
\EventAcronym{ECOOP}
\EventYear{2020}
\EventDate{July 13--17, 2020}
\EventLocation{Berlin, Germany}
\EventLogo{}
\SeriesVolume{166}
\ArticleNo{14}

\author{Simon Fowler}{University of Edinburgh, Scotland}{simon.fowler@ed.ac.uk}{https://orcid.org/0000-0001-5143-5475}{}%
\authorrunning{S.\ Fowler}

\keywords{Session types, concurrent programming, Model-View-Update}
\ccsdesc{Software and its engineering~Concurrent programming languages}
\Copyright{Simon Fowler}

\funding{This work was supported by ERC Consolidator Grant Skye (grant no.\ 682315) and an ISCF Metrology Fellowship grant provided by the UK government’s Department for Business, Energy and Industrial Strategy (BEIS).
}

\relatedversion{An extended version of the paper is available on arXiv (\url{https://arxiv.org/abs/1910.11108}).}

\acknowledgements{
I thank Jake Browning for sparking my interest in Elm and for his help with
an early prototype of the Links MVU library; S\'ara Decova for a previous
version of the multi-room chat server example; Sam Lindley for many useful
discussions and suggestions; and James Cheney,
April Gon\c{c}alves, and the anonymous ECOOP PC and AEC reviewers for detailed
comments.}

\maketitle
\begin{abstract}
Session types are a type discipline for communication channel endpoints
which allow conformance to protocols to be checked statically. Safely implementing
session types requires linearity, usually in the form of a linear type system.
Unfortunately, linear typing is difficult to integrate with graphical user
interfaces (GUIs), and to date most programs using session types are command
line applications.

In this paper, we propose the first principled integration of session typing and
GUI development by building upon the Model-View-Update (MVU) architecture,
pioneered by the Elm programming language. We introduce $\mvu$, the first formal
model of the MVU architecture, and prove it sound. By extending $\mvu$
with \emph{commands} as found in Elm, along with \emph{linearity} and
\emph{model transitions}, we show the first formal integration of session
typing and GUI programming.
We implement our approach in the Links web programming language, and show
examples including a two-factor authentication workflow and multi-room chat
server.
\end{abstract}

\section{Introduction}\label{sec:introduction}

Modern applications are necessarily concurrent and distributed. Along with concurrency
and distribution naturally comes communication, but communication protocols are
typically informally described, resulting in costly runtime failures and code
maintainability issues.

\emph{Session types}~\cite{Honda93:dyadic, HondaVK98:primitives} are a type
discipline for communication channel endpoints which allow conformance to a
protocol to checked statically rather than after an application is deployed.
Many distributed GUI applications, such as chat applications or multiplayer
games, would benefit from session-typed communication with a server.
Unfortunately, safely implementing session types requires a require a linear
type system,
but safely integrating linear resources and GUIs is nontrivial.
As a consequence, to date most programs using session types are batch-style
applications run on the command line.

The lack of a principled integration of GUI applications and session types is a
significant barrier to their adoption. In this paper, we bridge this gap
by extending the Model-View-Update (MVU) architecture, pioneered by the Elm
programming language, to support linear resources. We present \mvu, a core
formalism of the MVU architecture, and an extended version of \mvu which
supports session-typed communication. Informed by the formal development, we
provide a practical implementation in the Links programming
language~\cite{CooperLWY06:links}.

\subparagraph{Session types by example.}

Let us consider a two-factor authentication workflow, introduced
by~\citet{FowlerLMD19:stwt}.  A user first enters their credentials.  If
correct, the server can then either grant access, or send a challenge key.  If
challenged, the user enters the challenge code into a hardware token, which
generates a response to be entered into the web page. The server then either
authenticates the user or denies access.

We can describe the two-factor authentication example as a session type as
follows:

    {\footnotesize
  \begin{minipage}{0.45\textwidth}
\[
\bl
  \metadef{TwoFactorServer} \defeq \\
  \quad\gvinone{(\mkwd{Username}, \mkwd{Password})} . \mathord{\oplus} \{ \\
  \qquad \mkwd{Authenticated}: \metadef{ServerBody}, \\
  \qquad \mkwd{Challenge}: \gvoutone{\mkwd{ChallengeKey}} .
  \gvinone{\mkwd{Response}} . \\
  \qquad \quad \mathord{\oplus} \{
    \begin{aligned}[t]
     & \mkwd{Authenticated}: \metadef{ServerBody}, \\
     & \mkwd{AccessDenied}: \gvend \}, \\
  \end{aligned} \\
  \qquad \mkwd{AccessDenied}: \gvend \}
\el
\]%
\end{minipage}
\hfill
\begin{minipage}{0.45\textwidth}
\[
\bl
  \metadef{TwoFactorClient} \defeq \\
  \quad\gvoutone{(\mkwd{Username}, \mkwd{Password})} . \mathord{\&} \{ \\
  \qquad \mkwd{Authenticated}: \metadef{ClientBody}, \\
  \qquad \mkwd{Challenge}: \gvinone{\mkwd{ChallengeKey}} .
    \gvoutone{\mkwd{Response}} . \\
  \qquad \quad \mathord{\&} \{
    \begin{aligned}[t]
    & \mkwd{Authenticated}: \metadef{ClientBody}, \\
    & \mkwd{AccessDenied}: \gvend \}, \\
    \end{aligned} \\
    \qquad \mkwd{AccessDenied}: \gvend \}
\el
\]
\end{minipage}
}

The \mkwd{TwoFactorServer} type shows the session type for the server, which
firstly receives ($?$) the credentials from the client, and then chooses
($\oplus$) whether to authenticate, deny access, or issue a challenge. If
the server issues a challenge, it sends ($!$) the challenge string, awaits the
response, and then chooses whether to accept or reject the request. The
\mkwd{ServerBody} type abstracts over the actions performed in the remainder of
the application, for example taking out a loan.
The \mkwd{TwoFactorClient} type is the \emph{dual} of the \mkwd{TwoFactorServer}
type: where the server sends, the client receives, and where the client sends,
the server receives. The $\mathord{\&}$ construct denotes offering a choice of
branches.  Suppose we have constructs for sending along, receiving from, and
closing an endpoint:

\vspace{-1em}
{\footnotesize
  \begin{mathpar}
  \calcwd{send} : (A \times \gvout{A}{S}) \to S

  \calcwd{receive} : \gvin{A}{S} \to (A \times S)

  \calcwd{close} : \gvend \to \one
  \end{mathpar}
}%
Let us also suppose we have constructs for selecting and offering a choice:
{\footnotesize
\[
  \begin{array}{ll}
    \calcwd{select} \: \ell_j \: M : S_j & \quad \text{where } M \text{ has session type } \mathord{\oplus} \{ \ell_i : S_i \}_{i \in I},
    \text{ and } j \in I \\
    \calcwd{offer} \: M \: \{ \ell_i(x_i) \mapsto N_i \}_{i \in I} : A &
    \quad \text{where } M \text{ has session type }  \& \{ \ell_i : S_i \}_{i \in I}, \text{ each }
    x_i \text{ binds an} \\
   & \quad \text{endpoint with session type }S_i, \text{ and each } N_i \text{ has type } A
  \end{array}
\]%
}%
We can write a server implementation as follows:
{\footnotesize
\[
  \bl
  \metadef{twoFactorServer} : \mkwd{TwoFactorServer} \to \one \\
  \metadef{twoFactorServer}(s) \defeq
  \begin{array}[t]{l}
       \letintwo{((\textit{username}, \textit{password}), s)}{\gvrecv{s}} \\
       \calcwd{if} \: \metadef{checkDetails}(\textit{username}, \textit{password})\: \calcwd{then} \\
       \quad \letintwo{s}{\calcwd{select} \: \mkwd{Authenticated} \: s} \: \metadef{serverBody}(s) \\
       \calcwd{else} \: \letintwo{s}{\calcwd{select} \: \mkwd{AccessDenied} \: s} \: \gvclose{s}
  \end{array}
  \el
\]}%
To implement session-typed communication safely, we require a linear type
system~\cite{Wadler90:linear-types} to ensure each communication endpoint
is used exactly once: as an example, without linearity it would be possible to
attempt to receive the credentials twice.

\subparagraph{Linearity and GUIs.}
We can also write a client application:

{\footnotesize
\[
  \bl
  \metadef{twoFactorClient} : (\mkwd{Username} \times \mkwd{Password} \times \mkwd{TwoFactorClient}) \to \one \\
  \metadef{twoFactorClient}(\textit{username}, \textit{password}, s) \defeq \\
  \quad \letintwo{s}{\gvsend{(\textit{username}, \textit{password})}{s}} \\
  \quad \calcwd{offer} \: s \: \{
    \begin{array}[t]{@{}l@{~}c@{~}l@{}}
     \mkwd{Authenticated}(s) &\mapsto& \metadef{clientBody}(s) \\
     \mkwd{Challenge}(s)     &\mapsto&
       \begin{array}[t]{l}
       \letintwo{(\textit{key}, s)}{\gvrecv{s}} \\
       \letintwo{s}{\gvsend{\metadef{generateResponse}(\textit{key})}{s}} \\
       \calcwd{offer} \: s \: \{
         \begin{array}[t]{@{}l@{~}c@{~}l@{}}
         \mkwd{Authenticated}(s) &\mapsto& \metadef{clientBody}(s) \\
         \mkwd{AccessDenied}(s)  &\mapsto& \gvclose{s}; \metadef{loginFailed} \} \\
         \end{array} \\
       \end{array} \\
    \mkwd{AccessDenied}(s) &\mapsto &\gvclose{s}; \metadef{loginFailed}\} \\
    \end{array} \\
  \el
\]
}

\noindent
However, such a client is of little use, as it sends only a pre-defined set of
credentials, and the step where a user enters the response to the challenge is
replaced by a function $\mkwd{generateResponse}$. Ideally, we would like the
credentials to be entered into a GUI, and for a button press to trigger the session
communication with the server.

Let us attempt to write a GUI for the first stage of the two-factor
authentication example; as HTML is well-understood, we concentrate on web pages
in the remainder of the paper.

{\footnotesize
\begin{minipage}{0.4\textwidth}
\[
  \bl
  \mkwd{render}(c) \defeq \\
  \quad \tagzero{html} \\
  \qquad \tagzero{body} \\
  \qqquad \htmltag{input}{\attribute{id} = \textnode{"username"}}{} \\
  \qqquad \htmltag{input}{\attribute{id} = \textnode{"password"}}{} \\
  \qqquad \htmltag{button}{\attribute{onClick} =
  \mkwd{login}(c)}{\textnode{Submit}} \\
  \qquad \tagzeroend{body} \\
  \quad \tagzeroend{html}
  \el
\]
\end{minipage}
\hfill
\begin{minipage}{0.4\textwidth}
  \[
  \bl
  \mkwd{login}(c) \defeq \lambda (). \\
  \quad \letintwo{\var{user}}{\mkwd{getContents}(\textnode{"username"})} \\
  \quad \letintwo{\var{pass}}{\mkwd{getContents}(\textnode{"password"})} \\
  \quad \letintwo{c}{\gvsend{(\var{user}, \var{pass})}{c}} \\
  \quad \mkwd{handleResponse}(c)
  \el
\]
\end{minipage}
}

Given a channel $c$ of type $\mkwd{TwoFactorClient}$, the \mkwd{render} function
generates a web page with input boxes for the username and password, and a
button to submit the credentials. The \mkwd{login} function, triggered when the
button is clicked, retrieves the username and password from the two input boxes,
and sends the credentials along $c$. The $\mkwd{handleResponse}$ function, which
we omit, receives the response from the server and updates the web page.

On first inspection, this implementation seems sound since the endpoint $c$ is used
linearly. However, the above attempt is unsound due to the
asynchronous nature of GUI programming: there is nothing stopping the user
pressing the button twice and sending the credentials twice along $c$, in
contravention of the session type.
As a further complication, suppose we augmented the protocol with a ``forgotten
password'' branch, triggered by another button. This would require two instances
of $c$ in the GUI, again violating linearity:

\vspace{-0.75em}
{\footnotesize
\[
  \bl
  \htmltag{button}{\attribute{onClick} = \mkwd{login}(\hlred{c})}{\textnode{Submit}} \\
  \htmltag{button}{\attribute{onClick} = \mkwd{reset}(\hlred{c})}{\textnode{Reset password}}
  \el
\]%
}
\noindent
It is clear that directly embedding linear resources into a GUI is a
non-starter. A more successful approach involves spawning a separate process
which contains the linear resource, and which receives \emph{non-linear}
messages from the GUI. Upon receiving a GUI message, the process can then perform
the session communication, while ignoring duplicate GUI messages:

{\footnotesize
\begin{minipage}{0.4\textwidth}
\[
  \bl
  \mkwd{render}(c) \defeq \\
  \quad \letintwo{\var{pid}}{\calcwd{spawn} \: \mkwd{handler}(c)} \\
  \quad \tagzero{html} \\
  \qquad \tagzero{body} \\
  \qqquad \htmltag{input}{\attribute{id} = \textnode{"username"}}{} \\
  \qqquad \htmltag{input}{\attribute{id} = \textnode{"password"}}{} \\
  \qqquad \htmltag{button}{\attribute{onClick} =
  \mkwd{login}(\var{pid})}{\textnode{Submit}} \\
  \qquad \tagzeroend{body} \\
  \quad \tagzeroend{html}
  \el
\]
\end{minipage}
\hfill
\begin{minipage}{0.4\textwidth}
  \[
  \bl
  \mkwd{login}(\var{pid}) \defeq \lambda (). \\
  \quad \letintwo{\var{user}}{\mkwd{getContents}(\textnode{"username"})} \\
  \quad \letintwo{\var{pass}}{\mkwd{getContents}(\textnode{"password"})} \\
  \quad \var{pid} \: ! \: \mkwd{SubmitLogin}(\var{user}, \var{pass})\\ \\
  \mkwd{handler}(c) \defeq \\
  \quad \caseofone{(\calcwd{get} \: ())} \: \{ \\
    \qquad \mkwd{SubmitLogin}(\var{user}, \var{pass}) \mapsto \\
    \qquad
    \bl
    \quad \letintwo{c}{\gvsend{(\var{user}, \var{pass})}{c}} \\
    \quad \mkwd{handleResponse}(c)
    \el \\
  \quad \}
  \el
\]
\end{minipage}
}

\noindent
The \mkwd{render} function begins by spawning $\mkwd{handler}(c)$ as a separate
process with an incoming message queue (or \emph{mailbox}), returning the
process ID $\var{pid}$. As before, the \mkwd{login} function is triggered by
pressing the button, and retrieves the credentials from the web page. Instead of
communicating on the channel directly, it sends a \mkwd{SubmitLogin} message
containing the credentials to the process ID of handler process, written
$\var{pid} \: ! \: \mkwd{SubmitLogin}(\var{user}, \var{pass})$. The handler
process retrieves the message from its mailbox ($\calcwd{get} \: ()$), and can
then communicate with the server over the linear endpoint. Such an approach also
scales to the ``forgotten password'' extension, by adding another GUI message.

The above approach is used by~\citet{FowlerLMD19:stwt}, who provide the first
integration of session types and web application development, including the
ability to gracefully handle failures such as the user closing their browser
mid-session.
Unfortunately, the approach is brittle and ad-hoc.
All interaction with the web page occurs using imperative operations such
as $\mkwd{getContents}$ and $\mkwd{setContents}$; contrary to best practices
such as the Model-View-Controller (MVC)~\cite{KrasnerP88:mvc} pattern, the state of the web page is
not derived directly from the data contained by the application.
Furthermore, there is no connection between the state of the handler process and what
is displayed on the web page: this can easily lead to mismatches between the
possible GUI messages which can be sent and which can be handled.

\subparagraph{Model-View-Update.}
This paper is about doing better. Our approach is to formalise
Model-View-Update, an architectural pattern for GUI development popularised by
the Elm programming language~\cite{elm-lang}, and extend it to support linear
resources. MVU is an appealing starting point as it is
particularly suited to functional programming. Furthermore, MVU has directly
inspired popular technologies such as Redux~\cite{redux} and the Flux
architecture~\cite{flux}, which are used with the popular React~\cite{react}
frontend web framework for JavaScript.

The Elm programming language~\cite{elm-lang} is a functional programming
language designed for writing web applications. Elm was
originally designed to use \emph{functional reactive programming}
(FRP)~\cite{ElliottH97:frp}, where time-varying \emph{signals} can be used to construct reactive web
applications.  A paper describing Elm, and its core formal semantics, was
published at PLDI 2013~\cite{CzaplickiC13:elm}.

For many languages, that would be the end of the story. But unusually for a
research language, Elm gained a user community, and a standard architectural
pattern known as \emph{The Elm Architecture} grew organically to such a point
that Elm abandoned FRP altogether~\cite{farewell-frp}.
At its core, The Elm Architecture is a descendant of MVC where a \emph{model}
contains the state of the application; a \emph{view function} renders the
model; and the rendered model produces \emph{messages} which are handled
by an \emph{update} function to produce a new model. More generally, this
pattern has been referred to as \emph{Model-View-Update}, or MVU for
short~\cite{websharper-mvu, mvu-flutter}.

Consider the following web application, where a user enters text into a text
box, and the application displays the text, reversed:
\vspace{-0.5em}
\begin{center}
  \includegraphics[scale=0.45]{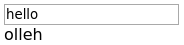}
\end{center}%
\vspace{-1em}

\noindent
We can write this example using MVU as follows:
\vspace{-1em}

{\footnotesize
  \begin{minipage}[t]{0.45\textwidth}
  \[
    \bl
    \mkwd{Model} \defeq (\var{contents} : \stringty) \\
    \mkwd{Message} \defeq \mkwd{UpdateBox}(\stringty) \\ \\
    \mkwd{model} : \mkwd{Model} \\
    \mkwd{model} \defeq (\var{contents} = \textnode{""}) \\ \\
  \mkwd{update} : (\mkwd{Message} \times \mkwd{Model}) \to \mkwd{Model} \\
  \mkwd{update} \defeq \lambda (\mkwd{UpdateBox}(str), \var{m}) . (\textit{contents} = \var{str})
  \el
\]
\end{minipage}
\hfill
\begin{minipage}[t]{0.5\textwidth}
\[
  \bl
  \mkwd{view} : \mkwd{Model} \to \htmlty{\mkwd{Message}} \\
  \mkwd{view} \defeq \lambda \textit{model} . \calcwd{html} \\
      \quad \opentag{input} \: \attribute{type}=\textnode{``text''} \:
      \attribute{value} = \{ \var{model}.\var{contents} \} \\
      \qquad
        \attribute{onInput} = \{ \lambda \textit{str}.
          \mkwd{UpdateBox}(\textit{str})\} \closetag \tagzeroend{input}  \\
          \quad
          \tagzero{div} \\
          \qquad \antiquote{\htmltext{(\metadef{reverseString} \: (\textit{model}. \textit{contents}))}} \\
          \quad \tagzeroend{div} \\ \\
  (\mkwd{model}, \mkwd{view}, \mkwd{update})
  \el
\]
\end{minipage}
}
\vspace{0.5em}

\noindent
We define two type aliases: the \mkwd{Model} captures the state of the
application and is defined as a record with a single $\stringty$ field,
$\var{contents}$.
\emph{Messages} are produced as a result of user interaction. The \mkwd{Message}
type is defined as a singleton variant type with constructor
\mkwd{UpdateBox}, containing the updated value of the text box.

The \mkwd{view} function renders a model. It has the type
$\mkwd{Model} \to \htmlty{\mkwd{Message}}$, which is a function taking a
\mkwd{Model} as its argument, and returning HTML which may
produce messages of type \mkwd{Message}.
The $\attribute{value} = \antiquote{\var{model.contents}}$ attribute of the
\mkwd{input} box states that the contents of the text box should reflect
the $\var{contents}$ field of the model.
The \mkwd{onInput} attribute is an \emph{event handler}: its body is a
function taking the current value of the input box ($\var{str}$) and producing
an \mkwd{UpdateBox} message containing the updated contents of the box. The
contents of the $\mkwd{div}$ tag are derived from the reversed contents.

The \mkwd{update} function takes a message and previous model as its
arguments, and produces a new model. In this case, the \mkwd{update} function
constructs a new model where the \var{contents} field is set to the
payload of the \mkwd{UpdateBox} message.
Finally, the program is a 3-tuple containing the initial model, and the view and update functions.

To achieve our goal of a formal integration of session typing and GUI
programming, we first formalise MVU, and then generalise the
architecture to support linear models and messages. Supporting linearity poses
some challenges, as we will see in~\secref{sec:extensions}.

\subsection{Contributions.}
The overarching contribution of this paper is the first principled integration
of session-typed communication with a GUI framework. Concretely, we make three contributions:
\begin{enumerate}
  \item We introduce the first formal model of the MVU
    architecture, \mvu~\secrefp{sec:formalism}. We prove~\secrefp{sec:metatheory}
    that \mvu satisfies preservation and event progress properties.
  \item We extend \mvu with \emph{commands}, \emph{linearity}, and \emph{model
    transitions}~\secrefp{sec:extensions}, which allow \mvu to support GUIs incorporating session-typed
    communication, and we prove the soundness of the extended calculus.
  \item We implement the architecture in the Links web programming language. We
    show an extended example of a chat application where client code
    uses the linear MVU framework, and where client-server communication happens
    over session-typed channels~\secrefp{sec:example}.

    The implementation and examples are available in the paper's companion artifact.
\end{enumerate}
\begin{figure}[t]
  {
\footnotesize
~\textbf{Syntax} \hfill
\begin{syntax}
  \text{Types} & A, B, C & ::= & \one \midspace A \to B \midspace A \times B
  \midspace A + B \midspace \stringty \midspace \intty \\
               &      & \midspace & \htmlty{A} \midspace \attrty{A} \vspace{\syntaxskip} \\
\text{String literals} & s \vspace{\syntaxskip}\\
\text{Integers} & n \vspace{\syntaxskip}\\
\text{Terms} & L, M, N & ::= &
  x \midspace \lambda x . M \midspace \rec{f}{x}{M} \midspace M \app N \midspace () \midspace s \midspace n \\
  & & \midspace & (M, N) \midspace \letin{(x, y)}{M}{N} \\
  & & \midspace & \inl{x} \midspace \inr{x} \midspace \caseof{L}{\inl{x} \mapsto M; \inr{y} \mapsto N} \\
  & & \midspace & \coretag{t}{M}{N} \midspace \htmltext{M} \midspace \htmlempty \\
  & & \midspace & \attr{\ak}{M} \midspace \attrempty \midspace \append{M}{N} \\
  \end{syntax}
  \[
  \begin{array}{llp{5em}lrcl}
    \text{Tag names} & \tagname{t} & & \text{Attribute keys} & \ak & ::= & \mathit{at} \midspace h \\
    \text{Attribute names} \:\: & \mathit{at} & & \text{Event handler names} & h
    \end{array}
  \]
  \vspace{1em}

  ~\textbf{Typing rules for terms} \hfill \framebox{$\Gamma \vdash M : A$}
\begin{mathpar}
    \inferrule
    [T-HtmlTag]
    { \Gamma \vdash M : \attrty{A} \\ \Gamma \vdash N : \htmlty{A} }
    { \Gamma \vdash \coretag{t}{M}{N} : \htmlty{A}}

    \inferrule
    [T-HtmlText]
    { \Gamma \vdash M : \stringty }
    { \Gamma \vdash \htmltext{M} : \htmlty{A} }

    \inferrule
    [T-HtmlEmpty]
    { }
    { \Gamma \vdash \htmlempty : \htmlty{A} }

    \inferrule
    [T-Attr]
    { \Gamma \vdash M : \stringty }
    { \Gamma \vdash \attr{\at}{M} : \attrty{A} }

    \inferrule
    [T-EvtAttr]
    { \Gamma \vdash M : \evtty{h} \to A }
    { \Gamma \vdash \attr{h}{M} : \attrty{A} }

    \inferrule
    [T-AttrEmpty]
    { }
    { \Gamma \vdash \attrempty : \attrty{A}}

    \inferrule
    [T-HtmlAppend]
    { \Gamma \vdash M : \htmlty{A} \\ \Gamma \vdash N : \htmlty{A} }
    { \Gamma \vdash \append{M}{N} : \htmlty{A} }

    \inferrule
    [T-AttrAppend]
    { \Gamma \vdash M : \attrty{A} \\ \Gamma \vdash N : \attrty{A} }
    { \Gamma \vdash \append{M}{N} : \attrty{A} }
\end{mathpar}
}
\caption{Syntax and typing rules for \mvu terms}
\label{fig:mvu-syntax}
\end{figure}
 \section{Model-View-Update, Formally}\label{sec:formalism}
In this section, we formalise MVU as a core calculus, \mvu, an extension of the
simply-typed $\lambda$-calculus with products, sums, HTML, and event
handling.  Even without extensions, \mvu is expressive enough to support many
common applications such as form handling.

\subsection{Syntax}

\subparagraph{Types.}
Figure~\ref{fig:mvu-syntax} shows the syntax and typing rules for \mvu.
Types are ranged over by $A,B,C$, and consist of the unit type $\one$, functions
$A \to B$, products $A \times B$, sums $A + B$, and string and integer types.
Types $\htmlty{A}$ and $\attrty{A}$ are the type of HTML
elements and attributes which can produce messages of type $A$.

\subparagraph{Terms.}
Terms, ranged over by $L, M, N$, include variables, $\lambda$ abstractions,
anonymous recursive functions, function application, the unit value, string
literals, integers, and sum and pair introduction and elimination.
The remaining terms encode HTML \emph{elements} and \emph{attributes}.
The $\coretag{t}{M}{N}$ construct denotes an HTML element with tag name
$\tagname{t}$ (for example, \lstinline+div+), attributes $M$, and children
$N$; the $\htmltext{M}$ construct
describes a text node with text $M$; and $\htmlempty$ defines an empty HTML node.

The $\attr{\ak}{M}$ construct describes an attribute with key $\ak$ and body
$M$, where the key $\ak$ is either an attribute name $\at$ or an event handler
name $h$. The $\attrempty$ construct defines an empty attribute.

The $\append{M}{N}$ operator appends two HTML elements or
attributes. Since both HTML elements and attributes support a unit element
($\htmlempty$ and $\attrempty$ respectively), elements and attributes together
with $\star$ form two monoids.

\floatstyle{plain}
\restylefloat{figure}
\begin{figure}[t]
  \centering
{\footnotesize
  \begin{tabular}{| c | x{8em} | x{7em} | c |}\hline
    Event name \evt{ev} &  Event Handler $h$ (\handler{\evt{ev}}) & Payload type
    (\evtty{\evt{ev}}, \evtty{h}) & Payload Description \\ \hline
    \clickevt & \evthandler{onClick} & \one & Unit value\\ \hline
    \inputevt & \evthandler{onInput} & \stringty & Updated contents of a text field \\ \hline
    \keyupevt & \evthandler{onKeyUp} & \intty & Key code  \\ \hline
    \keydownevt & \evthandler{onKeyDown} & \intty & Key code \\ \hline
  \end{tabular}
  \caption{Example event signatures}
  \label{fig:events}
}
\end{figure}
\floatstyle{boxed}
\restylefloat{figure}

\subparagraph{Events.}
We model interaction with the Document Object Model (DOM) through \emph{events},
which model those dispatched by a browser.
An \emph{event signature} is a 3-tuple $(\evt{ev}, h, A)$ consisting of an event
name $\evt{ev}$, handler name $h$, and payload type $A$. We require a
bijective mapping between event and handler names.
Figure~\ref{fig:events} describes example event signatures used in the remainder
of the paper. We consider four primitive events: \clickevt, which is fired when
an element is clicked; \inputevt, which is fired when the contents of a text
field are changed; and \keyupevt and \keydownevt, which are fired when a key is
pressed while focused on an element.

Event handlers are attached to elements as attributes, and generate a message in
response to an event. We write $\handler{\evt{ev}}$ to refer to the handler for
$\evt{ev}$: for example, $\handler{\clickevt} =
\evthandler{onClick}$. We write $\evtty{\evt{ev}}$ to refer to the payload type
of $\evt{ev}$ and write $\evtty{h}$ for the payload
type of an event handled by $h$. As an example, both
$\evtty{\clickevt} = \one$ and $\evtty{\evthandler{onClick}} = \one$.

\subparagraph{Term typing.}
Term typing rules for
$\lambda$-calculus constructs are standard, so are omitted.
Rule \textsc{T-HtmlTag} states that $\coretag{t}{M}{N}$ can be given type
$\htmlty{A}$ if its attributes $M$ have type $\attrty{A}$ and children have type
$\htmlty{A}$. Text nodes $\htmltext{M}$ do not produce any messages, and so
have type $\htmlty{A}$ if $M$ has type $\stringty$
(\textsc{T-HtmlText}); similarly, $\htmlempty$ has type $\htmlty{A}$
(\textsc{T-HtmlEmpty}).

Rule \textsc{T-Attr} assigns attributes $\attr{\at}{M}$
type $\attrty{A}$ for any $A$ if $M$ has type $\stringty$. Rule
\textsc{T-EvtAttr} types event handler attributes $\attr{\eh}{M}$: if the event
handler $M$ has type $\evtty{\eh} \to A$ (i.e., it \emph{produces messages of
type} $A$), then the attribute can be given type $\attrty{A}$. Finally,
\textsc{T-AttrEmpty} states that the empty attribute $\attrempty$ has type
$\attrty{A}$ for any type $A$.
We overload the $\star$ operator to append both HTML elements and attributes
(\textsc{T-HtmlAppend} and \textsc{T-AttrAppend}).

\subparagraph{Syntactic sugar.}
We assume the usual encodings of records as pairs and
variant types as binary sums, and use pattern matching notation.
It is useful to be able to write HTML using XML-like notation, where
an \emph{antiquoted expression} $\antiquote{M}$ allows a term $M$ to be embedded
within an HTML tree.
The \mkwd{view} function from the introduction desugars to:

\begin{minipage}{\textwidth}
  {\footnotesize
  \[
    \begin{array}{l}
      \lambda \textit{model} . \\
      \quad
      \begin{aligned}
       & (\coretagone{input} \\
       & \quad ((\attr{\attribute{type}}{\textnode{``text''}}) \star
       (\attr{\attribute{value}}{\var{model}.\var{contents}}) \star \\
       & \qquad
         (\attr{\attribute{onInput}}{(\lambda \textit{str}.
         \mkwd{UpdateBox}(\textit{str})}))) \: \htmlempty) \: \star \\
       & \quad
         \coretag{div}{\attrempty}{(\htmltext{\metadef{reverseString} \:
         (\textit{model}.\textit{contents}}))}
    \end{aligned}
    \end{array}
  \]%
}
\end{minipage}%

\noindent
The formal definitions and desugaring translations are unsurprising; the details
can be found in the extended version~\cite{Fowler20:mvuc-extended}.

\subsection{Operational Semantics}
We can now provide \mvu with a small-step operational semantics.

\subsubsection{Runtime Syntax}
\begin{figure}[t]
  {\small
  \begin{syntax}
    \text{Values} & U, V, W & ::= & \lambda x . M \midspace \rec{f}{x}{M} \midspace ()
    \midspace (V, W) \midspace
    \inl{V} \midspace \inr{V} \midspace s \midspace n \\ %
                  &         & \midspace & \coretag{t}{V}{W} \midspace \htmlempty \midspace \htmltext{V} \\
                  &         & \midspace & \attr{\ak}{V} \midspace \attrempty \midspace \append{V}{W}
                  \vspace{\syntaxskip} \\
    \text{Events} & e & ::= & \evtpayload{\evt{ev}}{V} \vspace{\syntaxskip}\\
    \text{DOM Pages}  & \vh & ::= & \pgtag{t}{V}{\vh}{\seq{e}} \midspace \pgtext{V}
      \midspace \pgempty \midspace \append{\vh}{\vh'} \vspace{\syntaxskip}\\
    \text{Active thread} & T & ::= & \idle{V_m} \midspace M \vspace{\syntaxskip} \\
    \text{Function state} & F & ::= & \state{V_v}{V_u} \vspace{\syntaxskip} \\
    \text{Processes} & P, Q & ::= & \run{M} \midspace \handlerproc{T}{F}
    \midspace \thread{M} \midspace P \parallel Q \vspace{\syntaxskip} \\
    \text{Configurations}  & \config{C} & ::= & \sys{P}{\vh} \\ \\
    \text{Process contexts} & \procctx & ::= & [~] \midspace \procctx \parallel P \vspace{\syntaxskip} \\
    \text{DOM contexts} & \config{D} & ::= &
      [~] \midspace \pgtag{t}{V}{\config{D}}{\seq{e}} \midspace
      \append{\config{D}}{D} \midspace \append{D}{\config{D}} \vspace{\syntaxskip} \\
    \text{Thread contexts} & \config{T} & ::= & \run{E} \midspace \handlerproc{E}{F} \midspace \thread{E} \\
  \end{syntax}
}
\caption{Runtime syntax for \mvu}
\label{fig:rt-syntax}
\end{figure}
 Figure~\ref{fig:rt-syntax} describes the runtime syntax of \mvu.
Values, ranged over by $U, V, W$, are standard.
An event $\evtpayload{\evt{ev}}{V}$ consists of event name $\evt{ev}$ and
payload $V$.
We write $\epsilon$ for an empty meta-level sequence, and use $\cdot$ for
sequence concatenation.
DOM pages, ranged over by $D$, are the runtime representation of
HTML, where tags $\pgtag{t}{V}{D}{\seq{e}}$ contain an event queue $\seq{e}$ of
events dispatched to the element.

\subparagraph{Concurrency.}
Concurrency is vital when modelling GUI applications as event handling is
asynchronous: computation triggered by a user interaction should not block the
UI. Concurrency is also essential when considering session-typed communication.
We therefore formulate the calculus as a concurrent $\lambda$-calculus
in the style of~\citet{NiehrenSS06:concurrent}, by augmenting the simply-typed
$\lambda$-calculus with processes and concurrent reduction.

\subparagraph{Processes.}
An \emph{initialisation process} $\run{M}$ evaluates the initial system state
written by a user, where $M$ is a 3-tuple containing the initial model, view
function, and update function.
An \emph{event loop process} $\handlerproc{T}{F}$ consists of an active
thread $T$ and function state $F$ comprising the view and update functions. The
thread can either be $\idle{V_m}$, meaning the process has current model $V_m$
and is waiting for another message to process, or evaluating a term $M$.  An
\emph{event handler process} $\thread{M}$ is spawned to generate a message in
response to an event.

\subparagraph{Configurations.}
Concurrent and event-driven reduction happens in the context of a \emph{system
configuration} $\sys{P}{\vh}$, where $P$ is the concurrent fragment of the
system and $\vh$ is the current DOM page.
An MVU program as written by a user is a term $M$ specifying the initial model,
view function, and update function, of type $(A \times (A \to
\htmlty{B}) \times ((B \times A) \to A))$.  A program is evaluated in the
context of an \emph{initial configuration}:

\begin{definition}[Initial configuration]
  An \emph{initial configuration} for a term $M$ is of the form
  $\sys{\run{M}}{\pgempty}$.
\end{definition}

\subparagraph{Evaluation contexts.}
Term evaluation contexts $E$ (omitted) are set up for call-by-value,
left-to-right evaluation.
Process contexts $\procctx$ allow reduction under
parallel composition.
Thread contexts $\config{T}$ allow reduction inside threads. DOM contexts
$\config{D}$ allow us to focus on each element of a DOM forest; note that
they deliberately allow non-unique decomposition in order to support
nondeterministic reduction.

\begin{figure}[t]
  {\footnotesize
      ~\textbf{Meta-level definitions}
\vspace{-1em}

\begin{minipage}{0.25\textwidth}
\[
  \begin{array}{l}
    \handle{\textit{m}, (\textit{v}, \textit{u}), \textit{msg}} \defeq \\
      \quad \letintwo{m'}{u \app (\textit{msg}, m)} \\
      \quad {(m', v \app m')}
  \end{array}
\]
\end{minipage}
\hfill
\begin{minipage}{0.675\textwidth}
\[
\begin{array}{rcl}
  \handlers{\ev}{\attrempty} & = & \epsilon \\
  \handlers{\ev}{\append{V}{W}} & = & \handlers{\ev}{V} \cdot \handlers{\ev}{W} \\
  \handlers{\ev}{\attr{\at}{V}} & = & \epsilon \\
  \handlers{\ev}{\attr{h}{V}} & = &
  \begin{cases}
    V & \text{if } \handler{\ev} = h \\
    \epsilon & \text{otherwise}
  \end{cases}
  \end{array}
\]
\end{minipage}

~\textbf{Process reduction} \hfill \framebox{$P \ceval P'$}
\vspace{-0.5em}
\def\arraystretch{1.2}
\[
  \begin{array}{@{}l@{\:\:}r@{\:\:\:}c@{\:\:\:}l@{}}
    \textsc{EP-Handle} &
    \handlerproc{\idle{V_m}}{F} \parallel \thread{V} & \ceval &
    \handlerproc{\handle{V_m, F, V}}{F} \\
    \textsc{EP-Par} &
      P_1 \parallel P_2 & \ceval & P'_1 \parallel P_2 \qquad \text{if }
      P_1\ceval P'_1 \\
    \textsc{EP-LiftT} &
    \config{T}[M] & \ceval & \config{T}[N] \qquad \text{if } M \teval N
  \end{array}
\]

\vspace{-0.75em}
  ~\textbf{Configuration reduction} \hfill \framebox{$\config{C} \ceval \config{C}'$}
\vspace{-0.5em}
\def\arraystretch{1.2}
\[
  \begin{array}{l@{\:\:\:}c@{\:\:\:}l@{}}
      \textsc{E-Run} \hfill
      \sys{\procctx[\runexp{V_m}{V_v}{V_u}]}{D}  & \ceval & \sys{\procctx[\handlerprocexp{(V_m, V_v \app
      V_m)}{V_v}{V_u}]}{D} \vspace{\redrowskip} \\
    \textsc{E-Update} \hfill
    \sys{\procctx[\handlerproc{(V_m, U)}{F}]}{\vh}
 & \ceval &
 \sys{\procctx[\handlerproc{\idle{V_m}}{F}]}{\vh'} \quad \text{where } \diff{U}{\vh} = \vh' \vspace{\redrowskip} \\
    \textsc{E-Interact} \qquad \hfill
    \sys{P}{\config{D}[\pgtag{t}{U}{D}{\seq{e}}]} & \ceval &
    \sys{P}{\config{D}[\pgtag{t}{U}{D}{\seq{e} \cdot \evtpayload{\evt{ev}}{V}}]}
    \\
                                                  & &
                                                  \quad \text{for some }
                                                  \evt{ev}, V \text{ such that } \vdash \evtpayload{\evt{ev}}{V}
                                                  \vspace{\redrowskip}\\
    \textsc{E-Evt} \\
    \hfill
\sys{P}{\config{D}[\pgtag{t}{U}{D}{\evtpayload{\evt{ev}}{W} \cdot \seq{e}}]}
                 & \ceval &
 \sys{P \parallel \thread{V_1 \app W} \parallel \cdots \parallel \thread{V_n \app W}}{\config{D}[\pgtag{t}{U}{D}{\seq{e}}]} \\
                 & &  \text{where } \handlers{\evt{ev}}{U} = \seq{V}
                 \vspace{\redrowskip} \\
    \textsc{E-Struct} \hfill \config{C} & \ceval & \config{C}' \qquad
    \text{if } \config{C} \equiv \config{C}_1, \config{C}_1 \ceval \config{C}_2, \text{ and } \config{C}_2 \equiv \config{C}' \\
    \textsc{E-LiftP} \hfill
    \sys{P}{D} & \ceval & \sys{P'}{D} \quad \text{if } P \ceval P'
\end{array}
\]
}
\caption{Reduction rules for \mvu terms and configurations}
\label{fig:reduction}
\end{figure}%
\subsubsection{Reduction Rules}
Figure~\ref{fig:reduction} shows the reduction rules for \mvu processes and
configurations; reduction on terms is standard $\beta$-reduction.
Reduction on configurations is defined modulo the associativity and
commutativity of parallel composition.

\subparagraph{Diffing.}
As DOM pages include event queues, they contain strictly more information
than HTML. To avoid losing pending events, we require a diffing operation.
Define $\erase{D}$ as the operation $\erase{\pgtag{t}{U}{D}{\seq{e}}} =
\coretag{t}{U}{(\erase{D})}$, with the other cases defined recursively.  DOM
pages can be modified by adding a node with an empty queue, removing a node, or
updating a node's attributes.  We define operation $\diff{U}{D} = D'$ if
$\erase{D'} = U$, and $D'$ is obtained from $D$ by the minimum number of
insertions and deletions.

\subparagraph{Semantics by example.}
Let us return to our original example from~\secref{sec:introduction}: a box
and a text node displaying the reversed box contents.
We reuse the \mkwd{view} and \mkwd{update} functions and let $V_m =
(\var{contents} = \textnode{``''})$, $V_v = \mkwd{view}$, and $V_u =
\mkwd{update}$.
We extend the HTML syntactic sugar to pages, letting $\translate{-}$ be a
desugaring function and
$\translate{\htmltag{t}{\seq{a} \qsep \seq{e}}{\seq{\domh}}} =
\pgtag{t}{\translate{\seq{a}}}{\translate{\seq{\domh}}}{\seq{e}}$.

We write $\mathcal{R}^+$ for the transitive closure of a relation $\mathcal{R}$.
We begin by supplying the model, view, and update parameters to an initial
configuration.
By \textsc{E-Run}, we get an event loop process, and then term $V_v \app V_m$
reduces to the initial rendered HTML.  By diffing against the empty page, we
display the initial DOM page (\textsc{E-Update}).

{\footnotesize
  \[
    \begin{array}{l}
      \quad \sys{\runexp{V_m}{V_v}{V_u}}{\pgempty} \vspace{\redrowskip} \\
      \ceval (\textsc{E-Run})
      \quad \sys{\handlerprocexp{(V_m, V_v \app V_m)}{V_v}{V_u}}{\pgempty} \vspace{\redrowskip}  \\
      \teval^+ \vspace{-0.5em} \\
      \quad \sys{\handlerprocexp{(V_m,
        \bl
          \opentag{input} \: \attribute{type}=\textnode{``text''} \:\:
          \attribute{value}=\textnode{``''} \\
            \quad  \attribute{onInput} = \{ \lambda \var{str}.
              \mkwd{UpdateBox}(\var{str})\} \closetag \tagzeroend{input}  \\
               \htmltagzero{div}{}
        \el
      )}{V_v}{V_u}}{\pgempty} \vspace{\redrowskip} \\
      \ceval (\textsc{E-Update}) \vspace{-0.5em} \\
        \quad \sys{\handlerprocexp{\idle{V_m}}{V_v}{V_u}}{
          {
          \bl
            \opentag{input} \: \attribute{type}=\textnode{``text''} \:\:
            \attribute{value}=\textnode{``''} \\
              \quad  \attribute{onInput} = \{ \lambda \var{str}.
                \mkwd{UpdateBox}(\var{str})\} \qsep \epsilon \closetag \tagzeroend{input}  \\
                \htmltagqueue{div}{\epsilon}{}
          \el
        }
        }
      \end{array}
    \]%
  }

  \noindent
  The system now does not reduce until a user interacts with the text box and
presses the $k$ key, modelled by \textsc{E-Interact}.
At this point, the event queue for the \tagname{input} box receives
four events: \evt{click}, \evt{keyDown}, \evt{keyUp}, and \evt{input}, which are
are processed by rule \textsc{E-Evt}. The \tagname{input} element does
not have handlers for the \evt{click}, \evt{keyDown}, and \evt{keyUp} events, so
no processes are spawned, but \emph{does} contain an \attribute{onInput} handler,
which handles the \evt{input} event by spawning
$\thread{\mkwd{UpdateBox}(\textnode{``k''})}$.%

  {\footnotesize
    \[
      \bl
      \ceval^+ (\textsc{E-Interact}) \vspace{-1em} \\
      \quad
        \sys{\handlerprocexp{\idle{V_m}}{V_v}{V_u}}{
      \bl
      \opentag{input} \: \attribute{type}=\textnode{``text''} \:
      \attribute{value}=\textnode{``''} \\
              \quad \attribute{onInput} = \{ \lambda \var{str}.
              \mkwd{UpdateBox}(\var{str})\} \qsep \evtpayload{click}{()} \cdot \\
              \quad \evtpayload{keyDown}{75} \cdot \evtpayload{keyUp}{75} \cdot
              \evtpayload{input}{\textnode{``k''}}\closetag \tagzeroend{input} \\
              \htmltagqueue{div}{\epsilon}{}
      \el
      } \vspace{\redrowskip} \\
      \ceval^+ (\textsc{E-Evt}) \vspace{-1em} \\
      \quad
        \sys{\handlerprocexp{\idle{V_m}}{V_v}{V_u}}{
        \bl
        \opentag{input} \: \attribute{type}=\textnode{``text''} \:
        \attribute{value}=\textnode{``''} \\
                \quad \attribute{onInput} = \{ \lambda \var{str}.
                \mkwd{UpdateBox}(\var{str})\} \qsep
                  \evtpayload{input}{\textnode{``k''}}\closetag
                  \\ \tagzeroend{input} \\
                \htmltagqueue{div}{\epsilon}{}
        \el
      } \vspace{\redrowskip} \\
      \ceval (\textsc{E-Evt}) \vspace{-1em} \\
      \quad
        \sys{\handlerprocexp{\idle{V_m}}{V_v}{V_u} \parallel
          \thread{\mkwd{UpdateBox}(\textnode{``k''})}}{
        \bl
        \opentag{input} \: \attribute{type}=\textnode{``text''} \:
        \attribute{value}=\textnode{``''} \\
                \quad \attribute{onInput} = \{ \lambda \var{str}.
                \mkwd{UpdateBox}(\var{str})\}
                \\ \qsep \epsilon \closetag \tagzeroend{input} \\
                \htmltagqueue{div}{\epsilon}{}
        \el
      }
      \el
    \]%
    }

    Since $\mkwd{UpdateBox}(\textnode{"k"})$ is already a value and the event
    loop process is idle, we can process the message (\textsc{E-Handle}).  The
    \mkwd{handle} meta-function calculates a new model $m'$ by applying the
    \mkwd{update} function to a pair of the previous model and the message,
calculates a new HTML value $v'$ by applying the \mkwd{view} function to $m'$,
and returns the pair $(m', v')$.
Finally, the page is diffed against the previous DOM page to generate a new DOM
page $D'$, and the event loop process reverts to being idle:

{\footnotesize
      \[
      \bl
      \ceval (\textsc{EP-Handle}) \vspace{-0.75em} \\
      \quad
      \sys{\handlerprocexp{\handle{V_m, (V_v, V_u), \mkwd{UpdateBox}(\textnode{``k''})}}{V_v}{V_u}}{
        \bl
        \opentag{input} \: \attribute{type}=\textnode{``text''} \:
        \attribute{value}=\textnode{``''} \\
                \quad \attribute{onInput} = \{ \lambda \var{str}.
                \mkwd{UpdateBox}(\var{str})\}
                \\ \qsep \epsilon \closetag \tagzeroend{input} \\
                \htmltagqueue{div}{\epsilon}{}
        \el
      } \vspace{\redrowskip} \\
      \teval^+
\vspace{\redrowskip}
      \\
      \quad
      \sys{\handlerprocexp{(
    \bl
      (\var{contents}=\textit{``k''}), \\
       \quad \opentag{input} \: \attribute{type}=\textnode{``text''} \:\:
         \attribute{value}=\textnode{``k''} \\
        \qquad  \attribute{onInput} = \{ \lambda \var{str}.
           \mkwd{UpdateBox}(\var{str})\} \closetag \\
        \quad \tagzeroend{input}  \\
           \quad  \htmltagzero{div}{k} \\
    \el
        )}{V_v}{V_u}}{
        \bl
        \opentag{input} \: \attribute{type}=\textnode{``text''} \:
        \attribute{value}=\textnode{``''} \\
                \quad \attribute{onInput} = \{ \lambda \var{str}.
                \mkwd{UpdateBox}(\var{str})\}
                \\ \qsep \epsilon \closetag \tagzeroend{input} \\
                \htmltagqueue{div}{\epsilon}{}
        \el
      } \vspace{\redrowskip} \\
      \ceval (\textsc{E-Update}) \\
      \quad
      \sys
      {\handlerprocexp{\idle{(\textit{contents}=\textit{\textnode{``k''}})}}{V_v}{V_u}}
{ \bl
   \opentag{input} \: \attribute{type}=\textnode{``text''} \:\:
   \attribute{value}=\textnode{``k''} \\
     \quad  \attribute{onInput} = \{ \lambda \var{str}.
       \mkwd{UpdateBox}(\var{str})\} \qsep \epsilon \closetag \tagzeroend{input}  \\
       \htmltagqueue{div}{\epsilon}{\textnode{k}}
\el
}
    \el
  \]%
}%

\subsection{Metatheory}\label{sec:metatheory}

\subparagraph{Runtime typing.}
\begin{figure}[t]
  {\footnotesize
    \begin{minipage}[t]{0.325\textwidth}
      ~\textbf{Typing rules for events} \hfill \framebox{$\vdash e$}
\begin{mathpar}
  \inferrule
  [TE-Evt]
  { \cdot \vdash V : \evtty{\evt{ev}} }
  { \vdash \evtpayload{\evt{ev}}{V} }
\end{mathpar}
\end{minipage}
\hfill
\begin{minipage}[t]{0.675\textwidth}
  ~\textbf{Typing rules for active threads}
\hfill
\framebox{$\vdashs T :
\evtloopty{A}{B}$}
  \begin{mathpar}
    \inferrule
    [TS-Idle]
    { \cdot \vdash V_m : A }
    { \vdash \idle{V_m} : \evtloopty{A}{B} }

    \inferrule
    [TS-Processing]
    { \cdot \vdash M : (A \times \htmlty{B}) }
    { \vdash M : \evtloopty{A}{B} }
  \end{mathpar}
\end{minipage}
\vspace{1em}

~\textbf{Typing rules for processes and configurations} \hfill
\framebox{$\vdash^\phi P {:} A$}~\framebox{$\vdash\vphantom{\vdash^\phi} \config{C}$}
    \begin{mathpar}
    \inferrule
    [TP-Run]
    { \cdot \vdash M : (A \times (A \to \htmlty{B}) \times ((B \times A) \to A))  }
    { \vdash^\bcirc \run{M} : B }

    \inferrule
    [TP-EventLoop]
    { \vdash T {:} \evtloopty{A}{B} \\\\
      \cdot \vdash V_v {:} A \to\! \htmlty{B} \\
      \cdot \vdash V_u {:} (B \times A) \to A
    }
    { \vdash^\bcirc \handlerprocexp{T}{V_v}{V_u} : B }

    \inferrule
    [TP-Thread]
    { \cdot \vdash M {:} A }
    { \vdash^\wcirc \thread{M} {:} A }

    \inferrule
    [TP-Par]
    { \vdash^{\phi_1} P_1 {:} A \\ \vdash^{\phi_2} P_2 {:} A }
    { \vdash^{\phi_1 + \phi_2} P_1 \parallel P_2 {:} A }

    \inferrule
    [TC-System]
    { \vdash^{\bcirc} P : A \\ \vdash D : \pagety{A} }
    { \vdash \sys{P}{\vh} }
  \end{mathpar}

  ~\textbf{Combination of flags} \hfill \framebox{$\phi_1 + \phi_2$}
  \begin{mathpar}
    \wcirc + \wcirc = \wcirc

    \wcirc + \bcirc = \bcirc

    \wcirc + \bcirc = \bcirc

    \bcirc + \bcirc \text{ undefined}
  \end{mathpar}
  \vspace{-0.75em}
}
  \caption{Runtime typing for \mvu}
  \label{fig:formalism:runtime-typing}
\end{figure}
 To reason about the metatheory, we require runtime typing rules, shown in
Figure~\ref{fig:formalism:runtime-typing}.
Judgement $\vdash e$ states that the payload of an event $e$
has the payload type specified by its signature.
Judgement $\vdashs T : \evtloopty{A}{B}$ can be read ``Active thread $T$ has model type $A$ and message type $B$''. An idle
thread $\idle{V_m}$ has type $\evtloopty{A}{B}$ if $V_m$ has type $A$ (\textsc{TS-Idle}).
An active thread $M$ currently processing a message has type $\evtloopty{A}{B}$
if $M$ has type $(A \times \htmlty{B})$, i.e., computes a pair of a new model
with type $A$ and HTML which produces messages of type $B$
(\textsc{TS-Processing}).

Judgement $\vdash^\phi P : A$ states that process $P$ is well typed and produces
or consumes messages of type $A$.  The parallel composition of two processes
$P_1 \parallel P_2$ has message type $A$ if both $P_1$ and $P_2$ have message
type $A$ (\textsc{TP-Par}).  An event handler process $\thread{M}$ has message
type $A$ if term $M$ has type $A$ (\textsc{TP-Thread}).

An initialisation process $\run{M}$ is well-typed if $M$ is a product type where
each component has the correct model, view, and update types.  An event loop
process $\handlerprocexp{T}{V_v}{V_v}$ has message type
$B$ if its active thread $T$ has model type $A$ and message type $B$; its view
function $V_v$ has type $A \to \htmlty{B}$; and its update function has type $(B
\times A) \to A$ (\textsc{TP-EventLoop}). Thread flags $\phi$ ensure
that there is precisely one initialisation process or event loop process in a
process typeable under flag $\bcirc$.

Judgement $\vdash \config{C}$ states that configuration
$\config{C}$ is well-typed: a system configuration $\sys{P}{\vh}$ is
well-typed if process $P$ has precisely one event loop process with message type
$A$ and page $D$ has type $\pagety{A}$.
The omitted typing rules for pages (of shape $\vdash D : \pagety{A}$) are
similar to those for terms of type $\htmlty{A}$.

Note that we consider only closed configurations and processes since
it makes little sense for DOM pages $D$ to contain free variables, and because
processes do not bind variables.

We are now well-placed to state some formal results.
We omit proofs in the main
body of the paper; full proofs can be found in the extended
version~\cite{Fowler20:mvuc-extended}.

\subparagraph{Preservation.}
Reduction preserves typing.

\begin{theorem}[Preservation]\label{thm:config-pres}
  If $\vdash \config{C}$ and $\config{C} \ceval \config{C}'$, then
  $\vdash \config{C}'$.
\end{theorem}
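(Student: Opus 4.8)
The plan is to prove Theorem~\ref{thm:config-pres} by the standard subject-reduction recipe, built up in layers: terms, then processes, then configurations. First I would establish the usual term-level infrastructure --- a substitution lemma and a replacement lemma for term evaluation contexts $E$ --- and conclude term preservation: if $\cdot \vdash M : A$ and $M \teval N$ then $\cdot \vdash N : A$, by case analysis on the $\beta$-rule. I would then lift this to threads and processes via replacement lemmas for thread contexts $\config{T}$ and process contexts $\procctx$, and prove that the structural congruence on configurations (associativity and commutativity of $\parallel$) preserves typing; the only nontrivial point there is that the flag-combination operator $\phi_1 + \phi_2$ is commutative and associative wherever defined, so the reassembled parallel composition keeps the same flag and hence the invariant that exactly one leaf process carries $\bcirc$.

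The DOM-specific part needs three auxiliary lemmas. (i) \emph{Erasure characterises page typing}: $\vdash D : \pagety{A}$ iff $\cdot \vdash \erase{D} : \htmlty{A}$ and every event in every queue of $D$ is well-typed; proved by induction on $D$, mirroring \textsc{T-HtmlTag}/\textsc{T-HtmlText}/\textsc{T-HtmlEmpty}/\textsc{T-HtmlAppend} and the (omitted) page rules. (ii) \emph{DOM-context decomposition}: if $\vdash \config{D}[\pgtag{t}{U}{D_0}{\seq{e}}] : \pagety{A}$ then $\cdot \vdash U : \attrty{A}$, $\vdash D_0 : \pagety{A}$, and each event in $\seq{e}$ is well-typed; moreover plugging any page of type $\pagety{A}$ back into $\config{D}$ yields a page of type $\pagety{A}$ --- in particular extending or shrinking a queue by a well-typed event, or replacing the node by the spawned-threads residual, preserves typing. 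This is proved by induction on $\config{D}$ (the non-unique decompositions through $\star$ are harmless). (iii) \emph{Handler extraction is well-typed}: if $\cdot \vdash U : \attrty{A}$ and $\handlers{\evt{ev}}{U} = \seq{V}$, then each $V_i$ has type $\evtty{\evt{ev}} \to A$; proved by induction on $U$ using \textsc{T-Attr}, \textsc{T-EvtAttr}, \textsc{T-AttrEmpty}, \textsc{T-AttrAppend}, and the bijection $\handler{\cdot}$ identifying $\evtty{\evt{ev}}$ with $\evtty{\handler{\evt{ev}}}$. Finally I record the trivial fact that $\handle{V_m, (V_v, V_u), V_{\var{msg}}}$ is typeable at $A \times \htmlty{B}$ whenever $V_m : A$, $V_v : A \to \htmlty{B}$, $V_u : (B \times A) \to A$, and $V_{\var{msg}} : B$.

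With these lemmas, the theorem is a case analysis on $\config{C} \ceval \config{C}'$ after inverting $\vdash \config{C}$ to get $\config{C} = \sys{P}{D}$ with $\vdash^\bcirc P : A$ and $\vdash D : \pagety{A}$. For \textsc{E-Run}: invert \textsc{TP-Run}, type $V_v \app V_m$ by application, and conclude by \textsc{TP-EventLoop} with \textsc{TS-Processing}. For \textsc{E-Update}: invert \textsc{TP-EventLoop}/\textsc{TS-Processing} to get $U : \htmlty{A}$; since $\erase{\diff{U}{D}} = U$ and the diff only inserts/deletes nodes with empty queues, lemma (i) gives $\vdash \diff{U}{D} : \pagety{A}$, while $\idle{V_m}$ is typed by \textsc{TS-Idle}. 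For \textsc{E-Interact}: use the side condition $\vdash \evtpayload{\evt{ev}}{V}$ with lemma (ii). For \textsc{E-Evt}: lemma (ii) gives $\cdot \vdash U : \attrty{A}$ and (via \textsc{TE-Evt}) $\cdot \vdash W : \evtty{\evt{ev}}$; lemma (iii) gives each $V_i : \evtty{\evt{ev}} \to A$, so $V_i \app W : A$ and $\thread{V_i \app W}$ is typeable at $A$ under flag $\wcirc$ by \textsc{TP-Thread}; repeated use of \textsc{TP-Par} with $\wcirc + \bcirc = \bcirc$ retypes $P \parallel \thread{V_1 \app W} \parallel \cdots \parallel \thread{V_n \app W}$ at flag $\bcirc$, and the residual page is well-typed by lemma (ii). For \textsc{E-Struct}: use congruence preservation. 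For \textsc{E-LiftP}: use process preservation (itself a case split over \textsc{EP-Handle} --- needing the $\handle{\cdot}$ typing fact --- \textsc{EP-Par}, and \textsc{EP-LiftT}, which uses term preservation and the thread-context replacement lemma), noting the page is unchanged.

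The step I expect to be the main obstacle is \textsc{E-Evt}, which simultaneously requires the handler-extraction lemma (iii) --- whose statement must correctly identify the payload type of the fired event with the domain type of the extracted handlers, threading this through the DOM and attribute structure by induction --- and careful flag bookkeeping to maintain the ``exactly one $\bcirc$-process'' invariant when several new threads are spawned. The diffing step in \textsc{E-Update} is a close second, but once lemma (i) reduces page typing to typing of the erasure, the fact that $\diff{U}{D}$ is specified only up to ``minimum insertions and deletions'' is irrelevant, since the single property $\erase{\diff{U}{D}} = U$ already suffices.
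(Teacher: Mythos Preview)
Your proposal is correct and follows essentially the same route as the paper: layered preservation (terms $\Rightarrow$ processes $\Rightarrow$ configurations), equivalence preservation for \textsc{E-Struct}, DOM-context typeability/replacement for \textsc{E-Interact} and \textsc{E-Evt}, and a diffing argument for \textsc{E-Update}. The only organisational differences are that the paper packages the diffing step as a standalone lemma (from $\cdot \vdash U : \htmlty{A}$ and $\vdash D : \pagety{A}$ conclude $\vdash \diff{U}{D} : \pagety{A}$) rather than your biconditional erasure characterisation, and that the paper argues handler extraction inline in the \textsc{E-Evt} case whereas you factor it out as lemma~(iii); neither difference is substantive.
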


\subparagraph{Progress.}
The system vacuously satisfies a progress property as it can always
reduce by \textsc{E-Interact} due to user input. It is more interesting to
consider the \emph{event progress} property enjoyed by the system \emph{without}
\textsc{E-Interact}: either there are no events to process and the
system is idle, or the system can reduce.
Functional reduction satisfies progress.
\begin{lemma}[Progress (Terms)]\label{lem:term-progress}
If $\cdot \vdash M : A$, then either $M$ is a value, or there exists some $N$ such that $M \teval N$.
\end{lemma}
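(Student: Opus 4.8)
The plan is to argue by induction on the derivation of $\cdot \vdash M : A$ — equivalently, by case analysis on the shape of $M$ as forced by the last typing rule. Before the main induction I would establish the usual \emph{canonical forms} lemma, classifying closed values by type: a closed value of type $A \to B$ is a $\lambda$-abstraction or a recursive function $\rec{f}{x}{N}$; a closed value of type $A \times B$ has the form $(V,W)$; a closed value of type $A + B$ has the form $\inl{V}$ or $\inr{V}$; a closed value of type $\stringty$, $\intty$, or $\one$ is, respectively, a string literal, an integer, or $()$; a closed value of type $\htmlty{A}$ is built only from $\coretag{t}{V}{W}$, $\htmltext{V}$, $\htmlempty$, and $\append{V}{W}$; and a closed value of type $\attrty{A}$ is built only from $\attr{\ak}{V}$, $\attrempty$, and $\append{V}{W}$. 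Each clause follows by inspecting the value grammar of Figure~\ref{fig:rt-syntax} against the typing rules of Figure~\ref{fig:mvu-syntax}.

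For the main induction I would proceed case by case. The variable case is vacuous since the context is empty. The introduction forms $\lambda x.N$, $\rec{f}{x}{N}$, $()$, string literals, and integers are already values, so the left disjunct holds. For an elimination form — an application $M_1 \app M_2$, a pair elimination $\letin{(x,y)}{M_1}{M_2}$, or $\caseof{L}{\cdots}$ — I apply the induction hypothesis to the scrutinee: if it takes a step, the whole term steps by closure under the appropriate (omitted) evaluation context $E$; otherwise it is a value, and canonical forms forces it to have the expected shape (a function, a pair, or an injection), so the corresponding $\beta$-style rule fires. For a constructor with subterms — $(M_1,M_2)$, $\inl{M_1}$, $\inr{M_1}$, $\coretag{t}{M_1}{M_2}$, $\htmltext{M_1}$, $\attr{\ak}{M_1}$, and $\append{M_1}{M_2}$ — I appeal to the induction hypothesis on the subterms from left to right: if some subterm steps, the term steps under $E$; if all subterms are values, the whole term is itself a value by the value grammar. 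The atoms $\htmlempty$ and $\attrempty$ are values directly.

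The only point that needs care is this last observation: I must confirm that no well-typed closed term formed from the HTML and attribute constructors is stuck — in particular that $\append{V}{W}$, $\coretag{t}{V}{W}$, $\attr{\ak}{V}$, and the like are genuinely values and do not get stuck waiting for a reduction that never fires. This is immediate once one notes that $\htmlty{A}$ and $\attrty{A}$ have \emph{no term-level elimination forms} (elimination of HTML happens only at the configuration level, via events and diffing), so these types contribute purely to the ``value'' side of the dichotomy, and that the evaluation contexts $E$ are set up to allow reduction underneath every constructor. Given that, there is no real obstacle; the induction goes through by the standard recipe.
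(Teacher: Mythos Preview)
Your proposal is correct and follows the same approach as the paper, which simply records the proof as ``Standard; by induction on the derivation of $\cdot \vdash M : A$.'' Your expansion into canonical forms plus the usual case analysis is exactly the standard argument being alluded to.
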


Let $\cevalminus$ be the relation $\ceval$ without rule \textsc{E-Interact}.
The concurrent fragment of the language will reduce until all event handler
threads have finished evaluating, and there are no more messages to process.
By appeal to Lemma~\ref{lem:term-progress}, we
can show event progress.

\begin{theorem}[Event Progress]\label{thm:event-progress}
  If $\vdash \config{C}$, either:
  \begin{enumerate}
    \item there exists some $\config{C}'$ such that $\config{C} \cevalminus
      \config{C'}$; or
    \item $\config{C} = \sys{\handlerprocexp{\idle{V_m}}{V_v}{V_u}}{\vh}$ where $\vh$
      cannot be written $\config{D}[\pgtag{t}{V}{W}{\seq{e}}]$ for some
      non-empty $\seq{e}$.
  \end{enumerate}
\end{theorem}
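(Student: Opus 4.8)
The plan is to prove the theorem by a standard case analysis on the structure of a well-typed configuration $\config{C} = \sys{P}{\vh}$, using the typing rule \textsc{TC-System}, which tells us that $P$ must contain exactly one process typeable under the flag $\bcirc$, i.e.\ either an initialisation process $\run{M}$ or an event loop process $\handlerprocexp{T}{V_v}{V_u}$, together with zero or more event handler threads $\thread{M_i}$ (typeable under $\wcirc$). I would first observe that, up to the structural congruence $\equiv$ (associativity and commutativity of $\parallel$), we may write $P$ in the canonical form $P_{\bcirc} \parallel \thread{M_1} \parallel \cdots \parallel \thread{M_n}$ for some $n \geq 0$, where $P_{\bcirc}$ is the distinguished process. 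The proof then splits on the shape of $P_{\bcirc}$ and on whether any of the auxiliary components can take a step.

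\textbf{Case $P_{\bcirc} = \run{M}$.} Here $\vdash^{\bcirc} \run{M} : B$ forces, by \textsc{TP-Run}, that $M$ has product type $(A \times (A \to \htmlty{B}) \times ((B \times A) \to A))$. By Lemma~\ref{lem:term-progress}, $M$ either reduces (in which case \textsc{EP-LiftT} lifted by \textsc{EP-Par} and \textsc{E-LiftP} gives a step of $\cevalminus$), or $M$ is a value of that product type, hence of the form $(V_m, V_v, V_u)$ — and then \textsc{E-Run} applies. Either way we land in case~1.

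\textbf{Case $P_{\bcirc} = \handlerprocexp{T}{V_v}{V_u}$.} Here I subdivide on the active thread $T$. If $T = M$ is a term being processed, then by \textsc{TS-Processing} $M : (A \times \htmlty{B})$; Lemma~\ref{lem:term-progress} gives either a term reduction (lifted as above) or $M$ is a value of pair type $(V_m, U)$, and then \textsc{E-Update} fires. If $T = \idle{V_m}$, I look at the auxiliary threads: if some $\thread{M_i}$ has $M_i$ non-value, it reduces by \textsc{EP-LiftT} (using that $M_i$ is closed and well-typed, so Lemma~\ref{lem:term-progress} applies); if some $\thread{M_i}$ has $M_i = V$ a value, then \textsc{EP-Handle} applies to the idle event loop together with that thread. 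The only remaining situation is $P = \handlerprocexp{\idle{V_m}}{V_v}{V_u}$ with no auxiliary threads at all. In this last situation I examine $\vh$: if $\vh$ contains some tag node $\pgtag{t}{V}{W}{\seq{e}}$ with $\seq{e}$ non-empty, i.e.\ $\vh = \config{D}[\pgtag{t}{V}{W}{\evtpayload{\evt{ev}}{W'} \cdot \seq{e}'}]$, then \textsc{E-Evt} applies (spawning $\handlers{\evt{ev}}{V}$-many handler threads — possibly zero, which is still a reduction), giving case~1; otherwise $\vh$ has no such non-empty queue, which is exactly case~2.

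The main obstacle is bookkeeping rather than mathematical depth: I must be careful that the flag discipline $\phi_1 + \phi_2$ genuinely guarantees a \emph{unique} $\bcirc$-typed component (so the canonical form is justified), and that every appeal to Lemma~\ref{lem:term-progress} is on a \emph{closed} term of the right type — which the runtime typing rules supply, but which needs to be checked for each of $\run{M}$, the processing thread, and the handler threads. A secondary subtlety is the interaction with $\equiv$: since $\cevalminus$ is defined modulo \textsc{E-Struct}, rewriting $P$ into canonical form before applying a reduction rule is legitimate, but this should be stated explicitly. I would also note the mild point that \textsc{E-Evt} with $\handlers{\evt{ev}}{V} = \epsilon$ still counts as a reduction, so the presence of \emph{any} non-empty event queue suffices for case~1; this is what makes the dichotomy in the statement exhaustive.
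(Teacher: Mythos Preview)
Your proposal is correct and follows essentially the same strategy as the paper. The only structural difference is that the paper packages the process-level case analysis into a separate auxiliary lemma (Progress for Processes, Lemma~\ref{lem:process-progress}), which classifies a non-reducing well-typed process as either $\run{V} \parallel \thread{V_1} \parallel \cdots \parallel \thread{V_n}$, $\handlerprocexp{\idle{V_m}}{V_v}{V_u}$, or $\handlerprocexp{(V'_m,U)}{V_v}{V_u}$, and then invokes that lemma in the theorem proof; you inline the same analysis directly, and your treatment of the bookkeeping (uniqueness of the $\bcirc$ component, use of \textsc{E-Struct} for the canonical form, closedness for Lemma~\ref{lem:term-progress}, and the observation that \textsc{E-Evt} fires even when $\handlers{\evt{ev}}{V}=\epsilon$) is if anything more explicit than the paper's.
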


\section{\mvu with Session Types}\label{sec:extensions}
In this section, we extend \mvu to support session types.
We require three extensions: \emph{commands}, to perform side-effects;
\emph{linearity}, to implement session types safely; and \emph{transitions}, to
allow multiple model and message types.
We begin by showing each extension by example, and show the extended formalism
in~\secref{sec:extensions:combined}.

\subsection{Commands}\label{sec:extensions:commands}
Real-world applications require side-effects. To this end, Elm supports
\emph{commands} which describe side-effects to be performed in the event loop.
Although commands in Elm are more general, for our purposes, it is particularly
useful to be able to spawn a process which will run concurrently and eventually
return a message.
As an example, we may want to await the result of an expensive computation,
and display the result when the computation completes. Letting
$\textsf{na\"iveFib}(x)$ be the na\"ive Fibonacci function and
assuming an \mkwd{intToString} function, we can write:

{\footnotesize
\[
\bl
\mkwd{Model} \defeq \mkwd{Maybe}(\intty) \qquad
\mkwd{Message} \defeq \mkwd{StartComputation} \midspace \mkwd{Result}(\intty) \vspace{0.35em} \\
\mkwd{view} : \mkwd{Model} \to \htmlty{\mkwd{Message}} \\
\mkwd{view} = \lambda \var{model} . \calcwd{html} \\
\quad \tagzero{html} \\
\qquad \tagzero{body} \\
\qquad \quad \{
  \caseofone{model} \{ \\
  \qquad \qquad
  {\begin{aligned}[t]
    & \quad \mkwd{Just}(\textit{result}) \mapsto \htmltext{\intstr{\var{result}}}; \\
    & \quad \mkwd{Nothing} \mapsto \htmltext{\textnode{``Waiting \ldots''}}  \: \}
  \quad \} \\
  \end{aligned}} \\
  \qquad \quad \htmltag{button}{\attribute{onClick} = \antiquote{\lambda () .
  \mkwd{StartComputation}} }{\textnode{Start!}} \\
\qquad \tagzeroend{body} \\
\quad \tagzeroend{html} \vspace{0.35em}
\el
\]%
\[
  \bl
\mkwd{update} : (\mkwd{Message} \times \mkwd{Model}) \to (\mkwd{Model},
\cmdty{\mkwd{Message}})\\
\mkwd{update} = \lambda (\textit{message}, \textit{model}) . \\
\quad \caseofone{\textit{message}} \{ \\
  \qquad \mkwd{StartComputation} \mapsto (\mkwd{Nothing},
  \cmdspawn{\mkwd{Result}(\textsf{na\"iveFib}(1000))}) \\
  \qquad \mkwd{Result}(x) \mapsto (\mkwd{Just}(x), \cmdempty) \\
  \quad \}
\el
\]%
}%
The model is of type $\mkwd{Maybe}(\intty)$, with value $\mkwd{Just}(V)$ for
some integer value $V$ if the result has been computed, or $\mkwd{Nothing}$ if
the application is awaiting the result. The $\mkwd{Message}$ type is a variant
type consisting of $\mkwd{StartComputation}$ which is sent to start the
computation, and $\mkwd{Result}(\intty)$, which is sent to return a result.
The \mkwd{view} function renders either the result, or \textnode{``Waiting\ldots''} if no result
is available.

The type of the \mkwd{update} function is changed to return a \emph{pair} of an updated
model and a command. In our case, the $\mkwd{StartComputation}$ message results
in a pair of $\mkwd{Nothing}$ and
$\cmdspawn{\mkwd{Result}(\textsf{na\"iveFib}(1000))}$, which
spawns $\mkwd{Result}(\textsf{na\"iveFib}(1000))$ to evaluate in a separate thread. When the
function (eventually) completes, the thread will have evaluated to a
$\mkwd{Result}$ message, which can be processed by the $\mkwd{update}$ function
to update the model and display the result.

\subsection{Linearity}\label{sec:extensions:linearity}
As we showed in~\secref{sec:introduction}, safely implementing session types
requires linearity: we therefore require linear model and
message types.
Linearity would also prove useful for other linear resources such as functional
arrays with in-place update~\cite{Wadler90:linear-types}.
Unfortunately, \mvu as defined so far does not support linearity.
Consider $\mkwd{handle}$:
{\small
\[
\begin{array}{l}
    \handle{\textit{m}, (\textit{v}, \textit{u}), \textit{msg}} \defeq
    \begin{aligned}[t]
      & \letin{m'}{u \app (\textit{msg}, m)}{({\color{red}{m'}}, v \app {\color{red}{m'}})}
    \end{aligned}
  \end{array}
\]%
}%
The updated model, $m'$, is used non-linearly as it is returned for use in
subsequent requests, and also used to render the model as HTML.

\subparagraph{Extraction.}
Linear resources are needed only when \emph{updating}
the model---not when rendering the webpage---as we will not need to communicate on
session channels when rendering. If the developer implements a function:
{\small
\[
  \mkwd{extract} : A \to (A \times B) %
\]%
}
\noindent
where $A$ is the type of a model, and $B$ is the \emph{unrestricted} fragment of the
model, we can restore linear usage of the model (letting $e$ be the extraction function):
{\small
\[
\begin{array}{l}
  \handle{\textit{m}, (\textit{v}, \textit{u}, \textit{e}), \textit{msg}} \defeq
  \begin{array}[t]{l}
      \letintwo{m'}{u \app (\textit{msg}, m)} \\
      \letin{(m', \textit{unrM})}{e \app m'}{(m', v \app \textit{unrM})}
    \el
  \end{array}
\]%
}
\noindent
  An alternative approach would be to assign the view function type
  $A \to (A \times \htmlty{B})$, returning the linear model and allowing it to
  be re-bound. We would need to modify $\mkwd{handle}$:
  {\small
  \[
    \begin{array}{l}
    \handle{\textit{m}, (\textit{v}, \textit{u}), \textit{msg}} \defeq
    \begin{aligned}[t]
      & \letin{m'}{u \app (m, \textit{msg})}{v \app m'} \\
    \end{aligned}
  \end{array}
  \]%
}%
\noindent
  A key disadvantage of this approach is that rendering is no longer a
  read-only operation, breaking an important abstraction barrier.

\subparagraph{Example.}
We can now write our first session-typed \mvu application.
Our web client consists of a button which, when clicked, triggers the sending of
a \mkwd{Ping} message to the server. Once clicked, the button is disabled. The
server then receives the \mkwd{Ping} message and responds with a \mkwd{Pong}
message; upon receiving the response, the client then re-enables the button.

\begin{center}
\begin{minipage}[t]{0.25\textwidth}
  \textsf{Pinging}:

  \includegraphics[scale=1.25]{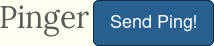}
\end{minipage}
\qquad
\begin{minipage}[t]{0.25\textwidth}
  \textsf{Waiting}:

  \includegraphics[scale=1.25]{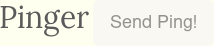}
\end{minipage}
\end{center}

\subparagraph{Simply-typed channels.}
\begin{figure}
  {\footnotesize
    \[
  \mkwd{Model} \defeq (\boolty \times \simplechan{\mkwd{Ping}} \times \simplechan{\mkwd{Pong}}) \qquad \qquad
  \mkwd{Message} \defeq \mkwd{Click} \midspace \mkwd{Ponged}
    \]%
    \vspace{-3em}

\begin{minipage}[t]{0.4\textwidth}
  \[
  \bl
  \mkwd{view} : \mkwd{Model} \to \htmlty{\mkwd{Message}} \\
  \mkwd{view} \defeq \lambda (\var{pinging}, \_, \_) . \: \\
  \quad \calcwd{let} \: {\var{attr}} = \\
  \qquad \calcwd{if} \: \var{pinging} \: \calcwd{then} \\
  \qquad \quad \attrempty \\
  \qquad \calcwd{else} \\
  \qquad \quad  \attr{\textnode{``disabled''}}{\textnode{``true''}} \: \calcwd{in} \\
  \quad \calcwd{html} \\
    \qquad \tagzero{html} \\
    \quad \qquad \tagzero{body} \\
    \qquad \qquad \opentag{button} \: \{ \var{attr} \} \: {\attribute{onClick} = \antiquote{\lambda () . \mkwd{Click}}} \closetag \\
    \quad \qquad \qquad \textnode{Send Ping!} \\
    \qquad \qquad \tagzeroend{button} \\
    \qquad \quad \tagzeroend{body} \\
    \qquad \tagzeroend{html} \vspace{0.4em} \\
    \el
  \]
\end{minipage}
\hfill
\begin{minipage}[t]{0.5\textwidth}
  \[
    \bl
    \mkwd{update} : (\mkwd{Message} \times \mkwd{Model}) \to \mkwd{Model} \\
    \mkwd{update} \defeq \lambda (\var{msg}, (\_, \var{pingCh}, \var{pongCh})) . \\
    \quad \caseofone{\var{msg}} \: \{  \\
    \qquad \mkwd{Click} \mapsto \\
    \qquad \quad \calcwd{let} \: {\var{cmd}} = \\
    \qquad \qquad {\cmdspawn{(
        {\begin{array}[t]{l}
          \gvsend{\mkwd{Ping}}{\var{pingCh}}; \\
          \letintwo{\mkwd{Pong}}{\gvrecv{\var{pongCh}}} \\
          \mkwd{Ponged}) \: \calcwd{in}
          \end{array}}}} \\
    \quad \qquad ((\ffalse, \var{pingCh}, \var{pongCh}), \var{cmd}) \\
    \qquad \mkwd{Ponged} \mapsto  ((\ttrue, \var{pingCh}, \var{pongCh}), \cmdempty) \\
   \quad \} \\ \\
   \mkwd{server} : (\simplechan{\mkwd{Ping}} \times \simplechan{\mkwd{Pong}}) \to (\one \to A) \\
   \mkwd{server} \defeq \lambda (\var{pingCh}, \var{pongCh}) . \\
  \quad (\rectwo{f}{} \\
  \qquad \letintwo{\mkwd{Ping}}{\gvrecv{\var{pingCh}}} \\
  \qquad \gvsend{\mkwd{Pong}}{\var{pongCh}}; \app f \app ()) \\
  \el
  \]
\end{minipage}
}
\caption{\mkwd{PingPong} application using simply-typed channels}
\label{fig:extensions:pingpong-simple}
\end{figure}
 Before considering a session-typed version of the application,
it is instructive to consider a version \emph{without} session typing, shown in
Figure~\ref{fig:extensions:pingpong-simple}. Let $\simplechan{A}$ be
the type of a simply-typed channel over which one can send and receive values of
type $A$. The model is a 3-tuple containing a Boolean value which is true when
waiting for the user to click the ``Send Ping!'' button, and false when waiting
for a response; a channel for \mkwd{Ping} messages; and a channel for
\mkwd{Pong} messages. There are two types of UI message: \mkwd{Click} denotes
that the button has been clicked, and \mkwd{Ponged} denotes that a \mkwd{Pong}
message has been received along the \mkwd{Pong} channel.

The \mkwd{view} function displays the page, adding the \verb+disabled+ attribute
to the button
if we are waiting for a \mkwd{Pong} message. The \mkwd{update} function
case-splits on the UI message: in the case of a \mkwd{Click} message raised by
the button, the model is updated to set the \var{pinging} flag to \ffalse, and
the function creates a command to send a \mkwd{Ping} message along \var{pingCh},
receive a \mkwd{Pong} message from \var{pongCh}, and return a \mkwd{Ponged} UI
message. In the case of a \mkwd{Ponged} message, the model is updated to set
the \var{pinging} flag to \ttrue, enabling the button again.
The \mkwd{server} function models a server thread, which repeatedly receives
\mkwd{Ping} messages from \var{pingCh} and sends \mkwd{Pong} messages to
\var{pongCh}.

Even in this simple example, it is very easy to communicate incorrectly: if the
client neglected to send a \mkwd{Ping} message before trying to receiving a
\mkwd{Pong} message along \var{pongCh}, then the command would hang forever and
the GUI would never re-enable the button. A similar situation would arise if the
server received the \mkwd{Ping} message but failed to respond.%

\subparagraph{Session types.}
Session types $S$ range over output $\gvout{A}{S}$, input $\gvin{A}{S}$, the
completed session $\gvend$, recursive session types $\recty{\rtv{t}}{S}$, and
(possibly dualised) recursive type variables $\rtv{t}$. We take an equi-recursive
treatment of recursive session types, identifying a recursive session type with
its unfolding.
We omit types and constructs for branching and selection as they can be
encoded~\cite{Kobayashi02:type-systems, DardhaGS17:revisited}.  The
$\calcwd{send}$ constant sends a value of type $A$ over an endpoint of type
$\gvout{A}{S}$ and returns the continuation of the session, $S$.
The $\calcwd{close}$ constant closes a completed session endpoint.  The
$\calcwd{receive}$ constant takes an endpoint of type $\gvin{A}{S}$ and receives
a pair of a value of type $A$ and endpoint of type $S$.
The $\calcwd{cancel}$ constant allows an endpoint to be
discarded safely~\cite{MostrousV18:affine, FowlerLMD19:stwt}.

\vspace{-1em}
{\small
\begin{mathpar}
\begin{array}{lrcl}
    \text{Session types} & S & ::= &
    \gvout{A}{S} \midspace \gvin{A}{S} \midspace \recty{t}{S} \midspace t \midspace \gvdual{t} \midspace \gvend
  \end{array} \\

\calcwd{send} : (A \times \gvout{A}{S}) \to S

  \calcwd{receive} : \gvin{A}{S} \to (A \times S)

  \calcwd{close} : \gvend \to \one

  \calcwd{cancel} : S \to \one
\end{mathpar}
}%

\begin{figure}[t]
{\footnotesize
\[
  \begin{array}{l@{\qquad\qquad}l}
  \mkwd{PingPong} \defeq \mu t . {!}\mkwd{Ping} . {?}\mkwd{Pong} . t &
  \mkwd{Model} \defeq \mkwd{Pinging}(\mkwd{PingPong}) \midspace \mkwd{Waiting} \\
  \mkwd{UModel} \defeq \mkwd{UPinging} \midspace \mkwd{UWaiting} &
  \mkwd{Message} \defeq \mkwd{Click} \midspace \mkwd{Ponged}(\mkwd{PingPong})
  \vspace{-2em}
  \end{array}
\]

\begin{minipage}[t]{0.4\textwidth}
\[
  \bl
  \mkwd{view} : \mkwd{UModel} \to \htmlty{\mkwd{Message}} \\
  \mkwd{view} \defeq \lambda \var{uModel} . \\
  \quad \calcwd{let} \: {\var{attr}} = \\
  \qquad \caseofone{\var{uModel}} \{ \\
    \qquad \quad \mkwd{UPinging} \mapsto \attrempty \\
    \qquad \quad \mkwd{UWaiting} \mapsto
    \attr{\textnode{``disabled''}}{\textnode{``true''}} \\
  \qquad \} \: \calcwd{in} \\
  \quad \calcwd{html} \\
  \qquad \tagzero{html} \\
  \qquad \quad \tagzero{body} \\
  \qquad \qquad \opentag{button} \:
    \antiquote{\var{attr}} \: \attribute{onClick} = \antiquote{\lambda () . \mkwd{Click}}
    \closetag \\
  \qquad \qquad \quad \textnode{Send Ping!} \\
  \qquad \qquad \tagzeroend{button} \\
  \qquad \quad \tagzeroend{body} \\
  \qquad \tagzeroend{html} \\ \\
  \mkwd{handleClick}(\var{model}) \defeq \\
  \quad \caseofone{\var{model}} \{ \\
    \qquad \mkwd{Pinging}(c) \mapsto \\
    \quad \qquad \calcwd{let} \: {\var{cmd}} = \\
    \quad \qquad \quad {\cmdspawn{(
        {\begin{array}[t]{l}
         \letintwo{c}{\gvsend{\mkwd{Ping}}{c}} \\
         \letintwo{(\var{pong}, c)}{\gvrecv{c}} \\
         {\mkwd{Ponged}(c)}) \: \calcwd{in}
    \end{array}}}}  \\
    \qquad \quad (\mkwd{Waiting}, \var{cmd}) \\
    \qquad \hlred{\mkwd{Waiting} \mapsto (\mkwd{Waiting}, \cmdempty)} \\
   \quad \}
  \el
\]
\end{minipage}
\hfill
\begin{minipage}[t]{0.45\textwidth}
 \[
   \bl
   \mkwd{update} : (\mkwd{Message} \times \mkwd{Model}) \to \\
   \qquad (\mkwd{Model} \times \cmdty{\mkwd{Message}}) \\
  \mkwd{update} \defeq  \lambda (\textit{msg}, \textit{model}) . \\
  \quad \caseofone{\textit{msg}} \{ \\
    \qquad \mkwd{Click} \mapsto \\
    \qquad \quad \mkwd{handleClick}(\var{model}) \\
    \qquad \mkwd{Ponged}(c) \mapsto \\
    \qquad \quad \mkwd{handlePonged}(\var{model}, c) \\
      \quad \} \\ \\
  \mkwd{extract} : \mkwd{Model} \to (\mkwd{Model} \times \mkwd{UModel}) \\
  \mkwd{extract} \defeq \lambda \var{model} . \\
  \quad \caseofone{\var{model}} \{ \\
    \qquad \mkwd{Pinging}(c) \mapsto (\mkwd{Pinging}(c), \mkwd{UPinging}) \\
    \qquad \mkwd{Waiting} \mapsto (\mkwd{Waiting}, \mkwd{UWaiting}) \\
  \quad \} \\ \\
\mkwd{handlePonged}(\var{model}, c) \defeq \\
    \quad \caseofone{\var{model}} \{ \\
      \qquad \hlred{\mkwd{Pinging}(c') \mapsto} \\
      \qquad \quad \hlred{\gvcancel{c'};} \\
      \qquad \quad \hlred{(\mkwd{Pinging}(c), \cmdempty)} \\
       \qquad \mkwd{Waiting} \mapsto \\
       \qquad \quad (\mkwd{Pinging}(c), \cmdempty) \\
     \quad \}
   \el
\]
\end{minipage}
}
\caption{$\mkwd{PingPong}$ application}
\label{fig:extensions:pingpong-monolith}
\end{figure}
 Figure~\ref{fig:extensions:pingpong-monolith} shows the \mkwd{PingPong}
client written in \mvu. We can encode the PingPong protocol as a session
type, $\recty{\rtv{t}}{{!}\mkwd{Ping}.{?}\mkwd{Pong}.\rtv{t}}$.
The \mkwd{Model} type encodes the two states of the application:
$\mkwd{Pinging}(c)$ is the state where the \textnode{``Send Ping!''} button is
enabled and the user can send a \mkwd{Ping} message along session channel $c$, whereas \mkwd{Waiting} is
the state where the button is disabled and awaiting a \mkwd{Pong} message from
the other party. The \mkwd{UModel} type is the unrestricted model type which
does not include the session channel.
Again, the \mkwd{Message} type encodes the UI messages in the application: the
\mkwd{Click} UI message is produced when the button is pressed, whereas the
$\mkwd{Ponged}(\mkwd{PingPong})$ UI message is produced when a \mkwd{Pong} session message has
been received. Note that the $\mkwd{Ponged}$ UI message now contains a session
channel of type $\mkwd{PingPong}$ as a parameter.

The \mkwd{view} function takes an unrestricted model and displays a button,
which is disabled in the \mkwd{Waiting} state but enabled
in the \mkwd{Pinging} state. The \mkwd{extract} function pairs the
linear model with the associated unrestricted model.

The \mkwd{update} function case-splits on the message. The
\mkwd{handleClick} function handles the \mkwd{Click} message,
and case-splits on the model.
If the model is in the $\mkwd{Pinging}(c)$ state,
then the function creates a command to spawn a process
which will send a \mkwd{Ping} message along $c$,
receive a \mkwd{Pong} message along $c$, and
generate a \mkwd{Ponged} UI message when the \mkwd{Pong} message is received.
The function finally updates the model to the \mkwd{Waiting} state.
If the model is in the $\mkwd{Waiting}$ state---which should not occur, since
the button is disabled---then the model remains the same and no command is
created.

The \mkwd{handlePonged} function handles a $\mkwd{Ponged}(c)$ message. Again, we
must case split on the model. If the model is in the $\mkwd{Waiting}$ state,
then we can change to the $\mkwd{Pinging}$ state, given endpoint $c$.
However, if the model is in the $\mkwd{Pinging}(c')$ state and a \mkwd{Ponged}
message is received---which should not occur, since according to the
session type, there is no way for the peer to send a \mkwd{Pong} message
while we are waiting to send a \mkwd{Ping}---we now have \emph{two} linear
resources. We choose to discard $c'$ using \calcwd{cancel}, and change the model
to $\mkwd{Pinging}(c')$, but this is an arbitrary choice to satisfy a code path
that must exist, but should never be used.

\floatstyle{plain}
\restylefloat{figure}
\begin{figure}[t]
  {\footnotesize
\framebox{
\begin{minipage}[t]{0.5\textwidth}
  \begin{center}
  \mkwd{Pinging} \text{state}
  \end{center}
  \vspace{-1em}
  \[
  \bl
  \mkwd{PModel} \defeq \mkwd{Pinging}(\mkwd{PingPong}) \\
  \mkwd{PUModel} \defeq \one \\
  \mkwd{PMessage} \defeq \mkwd{Click}
  \vspace{0.275em} \\
  \mkwd{pView} : \mkwd{PUModel} \to \htmlty{\mkwd{\mkwd{PMessage}}} \\
  \mkwd{pView} \defeq \lambda () . \: \calcwd{html} \\
    \quad \tagzero{html} \\
    \qquad \tagzero{body} \\
    \qquad \quad \opentag{button} \: {\attribute{onClick} = \antiquote{\lambda () . \mkwd{Click}}} \closetag \\
    \qquad \qquad \textnode{Send Ping!} \\
    \qquad \quad \tagzeroend{button} \\
    \qquad \tagzeroend{body} \\
    \quad \tagzeroend{html} \vspace{0.4em} \\
    \mkwd{pUpdate} : (\mkwd{PMessage} \times \mkwd{PModel}) \to \\
    \qquad \transitionty{\mkwd{PModel}}{\mkwd{PMessage}} \\
\mkwd{pUpdate} \defeq \lambda (\mkwd{Click}, \mkwd{Pinging}(c)) . \\
    \quad \calcwd{let} \: {\var{cmd}} = \\
    \qquad {\cmdspawn{(
        {\begin{array}[t]{l}
         \letintwo{c}{\gvsend{\mkwd{Ping}}{c}} \\
         \letintwo{(\var{pong}, c)}{\gvrecv{c}} \\
         {\mkwd{Ponged}(c)}) \: \calcwd{in}
    \end{array}}}}  \\
    \quad
    \transition{\mkwd{Waiting}}{\mkwd{wView}}{\mkwd{wUpdate}}{\mkwd{wExtract}}{\var{cmd}}
    \vspace{0.4em} \\
    \mkwd{pExtract} : \mkwd{PModel} \to (\mkwd{PModel} \times \mkwd{PUModel}) \\
    \mkwd{pExtract} \defeq \lambda x . (x, ())
  \el
  \]
\end{minipage}
}
\hfill
    \framebox{
\begin{minipage}[t]{0.4\textwidth}
  \begin{center}
    \mkwd{Waiting} \text{state}
  \end{center}
  \vspace{-1em}
  \[
  \bl
  \mkwd{WModel} \defeq \mkwd{Waiting} \\
  \mkwd{WUModel} \defeq \one \\
  \mkwd{WMessage} \defeq \mkwd{Ponged}(c) \\ \\
  \mkwd{wView} : \mkwd{WUModel} \to \htmlty{\mkwd{WMessage}} \\
  \mkwd{wView} \defeq \lambda \var{()} . \: \calcwd{html} \\
  \quad \tagzero{html} \\
  \qquad \tagzero{body} \\
  \quad \qquad \opentag{button} \: \attribute{disabled}=\textnode{``true''}
   \closetag \\
  \quad \qquad \quad \textnode{Send Ping!} \\
  \quad \qquad \tagzeroend{button} \\
  \quad \quad \tagzeroend{body} \\
  \quad \tagzeroend{html} \\ \\
  \mkwd{wUpdate} : (\mkwd{WMessage} \times \mkwd{WModel}) \to \\
  \qquad \transitionty{\mkwd{WModel}}{\mkwd{WMessage}} \\
  \mkwd{wUpdate} \defeq \lambda(\mkwd{Ponged}(c), \mkwd{Waiting}) . \\
  \quad
  \calcwd{transition} \: {\mkwd{Pinging}(c)} \: {\mkwd{pView}} \\
  \qquad {\mkwd{pUpdate}} \: {\mkwd{pExtract}} \: {\cmdempty} \\ \\
    \mkwd{wExtract} : \mkwd{WModel} \to \\ \qquad (\mkwd{WModel} \times \mkwd{WUModel}) \\
  \mkwd{wExtract} \defeq \lambda x . (x, ())
  \vspace{0.01em}
  \el
  \]
\end{minipage}
}
}
\caption{\mkwd{PingPong} application using model transitions}
\label{fig:extensions:pingpong-transitions}
\end{figure}
\floatstyle{boxed}
\restylefloat{figure}
 \subsection{Model transitions}\label{sec:extensions:transitions}
Our proposal is still not quite satisfactory: as we saw with the \mkwd{PingPong}
example, we need to include cases in the \mkwd{update} function which
cannot arise. We highlight these in red. This is even more pronounced when
dealing with linear resources, such as needing to handle a \mkwd{Ponged}
message when waiting to send a \mkwd{Ping}.

The problem is that we are encoding the \mkwd{Model} type using
a sum type, whereas in fact we require \emph{multiple} model types, and a way to
\emph{transition} between them.

\subparagraph{Example.}
Figure~\ref{fig:extensions:pingpong-transitions} shows how we can modify
\mkwd{PingPong} to use multiple model types. The left-hand side of the
figure shows the \mkwd{Pinging} state: the model type consists of the singleton
variant tag $\mkwd{Pinging}(\mkwd{PingPong})$ containing an endpoint of type
$\mkwd{PingPong}$, the unrestricted model is the unit type, and the only message
that the \mkwd{Pinging} state can receive is $\mkwd{Click}$.
The \mkwd{pView} function is similar to before, and the \mkwd{pExtract} function
returns a pair of the current state and the unit value.
The \mkwd{pUpdate} function is more interesting. Given the current state and a
$\mkwd{Click}$ message, the function constructs a command which will send the
\mkwd{Ping} session message, receive the $\mkwd{Pong}$ session message, and then
generate a $\mkwd{Ponged}(c)$ UI message containing the session channel.
The function \emph{transitions} into the
\mkwd{Waiting} state using the $\calcwd{transition}$ primitive, which allows the
developer to specify new model, view, update, extract functions, and a command
to evaluate. The functions for the \mkwd{Waiting} state follow a similar
pattern.
Session types rule out the communication errors besetting the example
in Figure~\ref{fig:extensions:pingpong-simple}, and model transitions eliminate
the redundant code paths arising due to illegal states.

\subsection{\mvu with Commands, Linearity, and Transitions}
\label{sec:extensions:combined}
Commands, linearity, and transitions are the three key ingredients needed to
extend MVU to support models which include session-typed channels. In this
section, we introduce a calculus which combines all three extensions, and prove
that the extended calculus is sound.

\subsubsection{Syntax and Typing}
Figure~\ref{fig:extensions:combined:syntax} shows the syntax of \mvu extended
with commands, linearity, and transitions.

\subparagraph{Types and kinds.}

To support linearity, types are assigned \emph{kinds}, ranged over by $\kappa$.
Types can either be \emph{linear} ($\lin$) or \emph{unrestricted} ($\unr$).
A value of linear type must be used precisely once, whereas a value of
unrestricted type can be used many times.

We modify function types to include a
kind annotation: linear functions may close over linear variables and so must be
used once.
To support commands, we introduce type $\cmdty{A}$ which is the type of a
command which produces messages of type $A$. To support transitions, we
introduce type $\transitionty{A}{B}$ which is parameterised by the
\emph{current} model type $A$ and message type $B$. Finally, we extend types to
include session types $S$ as described in~\secref{sec:extensions:linearity}.

\begin{figure}[t]
\begin{syntax}
\text{Kinds} & \kind & ::= & \lin \midspace \unr \\
\text{Types} & A, B, C & ::= & \one \midspace A \kto{\kind} B \midspace A \times B
  \midspace A + B \midspace \stringty \midspace \intty \midspace S \\
               &      & \midspace & \htmlty{A} \midspace \attrty{A} \midspace
               \cmdty{A} \midspace \transitionty{A}{B} \vspace{\syntaxskip} \\
\text{Session types} & S & ::= & \gvout{A}{S} \midspace \gvin{A}{S} \midspace
\recty{t}{S} \midspace \rtv{t} \midspace \gvdual{\rtv{t}} \midspace \gvend \\ \\
\text{Terms} & L, M, N & ::= &
  x \midspace \lambda x . M \midspace M \app N \midspace K \app M \midspace () \midspace s \midspace n \\
  & & \midspace & (M, N) \midspace \letin{(x, y)}{M}{N} \\
  & & \midspace & \inl{x} \midspace \inr{x} \midspace \caseof{L}{\inl{x} \mapsto M; \inr{y} \mapsto N} \\
  & & \midspace & \coretag{t}{M}{N} \midspace \htmltext{M} \\
  & & \midspace & \attr{\ak}{M} \midspace \attrempty \\
  & & \midspace & \cmdspawn{M} \midspace \cmdempty \midspace \append{M}{N} \\
  & & \midspace & \transition{M_m}{M_v}{M_u}{M_e}{M_c} \midspace
  \notransition{M_m}{M_c} \\
  & & \midspace & \raiseexn \midspace \tryasinotherwise{L}{x}{M}{N} \vspace{\syntaxskip} \\
  \text{Constants} & K & ::= & \calcwd{send} \midspace \calcwd{receive}
  \midspace \calcwd{new} \midspace \calcwd{cancel} \midspace \calcwd{close}
\end{syntax}
\caption{Syntax of extended calculus}
\label{fig:extensions:combined:syntax}
\end{figure}

 \subparagraph{Terms.}
Term $\cmdspawn{M}$ is a command which can spawn term $M$ as a thread, and is
monoidally composable using $\star$ and $\cmdempty$.

There are two terms for transitions: the $\notransition{M_m}{M_c}$ term denotes
that no transition is to occur, and that the model should be updated to $M_m$
and command $M_c$ should be evaluated; and
$\transition{M_m}{M_v}{M_u}{M_e}{M_c}$ denotes that a transition should occur,
with new model $M_m$, view function $M_v$, update function $M_u$, extraction
function $M_e$, and command $M_c$ to be run once the transition has taken place.

To support session typing, we introduce session typing constants, ranged over by
$K$, as described in~\secref{sec:extensions:linearity}. We also introduce
an application form for constants, $K \app M$.

Finally, as discussed in~\secref{sec:extensions:linearity}, it is useful to
be able to explicitly discard (or \emph{cancel}) a session channel. In
particular, cancellation is crucial in order to handle the interplay between
linearity and transitions, as all unprocessed messages (which may contain linear
resources) must be safely discarded when a transition occurs.

Following~\citet{MostrousV18:affine} and Exceptional GV (EGV)
by~\citet{FowlerLMD19:stwt}, if a thread tries to receive from an endpoint whose
peer has been cancelled, an exception is raised ($\raiseexn$). Exceptions can be
handled using the $\tryasinotherwise{L}{x}{M}{N}$ construct, which tries to
evaluate term $L$, and binds the result to $x$ in $M$ if the term evaluates to a
value, and evaluates $N$ if the term raises an exception.

\begin{figure}[t]
  {\footnotesize
~\textbf{Context splitting} \hfill \framebox{$\Gamma = \Gamma_1 + \Gamma_2$}
\begin{mathpar}
    \inferrule
    { }
    { \cdot = \cdot + \cdot}

    \inferrule
    { A :: \unr }
    { \Gamma, x : A = \\\\
      (\Gamma_1, x : A) + (\Gamma_2, x : A)}

    \inferrule
    { }
    { \Gamma_1 + \Gamma_2, x : A = \\\\ (\Gamma_1, x : A) + \Gamma_2}

    \inferrule
    { }
    { \Gamma_1 + \Gamma_2, x : A = \\\\ \Gamma_1 + (\Gamma_2, x : A)}
\end{mathpar}

~\textbf{Modified typing rules for terms} \hfill \framebox{$\Gamma \vdash M : A$}
\begin{mathpar}
    \inferrule
    [T-Var]
    { \Gamma :: \unr }
    { \Gamma, x \oftype A \vdash x \oftype A }

    \inferrule
    [T-Abs]
    { \Gamma, x \oftype A \vdash M \oftype B \\ \Gamma :: \kind }
    { \Gamma \vdash \lambda x . M \oftype A \kto{\kind} B }

    \inferrule
    [T-AppK]
    { \Sigma(K) = A \kto{\unr} B \\ \Gamma \vdash M \oftype A }
    { \Gamma \vdash K \app M \oftype B }

    \inferrule
      [T-Cmd]
      { \Gamma \vdash M {:} A }
      { \Gamma \vdash \cmdspawn{M} {:} \cmdty{A} }

    \inferrule
      [T-CmdEmpty]
      { \Gamma :: \unr }
      { \Gamma \vdash \cmdempty {:} \cmdty{A} }

    \inferrule
      [T-CmdAppend]
      { \Gamma_1 \vdash M {:} \cmdty{A} \\
        \Gamma_2 \vdash N {:} \cmdty{A} }
      { \Gamma_1 + \Gamma_2 \vdash \append{M}{N} {:} \cmdty{A} }

     \inferrule
      [T-Transition]
      { \Gamma_1 \vdash M_m : A \\
        \Gamma_2 \vdash M_v : A \uto \htmlty{B} \\
        \Gamma_3 \vdash M_u : (B \times A) \uto \transitionty{A}{B} \\\\
        \Gamma_4 \vdash M_e : A \uto (A \times C) \\
        \Gamma_5 \vdash M_c : \cmdty{A} \\
        C :: \unr
      }
      { \Gamma_1 + \ldots + \Gamma_5 \vdash
      \transition{M_m}{M_v}{M_u}{M_e}{M_c} : \transitionty{A'}{B'} }

      \inferrule
      [T-EvtAttr]
      { \Gamma \vdash M : \evtty{h} \uto A }
      { \Gamma \vdash \attr{h}{M} : \attrty{A} }

      \inferrule
      [T-NoTransition]
      { \Gamma_1 \vdash M {:} A \\ \Gamma_2 \vdash N {:} \cmdty{B} }
      { \Gamma_1 + \Gamma_2 \vdash \notransition{M}{N} {:} \transitionty{A}{B}  }

      \inferrule
      [T-Try]
      { \Gamma_1 \vdash L {:} A \\\\ \Gamma_2, x {:} A \vdash M {:} B \\ \Gamma_2 \vdash
      N {:} B }
      { \Gamma_1 + \Gamma_2 \vdash \tryasinotherwise{L}{x}{M}{N} {:} B }

      \inferrule
      [T-Raise]
      { \Gamma :: \unr }
      { \Gamma \vdash \raiseexn {:} A }

    (\text{other rules modified to split contexts})
\end{mathpar}

\begin{minipage}[t]{0.55\textwidth}
~\textbf{Typing of constants} \hfill \framebox{$\Sigma(c)$}
\[
  \begin{array}{rcl}
    \constty{\calcwd{send}}  & = &  (A \times \gvout{A}{S}) \uto S \\
    \constty{\calcwd{receive}} & =  & \gvin{A}{S} \uto (A \times S) \\
    \constty{\calcwd{new}}  & =  & \one \uto (S \times \gvdual{S})
  \end{array}
  \]
  \vspace{-0.75em}
  \begin{mathpar}
    \constty{\calcwd{cancel}}   =  S \uto \one

    \constty{\calcwd{close}}  =  \gvend \uto \one
  \end{mathpar}
\end{minipage}
\hfill
\begin{minipage}[t]{0.4\textwidth}
~\textbf{Duality} \hfill \framebox{$\gvdual{S}$}
\begin{mathpar}
  \gvdual{\gvout{A}{S}} = \gvin{A}{\gvdual{S}}

  \gvdual{\gvin{A}{S}} = \gvout{A}{\gvdual{S}}

  \gvdual{\recty{t}{S}} = \mu \rtv{t} . \gvdual{S \{ \gvdual{t} / t \}}

  \gvdual{\gvdual{\rtv{t}}} = \rtv{t}

  \gvdual{\gvend} = \gvend
\end{mathpar}
\end{minipage}
}
\caption{Term typing for extended calculus.}
\label{fig:extensions:combined:typing}
\end{figure}
 \subparagraph{Kinding and subkinding.}
The kinding relation $A :: \kind$ assigns kind $\kappa$ to type $A$; our
formulation is inspired by that of~\citet{Padovani17:context-free}. Base types
and HTML and attribute types are unrestricted. The kind of a function type is
determined by its kind annotation.  Session types are linear. The kinds of
product, sum, command and transition types are determined by the kinds of their
type parameters.  The reflexive \emph{subkinding} rule $\unr \leq \lin$ combined
with the kinding subsumption rule states that if a value can be used many times,
then it can also be treated as only being used once.
We write $\Gamma :: \kappa$ if $A :: \kappa$ for each $x : A \in \Gamma$.

\begin{definition}[Kinding and subkinding]
  We define the \emph{subkinding} relation as the reflexive relation
  on kinds $\leq$ such that $\unr \leq \lin$.
  We define the \emph{kinding} relation $A :: \kind$ as the largest relation
  between types and kinds such that:
  \begin{itemize}
    \item $A :: \kind'$ if $A :: \kind$ and $\kind \leq \kind'$
    \item $S :: \lin$
    \item $A :: \unr$ if $A \in \{ \one, \stringty, \intty, \htmlty{B}, \attrty{B} \}$
    \item $A \kto{\kind} B :: \kind$
    \item $\cmdty{A} :: \kind$ if $A :: \kind$
    \item $C :: \kappa$ if $C \in \{ A \times B, A + B, \transitionty{A}{B} \}$ and
      both $A :: \kappa$ and $B :: \kappa$
  \end{itemize}
\end{definition}
\subparagraph{Term typing.}
Figure~\ref{fig:extensions:combined:typing} shows the typing rules for the
extended calculus.
The splitting relation $\Gamma = \Gamma_1 + \Gamma_2$~\cite{CervesatoP96:llf}
splits a typing context $\Gamma$ into two subcontexts which may share only
unrestricted variables. We support linearity by changing \textsc{T-Var} to only
type a variable in an unrestricted context, and by using the context splitting
judgement when typing subterms. The adaptation of the remaining rules to use
context splitting is standard, so we omit them.

The constant application rule \textsc{T-AppK} types term $K \app M$ and makes
use of the type schema function $\Sigma(K)$ to ensure that the argument $M$ is
of the correct type. Rule \textsc{T-CmdSpawn} assigns term $\cmdspawn{M}$ type
$\cmdty{A}$ if term $M$ has type $A$, and rules \textsc{T-CmdEmpty} and
\textsc{T-CmdAppend} allow commands to be composed monoidally.

Rule \textsc{T-Transition} types a $\calcwd{transition}$ term. The typing rule
ensures that the types of the new model, and view, update and extract functions
are compatible. Note that the type parameters of the $\transitionty{A'}{B'}$
need not match the types of the new model and functions.
Rule \textsc{T-NoTransition} assigns term $\notransition{M}{N}$ type
$\transitionty{A}{B}$ if new model $M$ has type $A$, and $N$ is a command of
type $\cmdty{B}$. Note that in this way, the $\notransition{M}{N}$ term replaces
the standard result of the \lstinline+update+ function.

Rule \textsc{T-Try} types an exception handler: the continuations share
a typing environment, but the success continuation is augmented with the a
variable of the type of the possibly-failing continuation. Finally,
$\raiseexn$ can have any type as is it does not return (\textsc{T-Raise}).

The type and kinding system ensures that the kind of type $A$ determines the kind
of the typing environment needed to type a term of type $A$.

\begin{lemma}\label{lem:combined:env-kinding}
  If $\Gamma \vdash M : A$ and $A :: \kappa$, then $\Gamma :: \kappa$.
\end{lemma}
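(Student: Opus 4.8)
The plan is to prove the statement by induction on the derivation of $\Gamma \vdash M : A$, with an inner case split on $\kappa$. Since kinds are just $\lin$ and $\unr$, there are two cases, and $\kappa = \lin$ is immediate: every type has kind $\lin$ --- base, HTML and attribute types inherit it from $\unr$ by subkinding and kind subsumption, session types have it outright, and the clauses for $\kto{\kappa}$, $\cmdty{A}$, products, sums and transitions all propagate it, as one checks against the coinductive (``largest relation'') definition --- so $\Gamma :: \lin$ holds by the definition of $\Gamma :: \kappa$.

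The substance is the case $\kappa = \unr$, which I would prove by analysis of the last typing rule, after recording one auxiliary fact: if $\Gamma = \Gamma_1 + \Gamma_2$, then $\Gamma :: \unr$ iff $\Gamma_1 :: \unr$ and $\Gamma_2 :: \unr$ (immediate from the splitting rules, since each variable of $\Gamma$ lands in some $\Gamma_i$, and any shared variable is already unrestricted). The leaf rules are then easy: \textsc{T-CmdEmpty} and \textsc{T-Raise} carry $\Gamma :: \unr$ as a premise, and for \textsc{T-Var} the non-variable part of the context is unrestricted by the side condition while the bound variable has unrestricted type by the hypothesis $A :: \unr$. For \textsc{T-Abs}, a function type $A \kto{\kappa'} B$ can be given kind $\unr$ only when $\kappa' = \unr$ (the kinding relation assigns a function type exactly its annotation, and subsumption can only lift $\unr$ to $\lin$), so the rule's side condition $\Gamma :: \kappa'$ is precisely $\Gamma :: \unr$.

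For the rules that introduce a compound value --- pairing, sum injection, $\cmdspawn{M}$, the $\star$ rules, \textsc{T-NoTransition}, and the like --- the key observation is that kinding of compound types is \emph{homogeneous}: $A \times B$, $A + B$ and $\cmdty{A}$ are unrestricted exactly when their components are. Hence a conclusion type of kind $\unr$ forces each premise to type a subterm of unrestricted type, the induction hypothesis applies to every subterm, and the auxiliary fact reassembles the $\Gamma_i :: \unr$ into $\Gamma :: \unr$. (For \textsc{T-NoTransition}, $\transitionty{A}{B} :: \unr$ gives $A :: \unr$ and $B :: \unr$, hence $\cmdty{B} :: \unr$, so the hypothesis applies to both premises.)

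The delicate cases, which I expect to be the main obstacle, are the elimination and application rules --- function application $M \app N$, constant application $K \app M$ (\textsc{T-AppK}), $\letin{(x,y)}{M}{N}$, \textsc{case}, \textsc{T-Try}, and also \textsc{T-Transition}, whose result type $\transitionty{A'}{B'}$ is not constrained by its premises. In these rules a premise types a subterm whose type is \emph{not} pinned down by the kind of the conclusion (in $M \app N : B$ the operator has type $A \kto{\kappa'} B$, whose kind is $\kappa'$; the scrutinee of a \textsc{case} or the argument of a constant may be linear even when the result is unrestricted). My plan for these is to lean on the binding discipline rather than on the conclusion's kind: in the elimination forms the continuation carries the conclusion type, so the hypothesis controls its context directly, and the remaining free variables either sit in a context already forced unrestricted or are re-bound in the continuation; and for \textsc{T-AppK} one inspects the finitely many constant schemas $\Sigma(K)$ one by one (for $\calcwd{send}$, $\calcwd{receive}$ and $\calcwd{new}$ the result type is linear, so the subcase $\kappa = \unr$ does not even arise, while the $\calcwd{cancel}$ and $\calcwd{close}$ schemas need the most care). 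I would flag this family of rules as the place where the context-splitting bookkeeping is least mechanical and most error-prone.
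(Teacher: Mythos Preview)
Your analysis is more detailed than the paper's own proof, which is the single line ``By induction on the derivation of $\Gamma \vdash M : A$.'' Unfortunately, carrying out that induction carefully---as you have begun to do---reveals that the lemma as stated is \emph{false}, and the cases you flag as needing ``the most care'' are precisely where it fails.

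The cleanest counterexample uses \textsc{T-AppK} with $\calcwd{cancel}$: from $c : S \vdash c : S$ (\textsc{T-Name}) and $\Sigma(\calcwd{cancel}) = S \uto \one$ we obtain $c : S \vdash \gvcancel{c} : \one$; here $\one :: \unr$, yet the context $c : S$ is not unrestricted since session types have only kind $\lin$. The same failure occurs for $\calcwd{close}$, for ordinary application (e.g.\ $c : S \vdash (\lambda x.\,\gvcancel{x}) \app c : \one$), and for \textsc{T-Transition} whenever the new model type $A$ is linear but the result type $\transitionty{A'}{B'}$ is instantiated with unrestricted $A',B'$.

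Your plan to ``lean on the binding discipline'' does succeed for the elimination forms that \emph{rebind} the possibly-linear data in a continuation typed at the conclusion type (\textsc{T-LetPair}, \textsc{T-Case}, \textsc{T-Try}): there the induction hypothesis on the continuation forces the bound variables, hence the scrutinee's type, to be unrestricted, and the hypothesis then applies to the scrutinee. But in \textsc{T-App} and \textsc{T-AppK} the argument is consumed without being rebound in anything carrying the conclusion type, so no such leverage exists; and in \textsc{T-Transition} the premises are typed at $A,B,C$, which are entirely decoupled from the $A',B'$ in the conclusion.

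It is worth noting that the paper's actual \emph{uses} of the lemma (via the corollary that $\Psi = \cdot$ when $A :: \unr$) are all for \emph{values} of unrestricted function type---the view, update, and extract functions---and for such values the conclusion does hold, since \textsc{T-Abs} carries $\Gamma :: \kappa$ as an explicit side condition. A corrected statement would restrict $M$ to values, or simply to the specific shapes needed in the preservation proof; but the general claim you were asked to prove cannot be established.
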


\subparagraph{Duality.}
The duality relation for session types is standard: output types are dual to
input types; we use a self-dual $\gvend$ type; and we use the formulation of the
duality of recursive session types advocated by~\citet{LindleyM16:bananas}.

\begin{figure}[t]
  {\footnotesize
  ~\textbf{Runtime syntax}
  \begin{syntax}
    \text{Runtime names} & c, d \vspace{\syntaxskip} \\
    \text{Values} & U, V, W & ::= & \cdots \midspace c \midspace \cmdspawn{M}
    \midspace \notransition{V}{W} \vspace{\syntaxskip} \\
                  & & \midspace &  \transition{V_m}{V_v}{V_u}{V_e}{V_c} \vspace{\syntaxskip} \\
    \text{Active thread} & T & ::= & \idle{V_m} \midspace \updating{M} \midspace
    \extracting{V_c}{M} \\
                         & & \midspace & \extractingt{F}{V_c}{M} \midspace
                         \rendering{V_m, V_c}{M}  \\
                         & & \midspace & \transitioningext{V_m}{F}{V_c}{\mh} \vspace{\syntaxskip}\\
    \text{Versions} & \vers \vspace{\syntaxskip} \\
    \text{Processes} & P, Q & ::= & \run{M} \midspace \handlerproctrans{T}{F}{\vers}
                     \midspace \threadtrans{M}{\vers} \midspace P \parallel Q \\
                     & & \midspace & (\nu c d) P \midspace \serverthread{M} \midspace \zap{c} \midspace \halt \vspace{\syntaxskip} \\
    \text{Function state} & F & ::= & \statecomb{V_v}{V_u}{V_e} \vspace{\syntaxskip}\\
    \text{Configurations}  & \config{C} & ::= & \sys{P}{\vh} \vspace{\syntaxskip} \\
    \\
    \text{Process contexts} & \procctx & ::= & [~] \midspace \procctx \parallel P \midspace (\nu c d) \procctx \vspace{\syntaxskip} \\
    \text{Evaluation contexts} & E & ::= & \cdots \midspace K \app E
    \midspace \notransition{E}{M} \midspace \notransition{V}{E} \\
                               & & \midspace &
                               \transition{E}{M_v}{M_u}{M_e}{M_c} \midspace \cdots \midspace \transition{V_m}{V_v}{V_u}{V_e}{E} \\
                               & & \midspace & \tryasinotherwise{E}{x}{M}{N} \vspace{\syntaxskip} \\
    \text{Pure contexts} & E_{\textsf{P}} & ::= & \text{(as $E$, but without exception handling frames)} \vspace{\syntaxskip} \\
    \text{Active thread contexts} & \ta & ::= & \updating{E} \midspace
    \rendering{V_m, V_c}{E} \midspace
                           \extracting{V_c}{E} \\
                                       & & \midspace &
                                       \extractingt{V_c}{F'}{E} \midspace
                                       \transitioningext{V_m}{F'}{V_c}{E} \vspace{\syntaxskip} \\

    \text{Pure active thread contexts} \!\!\!\!\!\!\!\!\!\!\!\!\!\!\!\! & \tp & ::= &
    \text{(as $\ta$, but for pure contexts)} \vspace{\syntaxskip} \\
    \text{Thread contexts} & \config{T} & ::= &
      \run{E} \midspace
      \handlerproctrans{\ta}{F}{\vers} \midspace
      \threadtrans{E}{\vers} \midspace \serverthread{E}
  \end{syntax}
    {\footnotesize
  ~\textbf{Active thread state machine}
      \begin{center}
        \scalebox{0.75}{
\begin{tikzpicture}[->,>=stealth',shorten >=1pt,left,node distance=3cm,
                    semithick]
  \tikzstyle{every state}=[fill=white,draw=black,text=black, ellipse, minimum width=2.5cm]

  \node[state]         (A)                    {$\calcwd{idle}$};
  \node[state]         (B) [right of=A]       {$\calcwd{updating}$};
  \node[state]         (C) [above right of=B, node distance=2cm] {$\calcwd{extracting}$};
  \node[state]         (D) [right of=C]       {$\calcwd{rendering}$};
  \node[state]         (E) [below right of=B, node distance=2cm] {$\calcwd{extractingT}$};
  \node[state]         (F) [right of=E]       {$\calcwd{transitioning}$};

  \path (A) edge              node {} (B)
    (B) edge              node {(No model transition)\phantom{AD}} (C)
        (C) edge              node {} (D)
        (B) edge                node {(Model transition)\phantom{AD}} (E)
        (E) edge              node {} (F);
\end{tikzpicture}
         }
      \end{center}
    }
}

\caption{Runtime syntax for extended calculus}
\label{fig:extensions:combined:rt-syntax}
\end{figure}

\subsubsection{Operational Semantics}
\subparagraph{Runtime syntax.}
Figure~\ref{fig:extensions:combined:rt-syntax} shows the runtime syntax for the combined
calculus. We introduce \emph{runtime names} $c, d$ which identify session
channel endpoints.

The biggest departure is that we require a richer structure on active threads,
which form a state machine based on whether a model transition occurs. The
$\calcwd{idle}$ state is as
before, and the $\calcwd{updating}$ state evaluates the update function.
If there is no model transition, then the thread moves to the
$\calcwd{extracting}$ state to extract the unrestricted model, and the
$\calcwd{rendering}$ state to render the new HTML. If there is a model
transition, then the thread moves to the $\calcwd{extractingT}$ state followed
by the $\calcwd{transitioning}$ state to calculate the new HTML to be displayed
after the transition. Each state records values which are required in later
states: for example, the $\rendering{V_m, V_c}{M}$ state
records the new model $V_m$ and the command to be executed upon updating the
page $V_c$.

We introduce four new types of process. To model client-server communication, we
introduce server processes $\serverthread{M}$ to model a process $M$ running on
the server; the thread to spawn is given as an argument to $\calcwd{run}$.
As an example, we could write a Ponger server process for the $\mkwd{PingPong}$
example, which immediately responds with a $\mkwd{Pong}$ message:

{\small
\begin{minipage}{0.45\textwidth}
\[
  \bl
  \letintwo{(c, s)}{\gvnew{()}} \\
  (\mkwd{Pinging}(c), \mkwd{pView}, \mkwd{pUpdate}, \mkwd{pExtract}, \\
  \quad \cmdempty, \mkwd{ponger}(s))
  \el
\]
\end{minipage}
\hfill
  \begin{minipage}{0.45\textwidth}
\[
  \bl
  \mkwd{ponger}(s) \defeq \lambda (). \\
  \quad (\rectwo{f}{s} \\
  \qquad \letintwo{(\mkwd{Ping}, s)}{\gvrecv{s}} \\
  \qquad \letintwo{s}{\gvsend{\mkwd{Pong}}{s}} \app f \app s) \app s
  \el
\]
\end{minipage}
}

A name restriction $(\nu c d) P$ binds
runtime names $c$ and $d$ in process $P$, following the double-binder
formulation due to~\citet{Vasconcelos12:fundamentals}.
A \emph{zapper thread} $\zap{c}$ denotes
an endpoint $c$ that has been cancelled and cannot be used in future
communications; we write $\zap{V}$ to mean $\zap{c_1} \parallel \cdots
\parallel \zap{c_n}$ for $c_i \in \fn{V}$, where $\fn{V}$ enumerates the free
runtime names in a value $V$, and extend this sugar to evaluation contexts. The
$\halt$ process denotes that the event loop process has terminated due to an
unhandled exception.

We extend evaluation contexts in the standard
way, and introduce a class of \emph{pure contexts} $\ep$, which are
evaluation contexts which do not contain any exception handling frames.

\subparagraph{Versions.}
\emph{Versions} $\vers$ ensure that threads spawned in a previous state do not
deliver incompatible messages.  We annotate event loop processes and event
handler threads with versions: given an event loop
$\handlerproctrans{T}{F}{\vers}$, a thread $\threadtrans{M}{\vers'}$ where
$\vers \ne \vers'$
can be of
arbitrary type as it will be discarded.
We write $\version{P} = \vers$ if $P$ contains a subprocess
$\handlerproctrans{T}{F}{\vers}$.

\begin{figure}[t]
  {\footnotesize
    \begin{minipage}[t]{0.5\textwidth}
      \footnotesize
      ~\textbf{Additional term reduction rule} \hfill \framebox{$M \teval N$}
\vspace{-0.5em}
  \[
    \begin{array}{lcl}
        \textsc{E-Try}   \\
        \quad \tryasinotherwise{V}{x}{M}{N} & \teval & M \{ V / x \} \\
    \end{array}
  \]%
    \end{minipage}
  \hfill
  \begin{minipage}[t]{0.475\textwidth}
    ~\textbf{Additional meta-level definitions}
\vspace{-0.5em}
  \[
    \begin{array}{rcl}
      \procs{\cmdempty} & = & \epsilon \\
      \procs{\cmdspawn{M}} & = & M \\
      \procs{\append{V}{W}} & = & \procs{V} \cdot \procs{W} \\
    \end{array}
  \]
\end{minipage}

~\textbf{Equivalence of processes} \hfill \framebox{$P \equiv P'$}
\begin{mathpar}
  (\nu c d)(\nu c' d') P \equiv (\nu c' d')(\nu c d) P

  P \parallel ((\nu c d) Q)\equiv (\nu c d) (P \parallel Q) \quad \text{if }
  c, d \not\in \fn{P}

  (\nu c d) P \equiv (\nu d c) P

  P_1 \parallel P_2 \equiv P_2 \parallel P_1

  P_1 \parallel (P_2 \parallel P_3) \equiv (P_1 \parallel P_2) \parallel P_3

  (\nu c d)(\zap{c} \parallel \zap{d}) \parallel P \equiv P

  \serverthread{()} \parallel P \equiv P
\end{mathpar}

\vspace{-0.5em}
~\textbf{Reduction of processes} \hfill \framebox{$P \ceval P'$}

\vspace{-1em}
\begin{center}
~MVU reduction rules
\end{center}
\vspace{-0.5em}
\[
  \begin{array}{@{}l@{\:\:}r@{\:\:\:}l@{}}
    \textsc{E-Discard} \hfill
    \handlerproctrans{T}{F}{\vers} \parallel \threadtrans{V}{\vers'} & \ceval &
    \handlerproctrans{T}{F}{\vers} \parallel \zap{V} \quad \text{ if } \vers \ne \vers' \\
  \textsc{E-DiscardHalt} \hfill
    \halt \parallel \threadtrans{V}{\vers} & \ceval &
    \halt \parallel \zap{V} \\
    \textsc{E-Handle} \hfill
    \hfill \handlerprocexptc{\idle{V_m}}{V_v}{V_u}{V_e}{\vers} \parallel \threadtrans{V}{\vers} & \ceval &
    \handlerprocexptc{\updating{V_u \app (V, V_m)}}{V_v}{V_u}{V_e}{\vers} \\
    \textsc{E-Extract}\quad
    \hfill \handlerproctrans{\updating{(\notransition{V_m}{V_c})}}{F}{\vers} & \ceval &
    \handlerproctrans{\extracting{V_c}{(V_e \app V_m)}}{F}{\vers} \\
                                                                                             & &
                                                                                             \text{where
                                                                                             }
                                                                                             F= \statecomb{V_v}{V_u}{V_e} \\
       \textsc{E-ExtractT} \\
       \hfill
       \handlerproctrans{\updating{(\transition{V_m}{V_v}{V_u}{V_e}{V_c})}}{F}{\vers} & \ceval &
       \handlerproctrans{\extractingtexp{V_v}{V_u}{V_e}{V_c}{(V_e \app V_m)}}{F}{\vers} \\
       \textsc{E-Render}
       \hfill
       \handlerproctrans{\extracting{V_c}{(V_m, V_{um})}}{F}{\vers} & \ceval &
       \handlerproctrans{\renderingext{V_m}{V_c}{(V_v \app V_{um})}}{F}{\vers} \\
                                                                                             & &
                                                                                             \text{where
                                                                                             }
                                                                                             F= \statecomb{V_v}{V_u}{V_e} \\

       \textsc{E-RenderT}
       \hfill \handlerproctrans{\extractingt{F'}{V_c}{(V_m, V_{um})}}{F}{\vers} & \ceval &
       \handlerproctrans{\transitioningext{V_m}{F'}{V_c}{(V_v \app V_{um})}}{F}{\vers} \\
                                                                                                & &
                                                                                                \text{where
                                                                                                }
                                                                                                F' =
\statecomb{V_v}{V_u}{V_e}
     \end{array}
     \]%

\vspace{-1.25em}
\begin{center}
  Session reduction rules
\end{center}
\vspace{-0.5em}
\[
  \begin{array}{@{}l@{\:\:}r@{\:\:\:}c@{\:\:\:}l@{}}
       \textsc{E-New} & \config{T}[\calcwd{new}()] & \ceval & (\nu c d)(\config{T}[(c, d)])  \quad \text{where } c, d  \text{ fresh } \\
       \textsc{E-Comm} & (\nu c d)(\config{T}[\gvsend{V}{c}] \parallel \config{T}'[\gvrecv{d}])
                      & \ceval &
                      (\nu c d)(\config{T}[c] \parallel \config{T}'[(V, d)]) \\
       \textsc{E-Close} & (\nu c d) (\config{T}[\gvclose{c}] \parallel \config{T}'[\gvclose{d}]) & \ceval &
       \config{T}[()] \parallel \config{T}'[()] \\
       \textsc{E-Cancel} & \config{T}[\gvcancel{c}] & \ceval & \config{T}[()] \parallel \zap{c} \\

       \textsc{E-SendZap} & (\nu c d)(\config{T}[\gvsend{V}{c}] \parallel \zap{d})
                      & \ceval &
                      (\nu c d)(\config{T}[\raiseexn] \parallel \zap{c} \parallel \zap{V} \parallel \zap{d}) \\
       \textsc{E-RecvZap} & (\nu c d)(\config{T}[\gvrecv{c}] \parallel \zap{d})
                      & \ceval &
                      (\nu c d)(\config{T}[\raiseexn] \parallel \zap{c} \parallel \zap{d}) \\

       \textsc{E-CloseZap} & (\nu c d)(\config{T}[\gvclose{c}] \parallel \zap{d})
                      & \ceval &
                      (\nu c d)(\config{T}[\raiseexn] \parallel \zap{c} \parallel \zap{d})
     \end{array}
   \]

\vspace{-1em}
   \begin{center}
     Exception reduction rules
   \end{center}
\vspace{-0.5em}
   \[
  \begin{array}{@{}l@{\:\:}r@{\:\:\:}c@{\:\:\:}l@{}}
       \textsc{E-RaiseH} & \config{T}[\tryasinotherwise{\ep[\raiseexn]}{x}{M}{N}] & \ceval & \config{T}[N] \parallel \zap{\ep} \\
      \textsc{E-RaiseURun} &
      \run{(\ep[\raiseexn])} & \ceval &
      \halt \parallel \zap{\ep} \\
      \textsc{E-RaiseUMain} &
      \handlerproctrans{\tp[\raiseexn]}{F}{\vers} & \ceval &
      \halt \parallel \zap{\tp} \\
      \textsc{E-RaiseUThread} & \threadtrans{\ep[\raiseexn]}{\vers} & \ceval & \zap{\ep} \\
      \textsc{E-RaiseUServer} & \serverthread{\ep[\raiseexn]} & \ceval & \zap{\ep}
  \end{array}
\]

\vspace{-1em}
\begin{center}
~Administrative reduction rules
\end{center}
\vspace{-0.5em}
\[
  \begin{array}{lrcll}
      \textsc{E-LiftT} &
      \config{T}[M] & \ceval & \config{T}[N] & \text{if } M \teval N \\
      \textsc{E-Nu} &
      (\nu a b) P & \ceval & (\nu a b) P' & \text{if } P \ceval P' \\
      \textsc{E-Par} &
      P_1 \parallel P_2& \ceval & P'_1 \parallel P_2 & \text{if } P_1 \ceval P'_1
\end{array}
\]%
}
\vspace{-0.5em}
\caption{Reduction rules for extended calculus (1)}
\label{fig:combined:reduction-1}
\end{figure}
\begin{figure}[t]
  {
  \footnotesize
~\textbf{Reduction of configurations} \hfill \framebox{$\config{C} \ceval \config{C}'$}
\[
  \begin{array}{@{}l@{\:\:}c@{\:\:\:}l@{}}
    \textsc{E-Run} \\
    \hfill
    \sys{\procctx[\run{(V_m, V_v, V_u, V_e, V_c, \lambda () . M)}]}{D}  & \ceval &
      \sys{\handlerprocexptc{\extracting{V_c}{(V_e \app
      V_m)}}{V_v}{V_u}{V_e}{0} \parallel \serverthread{M}}{D}
      \vspace{\redrowskip} \\
\textsc{E-Update} \\
\hfill
\sys{\procctx[\handlerproctrans{\renderingext{V_m'}{V_c}{U}}{F}{\vers}]}{\vh}
                 & \ceval &
                 \sys{\procctx[\handlerproctrans{\idle{V'_m}}{F}{\vers} \parallel \threadtrans{M_1}{\vers} \parallel \cdots \parallel \threadtrans{M_n}{\vers}]}{\vh'}
                 \\
                 & & \quad \text{where } \diff{U}{\vh} = \vh' \text{ and } \procs{V_c} = \seq{M} \\
  \textsc{E-Transition} \\
  \hfill
    \sys{\procctx[\handlerproctrans{\transitioningext{V_m}{F'}{V_c}{U}}{F}{\vers}]}{\vh} & \ceval &
    \sys{\procctx[\handlerproctrans{\idle{V_m}}{F'}{\vers'} \parallel \threadtrans{M_1}{\vers'} \parallel
    \cdots \parallel \threadtrans{M_n}{\vers'}]}{\vh'} \\
    & & \quad \text{where }
    {\begin{array}[t]{l}
      \vers' = \vers + 1,
      \diff{U}{D} = D' \\
      \text{ and } \procs{V_c} = \seq{M}
    \end{array}} \\
      \textsc{E-Evt} \\
      \hfill
    \sys{P}{\config{D}[\pgtag{t}{U}{D}{\evtpayload{\evt{ev}}{W} \cdot \seq{e}}]}
                     & \ceval &
                     \sys{P \parallel \threadtrans{V_1 \app W}{\vers} \parallel \cdots \parallel \threadtrans{V_n \app W}{\vers}}{\config{D}[\pgtag{t}{U}{D}{\seq{e}}]} \\
                     & &  \quad \text{where } \handlers{\evt{ev}}{U} = \seq{V} \text{ and } \version{P} = \vers
  \end{array}
  \]
  \begin{center}
  (\textsc{E-Interact}, \textsc{E-Struct}, \textsc{E-LiftP} unchanged)
\end{center}

~\textbf{Cancellation of pure active thread contexts} \hfill
\framebox{$\zap{\tp}$}
\begin{mathpar}
  \zap{\updating{\ep}} = \zap{\ep}

  \zap{\rendering{V_m, V_c}{\ep}} = \zap{V_m} \parallel \zap{V_c} \parallel \zap{\ep}

  \zap{\extracting{V_c}{\ep}} = \zap{V_c} \parallel \zap{\ep}

  \zap{\extractingt{V_c}{F}{\ep}} = \zap{V_c} \parallel \zap{\ep}

  \zap{\transitioning{V_m}{F}{V_c}{\ep}} = \zap{V_m} \parallel \zap{V_c}
  \parallel \zap{\ep}
\end{mathpar}
}
\vspace{-1.5em}
\caption{Reduction rules for extended calculus (2)}
\label{fig:combined:reduction-2}
\end{figure}

\subparagraph{Reduction.}
Figures~\ref{fig:combined:reduction-1} and~\ref{fig:combined:reduction-2} show
the extended process equivalence and reduction rules. Rule \textsc{E-Try}
handles evaluation of the success continuation of an exception handler, and the
$\mkwd{procs}$ meta-definition returns a sequence of processes to be spawned by
a command.
Process equivalence is extended to allow commutativity of name
restrictions, reordering of names in a binder, and scope extrusion. The final
``garbage collection'' equivalences $(\nu c d)(\zap{c} \parallel \zap{d})
\parallel P \equiv P$ and $\serverthread{()} \parallel P \equiv P$ allow us to discard a
channel where both endpoints have been cancelled, and a completed server thread,
respectively.

Figure~\ref{fig:combined:reduction-1} details the extended MVU process reduction rules.

\subparagraph{MVU reduction.}
MVU reduction rules are specific to MVU.
Central to safely integrating linearity and transitions are rules
\textsc{E-Discard}, \textsc{E-DiscardHalt}, and \textsc{E-Handle}.
Rule \textsc{E-Handle} is modified so
that the event loop process only handles a message if the message has the same
version. If the versions do not match, then \textsc{E-Discard} safely discards
any channel endpoints in the discarded message by generating zapper threads.
Rules \textsc{E-Extract}, \textsc{E-ExtractT}, \textsc{E-Render}, and
\textsc{E-RenderT} handle the state machine transitions described in
Figure~\ref{fig:extensions:combined:rt-syntax} and are used to calculate the new
model and HTML.

\subparagraph{Session reduction.}
Session reduction rules encode session-typed communication and are mostly
standard: \textsc{E-New} generates a name restriction and returns two fresh
endpoints; \textsc{E-Comm} handles synchronous communication; and
\textsc{E-Close} discards the endpoints of a completed session.
The remaining session communication rules handle session cancellation, and are
a synchronous variant of Exceptional GV described by~\citet{FowlerLMD19:stwt}.
Rule \textsc{E-Cancel} discards an endpoint. Rules \textsc{E-SendZap},
\textsc{E-RecvZap}, and \textsc{E-CloseZap} raise an exception if a thread tries
to communicate along an endpoint whose peer is cancelled, ensuring resources are
discarded safely.

\subparagraph{Exception reduction.}
Rule \textsc{E-RaiseH} describes exception handling: as $\raiseexn$ occurs in a
pure context, the exception is handled by the innermost handler; the rule
evaluates the failure continuation and discards all linear resources in the
aborted context. Rules \textsc{E-RaiseURun} and \textsc{E-RaiseUMain} apply to
unhandled exceptions in a main thread, generating the $\halt$ configuration and
cancelling any linear resources in the aborted context.  Rules
\textsc{E-RaiseUThread} and \textsc{E-RaiseUServer} apply to unhandled
exceptions in event loop thread and server threads respectively, by cancelling
any channels in the aborted continuation.

\subparagraph{Configuration reduction.}
Figure~\ref{fig:combined:reduction-2} shows the modified configuration reduction
rules. We modify \textsc{E-Run} to take into account the new arguments, and
spawn the given server thread.
We modify \textsc{E-Update} to spawn threads described by the returned command;
\textsc{E-Transition} is similar but changes the function state and increments
the version. We modify \textsc{E-Evt} to tag each
spawned event handler thread with the version of the event handler process.

\begin{figure}[t]
{\footnotesize
\textbf{Typing rules for names, events, and function state} \hfill
\framebox{$\vphantom{\statetytrans{A}{B}{C}} \Gamma \vdash M : A$}
\framebox{$\vphantom{\statetytrans{A}{B}{C}} \vdash e$}
\framebox{$\Psi \vdash F : \statetytrans{A}{B}{C}$}
\vspace{-0.5em}
  \begin{mathpar}
    \inferrule
    [T-Name]
    { \Gamma :: \unr }
    { \Gamma, c : S \vdash c : S }

    \inferrule
    [TE-Evt]
    { \vdash V : \evtty{\evt{ev}} \\\\ \evtty{\evt{ev}} :: \unr }
    { \vdash \evtpayload{\evt{ev}}{V} }

  \inferrule
  [TF-State]
  {
    \Psi_1 \vdash V_v : A \uto \htmlty{B} \\
    \Psi_2 \vdash V_u : (B \times A) \uto \transitionty{A}{B} \\\\
    \Psi_3 \vdash V_e : A \uto (A \times C) \\
    C :: \unr
  }
  { \Psi_1, \Psi_2, \Psi_3 \vdash (V_m, V_v, V_u) : \statetytrans{A}{B}{C}}
\end{mathpar}

~\textbf{Typing rules for active threads} \hfill \framebox{$\Psi \vdashs T : \evtlooptytrans{A}{B}{C}$}
\vspace{-0.5em}
 \begin{mathpar}
  \inferrule
  [TT-Idle]
  { \Psi \vdash V_m {:} A  }
  { \Psi \vdashs \idle{V_m} {:} \evtlooptytrans{A}{B}{C} }

  \inferrule
  [TT-Updating]
  { \Psi \vdash M {:} \transitionty{A}{B} }
  { \Psi \vdashs \updating{M} {:} \evtlooptytrans{A}{B}{C} }

  \inferrule
  [TT-Rendering]
  { \Psi_1 \vdash V_m {:} A \\ \Psi_2 \vdash V_c {:} \cmdty{B} \\ \Psi_3 \vdash M {:} \htmlty{B} }
  { \Psi_1, \Psi_2, \Psi_3 \vdashs \rendering{V_m, V_c}{M} {:} \evtlooptytrans{A}{B}{C} }

  \inferrule
  [TT-Extracting]
  { \Psi_1 \vdash V_c {:} \cmdty{B} \\ \Psi_2 \vdash M {:} (A \times C) }
  { \Psi_1, \Psi_2 \vdashs \extracting{V_c}{M} {:} \evtlooptytrans{A}{B}{C} }

  \inferrule
  [TT-ExtractingT]
  { \Psi_1 \vdash F {:} \statetytrans{A}{B}{C} \\
    \Psi_2 \vdash V_c {:} \cmdty{B} \\
    \Psi_3 \vdash M {:} (A \times C) }
  { \Psi_1, \Psi_2, \Psi_3 \vdashs \extractingt{F}{V_c}{M} {:} \evtlooptytrans{A'}{B'}{C'} }

  \inferrule
  [TT-Transitioning]
  { \Psi_1 \vdash V_m {:} A \\
    \Psi_2 \vdash F {:} \statetytrans{A}{B}{C} \\
    \Psi_3 \vdash V_c {:} \cmdty{B} \\
    \Psi_4 \vdash \mh {:} \htmlty{B}
  }
  { \Psi_1, \ldots, \Psi_4 \vdashs \transitioningext{V_m}{F}{V_c}{\mh} {:} \evtlooptytrans{A'}{B'}{C'} }
\end{mathpar}

~\textbf{Typing rules for processes} \hfill \framebox{$\Psi \vdashtrans{\phi}{\vers} P : A$}
\vspace{-0.5em}
    \begin{mathpar}
    \inferrule
    [TP-Run]
    { \Psi \vdash M :\!\!\!
      {\begin{array}[t]{l}
          (A \times (A \uto \htmlty{B}) \times ((B \times A) \uto
          \transitionty{A}{B}) \: \times \\
        \quad (A \uto (A \times C)) \times \cmdty{B} \times (\one \lto \one))
       \el
      }
      \\
      C :: \unr
    }
    { \Psi \vdashtrans{\bcirc}{\vers} \run{M} : B }

    \inferrule
    [TP-EventLoop]
    { \Psi_1 \vdashs T : \evtlooptytrans{A}{B}{C} \\\\
      \Psi_2 \vdash F : \statetytrans{A}{B}{C}
    }
    { \Psi_1, \Psi_2 \vdashtrans{\bcirc}{\vers} \handlerproctrans{T}{F}{\vers} : B }

    \inferrule
    [TP-Thread]
    { \Psi \vdash M : A }
    { \Psi \vdashtrans{\wcirc}{\vers} \threadtrans{M}{\vers} : A }

    \inferrule
    [TP-OldThread]
    { \Psi \vdash M : B \\ \vers \ne \vers' }
    { \Psi \vdashtrans{\wcirc}{\vers} \threadtrans{M}{\vers'} : A }

    \inferrule
    [TP-ServerThread]
    { \Psi \vdash M : \one }
    { \Psi \vdashtrans{\wcirc}{\vers} \serverthread{M} : A }

    \inferrule
    [TP-Par]
    { \Psi_1 \vdashtrans{\phi_1}{\vers} P_1 : A \\ \Psi_2 \vdashtrans{\phi_2}{\vers} P_2 : A }
    { \Psi_1, \Psi_2 \vdashtrans{\phi_1 + \phi_2}{\vers} P_1 \parallel P_2 : A }

    \inferrule
    [TP-Zap]
    {  }
    { c : S \vdashtrans{\wcirc}{\vers} \zap{c} : A }

    \inferrule
    [TP-Halt]
    {  }
    { \cdot \vdashtrans{\bcirc}{\vers} \halt : A }

    \inferrule
    [TP-Nu]
    { \Psi, c : S, d : \gvdual{S} \vdashtrans{\phi}{\vers} P : A }
    { \Psi \vdashtrans{\phi}{\vers} (\nu c d) P : A }
  \end{mathpar}
}
\caption{Runtime typing for extended calculus}
\label{fig:extensions:combined:runtime-typing}
\end{figure}
 \subsubsection{Metatheory}
\subparagraph{Runtime typing.}
Figure~\ref{fig:extensions:combined:runtime-typing} shows the runtime typing
rules for the extended calculus. Rule \textsc{T-Name} types channel endpoints,
and rule \textsc{TE-Evt} mandates that event payload types are unrestricted. The
rules for active threads ensure that the types of the terms being evaluated
correspond to the state in the state machine (for example, that the
$\calcwd{updating}$ state returns a term of type $\transitionty{A}{B}$), and
that any recorded values have the correct types.

Let $\Psi$ range over environments containing only runtime names: $\Psi ::=
\cdot \midspace \Psi, c : S$.
We write $\Psi_1, \Psi_2$ for the disjoint union of environments $\Psi_1$ and
$\Psi_2$.

We modify the shape of the process typing judgement to $\Psi
\vdashtrans{\phi}{\vers} P : A$, which can be read ``under typing environment
$\Psi$ and thread flag $\phi$, process $P$ has type $A$ and version $\vers$''.
We modify rule \textsc{TP-EventLoop} to include the extraction function, and
mandate that the unrestricted model type $C$ has kind $\unr$. We modify rule
\textsc{T-Thread} to state that type of an event handler thread
$\threadtrans{M}{\vers}$  has type $A$ if term $M$ has type $A$ and the version
matches that of the event handler process. Rule
\textsc{TP-OldThread} allows a thread to have a mismatching type to the
event handler process if the versions are incompatible. Finally, \textsc{TP-Zap} and
\textsc{TP-Halt} type zapper threads and the $\halt$ thread, and \textsc{TP-Nu}
types a name restriction $(\nu c d) P$ by adding $c$ and $d$ with dual session
types into the typing environment.

\subparagraph{Properties.}
The extended calculus satisfies preservation.

\begin{theorem}[Preservation]\label{thm:combined:preservation}
  If $\vdash \config{C}$ and $\config{C} \ceval \config{C}'$, then $\vdash \config{C}'$.
\end{theorem}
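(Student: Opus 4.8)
The plan is to follow the standard recipe for a preservation theorem in a concurrent session calculus: first establish the auxiliary lemmas — a linearity-aware substitution lemma, term-level subject reduction, inversion principles for the runtime typing judgements, and preservation under structural equivalence $\equiv$ — and then perform a case analysis on the derivation of $\config{C} \ceval \config{C}'$.

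First I would prove the supporting machinery. \emph{(i)} A substitution lemma: if $\Gamma_1, x : A \vdash M : B$ and $\Gamma_2 \vdash V : A$ with $\Gamma_1 + \Gamma_2$ defined, then $\Gamma_1 + \Gamma_2 \vdash M\{V/x\} : B$; the linear case is where $V$'s context must be exactly $\Gamma_2$, and Lemma~\ref{lem:combined:env-kinding} keeps the kind bookkeeping honest. \emph{(ii)} Term subject reduction: if $\cdot \vdash M : A$ and $M \teval N$ then $\cdot \vdash N : A$, by induction on the reduction, with the new \textsc{E-Try} case discharged by the substitution lemma and inversion on \textsc{T-Try}. \emph{(iii)} Inversion principles for \textsc{TT-*}, \textsc{TF-State}, and \textsc{TP-*}, so that from the typing of an active thread in a given state-machine state one can read off the types of the recorded values $V_m$, $V_c$, $F$, and the residual computation. \emph{(iv)} Preservation under $\equiv$: every equivalence must preserve $\vdash \config{C}$; the garbage-collection rules $(\nu c d)(\zap{c}\parallel\zap{d})\parallel P \equiv P$ and $\serverthread{()}\parallel P \equiv P$ rely on $\zap{c}$ and $\zap{d}$ consuming exactly $c : S$ and $d : \gvdual{S}$ — precisely the names \textsc{TP-Nu} introduced — and on a completed server thread being closed of type $\one$. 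I would also record a small lemma that $\mkwd{diff}$ preserves page typing, used in \textsc{E-Update} and \textsc{E-Transition}.

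For the main case analysis, \textsc{E-LiftP}/\textsc{E-LiftT} reduce immediately to term subject reduction; \textsc{E-Interact} and \textsc{E-Evt} are routine using \textsc{TE-Evt} and the fact that $\mkwd{handlers}(\evt{ev}, U)$ returns handlers of the expected type; and the MVU state-machine rules \textsc{E-Handle}, \textsc{E-Extract}, \textsc{E-ExtractT}, \textsc{E-Render}, \textsc{E-RenderT} follow by inversion on the corresponding \textsc{TT-*} rule plus \textsc{TF-State}, then reassembling a thread typed at the successor state. The session rules \textsc{E-New}, \textsc{E-Comm}, \textsc{E-Close} are the usual argument: \textsc{E-Comm} moves the transmitted value's sub-environment across the parallel split and replaces the output/input type with its continuation, using \textsc{TP-Nu} and duality. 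The cancellation rule \textsc{E-Cancel}, the three \textsc{*Zap} rules, and all the \textsc{E-Raise*} rules hinge on a single \emph{cancellation lemma}: if $\Psi \vdash V : A$ then $\Psi \vdash \zap{V} : A'$ for any $A'$ (every name in $\Psi$ is linear, hence free in $V$), and, correspondingly, whenever a pure (active-thread) context appears in a well-typed process, the zappers generated by the $\zap{\ep}$/$\zap{\tp}$ clauses of Figure~\ref{fig:combined:reduction-2} consume exactly the names the discarded material held, so no linear endpoint is orphaned; one simultaneously checks that $\halt$ (typed by \textsc{TP-Halt} at any type and flag $\bcirc$) keeps the flag discipline intact so that exactly one $\bcirc$-process survives.

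The step I expect to be the main obstacle is preservation for the transition rules \textsc{E-ExtractT}, \textsc{E-RenderT}, and especially \textsc{E-Transition}, in concert with the version mechanism and \textsc{E-Discard}. When a transition fires, the model, message, and unrestricted-model types change from $A, B, C$ to fresh $A', B', C'$; the function state $F$ is replaced by the recorded $F'$; the page is re-diffed; and the version is incremented to $\vers' = \vers + 1$. One must show together that (a) the new loop $\handlerproctrans{\idle{V_m}}{F'}{\vers'}$ types at message type $B'$ via \textsc{TP-EventLoop} and \textsc{TF-State} — this is where $C' :: \unr$ and Lemma~\ref{lem:combined:env-kinding} are used; (b) the threads spawned from $\mkwd{procs}(V_c)$ carry $\vers'$ and, since the recorded command $V_c$ has the new message type, typecheck against the new loop by \textsc{TP-Thread}; (c) every surviving old-version thread is re-typed by \textsc{TP-OldThread} (a version mismatch makes its type irrelevant); and (d) the $\bcirc/\wcirc$ flags still combine to a single $\bcirc$, while the page now has type $\pagety{B'}$ to match. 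The genuinely delicate point is that linear endpoints carried inside a now-stale message must not silently vanish: \textsc{E-Discard} converts them to zappers, and the global runtime-name environment $\Psi$ must be re-partitioned so that those zappers account for exactly those names. Threading this partition consistently through \textsc{E-Discard}, \textsc{E-Transition}, \textsc{E-Evt}, and the exception rules, so that linear accounting is never violated, is the real work of the proof.
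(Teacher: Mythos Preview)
Your proposal is correct and follows essentially the same approach as the paper: auxiliary context-typeability/replacement lemmas, preservation under $\equiv$, term-level subject reduction, a diffing lemma, the observation that $\Psi$ typing a value of unrestricted type forces $\Psi=\cdot$ (your use of Lemma~\ref{lem:combined:env-kinding}), and then a case analysis on the reduction relation with \textsc{E-Transition} as the interesting case handled via \textsc{TP-OldThread} and the diffing lemma. The only organisational difference is that the paper factors the argument into a separate ``Preservation (Process reduction)'' lemma covering the MVU state-machine, session, and exception rules, and then the configuration-level theorem handles only \textsc{E-Run}, \textsc{E-Update}, \textsc{E-Transition}, \textsc{E-Evt}; your single flat case analysis and explicit ``cancellation lemma'' amount to the same content unpacked slightly differently.
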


Although session types rule out deadlock within a single session, without imposing
a tree-like structure on processes~\cite{Wadler14:propositions-sessions, LindleyM15:semantics}
(which is too inflexible for our purposes) or using techniques such as channel
priorities~\cite{Padovani14:df-pi, PadovaniN15:df-lambda, Kobayashi06:df}, it is
not possible to rule out deadlocks when considering multiple sessions.  Since
communication over multiple sessions can introduce deadlocks, we begin by
proving an error-freedom property, similar to that of~\citet{GayV10:last}. An
\emph{error process} involves a communication mismatch.

\begin{definition}[Error process]
  A process $P$ is an \emph{error process} it contains one of the following
  processes as a subprocess:
  \begin{enumerate}
    \item $(\nu c d)(\config{T}[\gvsend{V}{c}] \parallel \config{T}'[\gvsend{W}{d}])$
    \item $(\nu c d)(\config{T}[\gvsend{V}{c}] \parallel \config{T}'[\gvclose{d}])$
    \item $(\nu c d)(\config{T}[\gvrecv{c}] \parallel \config{T}'[\gvrecv{d}])$
    \item $(\nu c d)(\config{T}[\gvrecv{c}] \parallel \config{T}'[\gvclose{d}])$
  \end{enumerate}
\end{definition}

Configuration typing ensures error-freedom.

\begin{theorem}[Error-freedom]\label{thm:combined:rt-errors}
  If $\Psi \vdashtrans{\phi}{\vers} P : A$, then $P$ is not an error process.
\end{theorem}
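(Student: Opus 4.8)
The plan is to prove the contrapositive-style statement by induction on the derivation of $\Psi \vdashtrans{\phi}{\vers} P : A$, showing that none of the four error configurations can appear as a subprocess of a well-typed $P$. The key observation is that every error configuration has the form $(\nu c d)(\config{T}[R_c] \parallel \config{T}'[R_d])$, where $R_c$ acts on endpoint $c$ and $R_d$ acts on endpoint $d$, and the two actions are \emph{incompatible}: two sends, a send and a close, two receives, or a receive and a close. First I would set up the induction so that the only nontrivial case is \textsc{TP-Nu} applied to a binder $(\nu c d)$ whose body contains the two conflicting threads in parallel; all other process-forming rules either cannot introduce a $\nu$-binder scoping both conflicting occurrences, or reduce immediately to the inductive hypothesis (e.g.\ \textsc{TP-Par} splitting the environment, \textsc{TP-EventLoop}, \textsc{TP-Thread}, and the zapper/halt/server cases which contain no communication redexes of the offending shape at all).

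The heart of the argument is the $(\nu c d)$ case. By \textsc{TP-Nu}, the body $P'$ is typed under $\Psi, c : S, d : \gvdual{S}$. Suppose for contradiction that $P'$ contains, say, $\config{T}[\gvsend{V}{c}] \parallel \config{T}'[\gvsend{W}{d}]$ as a subprocess. I would then appeal to a typing-inversion lemma: because the environments are \emph{disjointly} split across parallel composition (\textsc{TP-Par}) and channel names are linear (every session type has kind $\lin$, and linear variables occur in exactly one branch of a context split), the occurrence $c$ in the first thread forces $c : \gvout{V\text{'s type}}{S'}$ for some continuation $S'$, i.e.\ $S = \gvoutone{A'}.S'$ must be an \emph{output} type — this follows from the typing of $\calcwd{send}$ via \textsc{T-AppK} with $\Sigma(\calcwd{send}) = (A \times \gvout{A}{S}) \uto S$, threaded through the evaluation-context typing lemma for $\config{T}$. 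Symmetrically, the occurrence of $d$ in the second thread forces $\gvdual{S}$ to be an output type as well. But duality (Figure~\ref{fig:extensions:combined:typing}) gives $\gvdual{\gvout{A'}{S'}} = \gvin{A'}{\gvdual{S'}}$, which is an \emph{input} type, contradicting the requirement that $\gvdual{S}$ be an output. The same clash of polarities rules out the send/close, receive/receive, and receive/close cases, using $\Sigma(\calcwd{close}) = \gvend \uto \one$ and the fact that $\gvdual{\gvend} = \gvend$ is self-dual but neither an input nor an output type.

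A prerequisite I would state as a small lemma beforehand is an \emph{evaluation-context typing} result: if $\Gamma \vdash \config{T}[M] : A$ (or the analogous process/thread-context judgement), then there is a split $\Gamma = \Gamma_1 + \Gamma_2$ and a type $A'$ with $\Gamma_2 \vdash M : A'$, and moreover any linear variable used in $M$ does not occur in $\Gamma_1$. This is routine by induction on the context but is exactly what licenses reading off the polarity of $c$ from the redex alone, independently of the surrounding context. I expect the main obstacle to be bookkeeping around the \emph{many} context shapes: $\config{T}$ ranges over $\run{E}$, $\handlerproctrans{\ta}{F}{\vers}$, $\threadtrans{E}{\vers}$, and $\serverthread{E}$, and $\ta$ in turn nests inside the $\calcwd{updating}/\calcwd{extracting}/\calcwd{rendering}/\calcwd{extractingT}/\calcwd{transitioning}$ thread states, each carrying recorded values whose typing (Figure~\ref{fig:extensions:combined:runtime-typing}) must be shown to not interfere with the linear accounting of $c$ and $d$. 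Handling all of these uniformly — ideally by proving the evaluation-context lemma once for a generic context and instantiating — is where the care is needed; the polarity contradiction itself is then a one-line duality computation in each of the four cases.
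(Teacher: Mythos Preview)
Your proposal is correct and follows essentially the same approach as the paper: the key step is that \textsc{TP-Nu} forces $c$ and $d$ to have dual session types, and then inversion on the typing of each communication action (via the subterm/thread-typeability lemma, which the paper already has as Lemma~\ref{lem:combined:thread-typeability}) pins down the polarity of each endpoint, yielding a direct contradiction with duality. The paper presents only a proof sketch doing the send/send case explicitly, whereas you set up a fuller induction and worry more about the variety of thread-context shapes; but the evaluation-context typing lemma you identify as a prerequisite is exactly what the paper proves generically and uses to avoid that case explosion, so your anticipated obstacle is already handled.
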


Error-freedom shows that session typing ensures the absence of communication
mismatches. What remains is to show that, apart from the possibility of
deadlock, the additional features do not interfere with the progress property
enjoyed by \mvu. We begin by classifying the notion of a \emph{blocked} thread,
which is a thread blocked on a communication action.

\begin{definition}[Blocked thread]
  We say a thread $\config{T}[M]$ is \emph{blocked} if either $M = \gvsend{V}{W}$, $M = \gvrecv{V}$, or $M = \gvclose{V}$.
\end{definition}

Let us refer to $\halt$, $\handlerproctrans{T}{F}{\vers}$, and $\run{M}$ as
\emph{main threads}, and $\threadtrans{M}{\vers}$, $\serverthread{M}$, and
$\zap{c}$ as \emph{auxiliary threads}. Each well-typed configuration has
precisely one main thread.

We can now classify the notion of progress enjoyed by the extended
calculus.
Either the configuration can reduce; is waiting for an
event; has halted due to an unhandled exception; or is deadlocked.
Again, let $\cevalminus$ be the $\ceval$ relation without \textsc{E-Interact}.

\begin{theorem}[Weak Event Progress]\label{thm:combined:weak-progress}
  Suppose $\vdash \config{C}$. Either there exists some $\config{C'}$ such
  that $\config{C} \ceval \config{C}'$, or there exists some $\config{C'}$ such
  that $\config{C} \equiv \config{C}'$ and:

  \begin{enumerate}
    \item $D$ cannot be written $\config{D}[\pgtag{\tagname{t}}{V}{D}{\seq{e}}]$
      for a non-empty $\seq{e}$.
    \item If the main thread of $\config{C}'$ is $\halt$, then all auxiliary
      threads are blocked or zapper threads.
    \item If the main thread of $\config{C}'$ is $\run{M}$, then $M$ is blocked,
      and all auxiliary threads are either blocked, values, or zapper threads.
    \item If the main thread of $\config{C}'$ is
      $\handlerproctrans{T}{F}{\vers}$, then:
      \begin{enumerate}
        \item if $T = \idle{V_m}$, then each auxiliary thread is either blocked or a
          zapper thread; or
        \item if $T = \ta[L]$ then $L$ is blocked, and each auxiliary thread is
          either blocked, a value, or a zapper thread.
      \end{enumerate}
  \end{enumerate}
\end{theorem}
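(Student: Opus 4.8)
The plan is to prove this as a standard progress argument, lifted from terms to threads to configurations, with the understanding that the ``stuck'' configurations are exactly the four shapes listed (including genuine deadlock), so that I never actually have to execute a reduction sequence: I case-split on whether \emph{some} rule applies, and if none does I read off the shape. The argument rests on two auxiliary lemmas that I would establish first. The first is a \textbf{canonical forms lemma}: a closed well-typed value has the shape dictated by its type --- a pair for $A \times B$, a $\lambda$-abstraction for $A \kto{\kind} B$, a runtime name $c$ for a session type $S$, a $\notransition{V_m}{V_c}$ or $\transition{V_m}{V_v}{V_u}{V_e}{V_c}$ for $\transitionty{A}{B}$, one of $\cmdempty$, $\cmdspawn{M}$, $\append{V}{W}$ for $\cmdty{A}$, and an HTML value for $\htmlty{A}$. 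The second is a \textbf{term-progress analogue} of Lemma~\ref{lem:term-progress} for the extended term language: if $\cdot \vdash M : A$ then either $M$ is a value, or $M \teval N$ for some $N$, or $M = \ep[\raiseexn]$, or $M = E[R]$ where $R$ is a session redex $\gvsend{V}{c}$, $\gvrecv{c}$, $\gvclose{c}$, $\gvcancel{c}$, or $\gvnew{()}$. This is by induction on the typing derivation; the only new cases are constants $K\app M$ (when $M$ is a value, canonical forms on the session type give a session redex), $\tryasinotherwise{L}{x}{M}{N}$ (if $L$ is a value use \textsc{E-Try}, if $L = \ep[\raiseexn]$ use \textsc{E-RaiseH}, else recurse), and the command/transition formers, which are values as soon as their subterms are.

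Next I would lift to configurations. Put $P$ into \emph{restricted form} by structural congruence: $P \equiv (\nu c_1 d_1)\cdots(\nu c_k d_k)(Q_1 \parallel \cdots \parallel Q_m)$ with each $Q_i$ neither a parallel composition nor a name restriction, and --- using the garbage-collection equivalences --- with no $Q_i$ of the form $\serverthread{()}$ and no $\nu$-bound pair $(\nu c d)(\zap{c} \parallel \zap{d})$. By \textsc{TC-System}, \textsc{TP-Par}, and \textsc{TP-Nu}, exactly one $Q_i$ is a \emph{main thread} ($\halt$, $\run{M}$, or $\handlerproctrans{T}{F}{\vers}$) and the rest are \emph{auxiliary} ($\threadtrans{M}{\vers'}$, $\serverthread{M}$, $\zap{c}$). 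Now assume $\config{C} = \sys{P}{\vh}$ admits no $\ceval$ step. Clause 1 is immediate: if $\vh = \config{D}[\pgtag{t}{V}{D}{\seq{e}}]$ with $\seq{e}$ non-empty then \textsc{E-Evt} fires (it fires even when there is no handler for the event, simply spawning zero threads). For each auxiliary thread I apply the term-progress analogue inside its thread context ($\threadtrans{E}{\vers'}$ or $\serverthread{E}$): a $\teval$-redex contradicts irreducibility via \textsc{E-LiftT}; $\ep[\raiseexn]$ is impossible, since \textsc{E-RaiseH} fires if it lies under a \calcwd{try} and otherwise \textsc{E-RaiseUThread}/\textsc{E-RaiseUServer} (or \textsc{E-DiscardHalt}) fire; $\gvcancel{c}$ and $\gvnew{()}$ reduce via \textsc{E-Cancel}/\textsc{E-New}; so an irreducible auxiliary thread is a value, a zapper thread, or a blocked thread (on $\gvsend{}{}$/$\gvrecv{}$/$\gvclose{}$). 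A \emph{value} event-handler thread, moreover, is consumed by \textsc{E-Handle} (matching version, idle loop), \textsc{E-Discard} (mismatched version), or \textsc{E-DiscardHalt} (halted) --- so it can persist only when the main thread is a busy event loop or $\run{M}$, which is why value auxiliary threads are permitted in clauses 3 and 4(b) but not in 2 or 4(a).

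Finally I would analyze the main thread. If it is $\halt$: clause 2 follows, since every auxiliary value thread is zapped by \textsc{E-DiscardHalt} and $\serverthread{()}$ is collected. If it is $\run{M}$: $M$ cannot be a value (\textsc{E-Run} fires, using canonical forms on the $6$-tuple and on $\one \lto \one$), cannot $\teval$-reduce, and cannot be $\ep[\raiseexn]$ (\textsc{E-RaiseURun}/\textsc{E-RaiseH}); by term progress $M = E[R]$, and since $\run{E}$ is a thread context, $\gvcancel{}$/$\gvnew{()}$ reduce, leaving $M$ blocked --- clause 3. If it is $\handlerproctrans{T}{F}{\vers}$, I case on the active-thread state machine. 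When $T = \idle{V_m}$: no matching-version value thread survives (\textsc{E-Handle}) and no mismatched-version value thread survives (\textsc{E-Discard}), so each auxiliary thread is blocked or a zapper --- clause 4(a). When $T$ is one of $\updating{M}$, $\extracting{V_c}{M}$, $\extractingt{F'}{V_c}{M}$, $\rendering{V_m, V_c}{M}$, $\transitioningext{V_m}{F'}{V_c}{M}$, write $T = \ta[L]$ and apply term progress to $L$: a value $L$ triggers the corresponding state step (\textsc{E-Extract}, \textsc{E-ExtractT}, \textsc{E-Render}, \textsc{E-RenderT}) or configuration step (\textsc{E-Update} when the inner value is HTML, \textsc{E-Transition}), using canonical forms to know $L$ has the required shape (a pair, an HTML value, a $\notransition{}{}$/$\transition{}{}$ value); $L \teval N$ triggers \textsc{E-LiftT}; $L = \ep[\raiseexn]$ triggers \textsc{E-RaiseUMain} (as $\ta[\ep]$ is a pure active-thread context $\tp$) or \textsc{E-RaiseH}; $\gvcancel{}$/$\gvnew{()}$ reduce; so $L$ is blocked --- clause 4(b).

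The proof needs neither Theorem~\ref{thm:combined:preservation} nor any normalization argument; the difficulty is breadth, not depth. The genuinely delicate points are: (a) getting the canonical-forms and term-progress analogue exactly right across all the new constructs; (b) for each of the six active-thread states and three main-thread shapes, matching the residual term against the right reduction rule while carefully tracking which rules \emph{consume} value threads (\textsc{E-Handle}, \textsc{E-Discard}, \textsc{E-DiscardHalt}), since this is precisely what confines the ``value'' possibility for auxiliary threads to clauses 3 and 4(b); and (c) the treatment of unhandled exceptions --- one must observe that any $\raiseexn$ in a pure context, whether in a thread body, a server thread, the $\run$ process, or an active-thread frame, always enables some \textsc{E-RaiseU$\ast$} rule, and that a $\raiseexn$ under a \calcwd{try} always enables \textsc{E-RaiseH}. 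That observation is what lets term progress leave behind only values and communication-blocked redexes, which is the crux of the entire classification; the remaining residue --- mutually communication-blocked threads that no rule (not even \textsc{E-Comm}, \textsc{E-Close}, or the \textsc{E-$\ast$Zap} rules) can unblock --- is exactly the permitted deadlock state, and no further argument about it is required.
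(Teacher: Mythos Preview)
Your proposal is correct and follows essentially the same approach as the paper: put the process into canonical form via structural congruence (the paper's Lemma~\ref{lem:combined:canonical-forms}), invoke the term-level progress lemma (the paper's Lemma~\ref{lem:combined:term-progress}, which matches your ``term-progress analogue''), eliminate the always-firing redexes ($\gvnew{()}$, $\gvcancel{}$, $\raiseexn$), and then case-split on the main thread and active-thread state. Your write-up is in fact more explicit than the paper's appendix proof in tracking which rules consume value auxiliary threads and why this confines them to clauses~3 and~4(b), and in spelling out the $\ep[\raiseexn]$ versus $E[\tryasinotherwise{\ep[\raiseexn]}{x}{M}{N}]$ dichotomy for exceptions, but structurally the arguments coincide.
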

\section{Implementation and Example Application}\label{sec:example}
We have implemented an MVU library for the Links tierless web programming
language, which includes all extensions in the paper;
Links already has a linear type system and distributed session types, so is an
ideal fit.
\begin{figure}[t]
\centering{
  \begin{subfigure}{0.45\textwidth}
    \centering
    \includegraphics[scale=0.25]{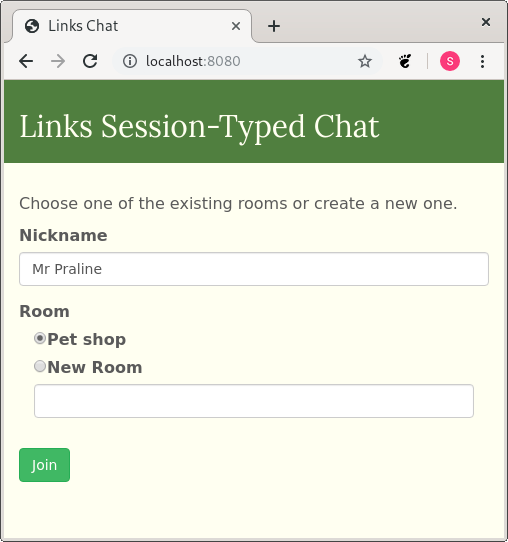}
    \caption{Login}
  \end{subfigure}
  \qquad
  \begin{subfigure}{0.45\textwidth}
    \centering
    \includegraphics[scale=0.25]{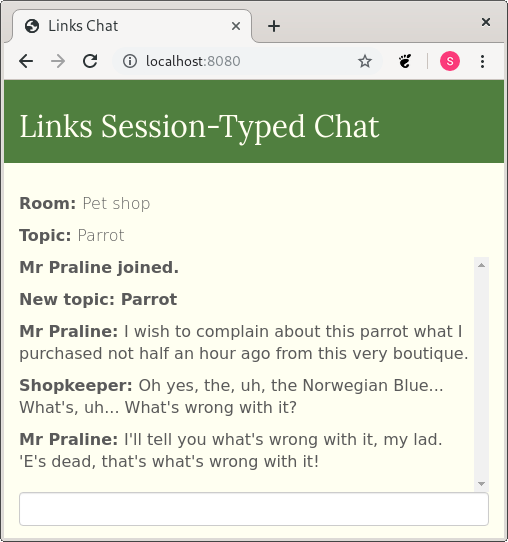}
    \caption{Chat}
\end{subfigure}
}

\totheleft{\footnotesize \textbf{Client session types}}
\vspace{-1em}
\begin{lstlisting}[language=links]
typename ClientConnect = ?([RoomName]).ClientSelect;
typename ClientSelect = !(RoomName, Nickname).
  [&| JoinedOK: ?(Topic, [Nickname], ClientReceive) . ClientSend,
      JoinedOKAsModerator: ?(Topic, [Nickname], ClientReceive, ModeratorSend). ClientSend,
      Nope: ?ConnectError.End |&];
\end{lstlisting}
\vspace{-1em}
\begin{minipage}[t]{0.5\textwidth}
\begin{lstlisting}[language=links]
typename ClientReceive = [&|
  IncomingChatMessage:
    ?(Nickname, Message). ClientReceive,
  NewUser: ?(Nickname). ClientReceive,
  NewTopic: ?(Topic). ClientReceive,
  UserLeft: ?(Nickname). ClientReceive,
  UserMuted: ?(Nickname). ClientReceive,
  UserUnmuted: ?(Nickname). ClientReceive,
  BecomeModerator: ?ModeratorSend. ClientReceive,
  Kick: End |&];
\end{lstlisting}
\end{minipage}
\hfill
\begin{minipage}[t]{0.45\textwidth}
\begin{lstlisting}[language=links]
typename ClientSend = [+|
  ChatMessage: !(Message).ClientSend,
  ChangeTopic: !(Topic).ClientSend,
  Leaving: End |+];

typename ModeratorSend = [+|
  KickUser: !(Nickname).ModeratorSend,
  MuteUser: !(Nickname).ModeratorSend,
  MakeModerator: !(Nickname).ModeratorSend
|+];
\end{lstlisting}
\end{minipage}
\caption{Chat server application}
\label{fig:example:screenshot}
\end{figure}

 We now describe a chat application, extending the application presented
by~\citet{FowlerLMD19:stwt}.
The application (Figure~\ref{fig:example:screenshot}) has two main
stages shown to the user: on the first, the user is presented with a list of
rooms, and enters a username and selects a room.
If a user with the given nickname is not already in the selected room,
then the user joins the room, receiving the current topic, a list of other
nicknames, and a channel used to receive messages from the server. The user can
then send chat messages, change the topic, and leave the room. If the user is
the first user in the room, then they join as a moderator
and receive an additional channel which can be used to kick, mute, or promote
other users to moderators.
Users can receive incoming chat messages, and system messages detailing changes
such as a new topic or a user joining the room.

We can encode these interaction patterns using session types.
Links session type notation for offering a choice is \lstinline+[&|...|&]+, and
making a choice is \lstinline-[+|...|+]-.
Type \lstinline+ClientConnect+ describes the client receiving the room list.
Type \lstinline+ClientSelect+ describes the client sending the room name and
nickname, and receiving the response from the server: either joining as a
regular user (\lstinline+JoinedOK+); joining as a moderator
(\lstinline+JoinedOKAsModerator+); or an error (\lstinline+Nope+).  Types
\lstinline+ClientSend+ and \lstinline+ClientReceive+ detail the messages that
the client can send to, and receive from the server, respectively.  Type
\lstinline+ModeratorSend+ details privileged moderator actions.

Although the original version of Links~\cite{CooperLWY06:links} ran as a CGI
script, modern Links applications run as a persistent webserver.
Upon execution, the chat application creates an \emph{access point} for
sessions of type \lstinline+ClientConnect+, which supports session
establishment, and spawns an acceptor thread to accept incoming requests on
the access point. Each chat room is represented as a process on the server.
When an HTTP request is made, the response contains the MVU application and
the access point ID which can be used to establish a session of type
\lstinline+ClientConnect+. After the initial HTTP response, further
communication between the client and server happens over a
WebSocket~\cite{FetteM11:websockets}.

The application has three states: connection, chatting, and a ``waiting'' state
shown while waiting for a response. For the purposes of the paper,
we consider the connection state.

\begin{minipage}{0.45\textwidth}
\begin{lstlisting}
typename SelectedRoom =
  [| NewRoom | SelectedRoom: String |];
typename NotConnectedModel =
  (nickname: String, rooms: [RoomName],
   selectedRoom: SelectedRoom,
   newRoomText: RoomName, error: Maybe(Error));
 \end{lstlisting}
\end{minipage}
\hfill
\begin{minipage}{0.45\textwidth}
 \begin{lstlisting}
typename NCModel =
  (ClientSelect, NotConnectedModel);
typename NCMessage = [|
 [| UpdateNickname: Nickname
  | UpdateSelectedRoom: SelectedRoom
  | UpdateNewRoom: RoomName | SubmitJoinRoom |];
  \end{lstlisting}
\end{minipage}

The \lstinline+NotConnectedModel+ is the unrestricted part of the model, and
contains the current nickname (\lstinline+nickname+), list of rooms
(\lstinline+rooms+), selected room (\lstinline+selectedRoom+), value
of the ``new room'' text box (\lstinline+newRoomText+), and an optional error
message to display (\lstinline+error+). The model, \lstinline+NCModel+, is a
pair of a session endpoint of type \lstinline+ClientSelect+ and a
\lstinline+NotConnectedModel+.
The UI messages are described by the \lstinline+NCMessage+ type: for example,
the \lstinline+UpdateNickname+ message is generated by the \lstinline+onInput+
event of the nickname input box.

Upon receiving the \lstinline+SubmitJoinRoom+ UI message when the form is
submitted, the application can send the nickname and selected room along the
\lstinline+ClientSelect+ channel, all of which are contained in the model,
without requiring ad-hoc messaging or imperative updates.
\section{Related work}

Flapjax~\cite{MeyerovichGBCGBK09:flapjax} was the first web programming language to use
\emph{functional reactive programming} (FRP)~\cite{ElliottH97:frp} in the setting of web
applications. Flapjax provides \emph{behaviours}, which are variables whose
contents change over time, and \emph{event streams}, which are an infinite
stream of discrete events which change a behaviour.
ScalaLoci~\cite{WeisenburgerKS18:scalaloci} is a multi-tier reactive programming
framework written in Scala, where changes in reactive \emph{signals} are propagated
across tiers, rather than using explicit message passing.
Ur/Web~\cite{chlipala15:urweb} and WebSharper UI~\cite{FowlerDG15:uinext}
store data in mutable variables, and allow
views of the data to be combined using monadic combinators.

\citet{FelleisenFFK09:fffk} describe an earlier approach similar to MVU written
in the DrScheme~\cite{FindlerCFFKSF02:drscheme} system.  Similar to the MVU
update function, events such as key presses and mouse movements are handled
using functions of type $(\mkwd{Model} \times \mkwd{Event}) \to \mkwd{Model}$.
The approach handles ``environment'' events rather than events dispatched by
individual elements, and the approach is not formalised. Environment events can
be handled using \emph{subscriptions} in Elm, which can be added to $\mvu$ (see
the extended version of the paper~\cite{Fowler20:mvuc-extended}).

React~\cite{react} is a popular JavaScript UI framework.
In React, a user defines data models and rendering functions, and similar to
Elm, updates are propagated to the DOM by diffing.
Differently to MVU, there is no notion of a message, and a page consists
of multiple components rather than being derived from a single model. We expect
some technical machinery from \mvu (e.g., event queues, DOM
contexts, and diffing) could be reused when formalising React. Redux~\cite{redux}
is a state container for JavaScript applications: to modify the state, one
dispatches an \emph{action}, and a function takes the previous state and
an action and produces a new state. In combination with React, the
approach strongly resembles MVU.

Hop.js~\cite{SerranoP16:hopjs} is a multi-tier web framework written in
JavaScript. Hop.js \emph{services} allow remote function invocation, and
the framework supports client-side message-passing concurrency using Web
Workers~\cite{Green12:web-workers}, but there is no cross-tier message-passing
concurrency.

Session types were introduced by~\citet{Honda93:dyadic}
and were first considered in a linear functional
language by~\citet{GayV10:last}; \citet{Wadler14:propositions-sessions} later introduced
a session-typed functional language GV and a logically-grounded session-typed
calculus CP (following~\citet{CairesP10:propositions}),
and translated GV into CP. \citet{LindleyM15:semantics} introduced an
operational semantics for GV, and showed
type- and semantics-preserving
translations between GV and CP. GV inspires FST~\cite{LindleyM17:fst}, which
is the core calculus for Links' treatment of session typing.

\citet{FowlerLMD19:stwt} extend GV with failure handling, and extend Links with
cross-tier session-typed communication. They do not formally consider GUI
development, and their approach to frontend web programming using session types
(described in Section~\ref{sec:introduction}) leads to a disconnect between the
state of the page and the application logic.  We build upon their approach to
session-typed web programming, while also allowing idiomatic GUI development.

\citet{KingNY19:mpst-web} present a toolchain for writing web applications which
respect multiparty session types~\cite{HondaYC16:multiparty}.  Protocols are
compiled to PureScript~\cite{purescript} using a parameterised
monad~\cite{Atkey09:parameterised} to guarantee linearity, and the authors
integrate their encoding of session types with the Concur UI
framework~\cite{concur}. Each application may only have a single session
connecting the client and server, whereas in our system there may be multiple;
our approach supports first-class linearity and cross-tier typechecking;
our approach is formalised; and our approach supports failure handling.  Links
does not yet support multiparty session types.
\section{Conclusion}
Session types allow conformance to protocols to be checked statically.
The last few years have seen a flurry of
activity in implementing session types in a multitude of programming languages,
but linearity---a vital prerequisite for implementing session types
safely---is difficult to reconcile with the asynchronous nature of graphical
user interfaces. Consequently, the vast majority of implementations using
session types are command line applications, and the few implementations which
do integrate session types and GUIs do so in an ad-hoc manner.

In this paper, we have addressed this problem by extending the Model-View-Update
architecture, pioneered by the Elm programming language.
We have presented the first formal study of MVU by introducing a core calculus,
\mvu.
Leveraging our formal characterisation of MVU, we have introduced three
extensions: commands, linearity, and model transitions, enabling us to
present the first formal integration of session-typed communication
with a GUI framework. Informed by our formal model, we have implemented our
approach in Links. %
As future work, we will investigate how to encode allowed transitions as a
behavioural type.

\bibliographystyle{plainnat}
\bibliography{bibliography}

\clearpage
\begin{adjustwidth}{-2cm}{-2cm}

\setlength\linenumbersep{2.75cm}

\appendix
{
\topskip0pt
\vspace*{\fill}
\begin{center}
\Huge Appendices
\end{center}
\vspace*{\fill}
}
\clearpage

\floatstyle{plain}
\restylefloat{figure}
\section{Syntactic Sugar}\label{appendix:syntactic-sugar}

\begin{figure}[t]
  {\footnotesize
  ~Syntactic Sugar
  \begin{syntax}
    \text{Terms} & L, M, N & ::= & \cdots \midspace \htmlterm{\seq{H}} \midspace
    \attrterm{\seq{a}} \\
    \text{HTML} & H & ::= & \htmltag{t}{\seq{a}}{\seq{H}}
    \midspace s \midspace \antiquote{M} \\
  \end{syntax}
  \[
  \begin{array}{lrclp{5em}lrcl}
    \text{Attributes} & a & ::= & ak = b \midspace \antiquote{M} & &
    \text{Attribute Bodies} & b & ::= & s \midspace \antiquote{M}
  \end{array}
  \]

  ~Desugaring of terms \hfill \framebox{$\desugarterm{M}$}
  \begin{mathpar}
    \desugarterm{\htmlterm{\seq{H}}} = \desugarhtml{\seq{H}}

    \desugarterm{\attrterm{\seq{a}}} = \desugarattr{\seq{a}}
  \end{mathpar}

  \begin{minipage}[t]{0.45\textwidth}
    ~Desugaring of HTML \hfill \framebox{$\vphantom{\seq{H}}\desugarhtml{H}$} \framebox{$\desugarhtml{\seq{H}}$}
  \[
    \begin{array}{rcl}
      \desugarhtml{\antiquote{M}} & = & \desugarterm{M} \\
      \desugarhtml{\htmltag{t}{\seq{a}}{\seq{H}}} & = &
      \coretag{t}{\desugarattr{\seq{\vphantom{H}a}}}{\desugarhtml{\seq{H}}} \\
      \desugarhtml{s} & = & \htmltext{s} \vspace{0.5em} \\
      \desugarhtml{\epsilon} & = & \htmlempty \\
      \desugarhtml{H_1 \cdot \ldots \cdot H_n} & = & \desugarhtml{H_1} \star \ldots \star \desugarhtml{H_n} \\
    \end{array}
  \]
\end{minipage}
\hfill
\begin{minipage}[t]{0.45\textwidth}
  ~Desugaring of attributes \hfill \framebox{$\desugarattr{a}$}
  \framebox{$\desugarattr{\seq{a}}$}
  \[
    \begin{array}{rcl}
      \desugarattr{\var{at} = s} & = & \attr{\var{at}}{s} \\
      \desugarattr{h = \antiquote{M}} & = & \attr{h}{\desugarterm{M}} \\ \vspace{0.5em} \\
      \desugarattr{\epsilon} & = & \attrempty \\
      \desugarattr{a_1 \cdot \ldots \cdot a_n} & = & \desugarhtml{a_1} \star \ldots \star \desugarhtml{a_n} \\
    \end{array}
  \]%
\end{minipage}

  \caption{Syntactic sugar and desugaring}
  \label{fig:syntactic-sugar}
}
\end{figure}

\begin{figure}
~Additional term typing rules \hfill \framebox{$\Gamma \vdash M : A$}
\begin{mathpar}
    \inferrule
    [T-Html]
    { (\Gamma \vdash H_i \produces{A})_i }
    { \Gamma \vdash \htmlterm{\seq{H}} \oftype \htmlty{A} }

    \inferrule
    [T-Attr]
    { (\Gamma \vdash a_i \produces{A})_i }
    { \Gamma \vdash \attrterm{\seq{a}} \oftype \attrty{A} }
\end{mathpar}

~Typing rules for attributes \hfill \framebox{$\Gamma \vdash a \produces{A}$}
\begin{mathpar}
  \inferrule
  [TA-Evt]
  { \Gamma \vdash b \oftype \evtty{\eh} \to A }
  { \Gamma \vdash \eh = b \produces{A} }

  \inferrule
  [TA-Attr]
  { \Gamma \vdash b : \stringty }
  { \Gamma \vdash \at = b \produces{A} }
\end{mathpar}

~Typing rules for HTML \hfill \framebox{$\Gamma \vdash H \produces{A}$}

\small
\begin{mathpar}
  \inferrule
  [TH-Tag]
  { (\Gamma \vdash a_i \produces{A} )_i \\\\
    (\Gamma \vdash H_i \produces{A})_i  }
    { \Gamma \vdash \htmltag{\tagname{t}}{\seq{a}}{\seq{H}} \produces{A} }

  \inferrule
  [TH-Text]
  { }
  { \Gamma \vdash s \produces{A} }

  \inferrule
  [TH-Antiquote]
  { \Gamma \vdash M \oftype \htmlty{A} }
  { \Gamma \vdash \antiquote{M} \produces{A} }
\end{mathpar}

\caption{Typing rules for sugared terms, HTML, and attributes}
\label{fig:sugar-typing}
\end{figure}

We introduce two new terms: $\htmlterm{\seq{H}}$, representing
sugared HTML elements $\seq{H}$, and $\attrterm{\seq{a}}$, representing sugared HTML
attributes $\seq{a}$.
An \emph{antiquoted expression} $\antiquote{M}$ allows a term $M$ to
be embedded within an HTML tree.
Sugared HTML $H$ consists of HTML tags and
antiquoted expressions. HTML tags $\htmltag{t}{\seq{a}}{\seq{H}}$ describe an HTML tag
with name \tagname{t}, sugared attributes $\seq{a}$, and child elements
$\seq{H}$.
Sugared attributes $a$ are either a pair $\ak = b$ of an attribute key $\ak$ and
an attribute body $b$ or an antiquoted expression.
An attribute body $b$ is either a string or an
antiquoted expression.

\subparagraph{Desugaring.}
Desugaring translates HTML tags and attributes to the relevant underlying core
construct: for instance, HTML tags $\htmltag{t}{\seq{a}}{\seq{H}}$ are
desugared into $\coretag{t}{\desugarattr{\seq{a}}}{\desugarhtml{\seq{H}}}$,
where $\desugarattr{\seq{a}}$ and $\desugarhtml{\seq{H}}$ are the monoidal
compositions of the translations of attributes and HTML elements respectively.

Figure~\ref{fig:syntactic-sugar} shows the desugaring transformation from
sugared HTML to the core \mvu monoidal constructs. Antiquoted expressions
$\antiquote{M}$ are
translated to the translation of $M$. HTML tags are translated to the
$\calcwd{htmlTag}$ construct, and attributes $ak = b$ are translated to the
$\calcwd{attr}$ construct. String literals are translated to themselves. Empty
sequences of HTML elements and attributes are translated to $\htmlempty$ and
$\attrempty$ respectively, and sequences are translated to use the monoidal
concatenation operator $\star$.

Figure~\ref{fig:sugar-typing} shows the typing rules for the syntactic sugar.
We extend the standard term typing judgement $\Gamma \vdash M : A$, and
introduce judgements $\Gamma \vdash H
\produces{A}$ and $\Gamma \vdash a \produces{A}$ which state that HTML elements $H$
(resp.\ attributes $a$) \emph{produce messages of type} $A$.

Rule \textsc{T-Html} assigns term $\htmlterm{\seq{H}}$ the type $\htmlty{A}$ if HTML
elements $\seq{H}$ all produce messages of type $A$. Similarly, \textsc{T-Attr} assigns
term $\attrterm{\seq{a}}$ the type $\attrty{A}$ if the attributes $\seq{a}$
produce messages of type $A$.

Event handler attributes map an event handler name $h$ to a function of
type $\evtty{h} \to A$, recalling that $\evtty{-}$ is a meta-level
function mapping event handler names to payload types.
Key-value attributes do not produce messages, so can be typed as
producing any arbitrary type of message $A$.  An event handler attribute $a$
produces messages of type $A$ if the result type of its event handling function is
$A$.
An HTML element $H$ produces messages of type $A$ if all attributes in the tree
produce messages of type $A$.

It is straightforward to see that the desugaring translation preserves typing.

\begin{lemma}[Correctness of desugaring]
\begin{enumerate}
\item If $\Gamma \vdash M : A$, then $\Gamma \vdash \desugarterm{M} : A$
\item If $\Gamma \vdash H \produces A$, then $\Gamma \vdash \desugarhtml{H} : \htmlty{A}$
\item If $\Gamma \vdash a \produces A$, then $\Gamma \vdash \desugarattr{a} : \attrty{A}$
\end{enumerate}
\end{lemma}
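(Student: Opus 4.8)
The plan is to prove all three statements simultaneously by a straightforward mutual induction on the structure of the derivation of the sugared typing judgement --- i.e.\ structural induction on $M$, $H$, and $a$ respectively, using the fact that the three judgements $\Gamma \vdash M : A$ (with the two new rules \textsc{T-Html} and \textsc{T-Attr}), $\Gamma \vdash H \produces A$, and $\Gamma \vdash a \produces A$ are defined by mutual recursion. For each rule of Figure~\ref{fig:sugar-typing} I would unfold the corresponding clause of the desugaring function from Figure~\ref{fig:syntactic-sugar}, apply the induction hypotheses to the immediate subderivations, and reassemble a derivation in the core term typing system of Figure~\ref{fig:mvu-syntax}.

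First I would dispatch part~(1). The only non-trivial case is \textsc{T-Html}: here $M = \htmlterm{\seq H}$, $\desugarterm{\htmlterm{\seq H}} = \desugarhtml{\seq H}$, and the hypothesis gives $\Gamma \vdash H_i \produces A$ for each $i$; by part~(2) applied to each $H_i$ we get $\Gamma \vdash \desugarhtml{H_i} : \htmlty A$, and since $\desugarhtml{\seq H}$ is the $\star$-composition $\desugarhtml{H_1} \star \cdots \star \desugarhtml{H_n}$ (with the empty sequence mapping to $\htmlempty$), repeated use of \textsc{T-HtmlAppend} and \textsc{T-HtmlEmpty} yields $\Gamma \vdash \desugarhtml{\seq H} : \htmlty A$. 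The \textsc{T-Attr} case of part~(1) is symmetric, using part~(3), \textsc{T-AttrAppend}, and \textsc{T-AttrEmpty}. All the remaining term-typing cases are the already-standard $\lambda$-calculus and core-HTML constructs, where $\desugarterm{-}$ acts homomorphically and the result is immediate from the induction hypotheses.

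For part~(2), the \textsc{TH-Antiquote} case unfolds $\desugarhtml{\antiquote M} = \desugarterm M$ and appeals directly to part~(1); the \textsc{TH-Text} case gives $\desugarhtml{s} = \htmltext s$, typed by \textsc{T-HtmlText} since $s : \stringty$; and the \textsc{TH-Tag} case unfolds $\desugarhtml{\htmltag{t}{\seq a}{\seq H}} = \coretag{t}{\desugarattr{\seq a}}{\desugarhtml{\seq H}}$, applies part~(3) to the attributes and part~(2) (i.e.\ the induction hypothesis) to the children, folds each with $\star$ as above, and concludes with \textsc{T-HtmlTag}. Part~(3) is analogous: \textsc{TA-Attr} gives $\desugarattr{\at = b} = \attr{\at}{b}$ typed by \textsc{T-Attr}, and \textsc{TA-Evt} gives $\desugarattr{h = \antiquote M} = \attr{h}{\desugarterm M}$, where part~(1) on $M$ yields $\Gamma \vdash \desugarterm M : \evtty{h} \to A$ and then \textsc{T-EvtAttr} applies.

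I do not expect a genuine obstacle here; the proof is routine. The one point requiring mild care is the treatment of the monoidal lists: one must confirm that $\desugarhtml{-}$ and $\desugarattr{-}$ on a sequence of length $n$ really does reduce to the iterated $\star$ of the pointwise desugarings (with the base case $\htmlempty$/$\attrempty$), so that \textsc{T-HtmlAppend}/\textsc{T-AttrAppend} and \textsc{T-HtmlEmpty}/\textsc{T-AttrEmpty} suffice --- which is exactly what the definitions in Figure~\ref{fig:syntactic-sugar} state. A small secondary subtlety is that in \textsc{TH-Text}/\textsc{TA-Attr} the message type $A$ is arbitrary, which is harmless since \textsc{T-HtmlText} and \textsc{T-Attr} are themselves polymorphic in $A$.
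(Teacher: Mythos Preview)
Your proposal is correct and follows essentially the same approach as the paper's proof: mutual induction on the three typing derivations, with the same case analysis and the same use of \textsc{T-HtmlAppend}/\textsc{T-AttrAppend} and the empty cases to type the $\star$-folded translations of sequences. The only minor point the paper makes more explicit is the case split on the attribute body $b$ in \textsc{TA-Attr} and \textsc{TA-Evt} (noting by inversion that in \textsc{TA-Evt} the body must be an antiquote $\antiquote{M}$ since it has function type, and in \textsc{TA-Attr} treating the string-literal and antiquote sub-cases separately), which your write-up elides but which is routine.
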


\begin{proof}
  By mutual induction on the derivations of $\Gamma \vdash M : A$ and $\Gamma
  \vdash H \produces A$ and $\Gamma \vdash a \produces A$.

  For $\Gamma \vdash M : A$, the only interesting cases are \textsc{T-Html} and
  \textsc{T-Attr}: these follow by (2) and (3) respectively, followed by
  translation on sequences; the remaining cases are either true by definition or
  by appeal to the induction hypothesis on subterms.

  For $\Gamma \vdash H \produces{A}$:
  \begin{proofcase}{TH-Tag}
    \begin{mathpar}
    \inferrule*
    { (\Gamma \vdash a_i \produces{A} )_i \\
      (\Gamma \vdash H_i \produces{A})_i  }
    { \Gamma \vdash \htmltag{\tagname{t}}{\seq{a}}{\seq{H}} \produces{A} }
  \end{mathpar}

  By induction hypothesis (3), we have that $(\Gamma \vdash \translate{a_i} : \attrty{A})_i$.
  By the translation on attribute sequences:
  \begin{itemize}
    \item if $\seq{a} = \cdot$ then $\translate{\cdot} = \attrempty$ and $\Gamma \vdash \attrempty : \attrty{A}$
    \item if $\seq{a} = a_1 \star \ldots \star a_n$, then $\translate{\seq{a}} =
      \translate{a_1} \star \ldots \star \translate{a_n}$ and $\Gamma \vdash \translate{\seq{a}} : \attrty{A}$.
  \end{itemize}

  By exactly the same reasoning using induction hypothesis (2), we have that $\Gamma \vdash \translate{\seq{H}} : \htmlty{A}$.

  Thus, we can show:
  \begin{mathpar}
    \inferrule*
    { \Gamma \vdash \translate{\seq{a}} : \attrty{A} \\
      \Gamma \vdash \translate{\seq{H}} : \htmlty{A}
    }
    { \Gamma \vdash \coretag{t}{\translate{\seq{a}}{\translate{\seq{H}}} : \htmlty{A} } }
  \end{mathpar}
  as required.
  \end{proofcase}

  \begin{proofcase}{TH-Text}
    By \textsc{T-String}, string literals $s$ have type $\stringty$. By \textsc{T-HtmlText},
    we can show $\Gamma \vdash \htmltext{s} : \htmlty{A}$ for any $A$ as required.
  \end{proofcase}

  \begin{proofcase}{TH-Antiquote}
    Assumption:
    \begin{mathpar}
      \inferrule*
      { \Gamma \vdash M : \htmlty{A} }
      { \Gamma \vdash \antiquote{M} \produces A }
    \end{mathpar}

    By induction hypothesis (1), we have that $\Gamma \vdash \translate{M} : \htmlty{A}$,
    as required.
  \end{proofcase}

  For $\Gamma \vdash a \produces{A}$:
  \begin{proofcase}{TA-Evt}
    Assumption:
    \begin{mathpar}
      \inferrule*
      { \Gamma \vdash b : \evtty{h} \to A }
      { \Gamma \vdash h = b \produces{A} }
    \end{mathpar}

    By inversion on the typing relation, it must be the case that $b =
    \antiquote{M}$ for some term $M$ where $M : \evtty{h} \to A$.
    By induction hypothesis (1), we have that $\Gamma \vdash M : \evtty{h} \to
    A$.

    Thus, we can show:

    \begin{mathpar}
      \inferrule
      { \Gamma \vdash \translate{M} : \evtty{h} \to A }
      { \Gamma \vdash \attr{h}{\translate{M}} : \attrty{A} }
    \end{mathpar}

    as required.
  \end{proofcase}
  \begin{proofcase}{TA-Attr}
    Assumption:
    \begin{mathpar}
      \inferrule
      { \Gamma \vdash b : \stringty }
      { \Gamma \vdash \at = b \produces{A}  }
    \end{mathpar}

    By case analysis on $b$ and inversion on the typing relation, either $b = s$
    for some string literal $s$, or $b = \antiquote{M}$ for some term $M$ such
    that $\Gamma \vdash M : \stringty$.

    If $b = s$ for some string literal $s$, then we can show:
    \begin{mathpar}
      \inferrule*
      {
        \inferrule*
        { }
        { \Gamma \vdash s : \stringty }
      }
      { \Gamma \vdash \attr{\at}{s} : \attrty{A}}
    \end{mathpar}
  \end{proofcase}

  If $b = \antiquote{M}$ for some term $M$ such that $\Gamma \vdash M :
  \stringty$, then by induction hypothesis (1) $\Gamma \vdash \translate{M} :
  \stringty$, and thus we can show:
  \begin{mathpar}
    \inferrule*
    { \Gamma \vdash \translate{M} : A }
    { \Gamma \vdash \attr{\at}{\translate{M}} : \attrty{A} }
  \end{mathpar}
  as required.

\end{proof}
 \clearpage
\section{Omitted details}\label{sec:appendix:omitted}

\subsection{Simply-typed $\lambda$-calculus}
\begin{figure}
~\textbf{Typing rules} \hfill \framebox{$\Gamma \vdash M : A$}
\begin{mathpar}
    \inferrule
    [T-Var]
    { x \oftype A \in \Gamma }
    { \Gamma \vdash x \oftype A }

    \inferrule
    [T-Abs]
    { \Gamma, x \oftype A \vdash M \oftype B }
    { \Gamma \vdash \lambda x . M \oftype A \to B }

    \inferrule
    [T-Rec]
    { \Gamma, f : A \to B, x : A \vdash M : B }
    { \Gamma \vdash \rec{f}{x}{M} : A \to B }

    \inferrule
    [T-App]
    { \Gamma \vdash M \oftype A \to B \\ \Gamma \vdash N \oftype A }
    { \Gamma \vdash M \app N \oftype B }
    \\
    \inferrule
    [T-Unit]
    { }
    { \Gamma \vdash () \oftype \one}

    \inferrule
    [T-Pair]
    { \Gamma \vdash M \oftype A \\ \Gamma \vdash N \oftype B }
    { \Gamma \vdash (M, N) \oftype A \times B }

    \inferrule
    [T-LetPair]
    { \Gamma \vdash M \oftype A \times B \\ \Gamma, x \oftype A, y \oftype
    B \vdash N \oftype C }
    { \Gamma \vdash \letin{(x, y)}{M}{N} \oftype C }
    \\
    \inferrule
    [T-Inl]
    { \Gamma \vdash M \oftype A }
    { \Gamma \vdash \inl{M} \oftype A + B }

    \inferrule
    [T-Inr]
    { \Gamma \vdash M \oftype B }
    { \Gamma \vdash \inr{M} \oftype A + B }

    \inferrule
    [T-Case]
    { \Gamma \vdash L \oftype A + B \\\\
        \Gamma, x \oftype A \vdash M \oftype C \\
        \Gamma, y \oftype B \vdash N \oftype C }
    { \Gamma \vdash
        \caseof{L}{ \inl{x} \mapsto M; \inr{y} \mapsto N} \oftype C } \\

    \inferrule
    [T-Int]
    { n \in \mathbb{N} }
    { \Gamma \vdash n : \intty }

    \inferrule
    [T-String]
    { }
    { \Gamma \vdash s : \stringty }
\end{mathpar}
~\textbf{Runtime syntax}
\[
  \begin{array}{lccl}
\text{Evaluation contexts} & E & ::= & [~] \midspace E \app M \midspace V \app E \\
                               &   & \midspace & (E, M) \midspace (V, E) \midspace \letin{(x, y)}{E}{M} \\
                               &   & \midspace & \inl{E} \midspace \inr{E} \midspace \caseof{E}{\inl{x} \mapsto M; \inr{y} \mapsto N} \\
                               &   & \midspace & \coretag{t}{E}{M} \midspace \coretag{t}{V}{E} \midspace
                               \htmltext{E} \\
                               &   & \midspace & \htmltext{E} \midspace
                               \attr{\ak}{E} \midspace \append{E}{M} \midspace
                               \append{V}{E}
\end{array}
\]

~\textbf{Reduction on terms} \hfill \framebox{$M \teval N$}
  \[
    \begin{array}{lrcl}
        \textsc{E-Lam}   & (\lambda x . M) \app V & \:\: \teval \:\: & M \{ V / x \} \\
        \textsc{E-Rec}   & (\rec{f}{x}{M}) \app V & \:\: \teval \:\: & M \{ \rec{f}{x}{M} / f, V / x \} \\
        \textsc{E-Pair}  &  \letin{(x, y)}{(V, W)}{M} & \teval & M \{ V / x, W / y \} \\
        \textsc{E-Inl}   & \quad \caseof{\inl{V}}{\inl{x} \mapsto M; \inr{y} \mapsto N} & \teval & M \{ V / x \} \\
        \textsc{E-Inr}   & \quad \caseof{\inr{V}}{\inl{x} \mapsto M; \inr{y} \mapsto N} & \teval & N \{ V / y \} \\
        \textsc{E-LiftM}  & E[M] & \teval & E[N], \quad \text{if } M \teval N
    \end{array}
  \]%
  \caption{Typing and $\beta$-reduction rules for standard $\lambda$-calculus
  terms}
  \label{fig:appendix:stlc}
\end{figure}
Figure~\ref{fig:appendix:stlc} shows the typing and $\beta$-reduction rules for
the simply-typed $\lambda$-calculus.
Evaluation contexts $E$ are set up for standard call-by-value, left-to-right
evaluation.

\subsection{Equivalence}
\begin{mathpar}
P_1 \parallel P_2  \equiv P_2 \parallel P_1

P_1 \parallel (P_2 \parallel P_3)  \equiv (P_1 \parallel P_2) \parallel P_3

\sys{P}{D} \equiv \sys{P'}{D} \quad \text{if } P \equiv P'
\end{mathpar}

\subsection{Typing rules for pages}

\begin{figure}
~\textbf{Typing rules for pages} \hfill \framebox{$\Gamma \vdash D : \pagety{A}$}
\begin{mathpar}
  \inferrule
  [TD-Tag]
  { (\vdash e_i)_i \\ \Gamma \vdash V : \attrty{A} \\ \Gamma \vdash D : \pagety{A} }
  { \Gamma \vdash \pgtag{t}{V}{D}{\seq{e}} : \pagety{A} }

  \inferrule
  [TD-Empty]
  { }
  { \Gamma \vdash \htmlempty : \pagety{A} }

  \inferrule
  [TD-Text]
  { \Gamma \vdash V : \stringty }
  { \Gamma \vdash \htmltext{V} : \pagety{A} }

  \inferrule
  [TD-Append]
  { \Gamma \vdash D_1 : \pagety{A} \\ \Gamma \vdash D_2 : \pagety{A}    }
  { \Gamma \vdash \append{D_1}{D_2} : \pagety{A} }
\end{mathpar}
\caption{Typing rules for pages}
\label{fig:omitted:page-typing}
\end{figure}

Figure~\ref{fig:omitted:page-typing} shows the typing rules for pages, which
mostly follow the typing rules for HTML.
 \clearpage
\section{Proofs for Section~\ref{sec:formalism}}\label{appendix:proofs}

\subsection{Preservation}

\begin{lemma}[Erasure]\label{lem:erasure}
  If $\vdash D : \pagety{A}$, then $\cdot \vdash \erase{D}: \htmlty{A}$.
\end{lemma}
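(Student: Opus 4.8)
The statement relates the typing of runtime DOM pages (judgement $\vdash D : \pagety{A}$, whose rules appear in Figure~\ref{fig:omitted:page-typing}) to the typing of ordinary HTML terms (judgement $\vdash M : \htmlty{A}$), via the erasure operation $\erase{-}$ that strips event queues from a page. The plan is to proceed by induction on the derivation of $\vdash D : \pagety{A}$, equivalently on the structure of $D$, since the page typing rules are syntax-directed.

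\textbf{Key steps.} First I would set up the induction, handling one case per page-typing rule. For \textsc{TD-Empty}, $D = \htmlempty$ and $\erase{\htmlempty} = \htmlempty$, so the result follows immediately from \textsc{T-HtmlEmpty}. For \textsc{TD-Text}, $D = \htmltext{V}$ with $\vdash V : \stringty$; erasure leaves it unchanged (the text case of $\erase{-}$ is defined recursively with no queue to strip), so \textsc{T-HtmlText} gives $\vdash \htmltext{V} : \htmlty{A}$. For \textsc{TD-Append}, $D = \append{D_1}{D_2}$ with $\vdash D_i : \pagety{A}$; by the induction hypothesis $\vdash \erase{D_i} : \htmlty{A}$, and since erasure commutes with $\star$ we conclude by \textsc{T-HtmlAppend}. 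The central case is \textsc{TD-Tag}, where $D = \pgtag{t}{V}{D'}{\seq{e}}$ with $\vdash V : \attrty{A}$, $\vdash D' : \pagety{A}$, and the events $e_i$ all well-typed. Here $\erase{\pgtag{t}{V}{D'}{\seq{e}}} = \coretag{t}{V}{\erase{D'}}$ by definition of $\erase{-}$; the event queue and its typing premises are simply discarded. The induction hypothesis gives $\vdash \erase{D'} : \htmlty{A}$, and $\vdash V : \attrty{A}$ carries over unchanged (attribute typing is the same in both judgements), so \textsc{T-HtmlTag} yields $\vdash \coretag{t}{V}{\erase{D'}} : \htmlty{A}$, as required.

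\textbf{Main obstacle.} There is no real difficulty: the lemma is essentially a bookkeeping observation that erasure deletes exactly the extra structure (event queues) that distinguishes pages from HTML, and the two typing systems agree on everything else. The only point needing care is to confirm that the definition of $\erase{-}$ given in the ``Diffing'' paragraph — stated explicitly only for the tag case, ``with the other cases defined recursively'' — does indeed commute with $\star$ and fix $\htmlempty$ and $\htmltext{V}$; once that is spelled out, each case is a one-line application of the corresponding HTML typing rule to the induction hypothesis. I would also note in passing that attribute values $V : \attrty{A}$ require no separate erasure lemma, since attributes inside a page contain no event queues.
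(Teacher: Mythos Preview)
Your proposal is correct and follows exactly the approach the paper takes: induction on the derivation of $\vdash D : \pagety{A}$. The paper's proof is the one-line ``By induction on the derivation of $\vdash D : \pagety{A}$'', and you have simply spelled out the (routine) cases it omits.
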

\begin{proof}
  By induction on the derivation of $\vdash D : \pagety{A}$.
\end{proof}

\begin{lemma}[Diffing]\label{lem:diffing}
  If:
  \begin{itemize}
    \item $\cdot \vdash U : \htmlty{A}$
    \item $\vdash D : \pagety{A}$
    \item $\diff{U}{D} = D'$
  \end{itemize}

  then $\vdash D' : \pagety{A}$
\end{lemma}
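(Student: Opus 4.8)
The plan is to isolate the two properties of $\diff{U}{D}$ that the typing argument actually uses, and then run a short structural induction. The two properties are: (i) $\erase{D'} = U$, which holds by definition; and (ii) every event occurring in an event queue of $D'$ satisfies $\vdash e$. Property (ii) is where the shape of the diffing operation matters: $D'$ is obtained from $D$ by a sequence of primitive edits, each of which either inserts a node carrying the \emph{empty} queue, deletes a node, or updates a node's attributes while leaving its queue untouched. Consequently every queue appearing in $D'$ is either empty, hence vacuously well-typed, or appears unchanged in $D$; and a routine induction on the derivation of $\vdash D : \pagety{A}$, inverting \textsc{TD-Tag} at each tag node, shows that every event in every queue of $D$ is well-typed. (Note that $\vdash e$ via \textsc{TE-Evt} depends only on the event signature, not on $A$, so this is unaffected by attribute updates.) The ``minimum number of edits'' clause plays no role here.

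With (i) and (ii) in hand I would prove the core claim by induction on the structure of $D'$: \emph{if $\cdot \vdash U : \htmlty{A}$, $\erase{D'} = U$, and every event in $D'$ is well-typed, then $\vdash D' : \pagety{A}$.} The point that makes the induction go through cleanly is that $\erase{-}$ leaves attributes and text literally unchanged and is defined homomorphically on tags, text, empty, and append, so inverting the typing derivation for $U$ yields typing information for the corresponding component of $D'$ directly. Concretely: if $D' = \pgempty$ then $U = \htmlempty$ and \textsc{TD-Empty} applies; if $D' = \pgtext{V}$ then $U = \htmltext{V}$, inversion of \textsc{T-HtmlText} gives $\cdot \vdash V : \stringty$, and \textsc{TD-Text} applies; if $D' = \pgtag{t}{V}{D''}{\seq{e}}$ then $U = \coretag{t}{V}{(\erase{D''})}$, inversion of \textsc{T-HtmlTag} gives $\cdot \vdash V : \attrty{A}$ and $\cdot \vdash \erase{D''} : \htmlty{A}$, the events $\seq{e}$ are well-typed by (ii), the induction hypothesis gives $\vdash D'' : \pagety{A}$, and \textsc{TD-Tag} concludes; if $D' = \append{D_1'}{D_2'}$ then $U = \append{\erase{D_1'}}{\erase{D_2'}}$, inversion of \textsc{T-HtmlAppend} plus the induction hypothesis give $\vdash D_i' : \pagety{A}$, and \textsc{TD-Append} concludes.

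Applying the core claim to $D' = \diff{U}{D}$, using $\erase{D'} = U$ and property (ii), yields $\vdash D' : \pagety{A}$ as required. I expect the only mildly delicate step to be (ii) — arguing that diffing never fabricates an ill-typed event — but this is immediate once the diffing operation is read as a composition of the three primitive edits, each of which only deletes queue entries, preserves existing ones, or introduces the empty queue; everything else is bookkeeping driven by the homomorphic behaviour of $\erase{-}$.
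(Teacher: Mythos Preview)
Your proposal is correct and follows essentially the same approach as the paper: both arguments rest on $\erase{D'} = U$ together with the observation that diffing only introduces empty event queues or preserves existing well-typed ones. Your version is in fact more carefully spelled out than the paper's rather informal sketch, which merely notes that added nodes inherit $\attrty{A}$ attributes from $U$ and that event queues have no bearing on the type parameter; your structural induction on $D'$ with explicit inversions makes that reasoning precise.
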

\begin{proof}
  By the definition of $\diff{U}{D} = D'$, $D'$ is derived from $D$ by adding or removing a
  node, or modifying a node's attributes, and $\erase{D'} = U$.
  By Lemma~\ref{lem:erasure}, since $\vdash D' : \pagety{A}$ and $\erase{D'} = U$, it follows that $\vdash \erase{D'} : \htmlty{A}$.

  Since $\vdash \erase{D'} : \htmlty{A}$, any nodes which are added must
  have attributes of type $\attrty{A}$, and since event queues have no bearing
  on type parameters, it follows that $\vdash D' : \pagety{A}$.
\end{proof}

\begin{lemma}[Subterm typeability (Page contexts)]
  \label{lem:dom-ctx-typeability}
  If $\deriv{D}$ is a derivation of $\vdash \pgctx{D} : \pagety{A}$,
  there exists some subderivation $\deriv{D}'$ concluding $\vdash D :
  \pagety{A}$ and where the position of $\deriv{D}'$ in $\deriv{D}$ corresponds to
  that of the hole in $\config{D}$.
\end{lemma}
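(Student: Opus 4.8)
The plan is to proceed by induction on the structure of the page context $\config{D}$, following the grammar of DOM contexts given in Figure~\ref{fig:rt-syntax}, namely $\config{D} ::= [~] \midspace \pgtag{t}{V}{\config{D}}{\seq{e}} \midspace \append{\config{D}}{D} \midspace \append{D}{\config{D}}$. Simultaneously I would do induction on the typing derivation $\deriv{D}$ of $\vdash \pgctx{D} : \pagety{A}$, inverting the last rule used at each step. The key observation is that the page typing rules (Figure~\ref{fig:omitted:page-typing}) are syntax-directed, so the shape of $\config{D}$ determines which rule concludes $\deriv{D}$, and in each case the premise covering the hole position is exactly the subderivation we need.

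The base case is $\config{D} = [~]$, so $\pgctx{D} = D$ and $\deriv{D}$ itself concludes $\vdash D : \pagety{A}$; we take $\deriv{D}' = \deriv{D}$, whose position trivially corresponds to the hole. For the inductive step $\config{D} = \pgtag{t}{V}{\config{D}'}{\seq{e}}$: inverting \textsc{TD-Tag} on $\deriv{D}$ gives premises $(\vdash e_i)_i$, $\vdash V : \attrty{A}$, and a subderivation concluding $\vdash \config{D}'[D] : \pagety{A}$; apply the induction hypothesis to this subderivation to extract $\deriv{D}'$ concluding $\vdash D : \pagety{A}$, and note that its position inside $\deriv{D}$ matches the hole of $\config{D}$ since the hole of $\config{D}'$ sits inside the third premise of \textsc{TD-Tag}. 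The two append cases $\config{D} = \append{\config{D}'}{D_2}$ and $\config{D} = \append{D_1}{\config{D}'}$ are handled identically by inverting \textsc{TD-Append} and appealing to the induction hypothesis on whichever premise contains the hole.

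This lemma is entirely routine: the only mild subtlety is being careful about the phrase ``the position of $\deriv{D}'$ in $\deriv{D}$ corresponds to that of the hole in $\config{D}$,'' which we track by observing that each context former corresponds to a unique premise slot in the matching typing rule, and the inductive hypothesis preserves this correspondence. I do not anticipate any genuine obstacle; the statement is a standard subterm-typeability (context decomposition) lemma of the kind needed for the replacement lemma underlying preservation, and the proof is a direct structural induction with rule inversion.
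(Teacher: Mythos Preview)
Your proposal is correct and follows exactly the approach the paper takes: the paper's proof is simply ``By induction on the structure of $\config{D}[D]$,'' and your writeup is just a careful unpacking of that one-line argument via rule inversion on the syntax-directed page typing rules. There is nothing to add or correct.
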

\begin{proof}
  By induction on the structure of $\pgctx{D}$.
\end{proof}

\begin{lemma}[Subterm replacement (Page contexts)]
  \label{lem:dom-ctx-replacement}
  If:
  \begin{itemize}
    \item $\deriv{D}$ is a derivation of $\vdash \pgctx{D} : \pagety{A}$,
    \item there exists some subderivation $\deriv{D}'$ concluding $\vdash D :
  \pagety{A}$
    \item the position of $\deriv{D}'$ in $\deriv{D}$ corresponds to that of the
      hole in $\config{D}$
    \item $\vdash D' : \pagety{A}$
  \end{itemize}
  then $\vdash \pgctx{D'} : \pagety{A}$.
\end{lemma}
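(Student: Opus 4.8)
The plan is to proceed by induction on the structure of the page context $\config{D}$, following exactly the same case analysis as the companion Lemma (Subterm typeability (Page contexts)). The hypothesis that the position of $\deriv{D}'$ in $\deriv{D}$ corresponds to the hole of $\config{D}$ is what lets each case of the induction be driven by inversion on the last rule of $\deriv{D}$: in every non-trivial case the shape of $\config{D}$ forces a unique concluding rule, and $\deriv{D}'$ is guaranteed to sit inside the subderivation for the corresponding recursive premise.

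First, the base case $\config{D} = [~]$: here $\pgctx{D} = D$ and by the positional hypothesis $\deriv{D}' = \deriv{D}$ is the entire derivation, so $\pgctx{D'} = D'$ and $\vdash D' : \pagety{A}$ is exactly the fourth hypothesis. Next, the case $\config{D} = \pgtag{t}{V}{\config{D}_0}{\seq{e}}$: the concluding rule of $\deriv{D}$ must be \textsc{TD-Tag}, whose premises are $(\vdash e_i)_i$, $\vdash V : \attrty{A}$, and a subderivation of $\vdash \config{D}_0[D] : \pagety{A}$ that contains $\deriv{D}'$ at the hole of $\config{D}_0$. Applying the induction hypothesis to $\config{D}_0$ (with this subderivation in the role of $\deriv{D}$) yields $\vdash \config{D}_0[D'] : \pagety{A}$; reapplying \textsc{TD-Tag} with the unchanged event and attribute premises gives $\vdash \pgctx{D'} : \pagety{A}$. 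The cases $\config{D} = \append{\config{D}_0}{D_2}$ and $\config{D} = \append{D_1}{\config{D}_0}$ are symmetric: the concluding rule is \textsc{TD-Append}, the induction hypothesis rewrites the appropriate sub-context, and \textsc{TD-Append} reassembles the conclusion using the other, untouched, premise.

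There is essentially no hard obstacle here; the proof is entirely routine structural induction. The only point requiring a little care is the bookkeeping that ties the positional hypothesis to inversion — i.e. arguing that when we invert \textsc{TD-Tag} (resp.\ \textsc{TD-Append}), the subderivation $\deriv{D}'$ really lies within the recursive premise corresponding to $\config{D}_0$ and not elsewhere — but this is immediate from the fact that $\config{D}$ places its hole precisely in that position. (Indeed, this lemma is the exact converse of Lemma (Subterm typeability (Page contexts)), and its proof reuses the same induction skeleton.)
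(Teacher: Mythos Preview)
Your proposal is correct and takes essentially the same approach as the paper: the paper's proof is simply ``By induction on the structure of $\config{D}[D]$,'' and you have spelled out exactly that induction case by case.
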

\begin{proof}
  By induction on the structure of $\pgctx{D}$.
\end{proof}

\begin{lemma}[Preservation (Process equivalence)] \label{lem:equiv-pres}
  If $\vdash^\phi P : A$ and $P \equiv P'$, then $\vdash^\phi P' : A$.
\end{lemma}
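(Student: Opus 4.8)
The plan is to proceed by induction on the derivation of $P \equiv P'$. Process equivalence in $\mvu$ is generated by the commutativity and associativity equations for parallel composition, closed under reflexivity, symmetry, transitivity, and congruence (the only process context being $\procctx = \procctx \parallel Q$). The structural cases are routine: reflexivity is immediate; symmetry and transitivity follow directly from the induction hypothesis (transitivity by applying it twice); and for the congruence case $P = \procctx[Q] \equiv \procctx[Q'] = P'$ with $Q \equiv Q'$, we invert \textsc{TP-Par} along the spine down to the hole, apply the induction hypothesis to the typing derivation of $Q$, and reassemble with \textsc{TP-Par}, noting that the flag carried by $Q$ is unchanged so the overall flag $\phi$ is preserved.

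The two base cases carry the (small amount of) real content. For commutativity $P_1 \parallel P_2 \equiv P_2 \parallel P_1$: invert \textsc{TP-Par} to obtain $\vdash^{\phi_1} P_1 : A$ and $\vdash^{\phi_2} P_2 : A$ with $\phi = \phi_1 + \phi_2$; since the flag-combination operation is symmetric in its definition and its only undefined case ($\bcirc + \bcirc$) is itself symmetric, $\phi_2 + \phi_1$ is defined and equal to $\phi$, so \textsc{TP-Par} yields $\vdash^{\phi} P_2 \parallel P_1 : A$. For associativity $P_1 \parallel (P_2 \parallel P_3) \equiv (P_1 \parallel P_2) \parallel P_3$: invert \textsc{TP-Par} twice to expose $\vdash^{\phi_i} P_i : A$ for $i \in \{1,2,3\}$ with $\phi = \phi_1 + (\phi_2 + \phi_3)$, then use that $+$ is associative wherever defined to obtain that $(\phi_1 + \phi_2) + \phi_3$ is also defined and equals $\phi$, and re-derive the typing with two applications of \textsc{TP-Par}.

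The only point requiring care is precisely the partiality of the flag operation, so I would first record an easy auxiliary observation: $+$ on thread flags is commutative, and whenever $\phi_1 + (\phi_2 + \phi_3)$ is defined then so is $(\phi_1 + \phi_2) + \phi_3$ and the two coincide. This is immediate once one notes that $+$ amounts to ``the result is $\bcirc$ exactly when one argument is $\bcirc$, $\wcirc$ when none is, and undefined when two are'' — so the invariant ``at most one $\bcirc$ among the combined flags'' is insensitive to associativity. With that observation in hand, no case of the induction presents any genuine obstacle; the lemma is essentially a bookkeeping argument about \textsc{TP-Par} together with the algebra of the flag monoid.
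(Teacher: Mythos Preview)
Your proposal is correct and follows exactly the approach the paper takes: the paper's proof is the single line ``By induction on the derivation of $P \equiv P'$'', and you have simply spelled out the cases (commutativity and associativity of $\parallel$, plus the closure rules) together with the relevant observation that the partial flag operation $+$ is commutative and associative where defined. Nothing is missing or wrong; your write-up is just a more detailed version of the paper's one-liner.
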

\begin{proof}
  By induction on the derivation of $P \equiv P'$.
\end{proof}

\begin{lemma}[Preservation (Configuration equivalence)] \label{lem:config-equiv-pres}
  If $\vdash \config{C}$ and $\config{C} \equiv \config{C}'$, then $\vdash \config{C}'$.
\end{lemma}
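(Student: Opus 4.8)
The plan is to argue by induction on the derivation of $\config{C} \equiv \config{C}'$. Inspecting the equivalence rules, the only one acting directly at the level of configurations is $\sys{P}{D} \equiv \sys{P'}{D}$ whenever $P \equiv P'$; the remaining equivalences ($P_1 \parallel P_2 \equiv P_2 \parallel P_1$ and associativity of $\parallel$) live at the level of processes and feed into the premise $P \equiv P'$. Treating $\equiv$ on configurations as the reflexive, symmetric, and transitive closure of these rules, the reflexive case is immediate, the transitive case follows by composing the two induction hypotheses, and symmetry causes no trouble because the process-level lemma invoked below is itself stated for the symmetric relation.

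For the principal case, suppose $\config{C} = \sys{P}{D}$ and $\config{C}' = \sys{P'}{D}$ with $P \equiv P'$. By inversion of \textsc{TC-System} on $\vdash \sys{P}{D}$, there is some type $A$ with $\vdash^\bcirc P : A$ and $\vdash D : \pagety{A}$. Applying Lemma~\ref{lem:equiv-pres} (preservation under process equivalence) to $\vdash^\bcirc P : A$ and $P \equiv P'$ yields $\vdash^\bcirc P' : A$. Re-applying \textsc{TC-System} with this together with the unchanged $\vdash D : \pagety{A}$ gives $\vdash \sys{P'}{D}$, i.e.\ $\vdash \config{C}'$, as required.

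The only point worth checking is that configuration equivalence never touches the DOM page component $D$ and never changes the message type $A$ carried by the event-loop process, so the page-typing premise of \textsc{TC-System} transports verbatim and the flag $\bcirc$ is preserved. Consequently there is essentially no genuine obstacle: the lemma reduces almost entirely to Lemma~\ref{lem:equiv-pres}, whose own proof is the routine induction on $P \equiv P'$ — the only mildly delicate clause there being that reordering and reassociating parallel compositions respects the flag-combination operation $\phi_1 + \phi_2$, which is commutative and (where defined) associative on the relevant flags, so that the side condition "precisely one process typeable under $\bcirc$" is maintained.
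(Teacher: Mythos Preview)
Your proposal is correct and matches the paper's approach: the paper's own proof is simply ``Immediate by Lemma~\ref{lem:equiv-pres}'', which is precisely the reduction you carry out explicitly via inversion of \textsc{TC-System} and reapplication after invoking process-equivalence preservation. The extra remarks about closure under reflexivity/symmetry/transitivity and the stability of the flag and page component are sound elaborations of what the paper leaves implicit.
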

\begin{proof}
  Immediate by Lemma~\ref{lem:equiv-pres}.
\end{proof}

\begin{lemma}[Preservation (Terms)]\label{lem:term-pres}
  If $\Gamma \vdash M : A$ and $M \teval N$, then $\Gamma \vdash N : A$.
\end{lemma}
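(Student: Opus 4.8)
The plan is to prove this by induction on the derivation of $M \teval N$, i.e.\ by case analysis on which reduction rule is applied (Figure~\ref{fig:appendix:stlc}: \textsc{E-Lam}, \textsc{E-Rec}, \textsc{E-Pair}, \textsc{E-Inl}, \textsc{E-Inr}, and the congruence rule \textsc{E-LiftM}). The $\beta$-rules are discharged using a standard substitution lemma, and the congruence rule using two routine lemmas about evaluation contexts.

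First I would establish (or simply invoke, as it is entirely routine) a \emph{substitution lemma}: if $\Gamma, x \oftype A \vdash M \oftype B$ and $\Gamma \vdash V \oftype A$, then $\Gamma \vdash M\{V/x\} \oftype B$, proved by induction on the typing derivation of $M$; for \textsc{E-Rec} I would use the simultaneous-substitution variant substituting both $f$ and $x$. With this in hand, each $\beta$-rule follows by inverting the typing of the redex to recover the types of its components and then applying the substitution lemma --- e.g.\ for \textsc{E-Pair}, from $\Gamma \vdash \letin{(x, y)}{(V, W)}{M} \oftype C$ inversion gives $\Gamma \vdash V \oftype A$, $\Gamma \vdash W \oftype B$, and $\Gamma, x \oftype A, y \oftype B \vdash M \oftype C$, whence $\Gamma \vdash M\{V/x, W/y\} \oftype C$.

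For the congruence rule \textsc{E-LiftM}, where $M = E[M']$, $N = E[N']$, and $M' \teval N'$, I would first prove two easy lemmas about term evaluation contexts, mirroring Lemmas~\ref{lem:dom-ctx-typeability} and~\ref{lem:dom-ctx-replacement} for page contexts: \emph{subterm typeability}, that a derivation of $\Gamma \vdash E[M'] \oftype A$ contains a subderivation $\Gamma \vdash M' \oftype B$ at the position of the hole; and \emph{subterm replacement}, that replacing that subderivation by a derivation of $\Gamma \vdash N' \oftype B$ yields $\Gamma \vdash E[N'] \oftype A$. Both are shown by induction on the structure of $E$. The case then closes by obtaining $\Gamma \vdash M' \oftype B$ from subterm typeability, applying the induction hypothesis to get $\Gamma \vdash N' \oftype B$, and concluding with subterm replacement.

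The proof has no genuine obstacle. The only points requiring a little care are the HTML- and attribute-specific evaluation-context frames ($\coretag{t}{E}{M}$, $\coretag{t}{V}{E}$, $\htmltext{E}$, $\attr{\ak}{E}$, $\append{E}{M}$, $\append{V}{E}$), which must each appear in the case analysis for the context lemmas, and the fact that $\htmlty{A}$ and $\attrty{A}$ are parametric in the message type: inversion on \textsc{T-HtmlText}, \textsc{T-HtmlEmpty}, \textsc{T-Attr}, \textsc{T-AttrEmpty} leaves that parameter free, but since the corresponding terms are values and never occur as the active redex of a $\beta$-rule, this causes no difficulty in reconstructing the derivation.
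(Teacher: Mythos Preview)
Your proposal is correct and matches the paper's approach exactly: the paper's proof reads simply ``Standard; by induction on the derivation of $M \teval N$,'' and your write-up is precisely the standard expansion of that one line (substitution lemma for the $\beta$-rules, subterm typeability/replacement for \textsc{E-LiftM}).
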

\begin{proof}
Standard; by induction on the derivation of $M \teval N$.
\end{proof}

\begin{lemma}[Preservation (Processes)]\label{lem:process-pres}
  If $\vdash^\phi P : A$ and $P \ceval P'$, then $\vdash^\phi P' : A$.
\end{lemma}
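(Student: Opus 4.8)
The plan is to proceed by induction on the derivation of $P \ceval P'$, with one case for each of the three process reduction rules \textsc{EP-Handle}, \textsc{EP-Par}, and \textsc{EP-LiftT}. In each case I invert the typing derivation of $P$ to expose the types of the subprocesses, rebuild the required typings for $P'$, and track how the thread flags $\phi$ combine.

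The case \textsc{EP-Par} is routine: inverting \textsc{TP-Par} gives $\phi = \phi_1 + \phi_2$ with $\vdash^{\phi_1} P_1 : A$ and $\vdash^{\phi_2} P_2 : A$; the induction hypothesis applied to $P_1 \ceval P_1'$ gives $\vdash^{\phi_1} P_1' : A$, and \textsc{TP-Par} reassembles $\vdash^{\phi_1 + \phi_2} P_1' \parallel P_2 : A$. The case \textsc{EP-LiftT}, where $\config{T}[M] \ceval \config{T}[N]$ with $M \teval N$, requires a little infrastructure: I would first establish subterm typeability and subterm replacement lemmas for thread contexts $\config{T}$, entirely analogous to Lemmas~\ref{lem:dom-ctx-typeability} and~\ref{lem:dom-ctx-replacement} for page contexts, together with the corresponding standard versions for term evaluation contexts $E$. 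Composing these lets me extract a subderivation $\cdot \vdash M : B$, apply Lemma~\ref{lem:term-pres} to obtain $\cdot \vdash N : B$, and plug it back in to conclude $\vdash^\phi \config{T}[N] : A$.

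The interesting case is \textsc{EP-Handle}: $\handlerproc{\idle{V_m}}{F} \parallel \thread{V} \ceval \handlerproc{\handle{V_m, F, V}}{F}$. Inverting \textsc{TP-Par}, \textsc{TP-EventLoop}, \textsc{TS-Idle}, and \textsc{TP-Thread} yields some model type $A'$ with $F = \state{V_v}{V_u}$, $\cdot \vdash V_m : A'$, $\cdot \vdash V_v : A' \to \htmlty{A}$, $\cdot \vdash V_u : (A \times A') \to A'$, and $\cdot \vdash V : A$, with flags combining as $\bcirc + \wcirc = \bcirc$. Unfolding the meta-definition, $\handle{V_m, F, V} = \letintwo{m'}{V_u \app (V, V_m)} (m', V_v \app m')$, and I check this has type $A' \times \htmlty{A}$: the pair $(V, V_m)$ has type $A \times A'$, so $V_u \app (V, V_m) : A'$, hence $m' : A'$ and $(m', V_v \app m') : A' \times \htmlty{A}$. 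By \textsc{TS-Processing} this gives $\vdash \handle{V_m, F, V} : \evtloopty{A'}{A}$, and \textsc{TP-EventLoop} concludes $\vdash^\bcirc \handlerproc{\handle{V_m, F, V}}{F} : A$, as required.

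The main obstacle is modest rather than deep: it lies in setting up the subterm typeability and replacement machinery for thread and term evaluation contexts so that \textsc{EP-LiftT} goes through cleanly, and in being careful in \textsc{EP-Handle} to carry the internal model type $A'$ — which never appears in the conclusion's message type $A$ — through both the inversion and the re-synthesis of the \textsc{TP-EventLoop} derivation.
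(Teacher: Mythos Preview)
Your proposal is correct and follows essentially the same approach as the paper's own proof: induction on the process reduction derivation, with the \textsc{EP-Handle} case handled by inverting the typing, unfolding $\mkwd{handle}$, and rebuilding via \textsc{TS-Processing} and \textsc{TP-EventLoop}, and the other cases dispatched by the induction hypothesis and term preservation. Your explicit mention of needing subterm typeability and replacement lemmas for thread contexts in the \textsc{EP-LiftT} case is a fair point that the paper glosses over with ``Immediate by Lemma~\ref{lem:term-pres}''; the paper also includes a structural-equivalence case invoking Lemma~\ref{lem:equiv-pres}, which you may wish to add if process reduction is taken modulo $\equiv$.
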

\begin{proof}
By induction on the derivation of $P \ceval P'$.
\begin{proofcase}{E-Handle}

  \[
    \handlerproc{\idle{V_m}}{F} \parallel \thread{V} \ceval
    \handlerproc{\handle{V_m, F, V}}{F}
  \]

  Assumption:
  \begin{mathpar}
    \inferrule*
    {
      \inferrule*
      {
        \inferrule*
        { \cdot \vdash V_m : A }
        { \vdash \idle{V_m} : \evtloopty{A}{B} } \\
        {\bl
        \cdot \vdash V_v : A \to \htmlty{B} \\
        \cdot \vdash V_u : (B \times A) \to A \\
        \el}
      }
      { \vdash^\bcirc \handlerprocexp{\idle{V_m}}{V_v}{V_u} : B }
      \\
      \inferrule*
      { \cdot \vdash V : B }
      { \vdash^\wcirc \thread{V} : B }
    }
    {
      \vdash^\bcirc \handlerprocexp{\idle{V_m}}{V_v}{V_u} \parallel \thread{V} : B
    }
  \end{mathpar}

  Let $\deriv{D}'$ be the derivation:
  \begin{mathpar}
    \inferrule*
    {
      \cdot \vdash V_u : (B \times A) \to A \\
      \inferrule*
      { \cdot \vdash V : B \\ \cdot \vdash V_m : A }
      { \cdot \vdash (V, V_m) : (B \times A) }
    }
    { \cdot \vdash V_u \app (V, V_m) : A }
  \end{mathpar}

  Let $\deriv{D}$ be the derivation:
  \begin{mathpar}
    \inferrule*
    {
      \deriv{D}' \\
      \inferrule*
      {
        \inferrule*
        { }
        { m' : A \vdash m' : A} \\
        \inferrule*
        { m' : A \vdash V_v : A \to \htmlty{B} \\
          m' : A \vdash m' : A
        }
        { m' : A \vdash V_v \app m' : \htmlty{B}}
      }
      { m' : A \vdash (m', V_v \app m') : (A \times \htmlty{B}) }
    }
    { \cdot \vdash \letin{m'}{V_u \app (V, V_m)}{(m', V_v \app m')} : (A \times \htmlty{B})}
  \end{mathpar}

  Then we can show:
  {\small
  \begin{mathpar}
    \inferrule*
    {
      \inferrule*
      { \deriv{D} }
      { \vdash
          {\begin{aligned}[t]
              & \letintwo{m'}{V_u \app (V, V_m)} \\
              & (m', V_v \app m')
          \end{aligned}}
        : \evtloopty{A}{B} } \\
      \cdot \vdash V_v : (B \times A) \to A \\
      \cdot \vdash V_u : A \to \htmlty{B}
    }
    { \vdash^\bcirc \handlerprocexp{
      \begin{aligned}[t]
        & \letintwo{m'}{V_u \app (V, V_m)} \\
        & (m', V_v \app m')
      \end{aligned}}{V_v}{V_u} : B }
  \end{mathpar}
  }
  as required.

\end{proofcase}

\begin{proofcase}{E-LiftT}

  Immediate by Lemma~\ref{lem:term-pres}.
\end{proofcase}

\begin{proofcase}{E-Par}
Immediate by the IH.
\end{proofcase}

\begin{proofcase}{E-Struct}
  Immediate by Lemma~\ref{lem:equiv-pres} and the induction hypothesis.
\end{proofcase}

\end{proof}

\begin{fake}{Theorem~\ref{thm:config-pres} (Preservation (Configurations))}
  If $\vdash \config{C}$ and $\config{C} \ceval \config{C}'$, then
  $\vdash \config{C}'$.
\end{fake}
\begin{proof}
By induction on the derivation of $\config{C} \ceval \config{C}'$.

\begin{proofcase}{E-Interact}
  \[
\sys{P}{\config{D}[\pgtag{t}{U}{D}{\seq{e}}]} \ceval
    \sys{P}{\config{D}[\pgtag{t}{U}{D}{\seq{e} \cdot \evtpayload{\evt{ev}}{V}}]}
  \]
  for some $\evt{ev}$, $V$ such that $\vdash \evtpayload{\evt{ev}}{V}$.

  Assumption:
  \begin{mathpar}
    \inferrule*
    { \vdash^\bcirc P : A \\ \vdash
      \config{D}[\pgtag{t}{U}{D}{\seq{e}}] : \pagety{A} }
    { \vdash \sys{P}{\config{D}[\pgtag{t}{U}{D}{\seq{e}}]} }
  \end{mathpar}

  By Lemma~\ref{lem:dom-ctx-typeability}:
  \begin{mathpar}
    \inferrule*
    { \cdot \vdash U : \attrty{A} \\ \vdash D : \pagety{A} \\ (\vdash
    e_i)_i}
    { \vdash \pgtag{t}{U}{D}{\seq{e}} : \pagety{A} }
  \end{mathpar}

  As $\vdash \evtpayload{\evt{ev}}{V}$:
  \begin{mathpar}
    \inferrule*
    { \cdot \vdash U : \attrty{A} \\ \vdash D : \pagety{A} \\ (\vdash
    e_i)_i \\ \vdash \evtpayload{\evt{ev}}{V} }
    {  \vdash \pgtag{t}{U}{D}{\seq{e} \cdot \evtpayload{\evt{ev}}{V}} :
    \pagety{A} }
  \end{mathpar}

  By Lemma~\ref{lem:dom-ctx-replacement}, $\vdash
  \sys{P}{\config{D}[\pgtag{t}{U}{D}{\seq{e}\cdot \evtpayload{\evt{ev}}{V}}]}$,
  thus we have that:

  \begin{mathpar}
    \inferrule*
    { \vdash^\bcirc P : A \\ \vdash
      \config{D}[\pgtag{t}{U}{D}{\seq{e}\cdot \evtpayload{\evt{ev}}{V}}] : \pagety{A} }
    { \vdash \sys{P}{\config{D}[\pgtag{t}{U}{D}{\seq{e}\cdot \evtpayload{\evt{ev}}{V}}]} }
  \end{mathpar}

 as required.
\end{proofcase}

\begin{proofcase}{E-Evt}

  \[
\sys{P}{\config{D}[\pgtag{t}{U}{D}{\evtpayload{\evt{ev}}{W} \cdot \seq{e}}]}
 \ceval
 \sys{P \parallel \thread{V_1 \app W} \parallel \cdots \parallel \thread{V_n \app W}}{\config{D}[\pgtag{t}{U}{D}{\seq{e}}]}
  \]%
  where $\handlers{\ev}{U} = \seq{V}$.

  Assumption:
  \begin{mathpar}
    \inferrule
    {  \vdash^\bcirc P : A \\
       \vdash \config{D}[\pgtag{t}{U}{D}{\evtpayload{\evt{ev}}{W} \cdot
      \seq{e}}] : \pagety{A} \\
    }
    { \vdash \sys{P}{\config{D}[\pgtag{t}{U}{D}{\evtpayload{\evt{ev}}{W} \cdot \seq{e}}]} }
  \end{mathpar}

  By Lemma~\ref{lem:dom-ctx-typeability}:
  \begin{mathpar}
    \inferrule*
    { \cdot \vdash U : \attrty{A} \\ \vdash D : \pagety{A}  \\
      \inferrule*
      { \vdash W : \evtty{ev} }
      { \vdash \evtpayload{ev}{W} } \\
      (\vdash e_i)_i
    }
    { \vdash \pgtag{t}{U}{\vh}{\evtpayload{\evt{ev}}{W} \cdot \seq{e}}  : \pagety{A} }
  \end{mathpar}

  We can straightforwardly show:
  \begin{mathpar}
    \inferrule*
    { \cdot \vdash U : \attrty{A} \\ \vdash D : \pagety{A}  \\
      (\vdash e_i)_i
    }
    { \vdash \pgtag{t}{U}{\vh}{\seq{e}}  : \pagety{A} }
  \end{mathpar}

  By Lemma~\ref{lem:dom-ctx-replacement}, we therefore have that $\vdash
  \pgctx{\pgtag{t}{U}{\vh}{\seq{e}}} : \pagety{A} $

  Knowing that $U : \attrty{A}$, by \textsc{T-EvtAttr} we have that each
  $V_i$ in $\handlers{\ev}{U}$ has type $\evtty{h} \to A$. Since by the definition
  of $\handlers{\ev}{U}$ we have that $h = \handler{\ev}$ and that there
  is a bijective mapping between event names and handlers, we have that $\evtty{h} =
  \evtty{\ev}$. Thus, we have that for each $V_i \in \seq{V}$, $\cdot \vdash V_i : \evtty{\ev} \to A$.

  For each $V_i$, we can therefore show:
  \begin{mathpar}
    \inferrule*
    {
      \inferrule*
      { \cdot \vdash V_i : \evtty{ev} \to A \\ \cdot \vdash W : \evtty{ev} }
      { \cdot \vdash V_i \app W : A }
    }
    { \vdash^\wcirc \thread{V_i \app W} : A }
  \end{mathpar}

  By \textsc{TP-Par}, we have that $\vdash^\wcirc \thread{V_1 \app W}
  \parallel \cdots \parallel \thread{V_n \app W} : A$.

  Recomposing:
  \begin{mathpar}
    \inferrule
    {
      \inferrule*
      {
        \vdash^\bcirc P : A \\
        \vdash^\wcirc \thread{V_1 \app W} \parallel \cdots \parallel
        \thread{V_n \app W} : A
      }
      { \vdash^\bcirc P \parallel \thread{V_1 \app W} \parallel \cdots
      \parallel \thread{V_n \app W} : A} \\
      \vdash \pgctx{\pgtag{t}{U}{\vh}{\seq{e}}} : \pagety{A}
    }
    { \vdash \sys{P \parallel \thread{V_1 \app W} \parallel \cdots
      \parallel \thread{V_n \app W}}{\pgctx{\pgtag{t}{U}{\vh}{\seq{e}}}} }
  \end{mathpar}
  as required.
\end{proofcase}

\begin{proofcase}{E-Update}

  \[
    \sys{\procctx[\handlerprocexp{(V'_m, U)}{V_v}{V_u}]}{\vh} \ceval
    \sys{\procctx[\handlerprocexp{\idle{V'_m}}{V_v}{V_u}]}{\vh'}
  \]
  where $\diff{U}{\vh} = \vh'$

  We show the illustrative case where $\procctx = [~] \parallel P$; the case for $\procctx = [~]$ is similar.

  \[
  \sys{\handlerprocexp{(V'_m, U)}{V_v}{V_u} \parallel P}{\vh} \ceval
  \sys{\handlerprocexp{\idle{V'_m}}{V_v}{V_u} \parallel P}{\vh'}
  \]
  where $\diff{U}{\vh} = \vh'$

  Let $\deriv{D}$ be the derivation:
  \begin{mathpar}
    \inferrule*
    {
      \inferrule*
      {
        \inferrule*
        { \cdot \vdash V'_m : A \\ \cdot \vdash U : \htmlty{B} }
        { \cdot \vdash (V'_m, U) : (A \times \htmlty{B}) }
      }
      { \vdashs (V'_m, U) : \evtloopty{A}{B} } \\
      {\begin{aligned}[b]
        & \cdot \vdash V_v : A \to \htmlty{B} \\
        & \cdot \vdash V_u : (B \times A) \to A
      \end{aligned}}
    }
    { \vdash^\bcirc \handlerprocexp{(V'_m, U)}{V_v}{V_u} : B }
  \end{mathpar}

  Assumption:
  \begin{mathpar}
    \inferrule*
    {
      \inferrule*
      {
        \deriv{D} \\
        \inferrule*
        { }
        { \vdash^\wcirc P : B }
      }
      { \vdash^\bcirc \handlerprocexp{(V'_m, U)}{V_v}{V_u} \parallel P : B } \\
      \vdash \vh : \pagety{B}
    }
    { \vdash^\bcirc
        \sys{\handlerprocexp{(V'_m, U)}{V_v}{V_u} \parallel P}{\vh} }
  \end{mathpar}

  Let $\deriv{D}'$ be the derivation:
  \begin{mathpar}
    \inferrule*
    {
      \inferrule*
      { \cdot \vdash V'_m : A }
      { \vdash \idle{V'_m} : \evtloopty{A}{B} } \\
      {\begin{aligned}[b]
        & \cdot \vdash V_v : A \to \htmlty{B} \\
        & \cdot \vdash V_u : (B \times A) \to A \\
        & (\cdot \vdash W_i : B)_i
      \end{aligned}}
    }
    { \vdash^\bcirc \handlerprocexp{\idle{V'_m}}{V_v}{V_u} : B }
  \end{mathpar}

  By Lemma~\ref{lem:diffing}, we have that $\vdash D' : \pagety{B}$.

  Recomposing:
  \begin{mathpar}
    \inferrule*
    {
      \inferrule*
      {
        \deriv{D} \\
        \inferrule*
        { }
        { \vdash^\wcirc P : B }
      }
      { \vdash^\bcirc \handlerprocexp{\idle{V'_m}}{V_v}{V_u} \parallel P : B } \\
      \vdash \vh' : \pagety{B}
    }
    { \vdash
    \sys{\handlerprocexp{\idle{V'_m}}{V_v}{V_u} \parallel P}{\vh'} }
  \end{mathpar}

  as required.
\end{proofcase}

\begin{proofcase}{E-Run}

  \[
    \sys{\procctx[\run{(V_m, V_v, V_u)}]}{D} \ceval
    \sys{\procctx[\handlerprocexp{(V_m, V_v \app V_m)}{V_v}{V_u}]}{D}
  \]

  We consider the case where $\procctx = [~]$; the case where $\procctx = [~]
  \parallel P$ is similar.

  Assumption:

  \begin{mathpar}
    \inferrule*
    {
      \inferrule*
      {
        \inferrule*
        { \cdot \vdash V_m : A \\ \cdot \vdash V_v : A \to \htmlty{B} \\
          \cdot \vdash V_u : (B \times A) \to A
        }
        {
          \cdot \vdash (V_m, V_v, V_u) : (A \times A \to \htmlty{B} \times (B \times A) \to A)
        }
      }
      { \vdash^\bcirc \run{(V_m, V_v, V_u)} : B }
      \\
      \vdash D : \pagety{B}
    }
    {
      \vdash \sys{\run{(V_m, V_v, V_u)}}{D} : B
    }
  \end{mathpar}

  Let $\deriv{D}$ be the derivation:
  \begin{mathpar}
    \inferrule*
    {
      \inferrule*
      {
        \inferrule*
        { \cdot \vdash V_m : A \\
          \inferrule*
          { \cdot \vdash V_v : A \to \htmlty{B} \\ \cdot \vdash V_m : A }
          { \cdot \vdash V_v \app V_m : \htmlty{B} }
        }
        { \cdot \vdash (V_m, V_v \app V_m) : (A \times \htmlty{B}) }
      }
      { \vdashs (V_m, V_v \app V_m) : \evtloopty{A}{B} } \\
      {\begin{aligned}[b]
        & \cdot \vdash V_v : A \to \htmlty{B} \\
        & \cdot \vdash V_u : (B \times A) \to A \\
      \end{aligned}}
    }
    { \vdash^\bcirc \handlerprocexp{(V_m, V_v \app V_m)}{V_v}{V_u} : B }
  \end{mathpar}
  Thus we can show:
  \begin{mathpar}
    \inferrule*
    {
      \inferrule*
      { \deriv{D} }
      { \vdash^\bcirc \handlerprocexp{(V_m, V_v \app V_m)}{V_v}{V_u} : B }
      \\
      \vdash \pgempty : \pagety{B} \\
    }
    { \vdash
        \sys{\handlerprocexp{(V_m, V_v \app V_m)}{V_v}{V_u}}{\pgempty} : B
    }
  \end{mathpar}

  as required.
\end{proofcase}

\begin{proofcase}{E-Struct}
  Immediate by Lemma~\ref{lem:config-equiv-pres} and the induction hypothesis.
\end{proofcase}

\begin{proofcase}{E-LiftI}
  Immediate by Lemma~\ref{lem:term-pres}.
\end{proofcase}

\begin{proofcase}{E-LiftP}
  Immediate by Lemma~\ref{lem:process-pres}.
\end{proofcase}

\end{proof}

\subsection{Progress}

\begin{fake}{Lemma~\ref{lem:term-progress}}
If $\cdot \vdash M : A$, then either $M$ is a value, or there exists some $N$ such that $M \teval N$.
\end{fake}
\begin{proof}
Standard; by induction on the derivation of $\cdot \vdash M : A$.
\end{proof}

\begin{lemma}[Progress (Processes)]\label{lem:process-progress}
Suppose $\cdot \vdash^\phi P$. Either there exists some $P'$ such that $P \equiv
\ceval \equiv P'$; or
  \begin{enumerate}
    \item if $\phi = \bcirc$, then either
      $P = \run{V} \parallel \thread{V_1} \parallel \cdots \parallel \thread{V_n}$,
      $P = \handlerprocexp{\idle{V_m}}{V_v}{V_u}$, or
      $P = \handlerprocexp{(V'_m, U)}{V_v}{V_u}$
    \item if $\phi = \wcirc$, then $P \equiv \thread{V_1} \parallel \cdots
      \parallel \thread{V_n}$.
  \end{enumerate}
\end{lemma}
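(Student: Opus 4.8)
The plan is to proceed by induction on the derivation of $\cdot \vdash^\phi P$, with one case for each process typing rule (\textsc{TP-Thread}, \textsc{TP-Run}, \textsc{TP-EventLoop}, \textsc{TP-Par}), appealing to Lemma~\ref{lem:term-progress} and to the standard canonical forms property for values of product type whenever an active term has been shown to be a value. The three non-compositional cases are routine. For \textsc{TP-Thread}, $P = \thread{M}$ with $\phi = \wcirc$: by Lemma~\ref{lem:term-progress} either $M \teval N$, so $\thread{M} = \thread{[~]}[M] \ceval \thread{[~]}[N]$ by \textsc{EP-LiftT}, or $M$ is a value $V$ and $P = \thread{V}$ is the required normal form (clause 2 with $n=1$). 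For \textsc{TP-Run}, $P = \run{M}$ with $\phi = \bcirc$: either $M$ steps, so $\run{M}$ steps by \textsc{EP-LiftT} under the thread context $\run{[~]}$, or $M$ is a value $V$ and $P = \run{V}$ is a process normal form (clause 1 with $n=0$) --- note that $\run{V}$ genuinely does not reduce at the process level; it is only the configuration rule \textsc{E-Run} that fires here. For \textsc{TP-EventLoop}, $P = \handlerprocexp{T}{V_v}{V_u}$: if $T = \idle{V_m}$ then $P$ is a normal form; if $T = M$ is a term then inversion on \textsc{TS-Processing} gives $\cdot \vdash M : (A \times \htmlty{B})$, and Lemma~\ref{lem:term-progress} yields either a step (by \textsc{EP-LiftT} under $\handlerproc{[~]}{F}$) or, by canonical forms, $M = (V'_m, U)$, so $P = \handlerprocexp{(V'_m, U)}{V_v}{V_u}$ is a normal form.

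The heart of the proof is \textsc{TP-Par}, where $P = P_1 \parallel P_2$ with $\cdot \vdash^{\phi_1} P_1$, $\cdot \vdash^{\phi_2} P_2$, and $\phi = \phi_1 + \phi_2$ defined. Since $\bcirc + \bcirc$ is undefined, either both $\phi_i = \wcirc$ (so $\phi = \wcirc$) or exactly one is $\bcirc$. I apply the induction hypothesis to $P_1$ and $P_2$. If either component reduces up to $\equiv$, then so does $P$: using commutativity and associativity of $\parallel$, bring the reducing component to the left, apply \textsc{EP-Par}, and reassociate back --- this is exactly why the statement is phrased with $\equiv\,\ceval\,\equiv$, since \textsc{EP-Par} only touches the left operand and there is no process-level structural rule. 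Otherwise both $P_1$ and $P_2$ are in their respective normal forms; when both $\phi_i = \wcirc$, both are parallel compositions of value threads, hence so is $P$ (clause 2). When exactly one is $\bcirc$, say $P_1$ (the other case by commutativity), $P_1$ is one of the three $\bcirc$ normal forms and $P_2 \equiv \thread{V_1} \parallel \cdots \parallel \thread{V_n}$; here I case-split on the shape of $P_1$. If $P_1 = \handlerprocexp{\idle{V_m}}{V_v}{V_u}$ and $n \geq 1$, then after moving $\thread{V_1}$ adjacent to $P_1$, rule \textsc{EP-Handle} fires and $P$ reduces up to $\equiv$; if $n = 0$ this is the normal form. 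If $P_1$ is $\run{V}$ or $\handlerprocexp{(V'_m, U)}{V_v}{V_u}$, then $P$ is a normal form, possibly carrying the trailing value threads of $P_2$ (which the first and third forms of clause~1 absorb).

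I expect the main obstacle to be precisely this \textsc{TP-Par} bookkeeping: distinguishing combinations of sub-normal-forms that are genuinely stuck from those that expose an \textsc{EP-Handle} redex (an idle event loop beside a value thread), and carrying out the $\equiv$-manipulation carefully enough that \textsc{EP-Par} is applicable after reassociation. Everything else --- the three non-compositional cases and the appeals to term progress and canonical forms --- is standard.
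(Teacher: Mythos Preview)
Your approach differs structurally from the paper's. You argue by induction on the derivation of $\cdot \vdash^\phi P$, handling each process typing rule in turn. The paper instead does a direct case split on $\phi$: for $\phi = \wcirc$ it observes that $P$ must be (up to $\equiv$) a parallel composition of event-handler threads and applies term progress to each; for $\phi = \bcirc$ it reuses this to write $P \equiv P' \parallel \thread{V_1} \parallel \cdots \parallel \thread{V_n}$ with $P'$ a main process ($\run{M}$ or an event loop), then case-analyses $P'$ and $T$. Both routes are standard; yours tracks the typing more explicitly, the paper's is more direct once the canonical shape is in hand.

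Your \textsc{TP-Par} analysis contains one incorrect step. You assert that when $P_1 = \handlerprocexp{(V'_m, U)}{V_v}{V_u}$ and $P_2$ consists of value threads, $P$ is a normal form because ``the first and third forms of clause~1 absorb'' the trailing threads. But the third form is literally $\handlerprocexp{(V'_m, U)}{V_v}{V_u}$ with \emph{no} threads attached; it absorbs nothing. Concretely, $\handlerprocexp{(V'_m, U)}{V_v}{V_u} \parallel \thread{V_1}$ is well-typed with flag $\bcirc$, matches none of the listed normal forms (even up to $\equiv$), and does not reduce at the process level: \textsc{EP-Handle} requires the active thread to be $\idle{-}$, and no other process rule applies. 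The paper's own proof has the mirror-image slip here---it claims \textsc{E-Handle} fires whenever $n > 0$ without checking that $T$ is idle---so the lemma as stated is slightly too strong; the third clause ought to permit trailing value threads as well. This does not damage the downstream Event Progress theorem, because there the configuration rule \textsc{E-Update} fires under an arbitrary process context $\procctx$ and so accommodates the extra threads.
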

\begin{proof}
  If $\phi = \wcirc$, then by typing on processes, $P$ does not contain an event
  handler process. Thus, $P \equiv \thread{M_1} \parallel \cdots \parallel \thread{M_n}$.
  Since each $M_i$ could reduce via \textsc{E-LiftT}, it must be the case that
  each $M_i$ is a value $V_i$, satisfying (2).

  If $\phi = \bcirc$, then by our previous reasoning either $P$ can reduce or $P
  \equiv P' \parallel \thread{V_1}
  \parallel \cdots \parallel \thread{V_n}$,
  where either $P' = \run{M}$, or $P' = \handlerprocexp{T}{V_v}{V_u}$.
  In the case that $P' = \run{M}$, by Lemma~\ref{lem:term-progress}, either $M$
  can reduce by \textsc{E-LiftT}, or $P' = \run{V}$, satisfying the first clause
  of (1).

  Now consider the case where $P' = \handlerprocexp{T}{V_v}{V_u}$.
  If $n > 0$, then the $P$ can reduce by \textsc{E-Handle}. Therefore, we
  consider the case where $P = \handlerprocexp{T}{V_v}{V_u}$.

  We proceed by case analysis on the active thread. If $T = \idle{V_m}$, then we
  satisfy the second clause of (1).
  If $T = M$, by Lemma~\ref{lem:term-progress}, we have that either $M$ can reduce, or that
  $M = V$ for some value $V$. By \textsc{TP-Process} and \textsc{TS-Processing},
  we have that $\cdot \vdash V : (A \times \htmlty{B})$, and by inversion on the
  typing relation we have that $V = (V', U$), satisfying the third clause of (1).
\end{proof}

\begin{fake}{Theorem~\ref{thm:event-progress} (Event Progress)}
  If $\vdash \config{C}$, then either:
  \begin{enumerate}
    \item there exists some $\config{C}'$ such that $\config{C} \ceval \config{C'}$; or
    \item $\config{C} = \sys{\handlerprocexp{\idle{V_m}}{V_v}{V_u}}{\vh}$ where $\vh$
      cannot be written $\config{D}[\pgtag{t}{V}{W}{\seq{e}}]$ for some
      non-empty $\seq{e}$.
  \end{enumerate}
\end{fake}
\begin{proof}
  By analysis of the derivation of $\cdot \vdash \config{C}$.

   By Lemma~\ref{lem:process-progress}, either:
  \begin{enumerate}
    \item process $P$ can reduce
    \item $P = \run{V} \parallel \thread{V_1} \parallel \cdots \parallel \thread{V_n}$
    \item $P = \handlerprocexp{(V, U)}{V_v}{V_u}$, or
    \item $P = \handlerprocexp{\idle{V_m}}{V_v}{V_u}$
  \end{enumerate}
  In case (1), by \textsc{TP-Run}, $V$ is of the form $(V_m, V_v, V_u)$, and
  we can reduce by \textsc{E-Run}. In case (2), we can reduce by
  \textsc{E-LiftP}. In case (3), we can reduce by \textsc{E-Update}.

  In case (4), if $\vh$ could be written $\config{D}[\pgtag{t}{V}{W}{\seq{e}}]$ for a
  non-empty $\seq{e}$, then the configuration could reduce by \textsc{E-Evt}.
  Otherwise, we satisfy condition (2).
  \end{proof}

\clearpage
\section{Subscriptions}\label{appendix:subscriptions}

For the purposes of the paper, we have described messages which arise as a
result of DOM interaction: for example, a user pressing a button or entering a
message into a text field.

Although orthogonal to our aim of supporting GUI applications making use of
session types, it is useful to consider \emph{subscriptions}, an Elm abstraction
to allow us to handle \emph{environment} events such as timers or mouse
movements.  Let us introduce another event, $\evt{mouseMove}$ with payload type
$(\mkwd{Int} \times \mkwd{Int})$ denoting the co-ordinates of the mouse pointer.

\subparagraph{Example.}
Consider the following example, which shows the co-ordinates of the mouse as it
moves (we assume the existence of an \mkwd{intToString} function):
{\footnotesize
\[
\bl
\mkwd{Model} \defeq (\intty \times \intty) \qquad \mkwd{Message} \defeq
\mkwd{UpdateCoords}(\intty \times \intty) \vspace{0.4em} \\
\mkwd{view} : \mkwd{Model} \to \htmlty{\mkwd{Message}} \\
\mkwd{view} = \lambda (x, y) . \: \calcwd{html} \\
\quad \tagzero{html} \\
\qquad \tagzero{body} \\
\qquad \quad (\antiquote{\htmltext{\intstr{x}}}, \antiquote{\htmltext{\intstr{y}}}) \\
\qquad \tagzeroend{body} \\
\quad \tagzeroend{html} \vspace{0.4em} \\
\mkwd{update} : (\mkwd{Message} \times \mkwd{Model}) \to \mkwd{Model} \\
\mkwd{update} = \lambda (\mkwd{UpdateCoords}(x, y), (\textit{oldX}, \textit{oldY})) . (x, y)
\vspace{0.35em} \\
\mkwd{subscriptions} : \mkwd{Model} \to \subscriptionty{\mkwd{Message}} \\
\mkwd{subscriptions} = \lambda (mX, mY) .
\subscription{\mkwd{onMouseMove}}{(\lambda (x, y) . \mkwd{UpdateCoords}(x, y))}
\el
\]%
}%
The \mkwd{Model} and \mkwd{Message} types incorporate a pair containing
mouse co-ordinates. The \mkwd{view} function takes the current co-ordinates, and
splices them into an HTML document. The \mkwd{update} function takes the old and
new co-ordinates, and updates the model with the new co-ordinates.
The \mkwd{subscriptions} function, evaluated after \mkwd{update}, takes the current model and returns a
subscription
$\subscription{\mkwd{onMouseMove}}{(\lambda (x, y) .
\mkwd{UpdateCoords}(x, y))}
$
which generates an $\mkwd{UpdateCoords}$ message
whenever the mouse moves.

Formalising subscriptions involves adding the subscription terms and types,
extending event loop processes to record the $\mkwd{subscriptions}$ function,
and extending system configurations to record the current subscriptions and
environment events.

\begin{figure}[htpb]
\begin{adjustwidth}{-3cm}{-3cm}
{\footnotesize
~Modified syntax
\[
\begin{array}{lrcl}
\text{Types} & A, B, C & ::= & \cdots \midspace \subscriptionty{A} \\
\text{Terms} & L, M, N & ::= & \cdots \midspace \subscription{\eh}{M} \midspace \subempty \\
\end{array}
\]

~Additional term typing rules \hfill \framebox{$\Gamma \vdash M : A$}
\begin{mathpar}
  \inferrule
  [T-Sub]
  { \Gamma \vdash M : \evtty{\eh} \to A }
  { \Gamma \vdash \subscription{\eh}{M} : \subscriptionty{A} }

  \inferrule
  [T-SubEmpty]
  { }
  { \Gamma \vdash \subempty : \subscriptionty{A} }

  \inferrule
  [T-SubAppend]
  { \Gamma \vdash M : \subscriptionty{A} \\ \Gamma \vdash N : \subscriptionty{A} }
  { \Gamma \vdash \append{M}{N} : \subscriptionty{A} }
\end{mathpar}

~Modified runtime syntax
\[
\begin{array}{lrcl}
\text{Subscription values} & \vs & ::= & \subscription{\eh}{V} \midspace \subempty \\
\text{Values} & V & ::= &  \cdots \midspace \vs \\
\text{Evaluation contexts} & E & ::= & \cdots \midspace \subscription{\eh}{E} \\
\text{Processes} & P & ::= & \run{M} \midspace \handlerprocsub{T}{V_v}{V_u}{V_s}
  \midspace \thread{M} \midspace P_1 \parallel P_2 \\
\text{Configurations} & \config{C} & ::= &  \syssub{P}{\vh}{\vs}{\seq{e}} \\
\end{array}
\]%

~Modified runtime typing rules \hfill
\framebox{$\vphantom{\vdash^\phi} \vdashs T : \evtloopty{A}{B}$}
\framebox{$\vphantom{\vdash^\phi\evtloopty{A}{B}} \vdash^\phi P : A$}
\framebox{$\vphantom{\vdash^\phi\evtloopty{A}{B}} \vdash \config{C}$}

\begin{mathpar}
    \inferrule
    [TS-Processing]
    { \vdash M : (A \times \htmlty{B} \shade{\times \subscriptionty{B}}) }
    { \vdashs M : \evtloopty{A}{B} }

    \inferrule
    [TP-Handler]
    { \vdashs T : \evtloopty{A}{B} \\
      \cdot \vdash V_v : A \to \htmlty{B} \\\\
      \cdot \vdash V_u : (B \times A) \to A \\\\
      \shade{\cdot \vdash V_s : A \to \subscriptionty{B}}
    }
    {  \vdash^\bcirc \handlerprocsub{T}{V_v}{V_u}{\shade{V_s}} : B }

    \inferrule
    [T-Run]
    { \cdot \vdash^\bcirc M : (A \times (A \to \htmlty{B}) \times ((B \times A) \to A) \shade{\times (A \to \subscriptionty{B})})}
    { \vdash^\bcirc \run{M} }

    \inferrule
    [T-System]
    { \vdash P : A \\
      \vdash \vh : \pagety{A} \\
      \shade{\cdot \vdash \vs : \subscriptionty{A}} \\
      (\vdash e_i)_i
    }
    { \vdash \syssub{P}{\vh}{\shade{\vs}}{\seq{e}} }
\end{mathpar}

~Modified meta-level definitions

\begin{minipage}{0.3\textwidth}
\[
  \bl
  \handle{\textit{m}, (\textit{v}, \textit{u}, \shade{\textit{s}}), \textit{msg}}
  \defeq \\
  \quad \letintwo{m'}{u \app (\textit{msg}, m)} \\
  \quad {(m', v \app m', \shade{s \app m'})} \\
  \el
\]%
\end{minipage}
\hfill
\begin{minipage}{0.65\textwidth}
\[
\begin{array}{rcl}
  \handlers{\ev}{\subempty} & = & \epsilon \\
  \handlers{\ev}{\append{V}{W}} & = & \handlers{\ev}{V} \cdot \handlers{\ev}{W} \\
  \handlers{\ev}{\subscription{h}{V}} & = &
  \begin{cases}
    V & \text{if } \handler{\ev} = h \\
    \epsilon & \text{otherwise}
  \end{cases}
  \end{array}
\]
\end{minipage}

~Modified reduction rules on configurations \hfill \framebox{$\config{C} \ceval \config{C}'$}
\[
  \begin{array}{lrcl}
    \textsc{E-Run} & \quad
    \runsub{V_m}{V_v}{V_u}{\shade{V_s}} & \ceval &
      \syssub{\handlerprocsub{(V_m, V_v \app V_m, \shade{V_s \app
      V_m})}{V_v}{V_u}{\shade{V_s}}}{\htmlempty}{\subempty}{\epsilon}
      \vspace{\redrowskip} \\
    \textsc{E-Interact} &
    \syssub{P}{\config{D}[\pgtag{t}{U}{D}{\seq{e}}]}{\shade{\vs}}{\shade{\seq{e'}}} & \ceval &
    \syssub{P}{\config{D}[\pgtag{t}{U}{D}{\seq{e} \cdot
    \evtpayload{\evt{ev}}{V}}]}{\shade{\vs}}{\shade{\seq{e'}}}
    \\
                                                  & & &
                                                  \quad \text{for some }
                                                  \evt{ev}, V \text{ such that } \vdash \evtpayload{\evt{ev}}{V}
                                                  \vspace{\redrowskip}\\
    \textsc{E-InteractS} & \quad
    \syssub{P}{D}{\vs}{\seq{e}} & \ceval &
    \syssub{P}{D}{\vs}{\seq{e} \cdot \evtpayload{\evt{ev}}{V}} \\
                                & & & \quad \text{for some } \evt{ev}, V
                                \text{ such that } \vdash
                                \evtpayload{\evt{ev}}{V} \\
    \textsc{E-EvtS} & \quad
\syssub{P}{\vh}{\vs}{\evtpayload{\evt{ev}}{W} \cdot \seq{e}}
       & \ceval &
       \syssub{P \parallel \thread{V_1 \app W} \parallel \cdots \parallel
       \thread{V_n \app W}}{\vh}{\vs}{\seq{e}} \\
       & & & \quad \text{where } \handlers{\ev}{\vs} = \seq{V} \vspace{\redrowskip} \\
    \textsc{E-Update} & \quad
  \syssub{\procctx[\handlerproc{(V_m, U,
\shade{\vs'})}{F}]}{\vh}{\shade{\vs}}{\shade{\seq{e}}} & \ceval &
\syssub{\procctx[\handlerproc{\idle{V_m}}{F}]}{\vh'}{\shade{\vs'}}{\shade{\seq{e}}}
  \\
                                                        & & & \quad \text{where
                                                        } \diff{U}{D} = D'
  \end{array}
\]
}
\caption{Runtime syntax and semantics for subscriptions}
\label{fig:subs}
\end{adjustwidth}
\end{figure}

Figure~\ref{fig:subs} shows the changes required to support
subscriptions. Modifications to existing rules are shaded.
The $\subscription{h}{M}$ term defines a subscription, which has
type $\subscriptionty{A}$ if $M$ is an event handler function for handler $h$
which takes payload type $\evtty{h}$ and produces messages of type $A$. The
$\subempty$ term denotes an empty subscription, and subscriptions can be composed using
$\star$.
Subscription values are ranged over by $\vs$.
We extend the definition of event handler processes to
$\handlerprocsub{T}{V_v}{V_u}{V_s}$ to incorporate an additional
function $V_s$, which given a model of type $A$, produces a subscription of type
$\subscriptionty{B}$. We further modify configurations $\config{C}$, allowing
the $\calcwd{run}$ configuration to specify an initial subscription function,
and we extend the system configuration $\syssub{P}{\vh}{\vs}{\seq{e}}$ to record
the current subscriptions $\vs$ and environment events $\seq{e}$.

We extend the $\mkwd{handle}$ function to apply $V_s$
to the updated model and to return the updated model, updated HTML,
and updated subscriptions. Rule \textsc{E-InteractS} models environment events
being added to the webpage.
The \textsc{E-EvtS} rule handles environment events:
the $\handlers{h}{V}$ meta-level function returns the subscription event
handlers, which are evaluated in parallel.  We further modify \textsc{E-Update}
to update the current subscription at the end of message processing, and
\textsc{E-Run} to take the initial subscription function into account.

\clearpage
\subsection{Metatheory}
The metatheory of \mvu extends straightforwardly to the subscription extension.

\begin{theorem}[Preservation (Configurations with subscriptions)]
If $\vdash \config{C}$ and $\config{C} \ceval \config{C}'$, then $\vdash \config{C}'$.
\end{theorem}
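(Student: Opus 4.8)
The plan is to reprove Theorem~\ref{thm:config-pres} in the extended setting, by induction on the derivation of $\config{C} \ceval \config{C}'$, following the structure of the base-calculus proof and treating the subscription function, the stored subscription value $\vs$, and the configuration-level environment-event queue as extra data threaded uniformly through every rule. First I would observe that all of the auxiliary lemmas used in the original argument survive verbatim, since none of them mentions subscriptions: erasure (Lemma~\ref{lem:erasure}), diffing (Lemma~\ref{lem:diffing}), subterm typeability and replacement for page contexts (Lemmas~\ref{lem:dom-ctx-typeability} and~\ref{lem:dom-ctx-replacement}), preservation for terms (Lemma~\ref{lem:term-pres}), and preservation under process and configuration equivalence (Lemmas~\ref{lem:equiv-pres} and~\ref{lem:config-equiv-pres}). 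I also need a mildly extended process-preservation lemma in the spirit of Lemma~\ref{lem:process-pres} to cover the modified \textsc{E-Handle}: the only new obligation there is that the revised $\mkwd{handle}$ meta-function returns a triple of type $A \times \htmlty{B} \times \subscriptionty{B}$, which follows from the types of $V_v$ and $V_u$ together with $V_s \oftype A \to \subscriptionty{B}$ giving $V_s \app m' \oftype \subscriptionty{B}$, so that the result is typeable by the modified rule \textsc{TS-Processing}.

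For the configuration cases the argument is case analysis on the reduction rule, reusing the extended (shaded) rule \textsc{T-System} to recompose. The rules that are unchanged modulo the extra components — \textsc{E-Interact}, \textsc{E-Evt}, \textsc{E-Struct}, \textsc{E-LiftI}, \textsc{E-LiftP} — go through exactly as in the proof of Theorem~\ref{thm:config-pres}, with $\vs$ and the environment-event queue simply carried along. For \textsc{E-Run}, inverting the extended \textsc{T-Run} yields $V_s \oftype A \to \subscriptionty{B}$, hence $V_s \app V_m \oftype \subscriptionty{B}$, so the new active thread $(V_m, V_v \app V_m, V_s \app V_m)$ is typeable at $\evtloopty{A}{B}$ by \textsc{TS-Processing}; the recorded initial subscription $\subempty$ is well typed by \textsc{T-SubEmpty} and the empty event queue trivially, and recomposition is by the extended \textsc{TP-Handler} and \textsc{T-System}. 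For \textsc{E-InteractS} we only append a well-typed environment event to the configuration-level queue, leaving the process, the page, and $\vs$ untouched. For \textsc{E-Update} the processing thread's type gives, via \textsc{TS-Processing}, that the produced triple $(V_m, U, \vs')$ has $\vs' \oftype \subscriptionty{B}$; the page obligation is discharged by Lemma~\ref{lem:diffing} exactly as in the base proof, and the new configuration simply stores $\vs'$ in place of $\vs$, which is what \textsc{T-System} demands.

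The one genuinely new case is \textsc{E-EvtS}, and it mirrors \textsc{E-Evt}: popping a well-typed environment event $\evtpayload{\evt{ev}}{W}$ and computing $\handlers{\evt{ev}}{\vs} = \seq{V}$, we must show each spawned thread $\thread{V_i \app W}$ is well typed. By \textsc{T-Sub} each $V_i$ has type $\evtty{h} \to B$ with $h = \handler{\evt{ev}}$, and by the bijective correspondence between event names and handler names $\evtty{h} = \evtty{\evt{ev}}$; since $W \oftype \evtty{\evt{ev}}$, we get $V_i \app W \oftype B$, whence \textsc{TP-Thread} and \textsc{TP-Par} give the spawned block type $B$ under flag $\wcirc$, and we recompose as in \textsc{E-Evt}. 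Overall I do not expect a real obstacle: subscriptions are orthogonal to both rendering and session communication, and the proof is essentially bookkeeping. The only point that needs (light) care is maintaining the invariant that the subscription value stored in a well-typed configuration always has type $\subscriptionty{B}$ for the ambient message type $B$ — established at \textsc{E-Run} with $\subempty$ and re-established at \textsc{E-Update} from the type of the processing thread — together with the reuse, in \textsc{E-EvtS}, of the event/handler-name bijection already exploited for \textsc{E-Evt}.
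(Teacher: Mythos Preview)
The proposal is correct and takes essentially the same approach as the paper: induction on the configuration reduction derivation, reusing the auxiliary lemmas unchanged, with the only substantive new case being \textsc{E-EvtS}, handled exactly as you describe via \textsc{T-Sub} and the event/handler-name bijection (mirroring \textsc{E-Evt}). You actually spell out more than the paper does --- in particular the \textsc{E-Run} case and the extended process-preservation for the modified $\mkwd{handle}$ --- which the paper leaves implicit, but the structure and reasoning are the same.
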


\begin{proof}
By induction on the derivation of $\config{C} \ceval \config{C}'$.
We show the modified cases:

\begin{proofcase}{E-EvtS}
Assumption:

\begin{mathpar}
\inferrule*
{
  \vdash^\bcirc P : A \\
   \vdash D : \pagety{A} \\
   \vdash \vs : \subscriptionty{A} \\
  \inferrule*
  { \vdash W : \evtty{\evt{ev}} }
  { \vdash \evtpayload{\evt{ev}}{W} } \\
    \vdash \seq{e}
}
{ \vdash \syssub{P}{D}{\vs}{\evtpayload{\evt{ev}}{W} \cdot \seq{e}} }
\end{mathpar}

By the definition of $\handlers{\evt{ev}}{\vs} = \seq{V}$, each $V_i$ has type
$\evtty{h} \to A$, where $h = \handler{\evt{ev}}$. As there is a bijective mapping
between handler names $h$ and event names, we have that $\evtty{h} =
\evtty{\evt{ev}}$. Thus, for each $V_i$, we have that $\vdash^\wcirc \thread{V_i
\app W} : A$

By \textsc{TP-Par}, we have that $\vdash^\bcirc P \parallel\thread{V_1 \app W}
\parallel \cdots \parallel \thread{V_n \app W} : A$. Recomposing:

\begin{mathpar}
  \inferrule*
  {
    \vdash^\bcirc P \parallel \thread{V_1 \app W} \parallel \cdots \parallel \thread{V_n \app W} : A \\
    \vdash D : \pagety{A} \\
    \vdash \vs : \subscriptionty{A} \\
    \vdash \seq{e}
  }
  { \vdash
    \syssub{P \parallel\thread{V_1 \app W} \parallel \cdots \parallel \thread{V_n \app W}}
    {D}{\vs}{\seq{e}}
  }
\end{mathpar}
\end{proofcase}

\begin{proofcase}{E-InteractS}

\begin{mathpar}
\inferrule*
{
  \vdash^\bcirc P : A \\
  \vdash D : \pagety{A} \\
  \vdash \vs : \subscriptionty{A} \\
  \vdash \seq{e}
}
{ \vdash \syssub{P}{D}{\vs}{\seq{e}} }
\end{mathpar}
where $\vdash \evtpayload{\evt{ev}}{W}$.

Recomposing:

\begin{mathpar}
\inferrule*
{
  \vdash^\bcirc P : A \\
  \vdash D : \pagety{A} \\
  \vdash \vs : \subscriptionty{A} \\
    \inferrule*
    { \vdash W : \evtty{\evt{ev}} }
    { \vdash \evtpayload{\evt{ev}}{W} } \\
    \vdash \seq{e}
}
{ \vdash \syssub{P}{D}{\vs}{\evtpayload{\evt{ev}}{W} \cdot \seq{e}} }
\end{mathpar}
\end{proofcase}

\begin{proofcase}{E-Update}
We consider the case where $\procctx[\handlerprocsub{(V_m, U,
\vs)}{V_v}{V_u}{V_s}] =
\handlerprocsub{(V_m, U, \vs)}{V_v}{V_u}{V_s} \parallel P$.

Let $\deriv{D}$ be the derivation:
\begin{mathpar}
\inferrule*
{
  \inferrule*
  { \cdot \vdash V_m : A \\ \cdot \vdash U : \htmlty{B} \\ \cdot \vdash \vs' : \subscriptionty{B} }
  { \cdot \vdash (V_m, U, \vs') : (A \times \htmlty{B} \times \subscriptionty{B}) }
}
{ \vdash (V_m, U, \vs') : \evtloopty{A}{B} }
\end{mathpar}

\begin{mathpar}
\inferrule*
{
  \inferrule*
  {
    \inferrule*
    {
      \deriv{D} \\
      {\bl
        \cdot \vdash V_v : A \to \htmlty{B} \\
        \cdot \vdash V_u : (B \times A) \to A \\
        \cdot \vdash V_s : A \to \subscriptionty{B}
       \el} \\
    }
    { \vdash^\bcirc \handlerprocsub{(V_m, U, \vs')}{V_v}{V_u}{V_s} } \\
      \vdash^\wcirc P : A
  }
  { \vdash^\bcirc \handlerprocsub{(V_m, U, \vs')}{V_v}{V_u}{V_s} \parallel P} \\
  \vdash D : \pagety{A} \\
  \cdot \vdash \vs : \subscriptionty{A} \\
  \vdash \seq{e}
}
{ \vdash \syssub{\handlerprocsub{(V_m, U, \vs')}{V_v}{V_u}{V_s} \parallel P}{D}{\vs}{\seq{e}} }
\end{mathpar}

Let $\deriv{D'}$ be the derivation:
\begin{mathpar}
  \inferrule*
  { \cdot \vdash V_m : A }
  { \vdash \idle{V_m} : \evtloopty{A}{B} }
\end{mathpar}

By Lemma~\ref{lem:diffing}, $\vdash D' : \pagety{A}$.

Recomposing:

\begin{mathpar}
\inferrule*
{
  \inferrule*
  {
    \inferrule*
    {
      \deriv{D}' \\
      {\bl
        \cdot \vdash V_v : A \to \htmlty{B} \\
        \cdot \vdash V_u : (B \times A) \to A \\
        \cdot \vdash V_s : A \to \subscriptionty{B}
       \el}
    }
    { \vdash^\bcirc \handlerprocsub{\idle{V_m}}{V_v}{V_u}{V_s} } \\
    \vdash^\wcirc P : A
  }
  { \vdash^\bcirc \handlerprocsub{\idle{V_m}}{V_v}{V_u}{V_s} \parallel P} \\
  {\bl
  \vdash D' : \pagety{A} \\
  \cdot \vdash \vs' : \subscriptionty{A} \\
  \vdash \seq{e}\el}
}
{ \vdash \syssub{\handlerprocsub{\idle{V_m}}{V_v}{V_u}{V_s}
\parallel P}{D'}{\vs'}{\seq{e}} }
\end{mathpar}

\end{proofcase}

\end{proof}

For event progress, we define the relation $\cevalminus$ to be $\ceval$ without
rules \textsc{E-Interact} or \textsc{E-InteractS}.

\begin{theorem}[Event Progress]
  If $\cdot \vdash \config{C}$, either:
  \begin{enumerate}
    \item there exists some $\config{C}'$ such that $\config{C} \cevalminus
      \config{C'}$; or
    \item $\config{C} =
    \syssub{\handlerprocsub{\idle{V_m}}{V_v}{V_u}{V_s}}{\vh}{\vs}{\epsilon}$ where $\vh$
      cannot be written $\config{D}[\pgtag{t}{V}{W}{\seq{e}}]$ for some
      non-empty $\seq{e}$.
  \end{enumerate}
\end{theorem}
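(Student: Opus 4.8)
The plan is to follow the proof of Theorem~\ref{thm:event-progress} essentially verbatim, replacing the appeal to Lemma~\ref{lem:process-progress} with its counterpart for the subscription-extended calculus. First I would re-prove the process-progress lemma: if $\vdash^\phi P : A$, then either $P \equiv \ceval \equiv P'$ for some $P'$, or $P$ is in one of the canonical forms $P = \run{V} \parallel \thread{V_1} \parallel \cdots \parallel \thread{V_n}$, $P = \handlerprocsub{\idle{V_m}}{V_v}{V_u}{V_s}$, or $P = \handlerprocsub{(V_m, U, \vs')}{V_v}{V_u}{V_s}$, where $V$ is now a $4$-tuple by \textsc{T-Run} and the processing value is a triple of a new model, HTML, and a subscription by \textsc{TS-Processing}. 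The original proof transfers directly: the new subscription constructs $\subscription{h}{V}$, $\subempty$, and their $\star$-compositions are all values, there is no new term-level elimination form, so term reduction inside the active thread (via \textsc{E-LiftT}) still gets stuck only on a value; the only structural change is the extra $\subscriptionty{B}$ component threaded through \textsc{TS-Processing} and the extra function recorded by \textsc{TP-Handler}.

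Next comes the case analysis at the configuration level on $\config{C} = \syssub{P}{\vh}{\vs}{\seq{e}}$. If $P$ reduces, then $\config{C}$ reduces by \textsc{E-LiftP} (using \textsc{E-Struct} and Lemma~\ref{lem:config-equiv-pres} to absorb the $\equiv$-rearrangement); since process reduction $P \ceval P'$ cannot itself contain \textsc{E-Interact} or \textsc{E-InteractS}, this is a $\cevalminus$-step, giving clause (1). If $P = \run{V} \parallel \cdots$, then $V = (V_m, V_v, V_u, V_s)$ by \textsc{T-Run} and \textsc{E-Run} applies. If $P = \handlerprocsub{(V_m, U, \vs')}{V_v}{V_u}{V_s}$, then \textsc{E-Update} applies, installing the recorded subscription $\vs'$ and diffing $U$ against $\vh$. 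The remaining case is $P = \handlerprocsub{\idle{V_m}}{V_v}{V_u}{V_s}$, and here I split on pending events: if $\vh$ decomposes as $\config{D}[\pgtag{t}{V}{W}{\seq{e}'}]$ with $\seq{e}'$ non-empty, then \textsc{E-Evt} applies; otherwise, if the environment queue $\seq{e}$ is non-empty, \textsc{E-EvtS} applies (spawning the event-handler threads determined by $\vs$, possibly none); and only when the DOM carries no pending events \emph{and} $\seq{e} = \epsilon$ is $\config{C}$ irreducible under $\cevalminus$, which is exactly clause (2).

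The work is mostly bookkeeping, but the point deserving care is the mismatch in granularity between the process-progress lemma and the new configuration-level rule \textsc{E-EvtS}: whereas \textsc{E-Handle} is triggered by a thread sitting in parallel with the idle event loop (so it is visible to the process lemma through the structure of $P$), \textsc{E-EvtS} is driven solely by the environment queue $\seq{e}$, which belongs to the configuration, not to $P$. So the process lemma must keep classifying a lone idle event loop as stuck-at-the-process-level, and the theorem must then recover progress at the configuration level by inspecting $\seq{e}$ and the DOM event queues. I would also double-check that $\cevalminus$ removes exactly \textsc{E-Interact} and \textsc{E-InteractS} while retaining the two event-draining rules \textsc{E-Evt} and \textsc{E-EvtS} together with \textsc{E-Update}, so that a configuration with a non-empty DOM queue or a non-empty environment queue genuinely cannot be the stuck configuration described in clause (2).
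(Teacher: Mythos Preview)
Your proposal is correct and follows essentially the same approach as the paper, which simply states that the proof is ``similar to the proof of Theorem~\ref{thm:event-progress}'' with the single additional observation that the environment event queue must be $\epsilon$ since otherwise \textsc{E-EvtS} would apply. Your write-up is considerably more detailed than the paper's own, but the structure---re-running the process-progress lemma with the updated canonical forms, then discharging the remaining idle case by inspecting first the DOM queues and then the environment queue---is exactly what the paper intends.
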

\begin{proof}
Similar to the proof of Theorem~\ref{thm:event-progress}. The key difference is
that in case (2), the environment event queue must be $\epsilon$ as otherwise the
configuration could reduce by \textsc{E-EvtS}.
\end{proof}

 \clearpage
\section{Proofs for Section~\ref{sec:extensions}}\label{appendix:combined}

\subsection{Preservation.}
We firstly need some auxiliary results, allowing us to manipulate the various
contexts:

\begin{lemma}[Subprocess typeability]\label{lem:combined:subprocess-typeability}
  Let $\deriv{D}$ be a derivation of $\Psi \vdashtrans{\phi}{\vers}
  \procctx[P] : A$.
  There exist $\Psi', \phi'$ and some subderivation $\deriv{D}'$ of $\deriv{D}$
  concluding $\Psi' \vdashtrans{\phi'}{\vers} P : A$, where the position of
  $\deriv{D}'$ in $\deriv{D}$ corresponds to that of the hole in $\procctx$.
\end{lemma}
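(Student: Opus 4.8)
The plan is to argue by induction on the structure of the process context $\procctx$, following its grammar $\procctx ::= [~] \midspace \procctx \parallel P \midspace (\nu c d)\procctx$. This is the exact analogue of Lemma~\ref{lem:dom-ctx-typeability} for DOM contexts, transplanted to the process typing judgement $\Psi \vdashtrans{\phi}{\vers} P : A$. In the base case $\procctx = [~]$ we have $\procctx[P] = P$, so I would simply take $\deriv{D}' = \deriv{D}$, $\Psi' = \Psi$ and $\phi' = \phi$; the statement about the position of $\deriv{D}'$ in $\deriv{D}$ is then vacuous.

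For the inductive cases, the key observation is that the shape of $\procctx[P]$ determines the last rule of $\deriv{D}$. If $\procctx = \procctx_1 \parallel Q$, then $\procctx[P] = \procctx_1[P] \parallel Q$ is a parallel composition, so $\deriv{D}$ must end with \textsc{TP-Par}; inverting that rule yields $\Psi = \Psi_1, \Psi_2$, $\phi = \phi_1 + \phi_2$, and a subderivation $\deriv{D}_1$ concluding $\Psi_1 \vdashtrans{\phi_1}{\vers} \procctx_1[P] : A$. Applying the induction hypothesis to $\deriv{D}_1$ produces the required subderivation $\deriv{D}'$, which sits inside the left premise of the \textsc{TP-Par} step and hence at the position of the hole of $\procctx$. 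If $\procctx = (\nu c d)\procctx_1$, then $\deriv{D}$ must end with \textsc{TP-Nu}, whose unique premise is $\Psi, c : S, d : \gvdual{S} \vdashtrans{\phi}{\vers} \procctx_1[P] : A$ for some $S$; I would apply the induction hypothesis to that subderivation. In both inductive cases the conclusion keeps the same message type $A$ and version $\vers$, while only the environment ($\Psi'$) and flag ($\phi'$) change, which is why the statement quantifies existentially over $\Psi'$ and $\phi'$ but not over $A$ or $\vers$.

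I do not expect a genuine obstacle here: the proof is entirely routine once one notes that process contexts never occur under a binder that rebinds term variables and that the two rules \textsc{TP-Par} and \textsc{TP-Nu} are the only rules whose conclusions match $\procctx_1[P] \parallel Q$ and $(\nu c d)\procctx_1[P]$ respectively. The only minor point of care is bookkeeping the "corresponding position" clause, which I would treat informally (as in Lemma~\ref{lem:dom-ctx-typeability}) by tracking, at each induction step, which premise of the applied rule contains the hole.
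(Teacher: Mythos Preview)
Your proposal is correct and follows exactly the same approach as the paper, which simply states ``By induction on the structure of $\procctx$.'' Your detailed case analysis on $[~]$, $\procctx_1 \parallel Q$, and $(\nu c d)\procctx_1$ with inversion on \textsc{TP-Par} and \textsc{TP-Nu} is the intended unfolding of that one-line proof.
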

\begin{proof}
  By induction on the structure of $\procctx$.
\end{proof}

\begin{lemma}[Subprocess replacement (main threads)]\label{lem:combined:subprocess-replacement}
  If:
  \begin{itemize}
    \item $\deriv{D}$ is a derivation of $\Psi \vdashtrans{\bcirc}{\vers} \procctx[P] : A$
    \item $\deriv{D}'$ is a subderivation of $\deriv{D}$ concluding $\Psi' \vdashtrans{\bcirc}{\vers} P : A$
    \item The position of $\deriv{D}'$ in $\deriv{D}$ corresponds to that of the
      hole in $\procctx$
    \item $\Psi' \vdashtrans{\bcirc}{\vers'} P' : B$
  \end{itemize}
  then $\Psi \vdashtrans{\bcirc}{\vers'} \procctx[P'] : B$.
\end{lemma}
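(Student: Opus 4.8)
The plan is to proceed by induction on the structure of the process context $\procctx$, mirroring the proof of Lemma~\ref{lem:combined:subprocess-typeability} and tracking the distinguished subderivation $\deriv{D}'$ down to the hole. In the base case $\procctx = [~]$ we have $\procctx[P] = P$, so $\deriv{D}' = \deriv{D}$, $\Psi' = \Psi$, and the fourth hypothesis is literally the desired conclusion $\Psi \vdashtrans{\bcirc}{\vers'} P' : B$. If $\procctx = (\nu c d)\,\procctx_0$, then $\deriv{D}$ ends in \textsc{TP-Nu} with a premise deriving $\Psi, c{:}S, d{:}\gvdual{S} \vdashtrans{\bcirc}{\vers} \procctx_0[P] : A$ in which $\deriv{D}'$ sits at the hole of $\procctx_0$; applying the induction hypothesis with the enlarged environment and re-applying \textsc{TP-Nu} gives $\Psi \vdashtrans{\bcirc}{\vers'} (\nu c d)\procctx_0[P'] : B$. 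Note that in all cases $P'$ is typed under the \emph{same} environment $\Psi'$ as $P$, so the ambient environment is never disturbed by the replacement.

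The interesting inductive case is $\procctx = \procctx_0 \parallel Q$. Here $\deriv{D}$ ends in \textsc{TP-Par}, splitting $\Psi$ as $\Psi_0, \Psi_Q$ with premises $\Psi_0 \vdashtrans{\phi_0}{\vers} \procctx_0[P] : A$ and $\Psi_Q \vdashtrans{\phi_Q}{\vers} Q : A$ and $\phi_0 + \phi_Q = \bcirc$. Since $\deriv{D}'$ concludes with flag $\bcirc$ and lies within the subderivation for $\procctx_0[P]$, propagating flags up the context (\textsc{TP-Nu} preserves the flag, and a parallel node with one $\bcirc$ branch has result $\bcirc$, forcing the other branch to be $\wcirc$) shows $\phi_0 = \bcirc$ and hence $\phi_Q = \wcirc$. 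The induction hypothesis on $\procctx_0$ gives $\Psi_0 \vdashtrans{\bcirc}{\vers'} \procctx_0[P'] : B$, and after re-typing the companion as $\Psi_Q \vdashtrans{\wcirc}{\vers'} Q : B$ a final application of \textsc{TP-Par} recomposes $\Psi \vdashtrans{\bcirc}{\vers'} \procctx_0[P'] \parallel Q : B$.

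I expect the re-typing of $Q$ to be the one step that needs care, precisely because \textsc{TP-Par} forces both operands to share version and type: changing the main thread's version from $\vers$ to $\vers'$ and its type from $A$ to $B$ forces the same change on every parallel auxiliary process. The needed auxiliary fact is that a process typeable with flag $\wcirc$ admits an arbitrary version and type, which I would prove by a straightforward induction on its typing derivation: server threads (\textsc{TP-ServerThread}), zapper threads (\textsc{TP-Zap}), and event-handler threads whose annotation differs from the target version (\textsc{TP-OldThread}) all type at any version and any type, and \textsc{TP-Nu}/\textsc{TP-Par} propagate this. The only delicate point is a thread annotated exactly with the target version $\vers'$, which \textsc{TP-Thread} would then constrain; since this lemma is invoked only in the preservation case for \textsc{E-Transition}, where $\vers' = \vers + 1$ is freshly generated and carried by no pre-existing thread, no such thread occurs in $\procctx$. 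I would therefore either thread this freshness as a side-condition through the auxiliary fact or, more cleanly, record "all thread annotations in a well-typed configuration are bounded by the event-loop version" as a standing invariant and appeal to it here.
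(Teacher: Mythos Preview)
Your approach---induction on the structure of $\procctx$---is exactly what the paper does; its proof is the single line ``By induction on the structure of $\procctx$.'' You have simply unfolded that induction in full, and in doing so you have uncovered a genuine subtlety that the paper's one-liner elides.

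Your diagnosis of the parallel case is spot-on. Re-typing the companion $Q$ at a new version $\vers'$ and a new type $B$ is \emph{not} free: a subprocess $\threadtrans{M}{\vers'}$ inside $Q$, originally typed by \textsc{TP-OldThread} (because its annotation differed from $\vers$), would now have to be typed by \textsc{TP-Thread}, forcing $M:B$. So the lemma as stated is in fact too strong in isolation; a literal counterexample exists. Your two proposed remedies are precisely the right ones and match how the lemma is actually invoked in the paper: in \textsc{E-Update} one has $\vers'=\vers$ and the message type is unchanged, so re-typing $Q$ is immediate; in \textsc{E-Transition} one has $\vers'=\vers+1$, and the operational invariant that every thread annotation in a reachable configuration is bounded by the event-loop version guarantees that no $\threadtrans{M}{\vers'}$ can occur in $Q$, so \textsc{TP-OldThread} applies everywhere. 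Packaging this as the side-condition ``no thread in $\procctx$ is annotated $\vers'$ (or else $\vers'=\vers$ and $B=A$)'' would make the lemma clean; the paper leaves this implicit.
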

\begin{proof}
  By induction on the structure of $\procctx$.
\end{proof}

\begin{lemma}[Subterm typeability]\label{lem:combined:subterm-typeability}
  Let $\deriv{D}$ be a derivation of $\Gamma_1 + \Gamma_2 \vdash E[M] : A$.
  There exists some subderivation $\deriv{D}'$ of $\deriv{D}$
  concluding $\Gamma_2 \vdash M : A$, where the position of
  $\deriv{D}'$ in $\deriv{D}$ corresponds to that of the hole in $E$.
\end{lemma}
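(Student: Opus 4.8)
The plan is to proceed by induction on the structure of the evaluation context $E$, following the grammar for $E$ in Figure~\ref{fig:extensions:combined:rt-syntax} together with the $\lambda$-calculus frames of Figure~\ref{fig:appendix:stlc}. This mirrors the arguments for Lemma~\ref{lem:combined:subprocess-typeability} and for the page-context version Lemma~\ref{lem:dom-ctx-typeability}; the only genuinely new ingredient here is the interaction with context splitting. (I read the ``$\colon A$'' in the conclusion as shorthand for ``the type demanded by the hole of $E$'', which coincides with $A$ precisely in the base case.)

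First I would handle the base case $E = [~]$: then $E[M] = M$ and $\deriv{D}$ is already a derivation of $\Gamma_1 + \Gamma_2 \vdash M : A$, so $\deriv{D}' = \deriv{D}$, with $\Gamma_1$ containing only unrestricted variables shared into $\Gamma_2$ by the splitting rules. For each inductive case $E = F[E']$, where $F$ is a single context frame — e.g.\ $E'\,N$, $V\,E'$, $K\,E'$, $(E',N)$, $(V,E')$, $\letin{(x,y)}{E'}{N}$, $\caseof{E'}{\inl{x}\mapsto M;\inr{y}\mapsto N}$, $\coretag{t}{E'}{N}$, $\coretag{t}{V}{E'}$, $\append{E'}{N}$, $\append{V}{E'}$, $\notransition{E'}{N}$, $\notransition{V}{E'}$, $\transition{E'}{M_v}{M_u}{M_e}{M_c}$, $\transition{V_m}{V_v}{V_u}{V_e}{E'}$, $\tryasinotherwise{E'}{x}{M}{N}$, and the remaining symmetric frames — I would invert the (context-splitting) typing rule concluding $\Gamma_1 + \Gamma_2 \vdash F[E'[M]] : A$. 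Every such rule has a premise typing the subterm occupying the hole position of $F$, so inversion yields a split $\Gamma_1 + \Gamma_2 = \Delta_1 + \Delta_2$ and a subderivation $\deriv{D}_0$ of $\Delta_2 \vdash E'[M] : A'$ for the hole type $A'$. Applying the induction hypothesis to $\deriv{D}_0$ and $E'$ gives a further split $\Delta_2 = \Gamma_1' + \Gamma_2'$ and a subderivation concluding $\Gamma_2' \vdash M : A'$ whose position in $\deriv{D}_0$ — hence in $\deriv{D}$ — corresponds to the hole of $E'$, which is exactly the hole of $E$.

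To close, I would recombine the two splits: from $\Gamma_1 + \Gamma_2 = \Delta_1 + \Delta_2$ and $\Delta_2 = \Gamma_1' + \Gamma_2'$ one gets $\Gamma_1 + \Gamma_2 = \Delta_1 + \Gamma_1' + \Gamma_2'$, and by associativity and commutativity of the context-splitting relation $+$ (a standard structural property, following~\citet{CervesatoP96:llf}) this re-associates so that $\Gamma_2'$ plays the role of ``$\Gamma_2$'' for $E$; the subderivation concluding $\Gamma_2' \vdash M : A'$ is then the required $\deriv{D}'$, with its position relative to the hole preserved by construction.

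I expect the main obstacle to be the context-splitting bookkeeping rather than anything conceptually deep: one must check that the repeated re-splitting is consistent (unrestricted variables may be duplicated into several branches while each linear variable is routed to exactly one), that associativity/commutativity of $+$ is available in the precise ternary-relation form needed, and that the ``position of the subderivation corresponds to the hole'' invariant genuinely survives each re-association. The large number of frame cases is tedious but uniform — every one follows the same invert-rule / apply-IH / recombine-splits pattern — with only mild extra care for frames carrying a value alongside the hole (such as $V\,E'$ or $\transition{V_m}{V_v}{V_u}{V_e}{E'}$), where the value contributes to the $\Gamma_1$ component and must remain disjoint-modulo-unrestricted from $\Gamma_2'$.
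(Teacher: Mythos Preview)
Your approach is correct and matches the paper's: the paper's proof is the one-line ``By induction on the structure of $\procctx$'' (itself a copy-paste typo for $E$), so you are simply spelling out in detail the context-splitting bookkeeping that the paper leaves implicit. Your parenthetical observation that the ``$: A$'' in the conclusion must really mean the hole type of $E$ (not the outer type $A$) is also well-spotted---the statement as printed only makes literal sense in the base case, and the companion replacement lemma (Lemma~\ref{lem:combined:subterm-replacement}) confirms your reading by using a separate type $B$ for the subterm.
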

\begin{proof}
  By induction on the structure of $\procctx$.
\end{proof}

\begin{lemma}[Subterm replacement]\label{lem:combined:subterm-replacement}
  If:
  \begin{itemize}
    \item $\deriv{D}$ is a derivation of $\Gamma_1 + \Gamma_2 \vdash E[M] : A$
    \item $\deriv{D}'$ is a subderivation of $\deriv{D}$ concluding $\Gamma_2
      \vdash M : B$
    \item The position of $\deriv{D}'$ in $\deriv{D}$ corresponds to that of the
      hole in $E$
    \item $\Gamma_3 \vdash N : B$
    \item $\Gamma_1 + \Gamma_3$ is defined
  \end{itemize}
  then $\Gamma_1 + \Gamma_3 \vdash E[N] : A$.
\end{lemma}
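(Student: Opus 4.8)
The plan is to prove this by induction on the structure of the evaluation context $E$ (the base $\lambda$-calculus frames of Figure~\ref{fig:appendix:stlc} together with the extra frames $K \app E'$, $\notransition{E'}{M'}$, $\notransition{V}{E'}$, $\transition{E'}{M_v}{M_u}{M_e}{M_c}$, \ldots, $\transition{V_m}{V_v}{V_u}{V_e}{E'}$, $\cmdspawn{E'}$, $\tryasinotherwise{E'}{x}{M'}{N'}$, etc.\ from Figure~\ref{fig:extensions:combined:rt-syntax}). The argument is the mirror image of the (omitted) proof of the companion Lemma~\ref{lem:combined:subterm-typeability}: there a typing of $E[M]$ is \emph{decomposed} to expose a typing of $M$; here the typing of $M$ is \emph{recomposed} with a typing of $N$ of the same type, and the induction simply walks back out through each frame.

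In the base case $E = [~]$ we have $E[M] = M$, so $\deriv{D}$ is itself the derivation of $\Gamma_1 + \Gamma_2 \vdash M : A$ and the subderivation $\deriv{D}'$ at the hole position coincides with $\deriv{D}$; hence $\Gamma_1 + \Gamma_2 = \Gamma_2$ and $A = B$. Inspecting the context-splitting rules, $\Gamma_1 + \Gamma_2 = \Gamma_2$ forces $\Gamma_1 :: \unr$ with every binding of $\Gamma_1$ already present in $\Gamma_2$ (a linear binding of $\Gamma_1$ could not be absorbed), so $\Gamma_1 + \Gamma_3$ extends $\Gamma_3$ only by unrestricted bindings and unrestricted weakening turns $\Gamma_3 \vdash N : B$ into $\Gamma_1 + \Gamma_3 \vdash N : B$, i.e.\ $\Gamma_1 + \Gamma_3 \vdash E[N] : A$. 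For the inductive step, each frame $E = F[E']$ corresponds to a single term former; the last rule of $\deriv{D}$ is the (context-splitting) rule for that former, say with premises typed under $\Delta_1, \ldots, \Delta_k$ and conclusion under $\Gamma_1 + \Gamma_2 = \Delta_1 + \cdots + \Delta_k$. The premise containing the hole, $\Delta_j \vdash E'[M] : C$, is the one in which $\deriv{D}'$ sits at the hole position of $E'$; since $+$ is the only way contexts are combined, $\Delta_j$ splits as $\Delta_j'+\Gamma_2$, so by commutativity/associativity of splitting $\Gamma_1 = \Delta_j' + \big(\sum_{i \ne j}\Delta_i\big)$. Applying the induction hypothesis (with $\Delta_j'$ in the role of $\Gamma_1$, same $\Gamma_2$, same $\Gamma_3$) gives $\Delta_j' + \Gamma_3 \vdash E'[N] : C$; re-applying the same typing rule with this replaced premise and re-associating yields $(\Delta_j' + \Gamma_3) + \big(\sum_{i\ne j}\Delta_i\big) = \Gamma_1 + \Gamma_3 \vdash F[E'[N]] : A$, which is defined by hypothesis. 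The case $K \app E'$ is even easier, as \textsc{T-AppK} performs no splitting.

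\textbf{Main obstacle.} The real work is the context-splitting bookkeeping: one needs the standard (folklore) facts that $+$ is associative and commutative and that the rearrangements $\Delta_j = \Delta_j' + \Gamma_2$, $\Gamma_1 = \Delta_j' + \sum_{i\ne j}\Delta_i$ preserve definedness, together with the observation — implicit in, and slightly strengthening, Lemma~\ref{lem:combined:subterm-typeability} — that whenever a derivation of $\Delta \vdash E'[M] : C$ has its subderivation for $M$ at the hole position, $\Delta$ decomposes as $(\text{context of }E') + (\text{context of }M)$. Particular care is needed in the value-carrying frames ($V \app E'$, $(V,E')$, $\notransition{V}{E'}$, $\transition{V_m}{\ldots}{E'}$, \ldots), where the value's typing context may itself hold linear resources that land inside $\Gamma_1$ and must be threaded through unchanged. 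Everything else is a routine mechanical induction, and the only other point requiring a moment's thought is the unrestricted-weakening step in the base case.
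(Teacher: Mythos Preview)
Your proposal is correct and follows exactly the approach the paper takes: the paper's proof is simply ``By induction on the structure of $E$,'' and you have spelled out precisely what that induction looks like, including the context-splitting bookkeeping and the unrestricted-weakening argument in the base case. There is nothing to add.
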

\begin{proof}
  By induction on the structure of $E$.
\end{proof}

\begin{lemma}[Thread typeability]\label{lem:combined:thread-typeability}
  Let $\deriv{D}$ be a derivation of $\Psi \vdash^\phi \config{T}[M] : A$.
  There exist some $\Psi_1, \Psi_2$ such that $\Psi = \Psi_1, \Psi_2$ and
  a subderivation $\deriv{D}'$ of $\deriv{D}$ concluding $\Psi_2 \vdash M : A$,
  where the position of $\deriv{D}'$ in $\deriv{D}$ corresponds to that of the
  hole in $\config{T}$.
\end{lemma}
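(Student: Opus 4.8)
The plan is to argue by induction on the structure of the thread context $\config{T}$, peeling off the outermost process constructor and then, where necessary, the outermost active-thread constructor, before handing the residual term evaluation context $E$ off to Lemma~\ref{lem:combined:subterm-typeability}. Recall that $\Psi$ ranges only over runtime-name environments $c : S$, every entry of which is linear; hence each ``split'' $\Psi = \Psi_1, \Psi_2$ appearing below is just the disjoint union used by \textsc{TP-Par}, \textsc{TP-Nu}, and the active-thread rules, and there is no ambiguity about which entries go where once the relevant subderivation is fixed. As in Lemma~\ref{lem:combined:subterm-typeability}, the type $A$ in the conclusion should be read as ``the type demanded at the hole'', which in several cases differs from the type of the enclosing process.

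First, the leaf cases $\config{T} = \run{E}$, $\config{T} = \threadtrans{E}{\vers'}$, and $\config{T} = \serverthread{E}$. In each, the derivation $\deriv{D}$ must end in the corresponding process rule (\textsc{TP-Run}; \textsc{TP-Thread} or \textsc{TP-OldThread}; \textsc{TP-ServerThread}), whose premise is a typing derivation for the body term $E[M]$ at the type demanded at that position --- the product of initial model and functions for \textsc{TP-Run}, an arbitrary type for \textsc{TP-Thread}/\textsc{TP-OldThread}, $\one$ for \textsc{TP-ServerThread}. Applying Lemma~\ref{lem:combined:subterm-typeability} to that premise yields a split $\Psi = \Psi_1, \Psi_2$ together with a subderivation concluding $\Psi_2 \vdash M : A$ at exactly the position of the hole of $E$, which is the position of the hole of $\config{T}$; this is the required $\deriv{D}'$.

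The only genuinely compound case is $\config{T} = \handlerproctrans{\ta}{F}{\vers}$. Here $\deriv{D}$ ends in \textsc{TP-EventLoop}, whose first premise is $\Psi_0 \vdashs \ta[M] : \evtlooptytrans{A_0}{B_0}{C_0}$ for some $\Psi_0$ with $\Psi = \Psi_0, \Psi'$; we then case-split on the five shapes of active-thread context --- $\updating{E}$, $\rendering{V_m, V_c}{E}$, $\extracting{V_c}{E}$, $\extractingt{F'}{V_c}{E}$, and $\transitioningext{V_m}{F'}{V_c}{E}$ --- inverting the matching active-thread rule (\textsc{TT-Updating}, \textsc{TT-Rendering}, \textsc{TT-Extracting}, \textsc{TT-ExtractingT}, \textsc{TT-Transitioning}). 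Each such inversion exposes a premise that types the term $E[M]$ in the hole of $\ta$ (e.g.\ $E[M] : \transitionty{A_0}{B_0}$ under \textsc{TT-Updating}, $E[M] : \htmlty{B_0}$ under \textsc{TT-Rendering}, a product type under the $\calcwd{extracting}$ rules) against some sub-environment of $\Psi_0$; a final appeal to Lemma~\ref{lem:combined:subterm-typeability} on that premise produces the split of that sub-environment, which we recombine with the disjoint remainder of $\Psi$ to obtain $\Psi_1, \Psi_2$, along with the desired subderivation $\Psi_2 \vdash M : A$.

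I expect the main obstacle to be purely bookkeeping: in the event-loop case one must keep straight the type demanded at the hole (which is $\transitionty{\cdot}{\cdot}$, $\htmlty{\cdot}$, or a product type, not the message type $B$ of the enclosing event-loop process), and must check that the environment splits arising from \textsc{TP-EventLoop}, from the active-thread rule, and from Lemma~\ref{lem:combined:subterm-typeability} compose into a single well-formed disjoint union $\Psi_1, \Psi_2$. Modulo this, the argument is entirely structural and mirrors the proofs of Lemmas~\ref{lem:combined:subprocess-typeability} and~\ref{lem:combined:subterm-typeability}.
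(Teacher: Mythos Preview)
Your proposal is correct and follows the same approach as the paper: the paper's proof reads simply ``By case analysis on $\config{T}$ followed by Lemma~\ref{lem:combined:subterm-typeability}.'' Your write-up expands this sketch faithfully, including the necessary inversion on the active-thread rules in the $\handlerproctrans{\ta}{F}{\vers}$ case before invoking subterm typeability; the only cosmetic point is that you open by saying ``induction on the structure of $\config{T}$'' when (as your own case split reveals) thread contexts are not recursive, so a flat case analysis suffices.
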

\begin{proof}
  By case analysis on $\config{T}$ followed by
  Lemma~\ref{lem:combined:subterm-typeability}.
\end{proof}

\begin{lemma}[Thread replacement]\label{lem:combined:thread-replacement}
  If:
  \begin{itemize}
    \item $\deriv{D}$ is a derivation of $\Psi_1, \Psi_2 \vdash^\phi \config{T}[M] : A$
    \item $\deriv{D}'$ is a subderivation of $\deriv{D}$ concluding $\Psi_2 \vdash M : B$
    \item The position of $\deriv{D}'$ in $\deriv{D}$ corresponds to that of the
      hole in $\config{T}$
    \item $\Psi_3 \vdash N : B$
    \item $\Psi_1, \Psi_3$ is defined
  \end{itemize}
  then $\Psi_1, \Psi_3 \vdash^\phi \config{T}[N] : A$.
\end{lemma}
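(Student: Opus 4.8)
The plan is to proceed by case analysis on the shape of the thread context $\config{T}$, exactly mirroring (one level up) the proof of Lemma~\ref{lem:combined:subterm-replacement}. Recall from Figure~\ref{fig:extensions:combined:rt-syntax} that $\config{T}$ is one of $\run{E}$, $\threadtrans{E}{\vers}$, $\serverthread{E}$, or $\handlerproctrans{\ta}{F}{\vers}$, where $\ta$ is itself an active-thread context of one of five forms ($\updating{E}$, $\extracting{V_c}{E}$, $\extractingt{V_c}{F'}{E}$, $\rendering{V_m, V_c}{E}$, $\transitioningext{V_m}{F'}{V_c}{E}$). Since every type occurring in a runtime-name environment $\Psi$ is a session type, hence linear, the context-splitting judgement on such environments is just a disjoint partition; this is what lets us freely decompose $\Psi = \Psi_1, \Psi_2$ and later recombine $\Psi_1$ with $\Psi_3$ whenever $\Psi_1, \Psi_3$ is defined, with no nondeterminism in how the split is chosen.

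For the three ``flat'' cases $\config{T} = \run{E}$, $\threadtrans{E}{\vers}$, and $\serverthread{E}$, I would invert the corresponding process typing rule (\textsc{TP-Run}; \textsc{TP-Thread} or \textsc{TP-OldThread}; \textsc{TP-ServerThread}) to obtain a judgement $\Psi \vdash E[M] : A'$ for the appropriate \emph{whole-term} type $A'$ — namely $\one$ for a server thread, the model/view/update/extract/command tuple type for $\run{\cdot}$, and $A$ (or, in the \textsc{TP-OldThread} subcase, an arbitrary type) for an event handler thread. By Lemma~\ref{lem:combined:subterm-typeability} this derivation contains the subderivation $\Psi_2 \vdash M : B$ at the hole position (where $B$ need not equal $A'$), so Lemma~\ref{lem:combined:subterm-replacement} with $\Psi_3 \vdash N : B$ gives $\Psi_1, \Psi_3 \vdash E[N] : A'$; re-applying the same process typing rule then yields $\Psi_1, \Psi_3 \vdashtrans{\phi}{\vers} \config{T}[N] : A$. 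All side conditions ($C :: \unr$ in \textsc{TP-Run}, the version mismatch in \textsc{TP-OldThread}) are preserved because the term type and the versions are untouched.

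The remaining case, $\config{T} = \handlerproctrans{\ta}{F}{\vers}$, is the only one needing a nested argument. Here I invert \textsc{TP-EventLoop}, splitting $\Psi$ into a part typing $F : \statetytrans{A}{B}{C}$ and a part for the active-thread judgement $\vdashs \ta[M] : \evtlooptytrans{A}{B}{C}$ (or a transitioning state $\evtlooptytrans{A'}{B'}{C'}$), then do a secondary case analysis on the five shapes of $\ta$. Each active-thread rule (\textsc{TT-Updating}, \textsc{TT-Extracting}, \textsc{TT-ExtractingT}, \textsc{TT-Rendering}, \textsc{TT-Transitioning}) exposes a subderivation $\Psi' \vdash E[M] : A''$ for the term in the hole together with the side environments typing the recorded values $V_m$, $V_c$, $F'$; applying Lemma~\ref{lem:combined:subterm-typeability} and Lemma~\ref{lem:combined:subterm-replacement} as before, re-applying the active-thread rule with those side environments unchanged, and re-applying \textsc{TP-EventLoop}, gives the result. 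The only real obstacle is bookkeeping: tracking the disjoint partition of $\Psi$ as it threads through \textsc{TP-EventLoop}, then the active-thread rule, then the evaluation-context frames, so that the side environments for $F$, $V_m$, $V_c$ are carried through verbatim while only the $M$-part ($\Psi_2$) is swapped for the $N$-part ($\Psi_3$). No genuinely new difficulty arises, since versions and flags are inert under subterm replacement and the linearity of session types makes every splitting deterministic.
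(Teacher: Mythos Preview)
Your proposal is correct and takes essentially the same approach as the paper: the paper's proof is the single line ``By case analysis on $\config{T}$ followed by Lemma~\ref{lem:combined:subterm-replacement},'' and you have simply spelled out what that case analysis looks like in each of the thread-context shapes (including the nested case split on $\ta$ for the event-loop case). One small redundancy: you invoke Lemma~\ref{lem:combined:subterm-typeability} to locate the subderivation for $M$, but the hypotheses of the lemma already hand you $\deriv{D}'$ at the hole position, so that appeal is unnecessary---you can apply Lemma~\ref{lem:combined:subterm-replacement} directly.
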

\begin{proof}
  By case analysis on $\config{T}$ followed by
  Lemma~\ref{lem:combined:subterm-replacement}.
\end{proof}

It is also useful to show that if a term can be given a type of unrestricted
kind, then this implies that the environment is unrestricted.

\begin{fake}{Lemma~\ref{lem:combined:env-kinding} (Environment kinding)}
  If $\Gamma \vdash M : A$ and $A :: \kappa$, then $\Gamma :: \kappa$.
\end{fake}
\begin{proof}
  By induction on the derivation of $\Gamma \vdash M : A$.
\end{proof}

We often make use of Lemma~\ref{lem:combined:env-kinding} implicitly, and allow
ourselves to implicitly duplicate unrestricted environments.

Unrestricted environments admit a standard weakening property, which we
also use implicitly:
\begin{lemma}[Weakening]\label{lem:combined:weakening}
  If:
  \begin{enumerate}
    \item $\Gamma_1 \vdash M : A$
    \item $\Gamma_2 :: \unr$
    \item $\Gamma_1 + \Gamma_2$ is defined
  \end{enumerate}
  then $\Gamma_1 + \Gamma_2 \vdash M : A$.
\end{lemma}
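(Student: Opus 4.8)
The plan is to proceed by induction on the derivation of $\Gamma_1 \vdash M : A$, pushing the unrestricted context $\Gamma_2$ through each typing rule. Since the only kinds are $\lin$ and $\unr$ with $\unr \leq \lin$, the hypothesis $\Gamma_2 :: \unr$ in fact yields $\Gamma_2 :: \kappa$ for every kind $\kappa$, which is what will be needed in the rules carrying a kinding side-condition.

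For the leaf rule \textsc{T-Var}, the context $\Gamma_1$ has the shape $\Gamma_1', x : A$ with $\Gamma_1' :: \unr$; since $\Gamma_1 + \Gamma_2$ is defined and $\Gamma_2 :: \unr$, the enlarged environment is still unrestricted away from $x$, so \textsc{T-Var} applies directly. For \textsc{T-Abs} and the other binding rules I would invoke the induction hypothesis on the premise with the context $(\Gamma_1, x : A) + \Gamma_2$; the side-condition $\Gamma :: \kappa$ follows from $\Gamma_1 :: \kappa$ together with $\Gamma_2 :: \unr \leq \kappa$. For \textsc{T-Raise}, \textsc{T-CmdEmpty}, and the other rules that merely demand an unrestricted context, the conclusion is immediate.

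The bulk of the cases are the rules that split the context: \textsc{T-App}, \textsc{T-Pair}, \textsc{T-LetPair}, \textsc{T-Case}, \textsc{T-AppK}, \textsc{T-CmdAppend}, \textsc{T-Transition}, \textsc{T-NoTransition}, \textsc{T-Try}, and the (omitted) HTML and attribute rules. Here $\Gamma_1 = \Delta_1 + \cdots + \Delta_n$, and the key observation is that, because $\Gamma_2$ is unrestricted, it may be folded into any one of the subcontexts: from the definition of context splitting one checks that $\Gamma_1 + \Gamma_2 = (\Delta_1 + \Gamma_2) + \Delta_2 + \cdots + \Delta_n$. This small re-association fact about $+$ is the one auxiliary observation the proof really needs. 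With it, I apply the induction hypothesis to the first premise under context $\Delta_1 + \Gamma_2$ and re-assemble the derivation with the remaining premises untouched.

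The main obstacle — indeed essentially the only nontrivial point — is this bookkeeping around the context-splitting relation: checking that ``$\Gamma_1 + \Gamma_2$ defined'' descends to each subcontext, that adding $\Gamma_2$ to one branch still produces a valid split of $\Gamma_1 + \Gamma_2$, and accommodating the case where a shared unrestricted variable of $\Gamma_2$ already occurs in some $\Delta_i$. Once this routine lemma about $+$ (together with Lemma~\ref{lem:combined:env-kinding}) is available, every case of the induction is mechanical, which is precisely why the paper treats weakening as something it may ``use implicitly.''
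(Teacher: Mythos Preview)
Your proposal is correct and takes exactly the same approach as the paper: induction on the derivation of $\Gamma_1 \vdash M : A$. The paper's own proof is a single line (``By induction on the derivation of $\Gamma_1 \vdash M : A$''), so you have in fact supplied considerably more detail than the paper does, including the key bookkeeping observation about pushing an unrestricted $\Gamma_2$ through the context-splitting relation.
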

\begin{proof}
  By induction on the derivation of $\Gamma_1 \vdash M : A$.
\end{proof}

If an term is typeable under an environment only containing runtime names, and
the resulting type is of unrestricted kind, then the environment must be empty.

\begin{lemma}\label{lem:combined:unr-psi-empty}
  If $\Psi \vdash M : A$ and $A :: \unr$, then $\Psi = \cdot$.
\end{lemma}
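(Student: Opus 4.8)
The plan is to derive this directly from the environment-kinding lemma (Lemma~\ref{lem:combined:env-kinding}) together with the fact that a session type never has unrestricted kind. Recall that $\Psi$ ranges over environments of the shape $\cdot \mid \Psi, c : S$, so every binding in $\Psi$ assigns a session type to a runtime name, and session types are linear.

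First I would instantiate Lemma~\ref{lem:combined:env-kinding} with $\Gamma := \Psi$ and $\kappa := \unr$: from $\Psi \vdash M : A$ and $A :: \unr$ we obtain $\Psi :: \unr$. By the definition of $\Psi :: \kappa$ (i.e. $A' :: \kappa$ for each $x : A' \in \Psi$), this says precisely that $S :: \unr$ holds for every binding $c : S$ occurring in $\Psi$. Second, I would observe that no session type $S$ satisfies $S :: \unr$: the only kinding clause whose conclusion mentions a session type directly is $S :: \lin$; the remaining clauses conclude kindings for $\one$, $\stringty$, $\intty$, $\htmlty{B}$, $\attrty{B}$, function, $\cmdty{\cdot}$, product, sum, and $\transitionty{\cdot}{\cdot}$ types, none of which is a session type; and the subkinding subsumption clause can only conclude $S :: \unr$ from $S :: \kappa$ with $\kappa \leq \unr$, i.e. from $S :: \unr$ itself, so it introduces nothing new.

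Combining the two steps, if $\Psi$ contained any binding $c : S$ we would have both $S :: \unr$ and $S \not:: \unr$, a contradiction; hence $\Psi = \cdot$, as required. The one point that needs care is the second step — arguing that the coinductively defined kinding relation genuinely excludes $S :: \unr$ — but this is immediate once one notes that $\leq$ places $\unr$ strictly below $\lin$, so subsumption can never upgrade a linear kinding to an unrestricted one.
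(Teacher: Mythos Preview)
Your proposal is correct and follows essentially the same approach as the paper: invoke Lemma~\ref{lem:combined:env-kinding} to obtain $\Psi :: \unr$, then use the fact that $\Psi$ contains only session-typed names (which are linear) to conclude $\Psi = \cdot$. The paper's proof is terser---it simply asserts that session types have kind $\lin$---whereas you spell out the case analysis on the kinding clauses and the subsumption rule; but the underlying argument is identical.
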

\begin{proof}
  $\Psi$ is either empty or contains only runtime names with session
  types. Since session types have kind $\lin$, by Lemma~\ref{lem:combined:env-kinding}
  it must be the case that $\Psi$ is empty.
\end{proof}

We can extend Lemma~\ref{lem:combined:unr-psi-empty} to state typing:
\begin{corollary}\label{cor:combined:unr-psi-empty-st}
  If $\Psi \vdash F : \statetytrans{A}{B}{C}$, then $\Psi = \cdot$.
\end{corollary}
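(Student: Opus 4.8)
The statement is an immediate consequence of Lemma~\ref{lem:combined:unr-psi-empty} together with inversion on the function-state typing rule. The plan is as follows. Since $F$ is a function state, it has the shape $\statecomb{V_v}{V_u}{V_e}$, so the only rule that can conclude $\Psi \vdash F : \statetytrans{A}{B}{C}$ is \textsc{TF-State}. Inverting this rule, we obtain $\Psi = \Psi_1, \Psi_2, \Psi_3$ together with subderivations
\[
\Psi_1 \vdash V_v : A \uto \htmlty{B}, \qquad
\Psi_2 \vdash V_u : (B \times A) \uto \transitionty{A}{B}, \qquad
\Psi_3 \vdash V_e : A \uto (A \times C).
\]

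The key observation is that each of these types is a function type carrying the kind annotation $\unr$, and by the kinding rules $A \kto{\unr} B :: \unr$ irrespective of the kinds of its argument and result types. Hence each of the three types above has kind $\unr$. Applying Lemma~\ref{lem:combined:unr-psi-empty} to each subderivation in turn yields $\Psi_1 = \Psi_2 = \Psi_3 = \cdot$, and therefore $\Psi = \Psi_1, \Psi_2, \Psi_3 = \cdot$, as required.

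There is no real obstacle here: the corollary is a one-step consequence of the preceding lemma, and the only thing to be careful about is confirming that the constituent function types are indeed of unrestricted kind (which follows directly from the $\unr$ annotation in \textsc{TF-State} and the clause $A \kto{\kind} B :: \kind$ of the kinding relation), so that Lemma~\ref{lem:combined:unr-psi-empty} applies.
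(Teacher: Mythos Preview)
Your proposal is correct and matches the paper's intended approach: the paper merely states that the corollary follows by extending Lemma~\ref{lem:combined:unr-psi-empty} to state typing, and your inversion on \textsc{TF-State} followed by applying the lemma to each of the three $\uto$-typed components is exactly how that extension is carried out.
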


\begin{lemma}[Diffing]\label{lem:combined:diffing}
  If:
  \begin{itemize}
    \item $\cdot \vdash U : \htmlty{A}$
    \item $\vdash D : \pagety{B}$
    \item $\diff{U}{D} = D'$
  \end{itemize}

  then $\vdash D' : \pagety{A}$
\end{lemma}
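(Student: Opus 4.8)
The plan is to follow the same route as the base-calculus diffing lemma (Lemma~\ref{lem:diffing}), while making explicit why the old type parameter $B$ of $D$ is irrelevant. Unfolding the definition of $\diff{U}{D} = D'$, we have $\erase{D'} = U$, and $D'$ is obtained from $D$ by a sequence of node insertions (each creating a fresh node with an \emph{empty} event queue), node deletions, and attribute updates. The only place the page's type parameter enters the page typing rules (Figure~\ref{fig:omitted:page-typing}) is via the attribute value in rule \textsc{TD-Tag}; every other premise is an event-queue obligation $\vdash e$, which by rule \textsc{TE-Evt} constrains only the payload type $\evtty{\evt{ev}}$ fixed by the event's signature, and not the surrounding type parameter. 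In $D'$, all attribute values, text, and structure are exactly those of $U$ (modulo event queues), so they are compatible with the target parameter $A$.

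First I would record the key observation that every event $e$ occurring in a queue of $D'$ also occurs in some queue of $D$: insertions introduce only empty queues, deletions discard queues wholesale, and attribute updates leave queues untouched. Inverting the derivation of $\vdash D : \pagety{B}$ (iterated \textsc{TD-Tag}) gives $\vdash e$ for every event $e$ in every queue of $D$; hence $\vdash e$ holds for every event in every queue of $D'$ too, regardless of whether we now check against $A$ or $B$.

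Next, from $\erase{D'} = U$ and the hypothesis $\cdot \vdash U : \htmlty{A}$ we immediately have $\cdot \vdash \erase{D'} : \htmlty{A}$. A structural induction on $D'$ then reconstructs a derivation of $\vdash D' : \pagety{A}$: rules \textsc{TD-Tag}, \textsc{TD-Text}, \textsc{TD-Empty}, \textsc{TD-Append} mirror exactly the HTML rules \textsc{T-HtmlTag}, \textsc{T-HtmlText}, \textsc{T-HtmlEmpty}, \textsc{T-HtmlAppend}, the only difference being the extra premises $(\vdash e_i)_i$ attached to $\pgtag{t}{V}{D}{\seq{e}}$ nodes, which we have just discharged. (Every attribute value and event payload occurring in a page is closed and of unrestricted kind, so context-splitting is trivial throughout and plays no role.) Equivalently, one can reuse the informal argument of Lemma~\ref{lem:diffing} verbatim, now noting explicitly that the queues carried over from $D$ stay well typed because $\vdash e$ is oblivious to the ambient type parameter.

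The main obstacle — and the reason the statement deliberately uses a fresh metavariable $B$ for the type of the input page rather than reusing $A$ — is precisely this last point: one must check that the events inherited from $D$ remain acceptable against the new parameter $A$. This is immediate from the shape of \textsc{TE-Evt}, so once it is observed the remainder is a routine adaptation of the base-calculus proof; no appeal to linearity, kinding, or the new command/transition machinery is required.
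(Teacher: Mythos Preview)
Your proposal is correct and follows essentially the same approach as the paper, which simply says ``Similar to the proof of Lemma~\ref{lem:diffing}.'' You are in fact more explicit than the paper about the crucial point---that the event-typing judgement $\vdash e$ is independent of the page's type parameter, so queues inherited from $D$ remain well-typed when the parameter changes from $B$ to $A$---which is exactly the detail that distinguishes this lemma from its base-calculus counterpart.
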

\begin{proof}
  Similar to the proof of Lemma~\ref{lem:diffing}.
\end{proof}

We can now state the required preservation results:

\begin{lemma}[Preservation (Equivalence)]\label{lem:combined:equiv-pres}
  If $\Psi \vdashtrans{\wcirc}{\vers} P : A$ and $P \equiv P'$, then
  $\Psi \vdashtrans{\wcirc}{\vers} P' : A$.
\end{lemma}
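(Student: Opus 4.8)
The plan is to prove the statement by induction on the derivation of $P \equiv P'$, strengthening the claim to the biconditional $\Psi \vdashtrans{\wcirc}{\vers} P : A \iff \Psi \vdashtrans{\wcirc}{\vers} P' : A$ so that the symmetry closure of $\equiv$ poses no problem. Reflexivity is immediate; transitivity follows by composing the two induction hypotheses; and the congruence cases (e.g.\ $P_1 \parallel P_2 \equiv P_1' \parallel P_2$ from $P_1 \equiv P_1'$, and $(\nu c d) P \equiv (\nu c d) P'$ from $P \equiv P'$) follow by inverting \textsc{TP-Par} resp.\ \textsc{TP-Nu}, applying the induction hypothesis to the affected subprocess — noting that inversion of \textsc{TP-Par} with resulting flag $\wcirc$ forces both subflags to be $\wcirc$ — and reassembling with the same rule. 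It then remains to treat each structural-congruence axiom.

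For the two parallel-composition axioms, since runtime-name environments contain only linear session-type bindings, the context split used by \textsc{TP-Par} is plain disjoint union $\Psi_1,\Psi_2$, which is commutative and associative; together with commutativity and associativity of $+$ on flags (all $\wcirc$ here, $\wcirc + \wcirc = \wcirc$), $P_1 \parallel P_2 \equiv P_2 \parallel P_1$ and $P_1 \parallel (P_2 \parallel P_3) \equiv (P_1 \parallel P_2) \parallel P_3$ are routine. For $(\nu c d)(\nu c' d') P \equiv (\nu c' d')(\nu c d) P$, inverting \textsc{TP-Nu} twice reaches the environment $\Psi, c : S, d : \gvdual S, c' : S', d' : \gvdual{S'}$ regardless of the order in which the binders are stripped, so the same derivation of $P$ is reused. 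For $(\nu c d) P \equiv (\nu d c) P$, take $S' := \gvdual S$ and use that duality is involutive: $(\nu d c) P$ types under $\Psi, d : \gvdual S, c : \gvdual{\gvdual S} = \Psi, d : \gvdual S, c : S$, which is the environment for $(\nu c d) P$.

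For scope extrusion $P \parallel ((\nu c d) Q) \equiv (\nu c d)(P \parallel Q)$ with $c, d \notin \fn{P}$, $\alpha$-renaming lets us assume $c, d \notin \mathrm{dom}(\Psi)$, so adding $c : S, d : \gvdual S$ to any sub-environment of $\Psi$ is defined. Left-to-right: invert \textsc{TP-Par} then \textsc{TP-Nu}, move the $c,d$ bindings onto the $Q$-side of $P \parallel Q$, and re-apply \textsc{TP-Par} and \textsc{TP-Nu}. Right-to-left uses the side condition essentially: after inverting \textsc{TP-Nu} and \textsc{TP-Par} we have $\Psi, c : S, d : \gvdual S = \Psi_1', \Psi_2'$ with $\Psi_1' \vdash P : A$; since a well-typed process admits no weakening by a linear binding, every runtime name in $\Psi_1'$ occurs free in $P$, so $c, d \notin \fn{P}$ forces $c : S, d : \gvdual S \in \Psi_2'$, after which the binders are pushed back inside $Q$. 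For the garbage-collection axioms, observe that $(\nu c d)(\zap c \parallel \zap d)$ is typeable only under the empty environment: by \textsc{TP-Zap} the environment of $\zap c$ is exactly $\{c : S\}$ and that of $\zap d$ is $\{d : \gvdual S\}$, so \textsc{TP-Par} yields $c : S, d : \gvdual S$ which \textsc{TP-Nu} strips; likewise $\serverthread{()}$ is typeable only under $\cdot$ by \textsc{TP-ServerThread} and Lemma~\ref{lem:combined:unr-psi-empty} (as $\one :: \unr$). Hence in $(\nu c d)(\zap c \parallel \zap d) \parallel P \equiv P$ and $\serverthread{()} \parallel P \equiv P$ the erased component carries no part of $\Psi$, so it may be inserted into or deleted from the parallel composition without changing $\Psi$, $A$, or the flag.

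The main obstacle is the right-to-left direction of the scope-extrusion case: excluding the possibility that \textsc{TP-Par}'s context split places a bound name $c$ or $d$ on the $P$-side. This rests on the auxiliary fact that a well-typed process has no spurious linear bindings, i.e.\ $c : S \in \Psi$ and $\Psi \vdashtrans{\wcirc}{\vers} P : A$ imply $c \in \fn{P}$, which is an easy induction in the style of Lemma~\ref{lem:combined:env-kinding}. The remaining cases are bookkeeping on environments and flags.
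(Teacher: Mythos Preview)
Your proposal is correct and follows the intended approach: the paper does not spell out a proof for this lemma at all (and for the analogous Lemma~\ref{lem:equiv-pres} in the basic calculus it gives only the one-line justification ``By induction on the derivation of $P \equiv P'$''). Your case analysis is exactly the content that the paper elides, and the auxiliary observation you isolate for scope extrusion---that a runtime name in $\Psi$ must occur free in the process it types, since session types are linear---is the right ingredient.
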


\begin{lemma}[Preservation (Term reduction)]
  If $\Gamma \vdash M : A$ and $M \teval N$, then $\Gamma \vdash N : A$.
\end{lemma}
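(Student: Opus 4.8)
The plan is to proceed by induction on the derivation of $M \teval N$, using the term reduction rules of the extended calculus: the standard $\beta$-reduction rules (\textsc{E-Lam}, \textsc{E-Pair}, \textsc{E-Inl}, \textsc{E-Inr}, and the congruence rule \textsc{E-LiftM}) together with the new rule \textsc{E-Try}. As in any linearly-typed preservation argument, the crux is a value-substitution lemma that respects context splitting: if $\Gamma_1, x : A \vdash M : B$, $\Gamma_2 \vdash V : A$, and $\Gamma_1 + \Gamma_2$ is defined, then $\Gamma_1 + \Gamma_2 \vdash M \{ V / x \} : B$. I would establish this first, by induction on the typing derivation of $M$, threading the split $\Gamma_1 + \Gamma_2$ through each rule — when $x$ is consumed in one branch of a binary construct, $V$ is substituted only there; when $A :: \unr$, inverting \textsc{T-Var} forces $\Gamma_2 :: \unr$ (Lemma~\ref{lem:combined:env-kinding}), so $V$ and its context may be freely duplicated or dropped using weakening (Lemma~\ref{lem:combined:weakening}). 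A binary version (substituting two values simultaneously, needed for \textsc{E-Pair}) follows by applying the single-variable lemma twice.

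With substitution in hand, the $\beta$-cases are routine. For \textsc{E-Lam}, inverting \textsc{T-Abs} on $(\lambda x . M) : A \kto{\kappa} B$ gives $\Gamma_1, x : A \vdash M : B$, inverting the application types $V$ under some $\Gamma_2$ with $\Gamma_1 + \Gamma_2$ defined, and the substitution lemma concludes. \textsc{E-Pair}, \textsc{E-Inl}, and \textsc{E-Inr} are analogous, using inversion on \textsc{T-LetPair} and \textsc{T-Case} respectively together with the binary or single substitution lemma. For the congruence rule \textsc{E-LiftM}, $E[M] \teval E[N]$, I would reuse the context machinery already available: Lemma~\ref{lem:combined:subterm-typeability} extracts a subderivation $\Gamma_2 \vdash M : B$ from $\Gamma_1 + \Gamma_2 \vdash E[M] : A$, the induction hypothesis gives $\Gamma_2 \vdash N : B$, and since the reduction occurs strictly inside $M$ the context $\Gamma_2$ is unchanged, so $\Gamma_1 + \Gamma_2$ remains defined and Lemma~\ref{lem:combined:subterm-replacement} yields $\Gamma_1 + \Gamma_2 \vdash E[N] : A$.

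The genuinely new case is \textsc{E-Try}: $\tryasinotherwise{V}{x}{M}{N} \teval M \{ V / x \}$. Inverting \textsc{T-Try} on $\Gamma \vdash \tryasinotherwise{V}{x}{M}{N} : B$ gives $\Gamma = \Gamma_1 + \Gamma_2$ with $\Gamma_1 \vdash V : A$ and $\Gamma_2, x : A \vdash M : B$; the substitution lemma then delivers $\Gamma_1 + \Gamma_2 \vdash M \{ V / x \} : B$, that is, $\Gamma \vdash M \{ V / x \} : B$, as required. I expect the main obstacle to be purely bookkeeping rather than conceptual: getting the context-splitting side conditions exactly right in the substitution lemma — in particular the interaction of the $\Gamma :: \unr$ premise of \textsc{T-Var} with weakening, so that unrestricted variables can be duplicated and linear ones cannot — and confirming that the generation (inversion) lemmas for each elimination form hold in the split-context presentation. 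No linear resource is created or destroyed by a term-reduction step, so linearity is manifestly preserved; the work is in verifying this carefully through the notation.
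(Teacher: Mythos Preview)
Your proposal is correct and matches the paper's approach: the paper's proof is the one-liner ``An entirely standard induction on the derivation of $M \teval N$,'' and you have faithfully expanded exactly those standard details (substitution lemma with context splitting, case analysis on the $\beta$-rules, the new \textsc{E-Try} case, and the congruence case via subterm typeability and replacement).
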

\begin{proof}
  An entirely standard induction on the derivation of $M \teval N$.
\end{proof}

\begin{lemma}[Preservation (Process reduction)]
  If $\Psi \vdashtrans{\phi}{\vers} P : A$ and $P \ceval P'$, then
  $\Psi \vdashtrans{\phi}{\vers} P' : A$.
\end{lemma}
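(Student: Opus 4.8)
The proof is by induction on the derivation of $P \ceval P'$, with one case per process reduction rule in Figure~\ref{fig:combined:reduction-1}; where a step is taken modulo $\equiv$ we appeal to Lemma~\ref{lem:combined:equiv-pres}, and for \textsc{E-LiftT} we combine Preservation (Term reduction) with Lemmas~\ref{lem:combined:thread-typeability} and~\ref{lem:combined:thread-replacement} to focus on the redex inside the thread context $\config{T}$ and plug the reduct back in. The administrative rules are immediate: for \textsc{E-Par} we invert \textsc{TP-Par} to obtain $\Psi = \Psi_1, \Psi_2$ with $\Psi_i \vdashtrans{\phi_i}{\vers} P_i : A$, apply the induction hypothesis to the reducing component---which returns a judgement with the \emph{same} flag and version, exactly what \textsc{TP-Par} needs---and reassemble; \textsc{E-Nu} is the analogue using \textsc{TP-Nu}.

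For the MVU reduction rules \textsc{E-Handle}, \textsc{E-Extract}, \textsc{E-ExtractT}, \textsc{E-Render}, and \textsc{E-RenderT} the strategy is uniform: invert \textsc{TP-EventLoop}, \textsc{TF-State}, and the active-thread rule matching the source state (Figure~\ref{fig:extensions:combined:runtime-typing}); read off the types of the model, the recorded command $V_c$, the recorded function state, and---for the transitioning states---the recorded $F'$; rebuild a derivation for the successor active-thread state, using \textsc{T-AppK}/\textsc{T-App} to type the function applications on the right-hand side; and close with \textsc{TP-EventLoop}. The only subtlety is \textsc{E-Handle}: the delivered value $\threadtrans{V}{\vers}$ has a type $B$ with $B :: \unr$ (payload types are unrestricted, \textsc{TE-Evt}), so by Lemma~\ref{lem:combined:unr-psi-empty} it is typeable in the empty environment, whence $V_u \app (V, V_m)$ is typeable in the same environment that typed $V_m$, giving a well-typed $\calcwd{updating}$ thread via \textsc{TT-Updating}. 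Rules \textsc{E-Discard} and \textsc{E-DiscardHalt} re-type $\handlerproctrans{T}{F}{\vers}$ (resp.\ $\halt$) unchanged and type $\zap{V}$ by applying \textsc{TP-Zap} to each free name of $V$, using that a value of type $B$ decomposes into precisely the zapper threads for the runtime names it contains.

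The session reduction rules are the familiar GV-style cases: \textsc{E-New} adds a fresh dual pair $c : S$, $d : \gvdual{S}$ to the environment via \textsc{TP-Nu}, using $\constty{\calcwd{new}} = \one \uto (S \times \gvdual{S})$; \textsc{E-Comm} uses the dual session types of $c$ and $d$ to re-type the two continuations after the value is transferred; \textsc{E-Close} removes the two $\gvend$ endpoints; \textsc{E-Cancel} rewrites $\gvcancel{c}$ to $()$ and adds a $\zap{c}$; and \textsc{E-SendZap}, \textsc{E-RecvZap}, \textsc{E-CloseZap} replace a blocked communication on $c$ with $\raiseexn$ (typeable at any type, \textsc{T-Raise}) and split the remaining resources into zappers. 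For the exception rules, \textsc{E-RaiseH} is representative: inverting \textsc{T-Try} gives $\Psi = \Psi_L, \Psi_N$ with $\Psi_L \vdash \ep[\raiseexn] : A$ and $\Psi_N \vdash N : B$; we type $\config{T}[N]$ under $\Psi_N$ together with the context's own environment by Lemma~\ref{lem:combined:thread-replacement}, and must show $\Psi_L \vdashtrans{\wcirc}{\vers} \zap{\ep} : A'$. The rules \textsc{E-RaiseURun}, \textsc{E-RaiseUThread}, and \textsc{E-RaiseUServer} are similar but produce $\halt$ or discard the thread outright, and \textsc{E-RaiseUMain} additionally needs the analogue for a pure active-thread context $\tp$, matching the $\zap{\tp}$ clauses of Figure~\ref{fig:combined:reduction-2}.

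I expect the main obstacle to be precisely this resource accounting for cancellation. I would factor out an auxiliary lemma: if $\Psi \vdash \ep[M] : A$ with subderivation $\Psi_2 \vdash M : B$ at the hole, then $\Psi = \Psi_1, \Psi_2$ (the split exists because, by Lemma~\ref{lem:combined:env-kinding}, $\Psi_1$ can only contain session-typed names, which have kind $\lin$), and $\Psi_1 \vdashtrans{\wcirc}{\vers} \zap{\ep} : A'$ for any $A'$; the same statement for pure active-thread contexts $\tp$ gives the $\zap{\tp}$ clauses, and the value-decomposition used by \textsc{E-Discard}, \textsc{E-Cancel}, and the \textsc{E-SendZap}-style rules is the special case where the context is trivial. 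Each is a routine induction on the structure of the context (respectively the value), noting that every context former hands exactly its share of the linear environment to the recursively-zapped sub-context while turning the non-hole sub-values into zappers for their free names. Once these are available, every zapping and exception case is mechanical, and the remaining cases are routine inversion-and-reassembly against the runtime typing rules of Figure~\ref{fig:extensions:combined:runtime-typing}.
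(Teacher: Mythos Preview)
Your overall strategy---induction on the derivation of $P \ceval P'$, with case analysis per rule---is exactly the paper's approach, and your treatment of the administrative, session, and exception rules (including the auxiliary ``zap the context'' lemma) is sound and matches what the paper does inline.

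There is, however, a genuine error in your \textsc{E-Handle} argument. You assert that the delivered value $V$ has type $B$ with $B :: \unr$, citing \textsc{TE-Evt}. This conflates two distinct things: \textsc{TE-Evt} constrains \emph{event payload} types $\evtty{\evt{ev}}$ (the data attached to DOM events such as \evt{click} or \evt{input}), not the UI \emph{message} type $B$. In the extended calculus messages may be linear---in the \mkwd{PingPong} example the message $\mkwd{Ponged}(c)$ carries a session endpoint---so $V$ is in general typed under a nonempty $\Psi_V$, and your conclusion that ``$V_u \app (V, V_m)$ is typeable in the same environment that typed $V_m$'' is false: it would drop the linear resources contributed by $V$ and break the $\Psi$-preservation you need.

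The correct resource accounting is the converse of what you wrote: apply Lemma~\ref{lem:combined:unr-psi-empty} (or Corollary~\ref{cor:combined:unr-psi-empty-st}) to the \emph{function state} $(V_v, V_u, V_e)$, each component of which has an unrestricted function type, to conclude that \emph{their} environments are empty. The environment for the parallel composition then splits as $\Psi = \Psi_m, \Psi_V$ with $\Psi_m \vdash V_m : A$ and $\Psi_V \vdash V : B$, and $V_u \app (V, V_m)$ is typed under $\Psi_m, \Psi_V = \Psi$, giving a well-typed $\calcwd{updating}$ thread under exactly the original environment.
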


\begin{proof}
  By induction on the derivation of $P \ceval P'$. We show some illustrative
  cases:

  \begin{proofcase}{EP-Discard}
    \[
      \handlerproctrans{T}{F}{\vers} \parallel \threadtrans{V}{\vers'} \ceval
      \handlerproctrans{T}{F}{\vers} \parallel \zap{V} \qquad \text{where
    } \vers \ne \vers'
 \]

    Assumption:
    \begin{mathpar}
      \inferrule*
      {
        \inferrule*
        {
          \Psi_1 \vdashs T : \evtlooptytrans{A}{B}{C} \\
          \Psi_2 \vdash F : \statetytrans{A}{B}{C}
        }
        { \Psi_1, \Psi_2 \vdashtrans{\bcirc}{\vers} \handlerproctrans{T}{F}{\vers} : B }
        \\
        \inferrule*
        { \Psi_3 \vdash V : B' }
        { \Psi_3 \vdashtrans{\wcirc}{\vers} \threadtrans{V}{\vers'} : B  }
      }
      { \Psi_1, \Psi_2, \Psi_3 \vdashtrans{\bcirc}{\vers}
        \handlerproctrans{T}{F}{\vers} \parallel \threadtrans{V}{\vers'} : B
      }
    \end{mathpar}

    By \textsc{T-Name}, $\fn{V} = c_1, \ldots, c_n$ and $\Psi_3 = c_1: S_1, \ldots, c_n : S_n$. By
    \textsc{TP-Zap} and \textsc{TP-Par}, we have that $\Psi_3
    \vdashtrans{\wcirc}{\vers} \zap{c_1}
    \parallel \cdots \parallel \zap{c_n} : B$, which we write as $\Psi_3 \vdash \zap{V} : B $.

    Thus, recomposing:
    \begin{mathpar}
      \inferrule*
      {
        \inferrule*
        {
          \Psi_1 \vdashs T : \evtlooptytrans{A}{B}{C} \\
          \Psi_2 \vdash F : \statetytrans{A}{B}{C}
        }
        { \Psi_1, \Psi_2 \vdashtrans{\bcirc}{\vers} \handlerproctrans{T}{F}{\vers} : B
        }
        \\
        \Psi_3 \vdashtrans{\wcirc}{\vers} \zap{V} : B
      }
      { \Psi_1, \Psi_2, \Psi_3 \vdashtrans{\bcirc}{\vers}
        \handlerproctrans{T}{F}{\vers} \parallel \zap{V} : B
      }
    \end{mathpar}
    as required.
  \end{proofcase}

  \begin{proofcase}{EP-Handle}
    \[
    \handlerprocexptc{\idle{V_m}}{V_v}{V_u}{V_e}{\vers} \parallel V \ceval
    \handlerprocexptc{\updating{V_u \app (V, V_m)}}{V_v}{V_u}{V_e}{\vers}
    \]

    Assumption:
    \begin{mathpar}
      \inferrule*
      {
        \inferrule*
          {
            \inferrule*
            { \Psi_1 \vdash V_m : A }
            { \Psi_1 \vdashs \idle{V_m} : \evtlooptytrans{A}{B}{C} } \\
            \inferrule*
              {
                {\bl
                  \Psi_2 \vdash V_v :  C \uto \htmlty{B} \\
                  \Psi_3 \vdash V_u : (B \times A) \uto \transitionty{A}{B} \\
                  \Psi_4 \vdash V_e : A \uto (A \times C)
                \el}
              }
              {
                \Psi_2, \Psi_3, \Psi_4 \vdash (V_v, V_u, V_e) :
                \statetytrans{A}{B}{C}
              }
          }
          { \Psi_1, \ldots, \Psi_4 \vdashtrans{\bcirc}{\vers}
          \handlerprocexptc{\idle{V_m}}{V_v}{V_u}{V_e}{\vers} : B}
          \\
          \inferrule*
          { \Psi_5 \vdash V : B }
          { \Psi_5 \vdashtrans{\wcirc}{\vers} \thread{V} : B }
      }
      {
        \Psi_1, \ldots, \Psi_5 \vdashtrans{\bcirc}{\vers}
        \handlerprocexptc{\idle{V_m}}{V_v}{V_u}{V_e}{\vers} \parallel \thread{V} : B
      }
    \end{mathpar}

    By Lemma~\ref{lem:combined:unr-psi-empty}, we have that $\Psi_2, \Psi_3,
    \Psi_4 = \cdot$. Simplifying:

    \begin{mathpar}
      \inferrule*
      {
        \inferrule*
          {
            \inferrule*
            { \Psi'_1 \vdash V_m : A }
            { \Psi'_1 \vdashs \idle{V_m} : \evtlooptytrans{A}{B}{C} } \\
            \inferrule*
              {
                {\bl
                  \cdot \vdash V_v :  C \uto \htmlty{B} \\
                  \cdot \vdash V_u : (B \times A) \uto \transitionty{A}{B} \\
                  \cdot \vdash V_e : A \uto (A \times C)
                \el}
              }
              {
                \cdot \vdash (V_v, V_u, V_e) :
                \statetytrans{A}{B}{C}
              }
          }
          { \Psi'_1 \vdashtrans{\bcirc}{\vers}
          \handlerprocexptc{\idle{V_m}}{V_v}{V_u}{V_e}{\vers} : B}
          \\
          \inferrule*
          { \Psi'_2 \vdash V : B }
          { \Psi'_2 \vdashtrans{\wcirc}{\vers} \thread{V} : B }
      }
      {
        \Psi'_1, \Psi'_2 \vdashtrans{\bcirc}{\vers}
        \handlerprocexptc{\idle{V_m}}{V_v}{V_u}{V_e}{\vers} \parallel \thread{V} : B
      }
    \end{mathpar}

    We can show:
    \begin{mathpar}
      \inferrule*
      {
        \inferrule*
        { \cdot \vdash V_u : (B \times A) \uto \transitionty{A}{B} \\
          \inferrule*
          { \Psi'_2 \vdash V : B \\ \Psi'_1 : V_m : A }
          { \Psi'_1, \Psi'_2 \vdash (V, V_m) : (B \times A) }
        }
        { \Psi'_1, \Psi'_2 \vdash V_u \app (V, V_m) : \transitionty{A}{B} } \\
      }
      { \Psi'_1, \Psi'_2 \vdashs \updating{V_u \app (V, V_m)} : \evtlooptytrans{A}{B}{C} }
    \end{mathpar}
    Let us call this derivation $\deriv{D}$.

    Recomposing:
    \begin{mathpar}
      \inferrule*
        {
          \deriv{D} \\
          \cdot \vdash (V_v, V_u, V_e) : \statetytrans{A}{B}{C}
        }
        { \Psi'_1, \Psi'_2 \vdashtrans{\bcirc}{\vers}
        \handlerprocexptc{\updating{V_u \app (V, V_m)}}{V_v}{V_u}{V_e}{\vers} : B}
    \end{mathpar}
      where $\Psi'_1, \Psi'_2 = \Psi$, as required.
  \end{proofcase}

  \begin{proofcase}{EP-Extract}
    \[
      \handlerprocexptc{\updating{(\notransition{V_m}{V_c})}}{V_v}{V_u}{V_e}{\vers} \ceval
      \handlerprocexptc{\extracting{V_c}{(V_e \app V_m)}}{V_v}{V_u}{V_e}{\vers}
    \]

    Assumption:
    \begin{mathpar}
      \inferrule*
      {
        \inferrule*
        { \inferrule*
          { \Psi_1 \vdash V_m : A \\ \Psi_2 \vdash V_c : \cmdty{B} }
          { \Psi_1, \Psi_2 \vdash \notransition{V_m}{V_c} : \transitionty{A}{B}}
        }
        { \Psi_1, \Psi_2 \vdash \updating{(\notransition{V_m}{V_c})} :
        \evtlooptytrans{A}{B}{C} } \\
        \inferrule*
        {
          {\bl
            \Psi_3 \vdash V_v :  C \uto \htmlty{B} \\
            \Psi_4 \vdash V_u : (B \times A) \uto \transitionty{A}{B} \\
            \Psi_5 \vdash V_e : A \uto (A \times C) \\
            C :: \unr
          \el}
        }
        {
          \Psi_3, \Psi_4, \Psi_5 \vdash \statetytrans{A}{B}{C}
        }
      }
      {
        \Psi_1, \ldots, \Psi_5 \vdashtrans{\bcirc}{\vers}
        \handlerprocexptc{\updating{(\notransition{V_m}{V_c})}}{V_v}{V_u}{V_e}{\vers} : B
      }
    \end{mathpar}

    By Lemma~\ref{lem:combined:unr-psi-empty}, we have that $\Psi_3, \Psi_4,
    \Psi_5 = \cdot$. Simplifying:

    \begin{mathpar}
      \inferrule*
      {
        \inferrule*
        { \inferrule*
          { \Psi'_1 \vdash V_m : A \\ \Psi'_2 \vdash V_c : \cmdty{B} }
          { \Psi'_1, \Psi'_2 \vdash \notransition{V_m}{V_c} : \transitionty{A}{B}}
        }
        { \Psi'_1, \Psi'_2 \vdash \updating{(\notransition{V_m}{V_c})} :
        \evtlooptytrans{A}{B}{C} } \\
        \inferrule*
        {
          {\bl
            \cdot \vdash V_v :  C \uto \htmlty{B} \\
            \cdot \vdash V_u : (B \times A) \uto \transitionty{A}{B} \\
            \cdot \vdash V_e : A \uto (A \times C)
          \el}
        }
        {
          \cdot \vdash (V_v, V_u, V_e) : \statetytrans{A}{B}{C}
        }
      }
      {
        \Psi'_1, \Psi'_2 \vdashtrans{\bcirc}{\vers}
        \handlerprocexptc{\updating{(\notransition{V_m}{V_c})}}{V_v}{V_u}{V_e}{\vers} : B
      }
    \end{mathpar}

   We can show:
   {\small
    \begin{mathpar}
      \inferrule*
      {
        \inferrule*
        {
          \Psi'_1 \vdash V_c : \cmdty{B} \\
          \inferrule*
          { \cdot \vdash V_e : A \uto (A \times C) \\  \Psi'_1 \vdash V_m : A }
          { \Psi'_1, \Psi'_2 \vdash V_e \app V_m : (A \times C) }
        }
        { \Psi'_1, \Psi'_2 \vdashs \extracting{V_c}{(V_e \app V_m)} : \evtlooptytrans{A}{B}{C} } \\
        \inferrule*
        {
          {\bl
            \cdot \vdash V_v :  C \uto \htmlty{B} \\
            \cdot \vdash V_u : (B \times A) \uto \transitionty{A}{B} \\
            \cdot \vdash V_e : A \uto (A \times C)
          \el}
        }
        {
          \cdot \vdash (V_v, V_u, V_e) : \statetytrans{A}{B}{C}
        }
      }
      {
        \Psi'_1, \Psi'_2 \vdashtrans{\bcirc}{\vers}
        \handlerprocexptc{\updating{(\notransition{V_m}{V_c})}}{V_v}{V_u}{V_e}{\vers} : B
      }
    \end{mathpar}
  }

    where $\Psi'_1, \Psi'_2 = \Psi$, as required.
  \end{proofcase}

  \begin{proofcase}{EP-RenderT}
    \[
      \handlerproctrans{\extractingt{F'}{V_c}{(V_m \app V_{um})}}{F}{\vers} \ceval
      \handlerproctrans{\transitioningext{V_m}{F'}{V_c}{(V_v \app V_{um})}}{F}{\vers}
    \]
    \[
    \text{where } F' =
\statecomb{V'_v}{V'_u}{V'_e}
    \]

    Let $\deriv{D}$ be the derivation:
    \begin{mathpar}
      \inferrule*
        {
          \inferrule*
          {
            {\bl
            \Psi_1 \vdash V'_v :  C' \uto \htmlty{B'} \\
            \Psi_2 \vdash V'_u : (B' \times A') \uto \transitionty{A'}{B'} \\
            \Psi_3 \vdash V'_e : A' \uto (A' \times C') \\
            C' :: \unr
            \el}
          }
          {
            (V'_v, V'_u, V'_e) : \statetytrans{A'}{B'}{C'}
          } \\
          \Psi_4 \vdash V_c : \cmdty{B'} \\
          \inferrule*
          { \Psi_5 \vdash V_m : A' \\ \Psi_6 \vdash V_{um} : C' }
          { \Psi_5, \Psi_6 \vdash (V_m, V_{um}) : (A' \times C') }
        }
        { \Psi_1, \ldots, \Psi_6 \vdashs \extractingtexp{V'_v}{V'_u}{V'_e}{V_c}{(V_m, V_{um})} : \evtlooptytrans{A}{B}{C} }
    \end{mathpar}

    Assumption:
    \begin{mathpar}
      \inferrule*
      {
        \deriv{D} \\
        \Psi_7 \vdash (V_v, V_u, V_e) : \statetytrans{A}{B}{C}
      }
      { \Psi_1, \ldots, \Psi_{7} \vdashtrans{\bcirc}{\vers}
      \handlerprocexptc{\extractingtexp{V'_v}{V'_u}{V'_e}{V_c}{(V_m \app V_{um})}}{V_v}{V_u}{V_e}{\vers} : B }
    \end{mathpar}

    By Lemma~\ref{lem:combined:unr-psi-empty}, we have that $\Psi_1, \Psi_2,
    \Psi_3, \Psi_6, \Psi_7 = \cdot$. We can therefore
    substantially simplify:

    Let $\deriv{D}'$ be the derivation:
    \begin{mathpar}
      \inferrule*
        {
          \inferrule*
          {
            {\bl
            \cdot \vdash V'_v :  C' \uto \htmlty{B'} \\
            \cdot \vdash V'_u : (B' \times A') \uto \transitionty{A'}{B'} \\
            \cdot \vdash V'_e : A' \uto (A' \times C') \\
            C' :: \unr
            \el}
          }
          {
            (V'_v, V'_u, V'_e) : \statetytrans{A'}{B'}{C'}
          } \\
          \Psi'_1 \vdash V_c : \cmdty{B'} \\
          \inferrule*
          { \Psi'_2 \vdash V_m : A' \\ \cdot \vdash V_{um} : C' }
          { \Psi'_2 \vdash (V_m, V_{um}) : (A' \times C') }
        }
        { \Psi'_1, \Psi'_2 \vdashs \extractingtexp{V'_v}{V'_u}{V'_e}{V_c}{(V_m, V_{um})} : \evtlooptytrans{A}{B}{C} }
    \end{mathpar}

    Assumption:
    \begin{mathpar}
      \inferrule*
      {
        \deriv{D}' \\
        \cdot \vdash (V_v, V_u, V_e) : \statetytrans{A}{B}{C}
      }
      { \Psi'_1, \Psi'_2 \vdashtrans{\bcirc}{\vers}
      \handlerprocexptc{\extractingtexp{V'_v}{V'_u}{V'_e}{V_c}{(V_m \app V_{um})}}{V_v}{V_u}{V_e}{\vers} : B }
    \end{mathpar}

    Let $\deriv{D}''$ be the derivation:
    \begin{mathpar}
      \inferrule*
        {
          \Psi'_2 \vdash V_m : A' \\
          \inferrule*
            {
              {\bl
              \cdot \vdash V'_v :  C' \uto \htmlty{B'} \\
              \cdot \vdash V'_u : (B' \times A') \uto \transitionty{A'}{B'} \\
              \cdot \vdash V'_e : A' \uto (A' \times C') \\
              C' :: \unr
              \el}
            }
            {
              \cdot \vdash (V'_v, V'_u, V'_e) : \statetytrans{A'}{B'}{C'}
            }\\
          \Psi'_1 \vdash V_c : \cmdty{B'} \\
          \inferrule*
          { \cdot \vdash V'_v : C' \uto \htmlty{B'} \\\\ \cdot \vdash V_{um} : C' }
          { \cdot \vdash V'_v \app V_{um} : \htmlty{B'} }
        }
        { \Psi'_1, \Psi'_2 \vdashs
          \transitioningextexp{V_m}{V'_v}{V'_u}{V'_e}{V_c}{V'_v \app V_{um}} : \evtlooptytrans{A}{B}{C} }
    \end{mathpar}

    Recomposing, we have:
    \begin{mathpar}
      \inferrule*
      {
        \deriv{D}'' \\
        {
          \inferrule*
          {
            {\bl
            \cdot \vdash V_v :  C \uto \htmlty{B} \\
            \cdot \vdash V_u : (B \times A) \uto \transitionty{A}{B} \\
            \cdot \vdash V_e : A \uto (A \times C) \\
            C :: \unr
            \el
            }
          }
          {
            \cdot \vdash (V_v, V_u, V_e) : \statetytrans{A}{B}{C}
          }
        } \\
      }
      { \Psi'_1, \Psi'_2 \vdashtrans{\bcirc}{\vers}
        \handlerprocexptc{\transitioningextexp{V_m}{V'_v}{V'_u}{V'_e}{V_c}{(V'_v \app V_{um})}}{V_v}{V_u}{V_e}{\vers} : B }
    \end{mathpar}
    as required.
  \end{proofcase}

  \begin{proofcase}{EP-Comm}
    \[
      (\nu c d)(\config{T}[\gvsend{V}{c}] \parallel \config{T}'[\gvrecv{d}])
        \ceval
      (\nu c d)(\config{T}[c] \parallel \config{T}'[(V, d)])
    \]

    Assumption:
    \begin{mathpar}
      \inferrule*
      {
        \inferrule*
        {
          \Psi_1, \Psi_2, c : S \vdashtrans{\phi_1}{\vers} \config{T}[\gvsend{V}{c}] : B \\
          \Psi_3, d : \gvdual{S} \vdashtrans{\phi_2}{\vers} \config{T}'[\gvrecv{d}] : B
        }
        { \Psi_1, \Psi_2, \Psi_3, c : S, d : \gvdual{S} \vdashtrans{\phi_1 + \phi_2}{\vers}
            \config{T}[\gvsend{V}{c}] \parallel \config{T}'[\gvrecv{d}] : B
        }
      }
      { \Psi_1, \Psi_2, \Psi_3 \vdashtrans{\phi_1 + \phi_2}{\vers} (\nu
      c d)(\config{T}[\gvsend{V}{c}] \parallel \config{T}'[\gvrecv{d}]) : B }
    \end{mathpar}

    We consider the case where $\vers$ is the same in both threads, but the case
    where $\vers$ differs is similar.

    By Lemma~\ref{lem:combined:subterm-typeability}, we have that $S =
    \gvout{A}{S'}$ and:
    \begin{mathpar}
      \inferrule*
      {
        \Sigma(\calcwd{send}) = (A \times \gvout{A}{S'}) \uto S' \\
        \inferrule*
        { \Psi_2 \vdash V : A \\ c : \gvout{A}{S} \vdash c : \gvout{A}{S'}}
        { \Psi_2, c : \gvout{A}{S} \vdash (V, c) : (A \times \gvout{A}{S'}) }
      }
      { \Psi_2, c : \gvout{A}{S} \vdash \gvsend{V}{c} : S' }
    \end{mathpar}
    and $d : \gvin{A}{\gvdual{S'}} \vdash \gvrecv{d} : (A \times S')$.

    By Lemma~\ref{lem:combined:subterm-replacement}, we have that $\Psi_1, c :
    S' \vdashtrans{\phi_1}{\vers} \config{T}[c] : B$ and $\Psi_2, \Psi_3, d : \gvdual{S'}
    \vdashtrans{\phi_2}{\vers} \config{T}'[(V, d)] : B$.

    Thus, recomposing:
  \begin{mathpar}
    \inferrule*
    {
      \inferrule*
      {
        \Psi_1, c : S' \vdashtrans{\phi_1}{\vers} \config{T}[c] : B \\
        \Psi_2, \Psi_3, d : \gvdual{S'} \vdashtrans{\phi_2}{\vers}
        \config{T}'[(V, d)] : B
      }
      { \Psi_1, \Psi_2, \Psi_3, c : S, d : \gvdual{S} \vdashtrans{\phi_1 + \phi_2}{\vers}
        \config{T}[c] \parallel \config{T}'[(V, d)] : B
      }
    }
    { \Psi_1, \Psi_2, \Psi_3 \vdashtrans{\phi_1 + \phi_2}{\vers} (\nu
    c d)(\config{T}[c] \parallel \config{T}'[(V, d)]) : B }
  \end{mathpar}
  \end{proofcase}

  \begin{proofcase}{EP-SendZap}
    \[
      (\nu c d)(\config{T}[\gvsend{V}{c}] \parallel \zap{d}) \ceval
      (\nu c d)(\config{T}[\raiseexn] \parallel \zap{c} \parallel \zap{V} \parallel \zap{d})
    \]

    \begin{mathpar}
      \inferrule*
      {
        \inferrule*
        {
          \Psi_1, \Psi_2, c : S \vdashtrans{\phi}{\vers} \config{T}[\gvsend{V}{c}] : B \\
          d : \gvdual{S} \vdashtrans{\wcirc}{\vers} \zap{d} : B
        }
        { \Psi_1, \Psi_2, c : S, d : \gvdual{S} \vdashtrans{\phi}{\vers}
          \config{T}[\gvsend{V}{c}] \parallel \zap{d} : B }
      }
      { \Psi_1, \Psi_2 \vdashtrans{\phi}{\vers} (\nu c d)(\config{T}[\gvsend{V}{c}] \parallel
      \zap{d} )}
    \end{mathpar}

    By Lemma~\ref{lem:combined:thread-typeability}, $S = \gvout{A}{S'}$, and:
    \begin{mathpar}
      \inferrule*
      {
        \Sigma(\calcwd{send}) = (A \times \gvout{A}{S'}) \uto S' \\
        \inferrule*
        { \Psi_2 \vdash V : A \\ c : \gvout{A}{S'} \vdash c : \gvout{A}{S'}}
        { \Psi_2, c : \gvout{A}{S'} \vdash (V, c) : (A \times \gvout{A}{S'})}
      }
      { \Psi_2, c : \gvout{A}{S'} \vdash \gvsend{V}{c} : S' }
    \end{mathpar}
    By Lemma~\ref{lem:combined:thread-replacement}, $\Psi_1
    \vdashtrans{\phi}{\vers} \config{T}[\raiseexn] : B$.

  By the definition of $\Psi$, $\fn{V} = c_1, \ldots, c_n$ and $\Psi_2 = c_1 :
  S_1, \ldots, c_n : S_n$. We can therefore show that $\Psi_2
  \vdashtrans{\wcirc}{\vers} \zap{c_1} \parallel \cdots \parallel \zap{c_n} :
  B$, which we write as $\Psi_2 \vdash \zap{V} : B$.

  Recomposing:
  \begin{mathpar}
    \inferrule*
    {
      \inferrule*
      {
        \Psi_1 \vdashtrans{\phi}{\vers} \config{T}[\raiseexn] : B \\
        \inferrule*
          {
            c : S' \vdashtrans{\wcirc}{\vers} \zap{c} : B \\
            \Psi_2 \vdashtrans{\wcirc}{\vers} \zap{V} : B
          }
          { \Psi_2, c : S', d : \gvdual{S'} \vdashtrans{\wcirc}{\vers} \zap{c}
          \parallel \zap{V} : B }
      }
      {
        \Psi_1, \Psi_2, c : S', d : \gvdual{S'} \vdashtrans{\phi}{\vers}
          \config{T}[\raiseexn] \parallel \zap{c} \parallel \zap{V} \parallel \zap{d} : B
      }
    }
    { \Psi_1, \Psi_2 \vdashtrans{\phi}{\vers} (\nu c d)(\config{T}[\raiseexn] \parallel \zap{c} \parallel \zap{V} \parallel
    \zap{d} : B }
  \end{mathpar}

  as required.
  \end{proofcase}

  \begin{proofcase}{EP-RaiseH}
    \[
      \config{T}[\tryasinotherwise{\ep[\raiseexn]}{x}{M}{N}] \ceval
      \config{T}[N] \parallel \zap{\ep} : B
    \]
    Assumption: $\Psi_1, \Psi_2, \Psi_3 \vdashtrans{\phi}{\vers}
    \config{T}[\tryasinotherwise{\ep[\raiseexn]}{x}{M}{N}] : B$.

    By Lemma~\ref{lem:combined:thread-typeability}:
    \begin{mathpar}
      \inferrule*
      { \Psi_2 \vdash \ep[\raiseexn] : A' \\
        \Psi_3, x : A' \vdash M : A \\
        \Psi_3 \vdash N : A
      }
      { \Psi_2, \Psi_3 \vdash
      \tryasinotherwise{\ep[\raiseexn]}{x}{M}{N} : A }
    \end{mathpar}

    By Lemma~\ref{lem:combined:subterm-typeability}, $\Psi_2 \vdash \ep[-] :
    A'$. As $\Psi$ contains only runtime names, we know
    that $\fn{\ep} = c_1, \ldots c_n$ and $\Psi_2 = c_1 : S_1, \ldots, c_n :
    S_n$. Thus by \textsc{TP-Zap} and \textsc{TP-Par}, $\Psi_2
    \vdashtrans{\wcirc}{\vers} \zap{c_1} \parallel \ldots \parallel \zap{c_n} :
    B$, which we write as $\Psi_4 \vdashtrans{\wcirc}{\vers} \zap{\ep}$.

    By Lemma~\ref{lem:combined:thread-replacement}, $\Psi_3 \vdash
    \config{T}[N] : B$.

    Recomposing:
    \begin{mathpar}
      \inferrule*
      { \Psi_1, \Psi_3 \vdashtrans{\phi}{\vers} \config{T}[N] : B
        \\ \Psi_2 \vdashtrans{\wcirc}{\vers} \zap{\ep} : B }
      { \Psi_1, \Psi_2, \Psi_3 \vdashtrans{\phi}{\vers}
      \config{T}[N] \parallel \zap{\ep} : B }
    \end{mathpar}

    as required.
  \end{proofcase}

\end{proof}

\begin{fake}{Theorem~\ref{thm:combined:preservation} (Preservation (Configuration reduction))}
  If $\vdash \config{C}$ and $\config{C} \ceval \config{C}'$, then $\vdash \config{C}'$.
\end{fake}

\begin{proof}
  By induction on the derivation of $\config{C} \ceval \config{C}'$. We show the
  cases for \textsc{E-Update} and \textsc{E-Transition}.

  \begin{proofcase}{E-Update}
    \[
      \sys{\procctx[\handlerproctrans{\renderingext{V_m'}{V_c}{U}}{F}{\vers}]}{\vh}
                 \ceval
                 \sys{\procctx[\handlerproctrans{\idle{V'_m}}{F}{\vers} \parallel \threadtrans{M_1}{\vers} \parallel \cdots \parallel \threadtrans{M_n}{\vers}]}{\vh'}
    \]

                 $\text{where } \diff{U}{\vh} = \vh' \text{ and } \procs{V_c} =
                 \seq{M}$

                 Assumption:
     \begin{mathpar}
       \inferrule*
       {
           \cdot \vdashtrans{\bcirc}{\vers}
           \procctx[\handlerproctrans{\renderingext{V_m'}{V_c}{U}}{F}{\vers}] : B
         \\
         \vdash \vh : \pagety{B}
       }
       { \vdash
         \sys{\procctx[\handlerproctrans{\renderingext{V_m'}{V_c}{U}}{F}{\vers}]}{\vh}
       }
     \end{mathpar}

     By
     Lemmas~\ref{lem:combined:subprocess-typeability},~\ref{lem:combined:unr-psi-empty},
     and~\ref{cor:combined:unr-psi-empty-st}, there exist $\Psi_1, \Psi_2$ such that:
     \begin{mathpar}
       \inferrule*
       {
         \inferrule*
         {
           \Psi_1 \vdash V_m' : A \\\\
           \Psi_2 \vdash V_c : \cmdty{B} \\
           \cdot \vdash U : \htmlty{B}
         }
         { \Psi_1, \Psi_2 \vdashs \renderingext{V_m'}{V_c}{U} : \evtlooptytrans{A}{B}{C} } \\
           \cdot \vdash (V_v, V_u, V_e) : \statetytrans{A}{B}{C}
       }
       { \Psi_1, \Psi_2 \vdashtrans{\bcirc}{\vers}
         \handlerprocexptc{\renderingext{V_m'}{V_c}{U}}{V_v}{V_u}{V_e}{\vers} : B
       }
     \end{mathpar}

     Let $\deriv{D}$ be the following derivation:
  \begin{mathpar}
     \inferrule*
     {
       \inferrule*
       { \Psi_1 \vdash V_m' : A }
       { \Psi_1 \vdash \idle{V_m'} : \evtlooptytrans{A}{B}{C} } \\
       \cdot \vdash (V_v, V_u, V_e) : \statetytrans{A}{B}{C}
     }
     { \Psi_1 \vdashtrans{\bcirc}{\vers}
       \handlerprocexptc{\idle{V_m'}}{V_v}{V_u}{V_e}{\vers} : B
     }
  \end{mathpar}

  Recall that $\cdot \vdash U : \htmlty{B}$ and $\vdash \vh: \pagety{B}$, and
  $\diff{U}{D} = D'$.

  By Lemma~\ref{lem:combined:diffing}, we have that $\vdash D' : \pagety{B}$.

  By the definition of $\procs{V_c} = \seq{M}$, \textsc{T-Thread}, and
  \textsc{T-Par}, we have that $\Psi_2 \vdashtrans{\wcirc}{\vers}
\threadtrans{M_1}{\vers} \parallel \cdots \parallel \threadtrans{M_n}{\vers}$.

  Let $\deriv{D}'$ be the derivation:
  \begin{mathpar}
    \inferrule*
    { \deriv{D} \\ \Psi_2 \vdashtrans{\wcirc}{\vers} \threadtrans{M_1}{\vers} \parallel \cdots \parallel \threadtrans{M_n}{\vers} }
    { \Psi_1, \Psi_2 \vdashtrans{\bcirc}{\vers}
      \handlerprocexptc{\idle{V_m'}}{V_v}{V_u}{V_e}{\vers} \parallel \threadtrans{M_1}{\vers} \parallel \cdots \parallel
  \threadtrans{M_n}{\vers} : B }
  \end{mathpar}

  By Lemma~\ref{lem:combined:subprocess-replacement}, we have that:
  \[
    \cdot \vdashtrans{\bcirc}{\vers}
    \procctx[\handlerprocexptc{\idle{V_m'}}{V_v}{V_u}{V_e}{\vers}
      \parallel \threadtrans{M_1}{\vers} \parallel \cdots \parallel
          \threadtrans{M_n}{\vers}
    ] : B
  \]%

  Finally, we can show
  \begin{mathpar}
    \inferrule*
    {
      \deriv{D}' \\
      \vdash D' : \pagety{B}
    }
    { \vdash
      \sys{
      \procctx[\handlerprocexptc{\idle{V_m'}}{V_v}{V_u}{V_e}{\vers}
        \parallel \threadtrans{M_1}{\vers} \parallel \cdots \parallel
        \threadtrans{M_n}{\vers}]}{D'}
    }
  \end{mathpar}
  as required.
  \end{proofcase}

 \begin{proofcase}{E-Transition}
  \[
    \sys{\procctx[\handlerproctrans{\transitioningextexp{V_m}{V_v}{V_u}{V_e}{V_c}{U}}{F}{\vers}]}{\vh} \ceval
  \sys{\procctx[\handlerprocexptc{\idle{V_m}}{V_v}{V_u}{V_e}{\vers'} \parallel \threadtrans{M_1}{\vers'} \parallel
  \cdots \parallel \threadtrans{M_n}{\vers'}]}{\vh'}
  \]
  where
  $\vers' = \vers + 1$, $\diff{U}{\htmlempty} = D'$, and $\procs{V_c} = \seq{M}$.

   Assumption:
  \begin{mathpar}
      \inferrule*
      {
        \cdot
          \vdashtrans{\bcirc}{\vers}
          \procctx[\handlerproctrans{\transitioningextexp{V_m'}{V_v'}{V_u'}{V_e'}{V_c}{U}}{F}{\vers}] : B
        \\
        \vdash \vh : \pagety{B}
      }
      { \vdash
        \sys{\procctx[\handlerproctrans{\renderingext{V_m'}{V_c}{U}}{F}{\vers}]}{\vh}
      }
  \end{mathpar}

  By
  Lemmas~\ref{lem:combined:subprocess-typeability},~\ref{lem:combined:unr-psi-empty},
  and~\ref{cor:combined:unr-psi-empty-st}, there exist $\Psi_1, \Psi_2$ such that:

  \begin{mathpar}
    \inferrule*
    {
      \inferrule*
      {
        \Psi_1 \vdash V'_m : A' \\
        \cdot \vdash V'_v : C' \uto \htmlty{B'} \\\\
        \cdot \vdash V'_u : (B' \times A') \uto \transitionty{A'}{B'} \\
        \cdot \vdash V'_e : A' \uto (A' \times C') \\\\
        \Psi_2 \vdash V_c : \cmdty{B'} \\
        \cdot \vdash U : \htmlty{B'}
      }
      { \Psi_1, \Psi_2 \vdashs
      \transitioningextexp{V'_m}{V'_v}{V'_u}{V'_e}{V'_c}{U} : \evtlooptytrans{A}{B}{C} }
      \\
      \cdot \vdash F : \statetytrans{A}{B}{C}
    }
    { \Psi_1, \Psi_2 \vdashtrans{\bcirc}{\vers}
      \handlerproctrans{\transitioningextexp{V'_m}{V'_v}{V'_u}{V'_e}{V'_c}{U}}{F}{\vers} : B
    }
  \end{mathpar}

  We can show:
  \begin{mathpar}
    \inferrule*
    {
      \inferrule*
      { \Psi_1 \vdash V'_m : A' }
      { \Psi_1 \vdashs \idle{V'_m} : \evtlooptytrans{A'}{B'}{C'} } \\
      \inferrule*
        {
          \cdot \vdash V'_v : C' \uto \htmlty{B'} \\\\
          \cdot \vdash V'_u : (B' \times A') \uto \transitionty{A'}{B'} \\\\
          \cdot \vdash V'_e : A' \uto (A' \times C')
        }
        {
          \cdot \vdash (V'_v, V'_u, V'_e) : \statetytrans{A'}{B'}{C'}
        }
    }
    { \Psi_1 \vdashtrans{\bcirc}{\vers'}
        \handlerprocexptc{\idle{V'_m}}{V'_v}{V'_u}{V'_e}{\vers'} : B' }
  \end{mathpar}

  Since $\procs{V_c} = \seq{M}$, by \textsc{TP-Thread}, and \textsc{TP-Par}, we
  can show that $\Psi_2 \vdashtrans{\wcirc}{\vers'} \threadtrans{M_1}{\vers'}
  \parallel \cdots \parallel \threadtrans{M_n}{\vers'} : B'$.

  Recall that $\cdot \vdash U : \htmlty{B'}$, $\vdash D : \pagety{B}$, and $\diff{U}{\htmlempty} = D'$.
  Thus by Lemma~\ref{lem:combined:diffing}, we have that $\vdash D' :
  \pagety{B'}$.

  By Lemma~\ref{lem:combined:subprocess-replacement}, $\vdashtrans{\bcirc}{\vers'}
  \procctx[\handlerprocexptc{\idle{V'_m}}{V'_v}{V'_u}{V'_e}{\vers'} \parallel \threadtrans{M_1}{\vers'} \parallel
  \cdots \parallel \threadtrans{M_n}{\vers'}] : B' $

  Thus, we can show:

  \begin{mathpar}
    \inferrule*
    { \Psi
      \vdashtrans{\bcirc}{\vers'}
      \procctx[\handlerprocexptc{\idle{V'_m}}{V'_v}{V'_u}{V'_e}{\vers'}
      \parallel \threadtrans{M_1}{\vers'} \parallel \cdots \parallel
      \threadtrans{M_n}{\vers'}] : B' \\
      \vdash D' : \pagety{B'} }
    { \vdash
    \sys{\procctx[\handlerprocexptc{\idle{V'_m}}{V'_v}{V'_u}{V'_e}{\vers'}
    \parallel \threadtrans{M_1}{\vers'} \parallel \cdots \parallel
\threadtrans{M_n}{\vers}]}{D'} }
  \end{mathpar}

\end{proofcase}
\end{proof}

\subsection{Error-freedom.}
\begin{fake}{Theorem~\ref{thm:combined:rt-errors}}
  If $\Psi \vdashtrans{\phi}{\vers} P$, then $P$ is not an error
  process.
\end{fake}
\begin{proof}[Proof Sketch.]
  By \textsc{T-Nu}, endpoints $c$ and $d$ must be dual, and duality rules out
  all error processes. As an example, we cannot construct a derivation for (1).
  Suppose that $\config{T}$ and $\config{T'}$ are $\threadtrans{-}{\vers}$. Then:

 \begin{mathpar}
   \inferrule*
   {
     \inferrule*
     {
       \inferrule*
       { \Gamma_1, c : S \vdash E[\gvsend{V}{c}] : A }
       { \Gamma_1, c : S \vdashtrans{\wcirc}{\vers} \threadtrans{E[\gvsend{V}{c}]}{\vers} : A }
       \\
       \inferrule*
       { \Gamma_2, d : \gvdual{S} \vdash E'[\gvsend{W}{d}] : A }
       { \Gamma_2, d : \gvdual{S} \vdashtrans{\wcirc}{\vers} \threadtrans{E'[\gvsend{W}{d}]}{\vers} : A }
     }
     { \Gamma_1 + \Gamma_2, c : S, d : \gvdual{S} \vdashtrans{\wcirc}{\vers}
     \threadtrans{E[\gvsend{V}{c}]}{\vers} \parallel \threadtrans{E'[\gvsend{W}{d}]}{\vers} : A  }
   }
   { \Gamma_1 + \Gamma_2 \vdashtrans{\wcirc}{\vers} (\nu c d) (\threadtrans{E[\gvsend{V}{W}]}{\vers} \parallel
   \thread{E'[\gvsend{W}{d}])}{\vers} : A }
 \end{mathpar}

 To type $\Gamma_1, c : S \vdash E[\gvsend{V}{c}]$ and $\Gamma_2, d : \gvdual{S}
 \vdash E'[\gvsend{W}{d}]$, we would need $S = \gvout{A}{S'}$ and $\gvdual{S} =
 \gvout{B}{\gvdual{S'}}$, but this is impossible by the definition of duality. The
 remaining cases follow by the same reasoning.
\end{proof}

\subsection{Progress}

To prove progress, it is useful to define the notion of a canonical form, which
provides a global view of the configuration:

\begin{definition}[Canonical form]
  A process $P$ is in \emph{canonical form} if it has the form:
  \[
    (\nu c_1 d_1) \cdots (\nu c_n d_n)( P \parallel \threadtrans{M_1}{\vers_1} \parallel
      \cdots \parallel \threadtrans{M_m}{\vers_m}
      \parallel \serverthread{N_1} \parallel \cdots \parallel
      \serverthread{N_k} \parallel \zap{c'}_1 \parallel \cdots \parallel
      \zap{c'}_{l})
  \]
  where either $P = \run{M}$; $P = \handlerproctrans{T}{F}{\vers}$; or $P = \halt$.
\end{definition}

Any well-typed, closed process with a main thread can be written in canonical form.

\begin{lemma}[Canonical forms]\label{lem:combined:canonical-forms}
  If $\vdashtrans{\bcirc}{\vers} P : A$, then there exists some $P'$ such that
  $\vdashtrans{\bcirc}{\vers} P' : A$ where $P'$ is in canonical form.
\end{lemma}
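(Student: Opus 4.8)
The plan is to prove Lemma~\ref{lem:combined:canonical-forms} by induction on the derivation of $\vdashtrans{\bcirc}{\vers} P : A$, using the process equivalence relation $\equiv$ to reassociate parallel compositions and extrude name restrictions to the top. The key observation is that the thread-flag discipline guarantees that exactly one subprocess of a $\bcirc$-typed process is itself typeable under flag $\bcirc$ — this is precisely the ``main thread'' $P'$ in the canonical form — and every other leaf is typeable under flag $\wcirc$, hence is an event handler thread, server thread, or zapper thread.

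First I would set up the induction. The base cases are the rules that directly conclude a $\bcirc$ judgement with no parallel structure: \textsc{TP-Run} (giving $\run{M}$), \textsc{TP-EventLoop} (giving $\handlerproctrans{T}{F}{\vers}$), and \textsc{TP-Halt} (giving $\halt$). In each of these, $P$ is already in canonical form with $n=m=k=l=0$, so we take $P' = P$. The rules \textsc{TP-Thread}, \textsc{TP-OldThread}, \textsc{TP-ServerThread}, and \textsc{TP-Zap} all conclude $\wcirc$ judgements, so they cannot be the final rule in a $\bcirc$ derivation. The two inductive cases are \textsc{TP-Par} and \textsc{TP-Nu}. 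For \textsc{TP-Nu}, we have $P = (\nu c d) Q$ with $\Psi, c : S, d : \gvdual{S} \vdashtrans{\bcirc}{\vers} Q : A$; by the induction hypothesis $Q \equiv Q'$ for some $Q'$ in canonical form, and then $P \equiv (\nu c d) Q'$, which — using commutativity and reordering of $\nu$-binders and the fact that $Q'$ already has its own $\nu$-prefix — can be rewritten as a single $\nu$-prefix followed by the required parallel composition.

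The main work is the \textsc{TP-Par} case: $P = P_1 \parallel P_2$ with $\Psi_1 \vdashtrans{\phi_1}{\vers} P_1 : A$ and $\Psi_2 \vdashtrans{\phi_2}{\vers} P_2 : A$, where $\phi_1 + \phi_2 = \bcirc$. By the definition of flag combination, exactly one of $\phi_1,\phi_2$ is $\bcirc$ and the other is $\wcirc$; assume w.l.o.g.\ $\phi_1 = \bcirc$, $\phi_2 = \wcirc$. By the induction hypothesis, $P_1 \equiv P_1'$ where $P_1'$ is in canonical form. For $P_2$, I would prove (as a straightforward auxiliary induction on $\wcirc$-typed derivations) that any process typeable under flag $\wcirc$ is equivalent to $(\nu \ldots)(\text{parallel composition of threads, server threads, and zappers})$ — i.e., a canonical form without a main-thread component; this auxiliary claim is immediate by induction, since $\wcirc$-typing can only use \textsc{TP-Thread}, \textsc{TP-OldThread}, \textsc{TP-ServerThread}, \textsc{TP-Zap}, \textsc{TP-Par} (with both premises $\wcirc$), and \textsc{TP-Nu}. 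Merging the $\nu$-prefixes of $P_1'$ and the rewritten $P_2$ (using scope extrusion $P \parallel ((\nu c d) Q) \equiv (\nu c d)(P \parallel Q)$, after $\alpha$-renaming to avoid capture) and then flattening the parallel composition via associativity and commutativity yields the required single-$\nu$-prefix canonical form. Preservation of typing under these rearrangements is exactly Lemma~\ref{lem:combined:equiv-pres}.

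The step I expect to be the main obstacle is the bookkeeping around scope extrusion and $\alpha$-renaming when merging the $\nu$-prefixes from the two branches: one must argue that the bound runtime names can be chosen fresh so that the extrusion side-condition $c, d \notin \fn{P}$ is met, and that the resulting reshuffling stays within $\equiv$. This is routine but fiddly; it is the kind of argument that is easy to state informally (``rename binders apart, then pull all $\nu$'s to the front'') but tedious to make fully rigorous. Everything else — the case analysis on the last typing rule, the use of flag combination to locate the unique main thread, and the appeal to $\equiv$-preservation — is mechanical.
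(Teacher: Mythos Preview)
Your proposal is correct and follows essentially the same approach as the paper's proof: locate the unique main thread via the flag discipline, float all name restrictions to the top by scope extrusion, and reassociate the remaining parallel composition using associativity and commutativity, appealing to Lemma~\ref{lem:combined:equiv-pres} for type preservation. The paper's proof is a two-sentence sketch of exactly this argument; you have simply made the induction explicit (including the auxiliary claim about $\wcirc$-typed processes), which is a reasonable elaboration rather than a different route.
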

\begin{proof}
  By case analysis on the derivation of $\cdot \vdashtrans{\bcirc}{\vers} P$, it
  must be the case that $P$ includes some subprocess $P'$ which is a main thread
  (either $P' = \run{M}$; $P' = \handlerproctrans{T}{F}{\vers}$; or $P' =
  \halt$).

  By Lemma~\ref{lem:combined:equiv-pres}, equivalence preserves typing.
  By use of the scope extrusion equivalence, we can lift all name restrictions
  to the head of the configuration. We can rearrange the remainder of the
  configurations using the associativity and commutativity equivalences.
\end{proof}

Functional reduction satisfies progress.

\begin{lemma}[Progress (Terms)]\label{lem:combined:term-progress}
  If $\Psi \vdash M : A$, then either $M$ is a value; or there exists some $M'$
  such that $M \teval M'$; or $M$ can be written $E[N]$ where $N$ is one of the
  following terms:
  \begin{itemize}
    \item $\gvnew{()}$
    \item $\gvclose{V}$
    \item $\gvcancel{V}$
    \item $\gvsend{V}{W}$
    \item $\gvrecv{V}$
    \item $\raiseexn$
  \end{itemize}
\end{lemma}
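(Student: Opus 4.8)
The statement is a standard progress-for-terms lemma, so the plan is to proceed by induction on the derivation of $\Psi \vdash M : A$, performing a case analysis on the last typing rule applied. The key structural observation is that $\Psi$ contains only runtime names with session types (hence linear kinds), so variables cannot occur at the top level of a closed term except as endpoint names; this is what lets us treat $M$ as ``essentially closed''. For each syntactic constructor I would either (i) conclude $M$ is already a value, (ii) find a $\beta$-style reduction via \textsc{E-LiftT} together with the relevant rule from Figure~\ref{fig:appendix:stlc} or \textsc{E-Try}, or (iii) exhibit the decomposition $M = E[N]$ with $N$ one of the six listed redex shapes.

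\textbf{Key steps.} First, I would dispatch the trivial cases: $\lambda x.M$, $()$, $s$, $n$, $\cmdspawn{M}$, $\cmdempty$, the transition forms $\notransition{V}{W}$ and $\transition{V_m}{V_v}{V_u}{V_e}{V_c}$, and endpoint names $c$ are all values (using the runtime value grammar of Figure~\ref{fig:extensions:combined:rt-syntax}); the variable case is impossible since $\Psi :: \lin$ would force $A :: \lin$ yet $A$ would have to be a session type, and more simply $x$ is never the whole of a well-typed term under a name-only environment unless $A = S$ and $M = c$, which is covered. Second, for each elimination form --- application $M\app N$, constant application $K\app M$, $\letin{(x,y)}{M}{N}$, $\caseof{L}{\dots}$, and $\tryasinotherwise{L}{x}{M}{N}$ --- I apply the induction hypothesis to the principal subterm. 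If the subterm is itself $E'[N']$ for one of the six redexes, I extend the evaluation context and we are done; if it reduces, we lift; if it is a value, I invert its type to discover its shape and fire the appropriate reduction. The genuinely informative sub-cases are: $K\app V$ where inversion on $\Sigma(K)$ and the value $V$ yields exactly $\gvnew{()}$, $\gvsend{V'}{W'}$, $\gvrecv{V'}$, $\gvclose{V'}$, or $\gvcancel{V'}$ as the exposed redex; $\tryasinotherwise{V}{x}{M}{N}$ reduces by \textsc{E-Try}; and $\raiseexn$ is handled by taking $E = [~]$. Third, for the HTML/attribute constructors $\coretag{t}{M}{N}$, $\htmltext{M}$, $\attr{\ak}{M}$, $\attrempty$, $\append{M}{N}$ and the pairing $(M,N)$, I use the induction hypothesis on each subterm left-to-right, using the evaluation-context grammar to thread the hole through; once all subterms are values, the whole term is a value.

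\textbf{Main obstacle.} The only delicate point is the constant-application case \textsc{T-AppK}: after establishing $M \teval^? $ or $M = V$, I must invert the typing $\Sigma(K) = A' \kto{\unr} B'$ together with the shape of the value $V$ to argue that $K\app V$ is \emph{precisely} one of the listed redexes and not, say, a stuck term. This requires a canonical-forms-style argument for each constant: e.g.\ if $K = \calcwd{send}$ then $A' = C \times \gvout{C}{S}$ so $V = (W, c)$ for some endpoint name $c$, giving the redex $\gvsend{W}{c}$; similarly $\calcwd{receive}$, $\calcwd{close}$, $\calcwd{cancel}$ require $V$ to be an endpoint name (the only values of session type), and $\calcwd{new}$ takes $V = ()$. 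I would state a small auxiliary canonical-forms observation --- ``a value of session type $S$ under environment $\Psi$ is a runtime name $c$'' --- which follows immediately because no value constructor other than the runtime name $c$ is assigned a session type. With that in hand every case closes routinely, and no case produces a stuck term, which is exactly the content of the lemma.
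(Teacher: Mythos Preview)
Your proposal is correct and takes essentially the same approach as the paper: the paper's proof is simply ``A standard induction on the derivation of $\Psi \vdash M : A$'', and you have supplied exactly the elaboration that proof sketch omits, including the canonical-forms observation for the \textsc{T-AppK} case. The only minor remark is that the variable case is vacuous for a simpler reason than you give---$\Psi$ contains only runtime names, so there is no $x : A \in \Psi$ for a term variable $x$ and \textsc{T-Var} cannot apply---but your conclusion is the same.
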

\begin{proof}
  A standard induction on the derivation of $\Psi \vdash M : A$.
\end{proof}

\begin{fake}{Theorem~\ref{thm:combined:weak-progress} (Weak Event Progress)}
  Suppose $\cdot \vdash \config{C}$. Either there exists some $\config{C'}$ such
  that $\config{C} \ceval \config{C}'$, or there exists some $\config{C'}$ such
  that $\config{C} \equiv \config{C}'$ and:

  \begin{enumerate}
    \item $D$ cannot be written $\config{D}[\pgtag{\tagname{t}}{V}{D}{\seq{e}}]$
      for a non-empty $\seq{e}$.
    \item If the main thread of $\config{C}'$ is $\halt$, then all auxiliary
      threads are either blocked or zapper threads.
    \item If the main thread of $\config{C}'$ is $\run{M}$, then $M$ is blocked,
      and all auxiliary threads are either blocked, values, or zapper threads.
    \item If the main thread of $\config{C}'$ is
      $\handlerproctrans{T}{F}{\vers}$, then:
      \begin{enumerate}
        \item if $T = \idle{V_m}$, then each auxiliary thread is either blocked or a
          zapper thread; or
        \item if $T = \ta[L]$ then $L$ is blocked, and each $M_i$ and $N_i$ is
          either blocked, a value, or a zapper thread.
      \end{enumerate}
  \end{enumerate}

\end{fake}
\begin{proof}

  Let $\config{C} = \sys{P'}{D}$.
  By Lemma~\ref{lem:combined:canonical-forms}, there exists some $P$ such that
  $P' \equiv P$ and $P$ is in canonical form:
  \[
    P = (\nu c_1 d_1) \cdots (\nu c_n d_n)( P_{\text{main}} \parallel \threadtrans{M_1}{\vers_1} \parallel
        \cdots \parallel \threadtrans{M_n}{\vers_n}
        \parallel \serverthread{N_1} \parallel \cdots \parallel
        \serverthread{N_m} \parallel \zap{c'}_1 \parallel \cdots \parallel \zap{c'}_{l}
  \]

  By the garbage collection equivalence $\serverthread{()} \parallel P \equiv
  P$ and inspection of the typing rule for server threads, we need not consider
  server threads which have evaluated to values.

  We proceed by case analysis on the derivation of $\cdot \vdash \sys{P}{D}$.

  If $D$ could be written $\config{D}[\pgtag{\tagname{t}}{V}{D}{\seq{e}}]$ for a
  non-empty $\seq{e}$, then the configuration could reduce by \textsc{E-Evt}.

  By Lemma~\ref{lem:combined:term-progress}, each $M_i$ and $N_i$ must either be
  a value or of the form
  $E[L]$ such that \\
  $L \in \{\gvnew{()}, \gvcancel{V}, \gvsend{V}{W}, \gvrecv{V}, \raiseexn  \}$.
  Of these, $\gvnew{()}$ can reduce by \textsc{EP-New}, and $\raiseexn$ can
  reduce by either \textsc{E-RaiseH}, \textsc{E-RaiseUThread}, or
  \textsc{E-RaiseUServer}.
  If $L = \gvcancel{V}$, as we consider closed configurations, by inversion on
  the typing relation, $V$ must be some runtime name $c$, which can reduce by
  \textsc{EP-Cancel}.  The remaining cases constitute a thread being blocked.
  Therefore, each $M_i$ and $N_i$ must be either a value or blocked; meaning
  each auxiliary thread must either be a value, blocked, or a zapper thread.

  We proceed by case analysis on $P_\text{main}$. If $P_\text{main} = \halt$, then
  all values will reduce by \textsc{E-DiscardHalt} to zapper threads, leaving
  only blocked terms and zapper threads, satisfying  (2).

  If $P_{\text{main}} = \run{M}$, then by
  Lemma~\ref{lem:combined:term-progress}, \textsc{LiftP}, and our previous
  reasoning, $M$ must be either a value or blocked. If $M$ is a value, then by
  \textsc{TP-Run} it is a 6-tuple and can thus reduce by \textsc{E-Run}.
  If it is blocked, then we satisfy (3).

  If $P_{\text{main}} = \handlerproctrans{T}{F}{\vers}$, then we proceed by case
  analysis on $T$.

  If $T = \idle{V_m}$, then if a thread were to be a value, then the
  configuration could reduce by \textsc{E-Handle}, so each thread must either be
  blocked or a zapper thread, satisfying (4a).
  For the remaining types of active thread, by previous reasoning the term being
  evaluated by the thread must be either blocked (satisfying (4b)), or a value;
  in the case that each term is a value, there exists another rule to enact a state
  transition, so the configuration could reduce.
\end{proof}

 \clearpage
\section{Implementation}\label{appendix:implementation}
We have implemented an MVU library in the Links tierless web programming
language. In this section, we give an overview of the implementation.

\subsection{Design}
The Links implementation of MVU does not require any compiler modifications,
other than for the syntactic sugar. A Links library provides the same monoidal
core constructs as found in \mvu, and uses Links' native concurrency to
implement an event loop. A JavaScript foreign function interface (FFI) allows us
to interact with a JavaScript library which efficiently propagates updates to
the DOM.

\subsection{Links Library}

\subparagraph{HTML and Attributes.}
The core \mvu HTML constructs are implemented as a Links variant type as follows:

\begin{minipage}{0.45\textwidth}
\begin{lstlisting}
typename HTML(a) =
  [| HTMLEmpty
   | HTMLAppend: (HTML(a), HTML(a))
   | HTMLText: (String)
   | HTMLTag: (tagName: String, attrs: Attr(a),
       children: HTML(a))
   |];
\end{lstlisting}
\end{minipage}
\hfill
\begin{minipage}{0.4\textwidth}
\begin{lstlisting}
typename Attr(a) =
  [| AttrEmpty
   | AttrAppend: (Attr(a), Attr(a))
   | AttrAttribute: (AttrKey, AttrValue)
   | AttrEventHandler: EventHandler(a) |];
\end{lstlisting}
\end{minipage}

The constructs map directly onto the constructs defined in \mvu, and are mostly
self-explanatory. Of note is \lstinline+AttrEventHandler+ constructor, which
defines an attribute containing an event handler.

\subparagraph{Event handlers.}
An \lstinline+EventHandler(a)+ defines an event handler producing messages of
type \lstinline+a+, and is defined as follows:
\begin{lstlisting}
typename EventHandler(a) =
  [| PropertyHandler:
      (EventName, PropertyName, (PropertyValue) {}~> Maybe(a))
   | UnitHandler: (EventName, () {}~> a)
   | MouseEventHandler: (EventName, (MouseEvent) {}~> Maybe(a))
   | KeyboardEventHandler: (EventName, (KeyboardEvent) {}~> Maybe(a))
   |];
\end{lstlisting}
A \lstinline+PropertyHandler+ is an event handler which is attached to a DOM
element. When the event \lstinline+EventName+ is fired, then the handler
function receives the value \lstinline+PropertyValue+ of the property name
defined by \lstinline+PropertyName+. The handler function returns
\lstinline+Nothing+ if no message is to be produced, or \lstinline+Just(msg)+
to produce some message \lstinline+msg+ of type \lstinline+a+. Note that the
\lstinline+{}~>+ notation means that the function may not perform any effects.
As an example of a \lstinline+PropertyHandler+, we can define the
\lstinline+onInput+ handler as follows:
\begin{lstlisting}
fun onInput(f) {
  AttrEventHandler(PropertyHandler("input", "value", fun(val) { Just(f(val)) }))
}
\end{lstlisting}
The \lstinline+onInput+ handler is triggered on an \lstinline+input+ event,
and passes \lstinline+value+ property of its associated DOM element to the
handler function \lstinline+f+.

A \lstinline+UnitHandler+ handler is useful for side-effecting operations such
as button clicks, and does not pass any properties to the event handler function. The
\lstinline+MouseEventHandler+ and \lstinline+KeyboardEventHandler+ event
handlers handle mouse and keyboard events respectively, where
\lstinline+MouseEvent+ and \lstinline+KeyboardEvent+ are Links types which
encode the respective DOM properties.

\subparagraph{Desugaring.}
As well as creating tags and attributes by constructing values of
\lstinline+HTML(a)+ and \lstinline+Attr(a)+ directly, Links directly supports XML style
syntax. The XML syntax is desugared down into
the \lstinline+HTML(a)+ and \lstinline+Attr(a)+ types via a source-to-source
translation in a desugaring pass.

\subparagraph{Event loop.}

Links natively supports message-passing concurrency and a JavaScript foreign
function interface. This allows us to implement an MVU event loop:

\begin{minipage}{0.5\textwidth}
\begin{lstlisting}
module VDom {
  alien javascript "/lib/vdom.js" {
    runDom :
      (String, HTML(a), AP(?a.End)) ~> ();
    updateDom : (HTML(a)) ~> ();
  }
}
\end{lstlisting}
\end{minipage}
\hfill
\begin{minipage}{0.45\textwidth}
\begin{lstlisting}
sig evtLoop:
  (AP(?msg.End), model,
   (model) ~> HTML(msg),
   (msg, model) ~> model) ~> ()
fun evtLoop(ap, model, view, updt) {
  var (message, s) = receive(accept(ap));
  close(s);
  var model = updt(message, model);
  VDom.updateDom(view(model));
  evtLoop(ap, model, view, updt)
}
\end{lstlisting}
\end{minipage}

The \lstinline+VDom+ module contains an \lstinline+alien+ block, which contains
FFI bindings. The \lstinline+runDom+ function initialises the JavaScript
library, taking a string which is the ID of a placeholder element to replace
with the MVU document; the initial document \lstinline+HTML(a)+;
and an \emph{access point} \lstinline+AP(?a.End)+ which allows communication
with an event loop. Access points are a method of establishing a session channel
between two parties: communication with the event loop is done via establishment
of a one-shot channel.

The event loop \lstinline+evtLoop+ takes four parameters: the access point which
it uses to receive messages; the current \lstinline+model+; the \lstinline+view+ function; and the
update function \lstinline+updt+. The function begins by accepting a connection on the access
point, receiving a message, and then closing the connection. It then calls the
\lstinline+VDom.updateDom+ FFI function to update the web page, and loops with
the new model.

\subparagraph{JavaScript library.}
Given a Links \lstinline+HTML(a)+ value, the library efficiently propagates
changes to the DOM. The JavaScript library interprets the \lstinline+HTML(a)+ and
\lstinline+Attr(a)+ types, generating a representation which can then be used
with the open-source \lstinline+virtual-dom+ library~\cite{virtual-dom}.

The \lstinline+runDom+ function and \lstinline+updateDom+ FFI functions called
from the Links event loop are implemented as follows:

\begin{minipage}{0.45\textwidth}
\begin{lstlisting}[language=JavaScript]
function _runDom(id, doc, ap) {
  evtHandlerAP = ap;
  currentVDom = evalToplevelHTML(doc);
  rootNode = createElement(currentVDom);
  document.getElementById(id)
    .appendChild(rootNode);
}
\end{lstlisting}
\end{minipage}
\hfill
\begin{minipage}{0.45\textwidth}
\begin{lstlisting}[language=JavaScript]
function _updateDom(doc) {
  var newTree = evalToplevelHTML(doc);
  var patches = diff(currentVDom, newTree);
  currentVDom = newTree;
  rootNode = patch(rootNode, patches);
}
\end{lstlisting}
\end{minipage}

The \lstinline+evtHandlerAP+, \lstinline+currentVDom+, and \lstinline+rootNode+
global variables keep track of the event handler access point; the current
virtual representation of the DOM; and the root VDOM node respectively. The
\lstinline+runDom+ function sets the initial value of the global variables,
interprets the Links HTML representation (\lstinline+doc+), and adds the root
element to the page. The \lstinline+updateDom+ function interprets the HTML,
diffs it against \lstinline+currentVDom+, sets \lstinline+currentVDom+ to be
the newly-calculated HTML, and then propagates the updates to the physical DOM
using the \lstinline+patch+ function.

\end{adjustwidth}

\end{document}